\newcommand{\bs}{\begin{align}\begin{split}\nonumber}
\newcommand{\weakconv}{\overset{\mathbb{P}}{\underset{\mathbb{W}}{\rightsquigarrow}}}
\newcommand{\rmnum}[1]{\romannumeral #1}
\newcommand{\Rmnum}[1]{\uppercase\expandafter{\romannumeral #1\relax}}
\newcommand\fnote[1]{\captionsetup{font=small}\caption*{#1}}
\newsavebox{\varmatrixbox}
\NewDocumentEnvironment{varmatrix}{O{}}
 {
  \keys_set:nn {martin/varmatrix} { #1 }
  \begin{lrbox}{\varmatrixbox}
  \l_martin_varmatrix_size_tl
  \setlength{\arraycolsep}{\l_martin_varmatrix_sep_dim}
  $\begin{\l_martin_varmatrix_delim_tl matrix}
 }
 {
  \end{\l_martin_varmatrix_delim_tl matrix}$
  \end{lrbox}
  \vcenter{\box\varmatrixbox}
 }
\renewcommand{\(}{\left(}
\renewcommand{\)}{\right)}
\renewcommand{\[}{\left[}
\renewcommand{\]}{\right]}
\newcommand{\V}{\text{Vol}}
\newcommand{\ba}{\begin{array}}
\newcommand{\ea}{\end{array}}
\newcommand\reallywidehat[1]{
\savestack{\tmpbox}{\stretchto{
  \scaleto{
    \scalerel*[\widthof{\ensuremath{#1}}]{\kern.1pt\mathchar"0362\kern.1pt}
    {\rule{0ex}{\textheight}}
  }{\textheight}
}{2.4ex}}
\stackon[-6.9pt]{#1}{\tmpbox}
}
\DeclareMathOperator*{\esssup}{ess\,sup}
\DeclareMathOperator\supp{supp}
\DeclareFixedFont{\ttb}{T1}{txtt}{bx}{n}{12} 
\DeclareFixedFont{\ttm}{T1}{txtt}{m}{n}{12}  
\definecolor{deepblue}{rgb}{0,0,0.5}
\definecolor{deepred}{rgb}{0.6,0,0}
\definecolor{deepgreen}{rgb}{0,0.5,0}
\newcommand\pythonstyle{\lstset{
language=Python,
basicstyle=\ttm\scriptsize,
morekeywords={self},		  
keywordstyle=\ttb\color{deepblue}\scriptsize,
emph={MyClass,__init__},	  
emphstyle=\ttb\color{deepred},	  
stringstyle=\color{deepgreen},
frame=tb,			  
showstringspaces=false
}}
\newcommand\pythoninline[1]{{\pythonstyle\lstinline!#1!}}
\newenvironment{manualproposition}[1]{%
  \manualpropositioninner
}{\endmanualpropositioninner}
\def\indep{\perp\!\!\!\perp}
\def\arraystretch{1.8}
\newcommand{\bsnumber}{\begin{align}\begin{split}}
\newtheorem{theorem}{Theorem}[subsection]
\newtheorem*{theorem*}{Theorem}
\newtheorem{assump}{Assumption}[subsection]
\newtheorem{lemma}[theorem]{Lemma}
\newtheorem{cor}[theorem]{Corollary}
\newtheorem{proposition}[theorem]{Proposition}
\theoremstyle{definition}
\newtheorem{definition}{Definition}[subsection]
\newtheorem{remark}{Remark}[subsection]
\def\thm@space@setup{
  \thm@preskip
15pt \thm@postskip=15pt 
}
\renewcommand{\qed}{\hfill \mbox{\raggedright \rule{0.08in}{0.08in}}} 
\renewenvironment{proof}[1][\proofname]{{\noindent\sc#1. }}{\qed\vspace{15pt}} 
\tikzstyle{basicnode} = [rectangle, rounded corners, minimum width=3cm, minimum height=1cm,text centered, draw=black]
\tikzstyle{arrow} = [thick,->,>=stealth]
\newcommand*{\StrikeThruDistance}{0.15cm}%
\newcommand\blfootnote[1]{%
  \begingroup
  \renewcommand\thefootnote{}\footnote{#1}%
  \addtocounter{footnote}{-1}%
  \endgroup
}
\tikzset{strike thru arrow/.style={
    decoration={markings, mark=at position 0.5 with {
	\draw [blue, thick,-]
	    ++ (-\StrikeThruDistance,-\StrikeThruDistance)
	    -- ( \StrikeThruDistance, \StrikeThruDistance);}
    },
    postaction={decorate},
}}
\newcommand{\smalltodo}[2][] {\todo[caption={#2}, size=\scriptsize, fancyline, #1] {\begin{spacing}{.5}#2\end{spacing}}}
\newcommand{\kai}[2][]{\smalltodo[color=yellow!20,#1]{{\bf Kai:} #2}}
\title{\bf\sc\Large Statistical Inference of Optimal Allocations \MakeUppercase{\romannumeral 1}: Regularities and their Implications}
\author{Kai Feng\thanks{Institute of Economics, Tsinghua
University. Mingzhai Building, Tsinghua University, Beijing 100084,
China. Email: fengk22@mails.tsinghua.edu.cn.} \and Han
Hong\thanks{Department of Economics,  Stanford University. 579 Jane Stanford Way,
Stanford, California 94305, USA. Email: doubleh@stanford.edu.} \and
Denis Nekipelov\thanks{Departments of Economics and Computer
Science, University of Virginia. Monroe Hall,
Charlottesville, VA 22903, USA.  Email: dn4w@virginia.edu.}
}
\begin{document}
\maketitle

\begin{abstract}
  \noindent {\sc Abstract.}
In this paper, we develop a functional differentiability approach for solving
statistical optimal allocation problems.
We derive Hadamard differentiability of the value functions through
analyzing the properties of the sorting operator using tools from
geometric measure theory.
Building on our Hadamard differentiability results, we apply
the functional delta method to obtain the asymptotic properties of the value function process for the binary constrained optimal allocation problem and the plug-in ROC curve estimator.
Moreover, the convexity of the optimal allocation value functions facilitates demonstrating the degeneracy of first order derivatives
with respect to the policy.
We then present a double / debiased estimator for the value functions.
Importantly, the conditions that validate Hadamard differentiability
justify the margin assumption from the statistical classification literature
for the fast convergence rate of
plug-in methods.

  \vspace{15pt}
  \noindent {\sc Keywords}:  Optimal treatment allocation, Functional differentiability and delta method, Plug-in classification, ROC curve, Double / debiased machine learning

  \vspace{15pt}
  \noindent {\sc JEL Codes}: C14, C38, C44
\end{abstract}

\newpage

\section{Introduction}\label{introduction}

Characterizing individual heterogeneity is crucial
in the optimal allocation of scarce resources especially as a consequence of the rapid digitization of social-economic data.
Examples include 
medical treatment assignments when doctors adopt
different therapeutic approaches to different patients,
pretrial bails when judges decide where defendants will await trials,
college admissions when committees assign students to various educational programs.
In order to address the heterogeneity inherent in such problems,
where the potential effects of a treatment on
an outcome vary among individuals,
personalized allocation is more of a norm than an exception.
Personalized allocation is compelling in the presence of the inherent individual heterogeneity
in the potential effects of a treatment on an outcome.

Based on past experimental or observational data,
the core objective of optimal allocation problem is to find a
strategy that can perform well out of sample.
An effective approach advocated by the literature is to make use of information provided by
prediction algorithms. See for example \cite{QJEbail}, \cite{agrawal2019artificial} and \cite{babina2024artificial}.
A common decision framework is to form
a nonrandom treatment rule using an indicator function that
benchmarks
a first step regression estimate of individual level treatment effect  against
certain thresholds, representing a plug-in strategy
in a weighted classification problem.

Such a plug-in strategy is based on solving a population 
optimal allocation problem.
For a joint distribution of $\(X, Y\)$ where $Y = \(Y_{j}, j=0,\ldots,J\)$,
consider the \textit{social welfare potential function}\footnote{Appendix \ref{frechet differentiability section}  provides a mathematical justification 
for the term ``potential function''.}
\bsnumber\label{optimal allocation general problem}
\gamma\(\lambda, g\(\cdot\)\) \coloneqq
\max_{\phi \in \Phi} \mathbb{E}\[\sum_j \lambda_j \phi_j\(X\) Y_j\]
= \max_{\phi \in \Phi} \mathbb{E}\[\sum_j \lambda_j \phi_j\(X\) g_j\(X\)\].
\end{split}\end{align}
In the above, $\Phi$ is the set of multi-critical functions, i.e.
$\phi = \(\phi_{j}, j=0,\ldots,J\)$,
$\phi_j: \Omega\rightarrow \mathbb{R}$, with $\(\Omega, F, \mu\)$ being
a Borel probability space,  such that
$\phi_j \in L^\infty\(\mu\), 0 \leq \phi_j\leq 1, \forall
j\in \{0,\ldots,J\}$, and such that  $\sum_{j=0}^J
\phi_j\(x\) \equiv 1$ for all $x \in \Omega$;
$g\(\cdot\) = \(g_j\(\cdot\), j=0,\ldots,J\)$
and $g_j\(x\) = \mathbb{E}\(Y_j \vert X = x\)$.
Under very general conditions, the problem posed in
\eqref{optimal allocation general problem} admits an intuitive solution
where the $j$th-category receives all allocation
when $\mathbb{E}\(\lambda_j Y_j \vert X\)$ is the unique maximum.
If there exists some $l$ such that
$\mathbb{E}\(\lambda_j Y_j \vert X\) = \mathbb{E}\(\lambda_l Y_l \vert X\)$,
then 
all allocation can be divided among the maximal indexes.

In this paper, we are interested in differentiating $\gamma\(\lambda, g\)$ in \eqref{optimal allocation general problem} functionally
with respect to both $\lambda$ and $g\(\cdot\)$.
The solution 
can 
provide inference methods for the value of optimal treatment allocation in multiple applications.
In section \ref{constrained and roc}, 
we derive the asymptotic distributions and justify bootstrapping for binary optimal allocation subject to a resource constraint and for the receiver operating characteristic (ROC) curve in binary classification. 
Section \ref{fast convergence rates} provides asymptotic regret bounds and the asymptotic distribution of the value function for unconstrained multi-class optimal allocation.

Our paper attempts to synthesize several strands of literature.
It is widely-known 
in the plug-in classification literature that the error of the second step classification is smaller than that of
the first step regression 
\citep{devroye1996probabilistic, audibert2007fast}.
Our paper builds on seminal works by \cite{chernozhukov2018sorted} and \cite{chen2003estimation}
to provide conditions supporting $\sqrt{n}$ value function convergence rates and tight regret bounds in multiclass and constrained optimal allocations, covering
settings both for classical empirical processes and for recent advances in Neyman orthogonality and sample splitting \citep{chernozhukov2018double, chernozhukov2022locally}.
Subsequent works following \cite{manski2004statistical} have defined regret as the difference between the social
welfare potential and the expected utility achieved by an estimated decision rule. 

In the binary case of $J=1$, $\lambda = \(\lambda_0, \lambda_1\)$.
Consider the
finite dimensional derivatives of $\gamma\(\lambda, g\) =\gamma\(\lambda_0, \lambda_1, g\)$ with respect to $\lambda_0$.
Assuming $\mathbb{P}\{\lambda_1 g_1\(X\)=\lambda_0 g_0\(X\)\}=0$,  we can write
\bsnumber\label{binary indicator sorting operator}
\gamma\(\lambda_0,\lambda_1, g\) 
=\mathbb{E}\[1\(\lambda_0 g_0\(X\) > \lambda_1 g_1\(X\) \)\lambda_0 g_0\(X\)\]
+\mathbb{E}\[1\(\lambda_1 g_1\(X\) > \lambda_0 g_0\(X\) \)\lambda_1 g_1\(X\)\].
\end{split}\end{align}
Given the definition of $\gamma\(\lambda_0,\lambda_1, g\)$,  we expect based on intuitive reasoning that
\begin{align}\begin{split}\label{intuition for lambda finite dimensional}
&\frac{\partial \gamma\(\lambda_0, \lambda_1, g\)}{\partial \lambda_0}
= \mathbb{E}\[1\(\lambda_0 g_0\(X\) > \lambda_1 g_1\(X\) \)g_0\(X\)\].
\end{split}\end{align}
In a rigorous proof of \eqref{intuition for lambda finite dimensional}
using direct calculation, 
the terms of major difficulties 
are
\bsnumber\label{binary indicator sorting operator heuristic derivative}
\lim_{h \to 0} & \frac{
	\mathbb{E} \[1\(\(\lambda_0+h\) g_0\(X\) > \lambda_1 g_1\(X\)\)
	- 1\(\lambda_0 g_0\(X\) > \lambda_1 g_1\(X\)\) \lambda_0 g_0\(X\)\]
	}{
h} \quad \text{and} \\
\lim_{h \to 0} & \frac{
	\mathbb{E} \[1\(\lambda_1 g_1\(X\) > \(\lambda_0 +h \) g_0\(X\)\)
	- 1\(\lambda_1 g_1\(X\) > \lambda_0 g_0\(X\)\) \lambda_1 g_1\(X\)\]
	}{
h
	}.
\end{split}\end{align}
We will see later in section \ref{Weighted sorted effect} that under mild conditions,
the two terms in \eqref{binary indicator sorting operator heuristic derivative} cancel out because
they converge to the same integral on the level set
$\{x: \lambda_0 g_0\(x\)=\lambda_1 g_1\(x\)\}$ with opposing signs. 
As a consequence, the desired partial derivatives exist and take the displayed form 
\eqref{intuition for lambda finite dimensional}.

Our approach in section \ref{Weighted sorted effect} is greatly inspired by the \textit{integration on manifold} approach in \cite{chernozhukov2018sorted}.
In particular, we reapproach and generalize their results based on the concept of Hausdorff measure and powerful 
area and coarea integration formulas from geometric measure theory.
We show that the Hadamard derivative of $\gamma\(\lambda, g\)$ contains terms that are integrals under certain Hausdorff measure.
The Hadamard differentiability results are derived under mild primitive conditions instead of being directly assumed.
See for example \cite{sherman1993limiting} and \cite{luedtke2016optimal}.

In the general multi-class setting, the social welfare potential function
in \eqref{optimal allocation general problem} 
is not only subadditive but also positive homogeneous of degree one with respect to both $\lambda$ and $g\(\cdot\)$, respectively.
Consequently, while $\gamma\(\lambda, g\)$ is not necessarily jointly convex in $\(\lambda, g\)$,
$\gamma\(\lambda, g\(\cdot\)\)$ is convex in $g\(\cdot\)$ given $\lambda$ and is convex in $\lambda$ given $g\(\cdot\)$.
Interestingly, this simple observation seems to have gone 
unnoticed by the existing literature. 
As discussed in section \ref{frechet differentiability section}, 
profound results from geometric functional analysis imply that under nearly no assumptions, the social welfare potential 
function $\gamma\(\lambda, g\)$ is Fr\'{e}chet differentiable on a \textit{large} set and admits an application of the envelope  theorem.

The implied envelope theorem allows us to verify the degeneracy of the first order Hadamard derivative of a multi-discrete social welfare potential function. 
In section \ref{fast convergence rates},
we show that the primitive conditions in section \ref{Weighted sorted effect} can also be used to generalize fast convergence
rates 
for the regret of binary plug-in classifiers \citep{audibert2007fast} to multi-discrete allocation problems.
The same methodology can be combined with recent developments of double/debiased machine learning \citep{chernozhukov2018double,chernozhukov2022locally} to speed 
up the estimation of $\gamma\(\lambda, g\)$.

\section{Literature}\label{literature}

A multidisciplinary literature has studied
the construction of optimal decision rules from experimental and observational data in the forms of treatment allocation, policy learning and
individual treatment rule (ITR).
A large body of the literature focuses on empirical risk minimization (ERM).
See for example
\cite{manski2004statistical}, \cite{qian2011performance}, \cite{zhao2012estimating}, \cite{swaminathan2015batch}, \cite{zhou2017residual}, \cite{kitagawa2018should}, \cite{rai2018statistical}, \cite{luedtke2020performance}, \cite{athey2021policy}, \cite{mbakop2021model}, \cite{zhou2023offline}, \cite{ben2024policy} and \cite{ai2024data}, among others.
An important objective is to provide probabilistic bounds for
\bs
R\(\hat{\phi}, \phi^{*}\) & = \mathbb{E}_{X, Y}\[\sum_{j}\phi^{*}_{j}\(X\)Y_{j}\] -
\mathbb{E}_{X, Y}\[\sum_{j}\hat{\phi}_{j}\(X\)Y_{j}\], \\
\text{ for } \hat{\phi} & \in \mathop{\arg\max}\limits_{\phi \in \Phi_{0}}
 \sum_i \sum_{j} \phi_j\(X_i\)	 F_j\(X_{i}, D_{i}, Y_{i}\),
\end{split}\end{align}
where $\phi^{*}$ solves the population optimization program \eqref{optimal allocation general problem}.
The function $F\(X, D, Y\)$ usually takes the inverse propensity weighting (IPW) or doubly robust (DR) form
and $\Phi_{0} \subset \Phi$ is not too complex in an entropy sense.
 See for example \cite{kitagawa2018should} and \cite{athey2021policy}.

A statistical difficulty in the optimal allocation problem arises
from the indicator function in the population solution \citep{qian2011performance}. To address
such a problem, we build on a divergent literature motivated by very different purposes.
To restore monotonicity in conditional quantile estimation,
the functional derivative of the sorting operator in the univariate case 
is studied by \cite{chernozhukov2010quantile} and subsequently
extended to the multivariate cases
by \cite{chernozhukov2018sorted}
using calculus on manifold techniques. \cite{kim1990cube} hinted at a rudimentary form
of calculus on manifold in deriving the large sample properties of cube root consistent estimators.
\cite{sasaki2015quantile} incorporated Hausdorff measure tools in fluid mechanics to characterize the information content of quantile partial derivatives for general structural functions.

Optimal allocation problems appear in different but related settings.
Examples include asymptotically minimax optimal decision under the limits of experiments framework \citep{hirano2009asymptotics}, optimal decision under minimax regret \citep{stoye2009minimax, tetenov2012statistical, ben2024policy}, uniform confidence interval \citep{armstrong2023inference}, allocation with spillover effects \citep{kitagawa2023individualized, kitagawa2023should, viviano2025policy},
as well as binary constrained optimal allocation \citep{bhattacharya2012inferring, luedtke2016optimal} for which
our general functional Hadamard derivative results in section \ref{Weighted sorted effect} are applicable. 

Closely related is the receiver operating characteristic (ROC) curve widely used in binary classification. 
In the economics and finance literature, the plug-in ROC curves are used for example in \cite{berge2011evaluating}, \cite{vallee2019marketplace}, \cite{sadhwani2021deep}, \cite{feng2025statisticalMS} to evaluate prediction and decision making performance.
A previous literature studying the inference of the ROC curves, including \cite{hsieh1996nonparametric}, \cite{lloyd1998using}, \cite{li1999semiparametric}, \cite{hall2004nonparametric} and \cite{bertail2008bootstrapping},
did not account for the sampling error in the estimated propensity score.
An exception is \cite{luckett2021receiver} which
derives uniform asymptotics for the estimated ROC curve
using a support vector machine (SVM) classifier.
Unlike previous studies,
we derive the asymptotic properties of
the ROC curves obtained by plugging-in
the propensity score estimators,
which is consistent with the prevalent empirical usage of
predictive classifiers. 

The methodology in this paper also has an intriguing connection with the faster convergence phenomenon in the classification literature \citep{devroye1996probabilistic, audibert2007fast}.
Consider, especially, the margin assumption (MA) evoked by works including \cite{mammen1999smooth}, \cite{tsybakov2004optimal}, \cite{boucheron2005theory}, \cite{audibert2007fast}, \cite{kitagawa2018should}, \cite{luedtke2020performance}
and \cite{semenova2023debiased}:
\bsnumber\label{marginal assumption MA}
\mathbb{P}\left\{\vert p\(X\) - c\vert < t\right\} \leq Ct^{\alpha}
\end{split}\end{align}
for fixed $c$ and some constant $C > 0$ and $\alpha \geq 0$.
Our conditions in section \ref{Weighted sorted effect} directly imply the margin
assumption with $\alpha \geq 1$.
This connection can be further combined with the recent double/debiased machine learning approach \citep{chernozhukov2018double, chernozhukov2022locally} to form a transparent cross-validation and plug-in based method to complement
the optimal allocation literature.

\section{Hadamard differentiability}\label{Weighted sorted effect}

Following the seminal work by
\cite{chernozhukov2018sorted},	we consider \eqref{binary indicator sorting operator}
as a functional mapping 
and are interested in the infinitesimal change in  $\gamma\(\lambda, g\(\cdot\)\)$ induced by a
marginal change in $\(\lambda, g\(\cdot\)\)$. More precisely, we derive the
Hadamard derivative of
$\gamma\(\lambda, g\(\cdot\)\)$ with respect to $\(\lambda, g\(\cdot\)\)$, which justifies and generalizes the calculations in \eqref{binary indicator sorting operator heuristic derivative}.
We assume that the underlying distribution of $X$,
denoted as $\mu$, is absolutely continuous with density
function $f\(x\)$ with respect to $\mathcal{L}_{n}$, the $n$-dimensional Lebesgue measure. Then we write
\bs
\mathbb{E}\[\sum_{j=0}^J \lambda_j \phi_j\(X\) g_j\(X\)\]
= \sum_{j=0}^J \int \lambda_j \phi_j\(x\) g_j\(x\) d\mu\(x\)=
\sum_{j=0}^J \int \lambda_j \phi_j\(x\) g_j\(x\) f\(x\) d \mathcal{L}_{n} x.
\end{split}\end{align}
Typically, the optimizing policy function takes the form of
$\phi_j\(x\) = \prod_{l \neq j} 1\(\lambda_j g_j\(x\) > \lambda_l g_l\(x\) \)$.
For simplicity we abbreviate $\lambda_j g_j\(x\)$ as
$g_j\(x\)$. The main object of study is 
\bs
\int g_j\(x\) f\(x\) \prod_{l\neq j} 1\(g_j\(x\) > g_l\(x\)\) d \mathcal{L}_{n} x,
\end{split}\end{align}
which can be 
generalized to a form of the following sorting operator (as defined by \cite{chernozhukov2018sorted})
for vector-valued function $h\(x\)$ and vector-valued cutoff threshold $c$,
\bsnumber\label{sorting operator in general form}
\int g_j\(x\) f\(x\) \prod_{l} 1\(h_l\(x\) > c_{l}\) d
\mathcal{L}_{n} x.
\end{split}\end{align}
The number of terms in the product should be
less than $n$, the dimension of $X$.

\subsection{Hausdorff measure and integration formulas}

The following discussions are based on standard references by
\cite{federer2014geometric} and \cite{evans2018measure}.
In order to extend the analysis in \cite{chernozhukov2018sorted},
this section first 
defines extrinsically the 
notion of length, area and volume.
For this purpose, the Carath\'{e}odory criterion is used to
construct a measure space from a $\sigma$-subadditive set function (outer measure) on power set of a space.
More specifically, we introduce the series of Hausdorff's
outer measures
(See definition
\ref{basic outer measure} for outer measure) which
does not require any local parameterization or regularity
assumption on these sets.

\begin{definition}{(Hausdorff outer measure)}
Let $d$ be a metric on $O$.
For arbitrary subset
$E \subset O$, define its diameter as
$\mathrm{diam}\ E = \sup_{x_{1}, x_{2} \in E}
d\(x_{1}, x_{2}\)$ with
$\mathrm{diam}\ \emptyset = 0.$
Let
\begin{align}\begin{split}\nonumber
\alpha_{k} = \frac{\pi^{\frac{k}{2}}}
{\Gamma\(\frac{k}{2} + 1\)},
\quad k \in \mathbb{N}=\{0,1,\ldots\},
\end{split}\end{align}
where $\Gamma\(\cdot\)$ is the gamma function.	For arbitrary $E \subset O$,
and for $\delta > 0$,
define
\begin{align}\begin{split}\nonumber
\mathcal{H}^{*}_{k, \delta}\(E\) = \inf\left\{
\sum_{j \geq 1}\alpha_{k}\(
\frac{\mathrm{diam}\ B_{j}}{2}\)^{k} :
E \subset \bigcup_{j \geq 1}B_{j},
\mathrm{diam}\ B_{j} \leq \delta
\right\}.
\end{split}\end{align}
The $k$-dimensional Hausdorff outer measure
of $E$ is defined as
$\mathcal{H}^{*}_{k}\(E\) = \lim_{\delta \downarrow 0}
\mathcal{H}^{*}_{k, \delta}\(E\)$, and can be verified to
be a metric outer measure on the metric space $\(O,d\)$.
\end{definition}

\begin{definition}\label{def Hausdorff measure}
{(Hausdorff measure)}
For $k \in \mathbb{N}$, let $\mathcal{M}_{k} =
\mathcal{M}_{k}\(O\)$ be the $\sigma$-algebra
generated by $\mathcal{H}^{*}_{k}$ by the
Carath\'{e}odory criterion (See Theorem
\ref{Caratheodory's extension theorem}).
The restriction of $\mathcal{H}^{*}_{k}$ to
$\mathcal{M}_{k}$,
$\mathcal{H}_{k} =
\mathcal{H}^{*}_{k}\vert_{\mathcal{M}_{k}}$, is
called Hausdorff measure on $\mathcal{M}_{k}$.
\end{definition}

In $\mathbb{R}^{n}$,
  $\mathcal{H}^{*}_{n} = \mathcal{L}^{*}_{n}$, $\mathcal{H}_{n} = \mathcal{L}_{n}$,
  where $\mathcal{L}^{*}_n$ denotes the
  Lebesgue outer measure and $\mathcal{L}_{n}$
  denotes the Lebesgue measure on
  $\mathbb{R}^{n}$.
$\mathcal{H}^{*}_{0}\(E\)$ measures the cardinality of the set
$E$:
  \begin{align}\begin{split}\nonumber
  \mathcal{H}^{*}_{0}\(E\) = \mathcal{H}_{0}\(E\) = \left\{
  \begin{array}{ll}
    \text{number of elements in } E, &
    E \text{ is empty or finite,} \\
    + \infty, & E \text{ is infinite}.
  \end{array}\right.
  \end{split}\end{align}
We focus on the Hausdorff measures  $\mathcal{H}_{k}$
 defined on
Euclidean spaces $\mathbb{R}^{n}$ for $k \leq n$.

\begin{definition}\label{def J}
(Jacobian) For
$k \in \{1, 2, \ldots, n\}$, the Jacobian at $x$ is defined as
\begin{align}\begin{split}\nonumber
Jf\(x\) = \left\{\begin{array}{ll}
		   \sqrt{\det\(\nabla f\(x\)^{T} \nabla f\(x\)\)}, &
		   f: E \subset \mathbb{R}^{k} \to
		   \mathbb{R}^{n}, \\
		   \sqrt{\det\(\nabla f\(x\)\nabla f\(x\)^{T}\)}, &
		   f: E \subset \mathbb{R}^{n} \to
		   \mathbb{R}^{k}.
		 \end{array}\right.
\end{split}\end{align}
\end{definition}

The following Theorem \ref{area formula classic}
generalizes the conventional
integration change of variable formula for
invertible
mapping to allow for non-invertible mapping and an increase in dimension.

\begin{theorem}\label{area formula classic}
Area formula\\
Let $E \subset \mathbb{R}^{k}$ be an open set,
$k \in \{1, 2, \ldots, n\}$,
$\psi: E \mapsto \mathbb{R}^{n}$ be a
$C^1$ or Lipschitz function.
Then for all measurable $S \subset E$,
\begin{align}\begin{split}\label{area formula area}
\int_{S}J\psi\(x\)d\mathcal{L}_{k}x =
\int_{\psi\(S\)}\mathcal{H}_{0}\(S \cap
\psi^{-1}\(y\)\)d\mathcal{H}_{k}y.
\end{split}\end{align}
If a measurable function $f: S \mapsto \mathbb{R}$ is
nonnegative or if the left hand side
of \eqref{area formula integral} is finite, then the following equality holds,
\begin{align}\begin{split}\label{area formula integral}
\int_{S}f\(x\)J\psi\(x\)d\mathcal{L}_{k}x =
\int_{\psi\(S\)}\[\int_{S \cap \psi^{-1}\(y\)}
f\(x\)d\mathcal{H}_{0}x\]d\mathcal{H}_{k}y.
\end{split}\end{align}
\end{theorem}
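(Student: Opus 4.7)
The plan is to establish \eqref{area formula area} first and then derive \eqref{area formula integral} by the standard approximation of a nonnegative measurable $f$ by an increasing sequence of simple functions, followed by monotone convergence; for signed $f$ with a finite left-hand side one splits into positive and negative parts. For the core multiplicity identity \eqref{area formula area}, I proceed in three stages: a linear algebraic fact, a reduction of the nonlinear map to approximately linear pieces, and a treatment of the critical set.

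First, for an injective linear map $L:\mathbb{R}^k\to\mathbb{R}^n$, the polar/singular value decomposition writes $L=O\circ S$ with $O$ having orthonormal columns and $S$ symmetric positive definite, so $JL=\sqrt{\det(L^T L)}=\det S$. Since $\mathcal{H}_k$ is invariant under isometric embeddings and agrees with $\mathcal{L}_k$ on $\mathbb{R}^k$, this yields $\mathcal{H}_k(L(A))=JL\cdot\mathcal{L}_k(A)$ for every Borel $A\subset\mathbb{R}^k$; when $L$ has nontrivial kernel, the image lies in an affine subspace of dimension less than $k$ and the corresponding measure vanishes, matching $JL=0$.

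Second, by Rademacher's theorem the Lipschitz (or $C^1$) map $\psi$ is differentiable $\mathcal{L}_k$-almost everywhere on $E$. Fixing $t>1$, I partition $E\setminus Z$, where $Z=\{J\psi=0\}$, into countably many measurable pieces $E_i$ (plus a null set) on which $\psi$ is nearly affine with reference linearization $L_i$, in the sense that $t^{-1}|L_i(x-y)|\le|\psi(x)-\psi(y)|\le t|L_i(x-y)|$ on $E_i$ and $J\psi$ is within a factor $t$ of $JL_i$; such a partition is standard and is obtained by intersecting preimages of rational-entry matrices under $\psi'$ with sets where the bi-Lipschitz inequality holds. Injectivity of $\psi$ on each $E_i$ then allows me to apply the linear estimate to obtain, up to a factor $t^{\pm k}$, the identity
\begin{equation*}
\int_{S\cap E_i} J\psi\, d\mathcal{L}_k = \mathcal{H}_k\bigl(\psi(S\cap E_i)\bigr) = \int_{\psi(S)} \mathbf{1}_{\psi(S\cap E_i)}(y)\, d\mathcal{H}_k y.
\end{equation*}
Summing in $i$ by Beppo Levi and using the pointwise identity $\sum_i \mathbf{1}_{\psi(S\cap E_i)}(y) = \mathcal{H}_0\bigl(S\cap (E\setminus Z)\cap\psi^{-1}(y)\bigr)$, then letting $t\downarrow 1$, produces the desired formula restricted to $E\setminus Z$.

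Third, I need to absorb the critical set $Z$ into both sides. The left-hand contribution vanishes trivially because $J\psi\equiv 0$ there. For the right-hand side I must show $\mathcal{H}_k(\psi(Z\cap K))=0$ for every compact $K\subset E$, so that $\psi(Z)$ does not inflate the multiplicity integral; I plan to prove this by covering $Z\cap K$ by small balls on which, at any differentiability point $x_0$, the image is contained in a thin parallelepiped of thickness $o(r)$ in at least one direction (because $\psi'(x_0)$ has rank less than $k$), and then applying a Vitali-type covering argument together with the general Lipschitz bound $\mathcal{H}_k(\psi(B))\le (\mathrm{Lip}\,\psi)^k\,\mathcal{L}_k(B)$. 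This step is the hardest part of the argument: in the $C^1$ case it is a direct Sard-type computation using uniform continuity of $\psi'$, while for merely Lipschitz $\psi$ one must work pointwise at differentiability points and carefully handle the covering to avoid losing the $o(r)$ factor that makes the $\mathcal{H}_k$-measure of the image collapse to zero.
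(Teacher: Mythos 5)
Your proposal is correct but takes a genuinely different route from the paper's proof. The paper first isolates the injective $C^{1}$ case with $J\psi>0$ as Proposition \ref{area formula simple}, then obtains the non-injective $C^{1}$ case in \textbf{part} 1 by taking a Vitali cover of $E$ on which $\psi$ is locally bijective (implicit function theorem) and summing, and finally passes from $C^{1}$ to Lipschitz in \textbf{parts} 2--3 via the Rademacher--Lusin--Egoroff--Whitney approximation scheme together with the Lusin (N)-type bound $\int_{\psi(S)}\mathcal{H}_{0}\(S\cap\psi^{-1}\(y\)\)d\mathcal{H}_{k}y\le(\mathrm{Lip}\,\psi)^{k}\mathcal{L}_{k}\(S\)$. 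You instead handle Lipschitz and multiplicity in a single stroke by the Federer/Evans--Gariepy piecewise bi-Lipschitz linearization: partition $\{J\psi>0\}$ into countably many Borel pieces $E_{i}$ on which $\psi$ is injective and $t$-comparable to a fixed linear $L_{i}$ with $J\psi\approx_{t^{\pm k}}JL_{i}$, apply the linear case to each piece, sum by monotone convergence, and send $t\downarrow 1$. The treatments of the critical set also differ in an essential way: the paper uses the slick dimension-lift $\psi_{\epsilon}\(x\)=\(\epsilon x,\psi\(x\)\):E\to\mathbb{R}^{k+n}$, which has $J\psi_{\epsilon}>0$ everywhere, combined with the $1$-Lipschitz projection to deduce $\mathcal{H}_{k}\(\psi\(Crit\(\psi\)\)\)\le\int_{Crit\(\psi\)}J\psi_{\epsilon}\,d\mathcal{L}_{k}\to 0$ as $\epsilon\downarrow 0$, whereas you propose a direct Sard-type covering argument (thin tube at each differentiability point, Vitali, then the $(\mathrm{Lip}\,\psi)^{k}$ bound). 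Both routes are standard in geometric measure theory. The paper's organization buys modularity (Proposition \ref{area formula simple} is reused elsewhere, e.g.\ in the proofs in section \ref{sorting operator}) and the $\psi_{\epsilon}$ trick avoids the delicate $o(r)$ bookkeeping in the merely Lipschitz case; your route is shorter and avoids the Whitney extension machinery. Two minor precisions worth recording in your step 2: the Jacobian comparability should be stated as $t^{-k}JL_{i}\le J\psi\le t^{k}JL_{i}$ (not within a factor $t$), and you should note that the leftover null set in the partition is harmless because Lipschitz maps send $\mathcal{L}_{k}$-null sets to $\mathcal{H}_{k}$-null sets, so both integrals are unaffected.
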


Furthermore, the next \textit{curving Fubini-Tonelli}
theorem allows for slicing among curving surfaces and mapping from a higher
dimension to a lower dimension.

\begin{theorem}\label{coarea formula classic}
{Coarea formula}\\
 Let $E \subset \mathbb{R}^{n}$ be an open set,
$k \in \{1, 2, \ldots, n\}$,
$\varphi: E \mapsto \mathbb{R}^{k}$ be a $C^1$ or Lipschitz function,
then for all measurable $S \subset E$,
\begin{align}\begin{split}\label{coarea formula area}
\int_{S}J\varphi\(x\)d\mathcal{L}_{n}x =
\int_{\mathbb{R}^{k}}\mathcal{H}_{n - k}\(S \cap
\varphi^{-1}\(y\)\)d\mathcal{H}_{k}y.
\end{split}\end{align}
If a measurable function $f: S \mapsto \mathbb{R}$ is
nonnegative or if the left hand side
of \eqref{coarea formula integral} is finite, then the following equality holds,
\begin{align}\begin{split}\label{coarea formula integral}
\int_{S}f\(x\)J\varphi\(x\)d\mathcal{L}_{n}x =
\int_{\mathbb{R}^{k}}\[\int_{S \cap \varphi^{-1}\(y\)}
f\(x\)d\mathcal{H}_{n - k}x\]d\mathcal{H}_{k}y.
\end{split}\end{align}
\end{theorem}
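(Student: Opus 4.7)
The plan is to follow the classical Federer strategy of reducing the measure identity \eqref{coarea formula area} to the linear case and then linearizing $\varphi$. Throughout, I can assume $\varphi$ is Lipschitz (the $C^1$ case reduces by a local Lipschitz extension), so by Rademacher's theorem $\varphi$ is differentiable $\mathcal{L}_n$-a.e. I would decompose $E$ as $N \cup C \cup \{J\varphi > 0\}$, where $N$ is the null set of non-differentiability and $C = \{J\varphi = 0\}$ is the critical set, and handle the three pieces separately.

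First I would treat the linear case. For $L : \mathbb{R}^n \to \mathbb{R}^k$ linear of rank $k$, the polar decomposition gives $L = O \circ \pi$ with $\pi$ an orthogonal projection onto a $k$-dimensional subspace and $O$ a linear bijection of that subspace onto $\mathbb{R}^k$. Then $JL = |\det O|$ is constant, each $L^{-1}(y)$ is a translate of a fixed $(n-k)$-plane, and \eqref{coarea formula area} collapses to Fubini's theorem after a Lebesgue change of variables on the target. For $L$ of rank less than $k$, $JL = 0$ and $L(\mathbb{R}^n)$ has $\mathcal{H}_k$-measure zero, so both sides vanish. I would then pass to general $\varphi$ on $\{J\varphi > 0\}$ by a Lusin-type selection: for every $\varepsilon > 0$ this set can be partitioned (up to a null set) into countably many Borel pieces $E_i$ on each of which $\varphi$ is bi-Lipschitz and agrees with the affine map $x \mapsto \varphi(x_i) + D\varphi(x_i)(x - x_i)$ to within a factor $1 + \varepsilon$ in both values and Jacobians. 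Applying the linear case to each $E_i$ and summing, the two sides of \eqref{coarea formula area} differ by a factor tending to $1$ as $\varepsilon \downarrow 0$, which yields equality on $\{J\varphi > 0\}$.

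The remaining, and hardest, step is to show that both sides vanish on $C \cup N$. The left side is trivially zero. For the right side, the key claim is the Sard-type estimate $\mathcal{H}_{n-k}(C \cap \varphi^{-1}(y)) = 0$ for $\mathcal{H}_k$-a.e. $y$, together with its analogue on $N$. This is where I expect the main obstacle: at each $x \in C$ the rank of $D\varphi(x)$ is strictly less than $k$, and one has to cover $C$ by cylinders adapted to the reduced rank at each point, using a Vitali- or Besicovitch-type argument to control overlaps, and then invoke the rank-deficient linear estimate on each cylinder. The contribution from $N$ is subsumed by Eilenberg's coarea-type inequality for Lipschitz maps, which bounds the target integral by a multiple of $\mathcal{L}_n(N) = 0$.

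Finally, \eqref{coarea formula integral} follows from \eqref{coarea formula area} by a monotone class argument: apply \eqref{coarea formula area} to $S \cap A$ for each measurable $A$ to obtain the identity for $f = \mathbf{1}_A$, then extend to simple and nonnegative measurable $f$ by monotone convergence. Measurability of $y \mapsto \mathcal{H}_{n-k}(S \cap \varphi^{-1}(y))$, needed to interpret the right-hand side, follows from the Borel regularity of $\mathcal{H}_k$ after reducing to Borel $S$ via inner and outer approximations.
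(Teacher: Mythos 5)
Your overall plan is the classical Federer--Evans--Gariepy approach, and most of it is sound: the rank-$k$ linear case via polar decomposition, the rank-deficient linear case (image is $\mathcal{H}_k$-null), the Lusin-type decomposition of $\{J\varphi>0\}$ into countably many bi-Lipschitz pieces where $\varphi$ is within a factor $1+\varepsilon$ of affine, the Eilenberg inequality to kill the Rademacher null set $N$, and the extension from \eqref{coarea formula area} to \eqref{coarea formula integral} by simple-function approximation and monotone convergence. The paper proceeds slightly differently on the outer structure: it first establishes the $C^1$ case (Proposition \ref{coarea formula simple}) and then reduces general Lipschitz $\varphi$ to $C^1$ via Rademacher, Lusin, Egoroff, and the Whitney extension theorem \ref{Whitney extension theorem}, while you instead go straight to Lipschitz $\varphi$ via Rademacher and a measurable selection. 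Both are standard and your route is, if anything, a bit more economical for this direction of the formula.

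The real gap is in your treatment of the critical set $C = \{J\varphi = 0\}$, which you correctly flag as the hardest step but do not actually resolve. Eilenberg's inequality alone gives $\int \mathcal{H}_{n-k}(C \cap \varphi^{-1}(y))\,d\mathcal{H}_k y \lesssim (\mathrm{Lip}\,\varphi)^k\,\mathcal{L}_n(C)$, which is not zero since $\mathcal{L}_n(C)$ is typically positive. Your proposal to cover $C$ by ``cylinders adapted to the reduced rank at each point'' and ``invoke the rank-deficient linear estimate'' does not close the loop: on a small ball around a critical point the local affine model has rank $r<k$, so its exact fibers have dimension $n-r>n-k$ and hence infinite $\mathcal{H}_{n-k}$-measure on a $\mathcal{H}_k$-null image, but the perturbed map $\varphi$ has an image of positive (though small) $\mathcal{H}_k$-measure and fibers of finite $\mathcal{H}_{n-k}$-measure, and the isotropic Eilenberg bound simply does not see the rank deficiency. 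Making this work requires either an anisotropic refinement of Eilenberg's inequality that records the near-degenerate singular values of $D\varphi$ on each cylinder, or the Morse--Sard--Federer machinery; neither is supplied. The paper sidesteps this entirely with a short dimension-raising trick (part 2 of the proof of Proposition \ref{coarea formula simple}): set $\varphi_\epsilon(x,z) = \varphi(x) + \epsilon z$ on $E \times \mathbb{R}^k$, which has $J\varphi_\epsilon > 0$ everywhere and $J\varphi_\epsilon \lesssim \epsilon$ on $C \times B(0,1)$, slice the product fibers with the coordinate projection to recover the original fibers, and apply the already-proved positive-Jacobian case together with Eilenberg and Fubini to see the right-hand side over $C$ is $\lesssim \epsilon \,\mathcal{L}_n(C) \to 0$. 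You should either adopt this trick or spell out the anisotropic covering estimate; as written, the critical-set step is not a proof.
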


\begin{remark}
  \eqref{area formula integral} and
  \eqref{coarea formula integral}
  are direct corollaries of
  \eqref{area formula area} and
  \eqref{coarea formula area}, respectively,
  by a standard simple function
  approximation of measurable functions.
  \eqref{area formula integral}, 
  \eqref{coarea formula integral}
  and the classical $\mathbb{R}^{n} \mapsto
  \mathbb{R}^{n}$ change of variables
  formula, 
  can be collectively written in a unified formula:
  \begin{align}\begin{split}\nonumber
  \int_{S}f\(x\)J\varphi\(x\)d\mathcal{L}_{n}x =
  \int_{\mathbb{R}^{k}}\[\int_{S \cap
  \varphi^{-1}\(y\)}f\(x\)
  d\mathcal{H}_{\max\{n - k, 0\}}x\]d\mathcal{H}_{\min\{n, k\}}y,
  \end{split}\end{align}
  where $\varphi: \mathbb{R}^{n} \supset E \to \mathbb{R}^{k}$, $n, k \in \{1, 2, \ldots\}$.
\end{remark}

\subsection{Hadamard derivatives of the sorting operator}
\label{sorting operator}

In this section,
we present
our mathematics results on the
Hadamard differentiability of the sorting
operator in \eqref{sorting operator in general form}, where we intend to compute
\bs
\lim_{t\to 0}
\frac{1}{t}
\[
\int \(g_j\(x\) + t G_j\(x\)\) f\(x\) 
1\(h\(x\) + t H\(x\) > c\) d \mathcal{L}_{n} x
-
\int g_j\(x\) f\(x\) 
1\(h\(x\) > c\) d \mathcal{L}_{n} x
\].
\end{split}\end{align}
Because the main term that needs to be accounted for is
\bs
\lim_{t\to 0}
\frac{1}{t}
\[
\int g_j\(x\) f\(x\) 
1\(h\(x\) + t H\(x\) > c\) d \mathcal{L}_{n} x
-
\int g_j\(x\) f\(x\) 
1\(h\(x\) > c\) d \mathcal{L}_{n} x
\],
\end{split}\end{align}
we focus on the following simplified form of the sorting operator with
vector-valued $c$ and $h\(x\)$: 
\begin{align}\begin{split}\label{simplied form of sorting operator}
F\(h,c\) \coloneqq \int_{h\(x\) > c}f\(x\)d\mathcal{L}_{n}x =
\int 1
\(h\(x\) > c\)f\(x\)d\mathcal{L}_{n}x.
\end{split}\end{align}
Without loss of generality, the
$g_j\(x\) f\(x\)$ term in \eqref{sorting operator in general form} is replaced by $f\(x\)$ in
\eqref{simplied form of sorting operator} above.

First, in the special case of the sorting operator where
$h\(x\)$ is a scalar function, consider the derivative with respect to $c$.
Let $h$ be  $C^{1}$ or Lipschitz, with
$\Vert \nabla h\(x\)\Vert > 0,
\mathcal{L}_{n} \; a.e. \; x.$
Suppose also that $f$ is an integrable function.
By the coarea formula Theorem \ref{coarea formula classic},
\begin{align}\begin{split}\nonumber
\int_{h\(x\) > c}
f\(x\)d\mathcal{L}_{n}x =
\int^{\infty}_{c}\[
\int_{h^{-1}\(c'\)}\frac{f\(x\)}
{\left\Vert\nabla h\(x\)\right\Vert}
d\mathcal{H}_{n - 1}x\]d\mathcal{L}_{1} c'.
\end{split}\end{align}
Further, 
by the Lebesgue differentiation theorem, the conventional derivative is given by
\begin{align}\begin{split}\nonumber
\frac{d}{dc}\(\int_{h\(x\) > c}
f\(x\)d\mathcal{L}_{n}x\) =
- \int_{h^{-1}\(c\)}\frac{f\(x\)}
{\left\Vert\nabla h\(x\)\right\Vert}d\mathcal{H}_{n - 1}x
\end{split}\end{align}
for $\mathcal{L}_{1} \; a.e. \; c$.
The right hand side above is also a derivative in the distributional (weak) sense.
\begin{definition}\label{first Hadamard}
(Hadamard differentiability) 
Let $\mathcal X$ and $\mathcal Y$ be normed spaces 
with norms
$\Vert\cdot\Vert_{\mathcal X}$ and $\Vert\cdot\Vert_{\mathcal Y}$.
Consider the map $\phi: \mathcal X_{\phi} \subset \mathcal X \to \mathcal Y$.
Then $\phi$ is called Hadamard differentiable
  at $\theta \in \mathcal X_{\phi}$ tangentially to
  $\mathcal X_{0} \subset \mathcal X$
  if there is a continuous linear map
  $\phi'_{\theta}: \mathcal X_{0} \to \mathcal Y$ such that
  \begin{align}\begin{split}\nonumber
  \left\Vert
  \frac{\phi\(\theta+ t_{n} \mathcal \mathcal{x}_{n}\) -
  \phi\(\theta\)
  }{t_{n}} - \phi'_{\theta}\(x\)
  \right\Vert_{\mathcal Y} \to 0, \; \text{for all} \;
  t_{n} \to 0 \; \text{and} \; \mathcal \mathcal{x}_{n} \to x \in \mathcal{X}_{0},
  \end{split}\end{align}
  as $n \to \infty$,
  where 
  $\theta + t_{n} \mathcal \mathcal{x}_{n} \in \mathcal X_{\phi}$ for all $n$.
\end{definition}

\begin{remark}
When $\mathcal X_0$ is a linear subspace,
by  \cite{fang2019inference} Proposition 2.1,
  $t_{n} \to 0$ can be replaced by $t_{n} \downarrow 0$
  in Definition \ref{first Hadamard} without
loss of generality. 
\end{remark}

To verify 
Hadamard differentiability,
we calculate the derivative
of the sorting operator, 
\begin{align}\begin{split}\nonumber
\lim_{n \to \infty}
\frac{1}{t_{n}}\int
\[1\(h\(x\) +
t_{n} \mathcal \mathcal{H}_{n}\(x\) > c\) -
1\(h\(x\) > c\)\]
f\(x\)d\mathcal{L}_{n}x,
\end{split}\end{align}
where $\mathcal \mathcal{H}_{n} \to H$ and $t_{n} \downarrow 0$
as $n \to \infty$.
This limit is closely related to
the
Hausdorff measure $\mathcal{H}_{k}$ along the {\it surface}
$\{x: h\(x\) = c\}$, which
defines the 
{\it volume} of $k$-dimensional surfaces
in $n$-dimensional spaces, $k \in  \{1, \ldots, n\}$.
The important insight of
\cite{chernozhukov2018sorted}
is a representation by an integral under
$\mathcal{H}_{n - 1}$.

Let $E \subset \mathbb{R}^{n}$ be an open set,
$k \in \{1, 2, \ldots, n\}$,
$h: E \mapsto \mathbb{R}^{k}$ be a $C^{1}$ 
function, $f: E \mapsto \mathbb{R}$
be a continuous function.
Let $C\(E, \mathbb{R}^{k}\)$ denote the space
of $\mathbb{R}^{k}$-valued continuous functions on $E$ equipped with
the sup-norm. 
We need the following technical
assumptions about $h$ and $f$.

\begin{assump}\label{compact assumption}
The function $f$ is continuous with compact support $K_{f} \subset E$.
\end{assump}

\begin{definition}\label{critical regular}
(Critical and regular point and value)
Let $n, k \in \{1, 2, \ldots\}$,
  $E \subset \mathbb{R}^{n}$ be an
  open set, $f: E \mapsto \mathbb{R}^{k}$
  be a $C^{1}$ function.
  A point $x \in E$ is
  a critical point if
  $J f\(x\) = 0$;
  otherwise $x$ is a regular point.
  If
  $c = f\(x\)$
  for some critical point $x$, then
  $c$ is a critical value;
  otherwise $c$
  is a regular value.
  For tractability, we only consider regular values $c \in f\(E\)$.
\end{definition}

\begin{assump}\label{regular value assumption}
The support $K_f$ consists of only regular points of the function $h$.
\end{assump}

Assumption
\ref{compact assumption}
guarantees that $K_f \cap \{x: h\(x\) = c'\}$
is close to
$K_f \cap \{x: h\(x\) = c\}$ when $c'$
is close to $c$.
The purpose of the
Assumption \ref{regular value assumption} is to
guarantee that $h$ is regular in a neighborhood of $K_f$ in order to apply the implicit function theorem.

\begin{theorem}\label{general k derivative}
Let $h: E \to \mathbb{R}^k$ be a $C^1$ function, $k\in \{1,2,\ldots,n\}$, $E$ an open subset of $\mathbb{R}^n$.
Let Assumption
 \ref{compact assumption} 
and Assumption \ref{regular value assumption} hold.
Let $\mathcal D \subset h\(E\)$ be a compact subset.
Then,
the map $F\(h,	c\):
C\(E, \mathbb{R}^{k}\) \to \mathbb{R}$
in \eqref{simplied form of sorting operator}
is Hadamard differentiable uniformly with respect to $c \in \mathcal{D}$ at $h$ tangentially
to $C\(E, \mathbb{R}^{k}\)$. 
When $k > 1$, the derivative is given by
\begin{align}\begin{split}\label{derivative k>1}
F'_{h, c}\(H, 0\) =  \sum_i \int_{y > \tau_{\neg i}\(c\)} \[
\int_{h^{-1}\(c'\(y, c_{i}; i\)\)}
\frac{
H_i\(x\) f\(x\)
}{
J h\(x\)
}
d \mathcal{H}_{n-k} x
\] d \mathcal{L}_{k-1} y,
\end{split}\end{align}
where $\tau_{\neg i}\(c\)$ denotes the $\mathbb{R}^k \to \mathbb{R}^{k-1}$ coordinate projection except
the $i$-th coordinate, $c'\(y, c_{i}; i\)$ is the $k$-dimensional vector such that $\tau_{\neg i}\(c'\(y, c_{i}; i\)\) = y$ and $\tau_{i}\(c'\(y, c_{i}; i\)\) = c_i$.
When 
$k=1$,	i.e. when
$h$ is a scalar-valued function,
the derivative 
is given by
\begin{align}\begin{split}\label{theorem 2.1 derivative k=1}
F'_{h, c}\(H, 0\) =
\int_{h^{-1}\(c\)}
\frac{H\(x\)f\(x\)}{\left\Vert
\nabla h\(x\)\right\Vert}
d\mathcal{H}_{n - 1}x.
\end{split}\end{align}
\end{theorem}

Embedded in the proof of Theorem \ref{general k derivative} 
is the continuity of the derivative of the sorting operator,
expressed in the following Proposition \ref{surface integral continuity}, which
handles the general case when $k \geq 1$.

\begin{proposition}\label{surface integral continuity}
Under the conditions in Theorem \ref{general k derivative},
the density of the random variable $h\(X\)$
is continuous on $h\(E\)$ and is given by 
\begin{align}\begin{split}\nonumber
\int_{h^{-1}\(c\)}\frac{f\(x\)}
{J h\(x\)}d\mathcal{H}_{n - k}x.
\end{split}\end{align}
\end{proposition}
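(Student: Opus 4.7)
The plan is to localize via the implicit function theorem, rewrite each local surface integral as an ordinary Lebesgue integral with a parameter via the area formula, and conclude continuity by dominated convergence.

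Fix $c \in \mathcal{D}$ and a sequence $c_j \to c$ in $\mathcal{D}$; it suffices to show
$$I(c_j) := \int_{h^{-1}(c_j)} \frac{f(x)}{J h(x)} \, d\mathcal{H}_{n-k}(x) \longrightarrow I(c).$$
Under Assumption \ref{compact assumption}, every point of $K_f \cap h^{-1}(c)$ is a regular point of $h$, either directly from Assumption \ref{regular value assumption}, or under the weaker Assumption \ref{regular value assumption '} after shrinking $\mathcal{D}$ to the open set of values at which $K_f \cap h^{-1}(\cdot)$ meets only regular points (open because the regular set is open in $E$ and $K_f \cap h^{-1}(c)$ is compact). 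The implicit function theorem then supplies, around each $x_0 \in K_f \cap h^{-1}(c)$, an open $U_{x_0} \subset E$ and a $C^1$ chart $\psi_{x_0}: V_{x_0} \times W_{x_0} \to U_{x_0}$ with $V_{x_0} \subset \mathbb{R}^{n-k}$ open, $W_{x_0} \subset \mathbb{R}^k$ an open neighborhood of $c$, and $h \circ \psi_{x_0}(v,w) = w$. Compactness of $K_f \cap h^{-1}(c)$ (vacuously if empty) yields a finite subcover $\{U_i\}_{i=1}^N$, a subordinate $C^\infty$ partition of unity $\{\chi_i\}$, and a common neighborhood $W^\ast$ of $c$ contained in every $W_i$.

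For $c_j \in W^\ast$, continuity of $h$ together with compactness of $K_f$ forces $K_f \cap h^{-1}(c_j) \subset \bigcup_i U_i$ once $j$ is large, so one decomposes $I(c_j) = \sum_i I_i(c_j)$ with
$$I_i(c_j) = \int_{h^{-1}(c_j) \cap U_i} \frac{\chi_i(x) f(x)}{J h(x)} \, d\mathcal{H}_{n-k}(x).$$
Since $v \mapsto \psi_i(v, c_j)$ is an injective $C^1$ parametrization of $U_i \cap h^{-1}(c_j)$, the area formula (Theorem \ref{area formula classic}) gives
$$I_i(c_j) = \int_{V_i} \frac{\chi_i(\psi_i(v, c_j)) \, f(\psi_i(v, c_j))}{J h(\psi_i(v, c_j))} \, J_v \psi_i(v, c_j) \, dv,$$
where $J_v \psi_i$ denotes the Jacobian of the partial map $v \mapsto \psi_i(v, c_j)$. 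This integrand is jointly continuous in $(v, c_j)$, is supported in a fixed compact subset of $V_i$ inherited from $\mathrm{supp}\,\chi_i$, and is uniformly bounded: $f$ is bounded on $K_f$, $J_v \psi_i$ is continuous on the relevant compact piece, and $J h \circ \psi_i$ is bounded below by some $\delta > 0$ there because the chart only reaches regular points of $h$. Dominated convergence then delivers $I_i(c_j) \to I_i(c)$, and summation over $i$ yields $I(c_j) \to I(c)$.

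Under the alternative Assumption \ref{compact assumption '} the same argument runs verbatim after replacing $K_f$ by the bounded set $\overline{\bigcup_{c' \in \mathcal{N}_c} h^{-1}(c')}$ provided by that hypothesis, on which $f$ is continuous and hence bounded. The principal obstacle is the bookkeeping that guarantees the finite cover $\{U_i\}$ selected for $c$ still envelops $K_f \cap h^{-1}(c_j)$ for all nearby $c_j$; this is handled by a standard continuity-and-compactness argument but is the only non-routine point in what is otherwise a direct dominated-convergence proof. As a by-product, the uniform bounds constructed above show the convergence is locally uniform on $\mathcal{D}$, in line with the uniform Hadamard differentiability asserted in Theorems \ref{general k derivative} and \ref{derivative k=1}.
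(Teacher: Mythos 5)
Your proof is correct and follows essentially the same route as the paper's: localize via the implicit function theorem, extract a finite subcover of $K_f \cap h^{-1}(c)$ by compactness, pull the surface integral back to a fixed Lebesgue integral over the parameter domain using the area formula and a partition of unity, and conclude by continuity of the resulting integrand in the parameter (dominated convergence). The only cosmetic differences are that the paper computes the change-of-variables Jacobian explicitly by the generalized matrix determinant lemma, obtaining $J\psi_{B_j} = Jh\,\left\vert\det\left(\partial h/\partial x_1\right)^{-1}\right\vert$ so that the $Jh$ in the denominator cancels, whereas you keep $J_v\psi_i$ and $Jh\circ\psi_i$ separate and argue each is continuous and bounded — either works; and you invoke dominated convergence explicitly where the paper is terser.
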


We next consider a variant of Theorem \ref{general k derivative} under Lipschitz conditions. 

\begin{assump}\label{borel compact assumption}
The function $f$ is Borel and bounded with compact support $K_{f} \subset E$.
\end{assump}

\begin{assump}\label{Clarke regular point assumption}
The Clarke Jacobian $J_{c}h\(x\)$ (defined in \ref{Clarke Jacobian} in the Technical addendum) is of full rank for all $x \in E$.
For all $x$, there exists a neighborhood $B_{x}$ and a constant $\epsilon\(x\) > 0$, such that for all $x' \in B_{x}$, if $\nabla h\(x'\)$ exists then $Jh\(x'\) > \epsilon\(x\)$.
\end{assump}

\begin{assump}\label{a.e. continuous at boundary assumption}
For all $c \in h\(E\)$, $\mathcal{H}_{n - k}\left\{x: h\(x\) = c, h\(\cdot\) \text{ is not differentiable at } x\right\} = 0$ and $\mathcal{H}_{n - k}\left\{x: h\(x\) = c, \exists x_{\(k\)}\(\left\vert\det\(\frac{\partial}{\partial x_{\(k\)}}h\(\cdot\)\)\right\vert = 0\ \text{or}\ \frac{f\(\cdot\)}{\left\vert\det\(\frac{\partial}{\partial x_{\(k\)}}h\(\cdot\)\)\right\vert} \text{ is discontinuous at } x\)\right\} = 0$,
where the subscript $\(k\)$ denotes $k$-combination of the $n$ coordinates.
\end{assump}



The next theorem \ref{nonsmooth general k derivative}  is a nonsmooth version of Theorem \ref{general k derivative}.
\begin{theorem}\label{nonsmooth general k derivative}
Let $h: E \to \mathbb{R}^{k}$ be Lipschitz, $k \in \{1, 2, \ldots, n\}$, $E$ an open subset of $\mathbb{R}^{n}$.
Let Assumption \ref{borel compact assumption} to 
Assumption \ref{a.e. continuous at boundary assumption} hold. Let $\mathcal{D} \subset h\(E\)$ be a compact subset.
Then the map $F\(h, c\): C\(E, \mathbb{R}^{k}\) \to \mathbb{R}$ is Hadamard differentiable uniformly on $c \in \mathcal{D}$ at $h$ tangentially to $C\(E, \mathbb{R}^{k}\)$.
The derivatives are given by \eqref{derivative k>1} and \eqref{theorem 2.1 derivative k=1} respectively 
when $k > 1$ and when $k = 1$.
\end{theorem}

The assumptions employed in Theorem \ref{nonsmooth general k derivative} play several roles.
Assumption \ref{borel compact assumption} provides compactness and 
Assumption \ref{Clarke regular point assumption} guarantees $h$ is regular.
The compactness Assumption
\ref{borel compact assumption} ensures that the set $\{x: h\(x\)=c\} \cap K_f$ 
in the area of integration is not too large and
changes continuously with $c$, meaning that
each open covering of $\{x: h\(x\)=c\} \cap K_f$ covers $\{x: h\(x\)=c'\} \cap K_f$ for all $c'$ close enough to $c$.
Assumption \ref{Clarke regular point assumption} requires each $x \in E$ to be a regular point of $h\(\cdot\)$.
Under this assumption,
$h\(\cdot\)$ can be expressed by the implicit function theorem to facilitate
the change of variables.

In addition, Assumption \ref{a.e. continuous at boundary assumption} requires each
level set $\{x: h\(x\)=c\}$ to be sufficiently regular in the sense of
having
a small number of nonsmooth points and the integrand  to be
almost everywhere continuous. Under this assumption, 
derivatives can be taken almost everywhere on the level set.
Assumption \ref{a.e. continuous at boundary assumption} can be replaced by continuous differentiability of $h$ and continuity of $f$.

Alternative assumptions are possible.
Consider Theorem \ref{general k derivative}.
Therein, Assumption \ref{compact assumption}, which requires compactness of the support of $f\(\cdot\)$, can be replaced by
the next Assumption \ref{compact assumption '} which requires that local union of level sets of $h\(\cdot\)$  is compact.
\begin{manualproposition}{\ref{compact assumption}$'$}\label{compact assumption '}
For all $c \in h\(E\)$, there exists a neighborhood $\mathcal N_c$ of $c$ such that
\bs
\overline{
\bigcup_{c' \in \mathcal N_c}
h^{-1}\(c'\)} \subset E
\end{split}\end{align}
and is bounded. The function $f$ is continuous.
\end{manualproposition}
In addition, when $k = 1$ and when we only need Hadamard differentiability on a singleton $\{c\}$, 
where $c \in h\(E\)$, Assumption \ref{regular value assumption} can be replaced by the following
\begin{manualproposition}{\ref{regular value assumption}$'$}\label{regular value assumption '}
$K_f \cap h^{-1}\(c\)$ consists of only regular points of $h$.
\end{manualproposition}

\section{Binary treatment allocation and the ROC curve}\label{constrained and roc}

In this section, we derive uniform asymptotic distributions
for a binary treatment allocation value function under a resource constraint.
Using Hadamard differentiability results
from section \ref{sorting operator}, we directly and concisely analyze a plug-in
two-step estimator of the value function.
The functional delta methods alone (see for example
\cite{van2023weak} and \cite{fang2019inference}) are sufficient to obtain asymptotic results.
An important result of this section
is showing 
consistency 
of a computationally feasible bootstrap procedure for the plug-in
two-step ROC estimator commonly used in computer science and other fields.

\subsection{Binary constrained allocation}\label{binary constrained case}

In population, binary optimal constrained treatment allocation takes the form of 
\bsnumber\label{constrained optimized value function}
\max_{\phi \in \Phi} \left\{\mathbb{E} \[Y_{1}\phi\(X\) + Y_{0}\(1 - \phi\(X\)\)\]:
\mathbb{E} \[Z_{1}\phi\(X\) + Z_{0}\(1 - \phi\(X\)\)\] \leq \alpha\right\}
\end{split}\end{align}
where $\Phi$ is the set of critical functions with $J = 1$.
By the 
Neyman-Pearson lemma (see for example Theorem 3.2.1 in \cite{lehmann2022testing}),
solutions of \eqref{constrained optimized value function} satisfy 
\bs
\phi\(x\) = \left\{
\ba{cc}
1, & \text{when}\ \(g_{1}\(x\) - g_{0}\(x\)\) > k\(c_{1}\(x\) - c_{0}\(x\)\), \\
0, & \text{when}\ \(g_{1}\(x\) - g_{0}\(x\)\) < k\(c_{1}\(x\) - c_{0}\(x\)\),
\ea
\right.
\end{split}\end{align}
where $k$ is a constant to be determined, $g_{j}\(x\) = \mathbb{E}\(Y_{j}\vert X = x\)$,
$c_{j}\(x\) = \mathbb{E}\(Z_{j}\vert X = x\)$, $j=0,1$.

The constrained population programming in \eqref{constrained optimized value function} uses observed covariates to allocate a binary treatment among a population when a budget constraint limits the size of the treated group. 
In this context, $Y_{1}$ and $Y_{0}$ are the welfare measures with and without treatment; $Z_{1}$ and $Z_{0}$ are the respective costs of treatment and control. 
For a typical economic problem, the cost of treatment is usually no less than the cost of control. 
For example, \cite{bhattacharya2012inferring} study the efficient provision of anti-malaria bed net subsidies using a set
of features $X$ that include the presence of a child under 10, the wealth
per capita and the ownership of a bank account.
Here, $Y_{1}$ and $Y_{0}$ are indexed by whether a household is covered by an insecticide-treated net; $Z_{1}$ is the cost of a redeemed net coupon and $Z_{0}$ is taken to be zero. 
In \cite{elliott2013predicting}, $\(Y_1, Y_0, Z_1, Z_0\)$ represent the utility of a true positive, true negative, false positive and false negative, respectively, in a classification problem. 
In \cite{luedtke2016optimal},  the resource constraint takes the form of an upper bound on the fraction of the treated population. Their setting is equivalent to $Z_1=1$ and $Z_0=0$. 
In this paper, we therefore assume $Z_{1} \geq Z_{0}$. 
This section first considers full observability of $\(Y_1, Y_0, Z_1, Z_0, X\)$.  

\begin{assump}\label{variable assumption}
The random variables $\(Y_{1}, Y_{0}, Z_{1}, Z_{0}, X\) \sim Q$, $X \sim \mu$, where $\mu$ is absolutely continuous with respect to the $\dim\(X\)$-dimensional Lebesgue measure.
The density $\mu'$ is continuous
on an open set $E \subset \mathbb{R}^{\dim\(X\)}$ and $\supp{\mu'} \coloneqq K_{\mu'} \subset E$.
$\(Y_{1}, Y_{0}, Z_{1}, Z_{0}\)$ is bounded and $Z_{1} \geq Z_{0}$.
\end{assump}

Following empirical process notation conventions, let $Qf = \int fdQ$. We then 
express the value and constrain functions of
\eqref{constrained optimized value function}  as
\bsnumber\label{functional form binary constrained allocation}
\beta\(k\) & = Q\[\(y_{1} - y_{0}\)1\(\(g_{1}\(x\) - g_{0}\(x\)\) > k\(c_{1}\(x\) - c_{0}\(x\)\)\) + y_{0}\],\\
s.t. \;
\alpha\(k\) & = Q\[\(z_{1} - z_{0}\)1\(\(g_{1}\(x\) - g_{0}\(x\)\) > k\(c_{1}\(x\) - c_{0}\(x\)\)\) + z_{0}\] \leq \alpha,
\end{split}\end{align}
where level set $\left\{x: \(g_{1}\(x\) - g_{0}\(x\)\) = k\(c_{1}\(x\) - c_{0}\(x\)\)\right\}$ is of zero probability mass.
The Hadamard differentiability results 
in section \ref{sorting operator} allow us to derive the
asymptotic distribution of an estimator for 
\eqref{functional form binary constrained allocation} by the functional delta method under the following assumptions.
We largely adopt the styles in \cite{chernozhukov2018sorted}.

\begin{assump}\label{regular Delta assumption}
The 
limits
of $\(\hat{g}_{j}\(\cdot\), \hat{c}_{j}\(\cdot\), j=0,1\)$
satisfy
$\(g^{\star}_{j}\(\cdot\), c^{\star}_{j}\(\cdot\), j=0,1\) \in {C^{1}\(E,\mathbb{R}^4\)}$, $c^{\star}_{1}\(\cdot\) > c^{\star}_{0}\(\cdot\)$, and $0$ is a regular value of
$\Delta^{\star}\(\cdot; k\) \coloneqq g^{\star}_{1}\(\cdot\) - g^{\star}_{0}\(\cdot\) - k\(c^{\star}_{1}\(\cdot\) - c^{\star}_{0}\(\cdot\)\)$
for all $k \in \mathcal{D}$ where $\mathcal{D} \subset \mathbb{R}$ is a bounded closed interval.
\end{assump}

\begin{assump}\label{joint convergence assumption}
Let $\(g^{\star}_{j}\(\cdot\), c^{\star}_{j}\(\cdot\), j=0,1\) \in \mathcal{G}$
where $\mathcal{G} \subset {C\(E, \mathbb{R}^{4}\)}$ is uniformly bounded.
With outer probability tends to $1$, $\(\hat{g}_{j}\(\cdot\), \hat{c}_{j}\(\cdot\), j=0,1\) \in \mathcal{G}$.
The function class
\bs
\mathcal{F} \coloneqq \left\{1\(\(g'_{1}\(\cdot\) - g'_{0}\(\cdot\)\) > k\(c'_{1}\(\cdot\) - c'_{0}\(\cdot\)\)\):
\(g'_{0}\(\cdot\), g'_{1}\(\cdot\), c'_{0}\(\cdot\), c'_{1}\(\cdot\)\) \in \mathcal{G}, 
k \in \mathcal{D} \right\}
\end{split}\end{align}
is Donsker. The estimators and the empirical process converge jointly, i.e.
\bs
r_{n}\[
\(\ba{c}
\hat{g}_{j}\(\cdot\)_{j=0,1} \\
\hat{c}_{j}\(\cdot\)_{j=0,1}
\ea
\)^T -
\(\ba{c}
{g}_{j}^\star\(\cdot\)_{j=0,1} \\
{c}_{j}^\star\(\cdot\)_{j=0,1}
\ea
\)^T, \mathbb{Q}_{n} - Q\] \rightsquigarrow
\(\mathbb{H}^T, t_{Q}\mathbb{Q}\) \text{ in } {C\(E, \mathbb{R}^{4}\)} \times \ell^{\infty}\(\mathcal{F}\),
\end{split}\end{align}
where $r_{n} \to \infty$ as $n \to \infty$,
$\lim_{n \to \infty}\frac{r_{n}}{\sqrt{n}} = t_{Q} \in \[0, 1\]$ and
$\mathbb{H} = \(\mathbb{H}_{g_0}, \mathbb{H}_{g_1}, \mathbb{H}_{c_0}, \mathbb{H}_{c_1}\)^{T}$
is separable.
\end{assump}

\begin{theorem}\label{binary allocation asymptotic result}
Let Assumptions \ref{variable assumption} to \ref{joint convergence assumption} hold.
There exists a constant $\epsilon > 0$ such that 
\bsnumber\label{nonsingular manifold jacobian denominator in binary classification roc curve}
\inf_{k \in \mathcal{D}}
\int_{\Delta^{\star}\(x;k\)=0} \frac{\(c_{1}\(x\) -
c_{0}\(x\)\)\(c^{\star}_{1}\(x\) -
c^{\star}_{0}\(x\)\)\mu'\(x\)}{\Vert
\nabla\Delta^{\star}\(x;k\)\Vert}d\mathcal{H}_{n - 1}x >
\epsilon.
\end{split}\end{align}
Let $\Delta$ ($\hat{\Delta}, \Delta^{\star}$) to denote
$\(g_{j}\(\cdot\), c_{j}\(\cdot\),j=0,1\)$
(their estimators and the estimators' limits).	Also
let $k\(Q, \Delta, \alpha\)  = \inf\left\{k \in \mathbb{R}:
\alpha\(Q, \Delta, k\) \leq \alpha \right\}$ where
\bsnumber\label{double robust representing of constrained optimized value function under full observability}
&\beta\(Q, \Delta, \alpha\) \coloneqq \beta\(Q, \Delta, k\(Q, \Delta, \alpha\)\);\
\beta\(Q, \Delta, k\) \coloneqq Q\[\(y_{1} - y_{0}\)1\(\Delta\(x; k\) > 0\)+y_0\],\\
&\alpha\(Q, \Delta, k\) \coloneqq 
Q\[\(z_{1} - z_{0}\)1\(\Delta\(x; k\) > 0\)+z_0\].
\end{split}\end{align}
If we also define
\bs
f_{\beta}\(Q, \Delta^{\star}, k\) & \coloneqq
\frac{d\beta\(Q, \Delta^{\star}, k\)}{dk} = -\int_{\Delta^{\star}\(x; k\) = 0}
\frac{\(g_{1}\(x\) - g_{0}\(x\)\)\(c^{\star}_{1}\(x\) - c^{\star}_{0}\(x\)\)\mu'\(x\)}{\Vert\nabla\Delta^{\star}\(x; k\)\Vert}d\mathcal{H}_{n - 1}x, \\
f_{\alpha}\(Q, \Delta^{\star}, k\) & \coloneqq
\frac{d\alpha\(Q, \Delta^{\star}, k\)}{dk} = -\int_{\Delta^{\star}\(x; k\) = 0}
\frac{\(c_{1}\(x\) - c_{0}\(x\)\)\(c^{\star}_{1}\(x\) -
c^{\star}_{0}\(x\)\)\mu'\(x\)}{\Vert\nabla\Delta^{\star}\(x;
k\)\Vert}d\mathcal{H}_{n - 1}x,
\end{split}\end{align}
then as a process on compact $\Lambda'
\subsetneq \overset{\circ}{\Lambda}$ where
$\Lambda \coloneqq \biggl\{Q\[\(z_{1} -
z_{0}\)1\(\Delta^{\star}\(x; k\) > 0\) + z_0\], k\in \mathcal D
\biggr\},$
\begin{align}
& r_{n}\(\beta\(\mathbb{Q}_{n}, \hat{\Delta}, \alpha\) - \beta\(Q, \Delta^{\star}, \alpha\)\) \label{main asymptotic formula}\\
&\rightsquigarrow
t_{Q}\mathbb{Q}\biggl[\(y_{1} - y_{0} -\lambda\(Q, \Delta^\star, \alpha\) \(z_1 - z_0\)\)1\(\Delta^{\star}\(x; k\(Q, \Delta^{\star}, \alpha\)\) > 0\) 
+ y_{0} - \lambda\(Q, \Delta^\star, \alpha\) z_0 \biggr] \notag \\
& 
+ 
\int_{\Delta^{\star}\(x; k\(Q, \Delta^{\star}, \alpha\)\) = 0}
\frac{\Delta\(x; \lambda\(Q, \Delta^{\star}, \alpha\)\)\(\mathbb{H}_{g1} - \mathbb{H}_{g0}
-
k\(Q, \Delta^{\star}, \alpha\)
\(
\mathbb{H}_{c1} - \mathbb{H}_{c0}
\)
\)\mu'\(x\)}{\Vert\nabla\Delta^{\star}\(x; k\(Q, \Delta^{\star}, \alpha\)\)\Vert}
\notag
\end{align}
In the above, $\lambda\(Q, \Delta^\star, \alpha\) = \frac{f_{\beta}\(Q, \Delta^{\star}, k\(Q, \Delta^{\star}, \alpha\)\)}{f_{\alpha}\(Q, \Delta^{\star}, k\(Q, \Delta^{\star}, \alpha\)\)}$ and $\Delta\(x; \lambda\) = g_{1}\(x\) - g_{0}\(x\) - \lambda\(c_1\(x\) - c_0\(x\)\)$.
\end{theorem}
Theorem \ref{binary allocation asymptotic result} 
simplifies under correct specification when $\Delta^{\star}\(\cdot\) = \Delta\(\cdot\)$. 
In this case,
$k = \frac{f_{\beta}\(Q, \Delta^{\star}, k\)}{f_{\alpha}\(Q,
\Delta^{\star},k\)}$,
since on $\Delta^{\star}\(x; k\) = 0$,
$\Delta\(x; k\) =
\(g_{1}\(x\) - g_{0}\(x\)
-
k 
\(c_{1}\(x\) - c_{0}\(x\)\)
\)
=0$,
\bs
0 & = f_{\beta}\(Q, \Delta^{\star}, k\)
-k f_{\alpha}\(Q, \Delta^{\star}, k\) \\
& =
-\int_{\Delta^{\star}\(x; k\) = 0}
\frac{\(g_{1}\(x\) - g_{0}\(x\)
-
k 
\(c_{1}\(x\) - c_{0}\(x\)\)
\)\(c^{\star}_{1}\(x\) - c^{\star}_{0}\(x\)\)\mu'\(x\)}{\Vert\nabla\Delta^{\star}\(x; k\)\Vert}d\mathcal{H}_{n - 1}x.
\end{split}\end{align}
Then the asymptotic distribution in
\eqref{main asymptotic formula}
simplifies to a Gaussian process in $\alpha$: 
\bsnumber\label{main asymptotic formula under correct specification}
r_{n}\(\beta\(\mathbb{Q}_{n}, \hat{\Delta}, \alpha\) - \beta\(Q, \Delta^{\star}, \alpha\)\) \rightsquigarrow
t_{Q}\mathbb{Q} & \biggl[\(y_{1} - y_{0}\)1\(\Delta^{\star}\(x; k\(Q, \Delta^{\star}, \alpha\)\) > 0\) + y_{0} \\
-k\(Q, \Delta^{\star}, \alpha\) & \(
\(z_{1} - z_{0}\)1\(\Delta^{\star}\(x; k\(Q, \Delta^{\star}, \alpha\) > 0\)\) + z_{0}\)\biggr].
\end{split}\end{align}
In parametric models where $\Delta\(x\) = \Delta\(x; \theta\)$, $r_n = \sqrt{n}$ and $t_Q = 1$.
The covariance kernel for 
\eqref{main asymptotic formula under correct specification}
over $\(\alpha, \alpha'\)$
can be consistently estimated by its sample analog.
Appendix \ref{supplementary discussion on misspecified parametric models} analyzes the implications
of parametric model misspecification. 

An alternative model-based representation of
\eqref{constrained optimized value function}
is
\bsnumber\label{model based constrained optimized value function}
\max_{\phi \in \Phi} \left\{\mathbb{E} \[g_{1}\(X\) \phi\(X\) + g_{0}\(X\)\(1 - \phi\(X\)\)\]: \mathbb{E} \[c_{1}\(X\)\phi\(X\) + c_{0}\(X\)\(1 - \phi\(X\)\)\] \leq \alpha\right\}.
\end{split}\end{align}
Corresponding to \eqref{model based constrained optimized value function},
the measure $Q$ for $\(Y_{1}, Y_{0}, Z_{1}, Z_{0}, X\)$ in the functional expression of \eqref{double robust representing of constrained optimized value function under full observability} can be changed to the measure $\mu$ for $X$:
\bs
\alpha\(\mu, \Delta, k\) & \coloneqq 
\mu\[\(c_{1}\(x\) - c_{0}\(x\)\)1\(\Delta\(x; k\) >
0\)+c_0\(x\)\], \\
k\(\mu, \Delta, \alpha\) & \coloneqq   \inf\left\{k \in
\mathbb{R}: \alpha\(\mu, \Delta, k\) \leq \alpha \right\}, 
\\
\beta\(\mu, \Delta, \alpha\) & \coloneqq \beta\(\mu, \Delta, k\(\mu, \Delta, \alpha\)\),
\beta\(\mu, \Delta, k\)  \coloneqq
\mu\[\(g_{1}\(x\) - g_{0}\(x\)\)1\(\Delta\(x; k\) > 0\)+g_0\(x\)\].
\end{split}\end{align}
When the conditional data generating process of $y_1, y_0, z_1, z_0$ given $x$ is a 
parametric model
$f\(y_1, y_0, z_1, z_0 \vert x; \theta\)$, we denote $\hat \Delta\(x\) = \Delta\(x; \hat\theta\)$.
A plug-in estimator of 
\eqref{model based constrained optimized value function} is then
\bsnumber\label{estimate of model based representation of constrained optimized value function under full observability}
\alpha\(\hat{\mu}, \hat{\theta}, k\) & \coloneqq \alpha\(\hat{\mu}, \hat{\Delta}, k\),
\beta\(\hat{\mu}, \hat{\theta}, k\)  \coloneqq \beta\(\hat{\mu}, \hat{\Delta}, k\),	
\\
\beta\(\hat\mu, \hat\theta, \alpha\) & \coloneqq \beta\(\hat\mu, \hat\Delta, k\(\hat\mu, \hat\theta, \alpha\)\), \; 
k\(\hat\mu, \hat\theta, \alpha\)  \coloneqq \inf\left\{k \in \mathbb{R}: \alpha\(\hat\mu, \hat\theta, k\) \leq \alpha \right\}.
\end{split}\end{align}
We will call $\beta\(\mathbb{Q}_{n}, \hat{\Delta}, \alpha\)$ in \eqref{main asymptotic formula}
an \textit{orthogonal} estimator since its first order derivative cancels out under correct specification,
and $\beta\(\hat\mu, \hat\theta, \alpha\)$ in \eqref{estimate of model based representation of constrained optimized value function under full observability} a parametric model-based estimator.

Typically, $\hat\theta$ is the maximum likelihood estimator: 
$\hat\theta = \mathop{\arg\max}\limits_{\theta\in\Theta} \sum_{i=1}^n \log f\(y_{i1}, y_{i0}, z_{i1}, z_{i0} \vert x; \theta\).$
Let $\omega = \{y_1, y_0, z_1, z_0\}$ and denote the score function $s_{\theta}\(\omega \vert x; \theta\) =
\frac{\partial}{\partial \theta} \log f\(\omega \vert x, \theta\)$.
It is a standard result from maximum likelihood estimation that
\bsnumber\label{mle influence function}
\sqrt{n}\(\hat\theta - \theta^{\star}\) = - H\(\theta^{\star}\)^{-1} \frac{1}{\sqrt{n}} \sum_{i=1}^n s_{\theta}\(\omega_i \vert x_i; \theta^{\star}\)
+ o_{\mathbb{P}}\(1\),\ H\(\theta\) = Q\[\frac{\partial^2}{\partial \theta^2} \log f\(\omega \vert x, \theta\)\].
\end{split}\end{align}
See for example \cite{newey_mcfadden} section 3 or \cite{van2000asymptotic} chapter 5 for sufficient regularity conditions
for the asymptotic linear representation.

Under correct specification, the parametric model-based estimator has an efficiency advantage over the orthogonal estimator.
We have the following efficiency comparison proposition.

\begin{proposition} \label{parametric model based estimator more efficient than orthogonal estimator}
Assume that \eqref{mle influence function} holds.  Under the conditions in Theorem \ref{binary allocation asymptotic result},
when the parametric model is correctly specified,
the model based estimator $\beta\(\hat\mu, \hat{\theta}, \alpha\)$ in
\eqref{estimate of model based representation of constrained optimized value function under full observability}
is asymptotically more efficient than
the orthogonal estimator in the parametric case of
\eqref{main asymptotic formula under correct specification}.
\end{proposition}


The population value function $\beta\(\mu, \Delta, \alpha\)$ of the constrained optimization problem in \eqref{constrained optimized value function} 
satisfies concavity 
and monotonicity shape restrictions in the arguments as in the general framework developed by \cite{chen2021shape}. 
The orthogonal estimator $\beta\(\mathbb{Q}_{n}, \hat{\Delta}, \alpha\)$ is only monotonically increasing but not necessarily concave. 
When shape restrictions like concavity and monotonicity are applicable, 
shape-enforcing operators proposed by 
\cite{chen2021shape}  
can be used to improve estimation and inference \textit{ex post}. In particular, 
\cite{chen2021shape} demonstrate that an estimator transformed by the shape-enforcing operator is more accurate, and
confidence intervals transformed by the shape-enforcing operator have greater coverage and shorter length.

The double Legendre-Fenchel transform and the ``pooled adjacent violators algorithm'' as described in \cite{chen2021shape}
can be used to enforce concavity of the orthogonal estimator $\beta\(\mathbb{Q}_{n}, \hat{\Delta}, \alpha\)$.
An alternative approach of concavification is via the Lagrangian dual.
In particular, the model based estimator $\beta\(\hat\mu, \hat{\theta}, \alpha\)$ is equivalent to the Lagrangian dual 
\begin{align}\begin{split}\nonumber
\mathcal{L}\(\hat{\mu}, \hat{\theta}, \alpha\) = \min_{k \geq 0}\hat{\mu}\[\(\hat{g}_{1}\(x\) - \hat{g}_{0}\(x\)\)1\(\hat{\Delta}\(x; k\) > 0\) - k\(\(\hat{c}_{1}\(x\) - \hat{c}_{0}\(x\)\)1\(\hat{\Delta}\(x; k\) > 0\) - \alpha\)\], 
\end{split}\end{align}
where $\hat{g}_{j}\(\cdot\) = g_{j}\(\cdot, \hat{\theta}\)$, $\hat{c}_{j}\(\cdot\) = c_{j}\(\cdot, \hat{\theta}\)$ and $j = 0, 1$. 
Since the Lagrangian dual minimization problem is affine with respect to $\alpha$, $\beta\(\hat\mu, \hat{\theta}, \alpha\)$ 
can be considered an implicit concavification of $\beta\(\mathbb{Q}_{n}, \hat{\Delta}, \alpha\)$. 
Both the double Legendre-Fenchel transform and the Lagrangian dual $\mathcal{L}\(\hat{\mu}, \hat{\theta}, \alpha\)$ can be generalized to the problem of multi-constraints. 

Under misspecification, the asymptotic distribution of the parametric orthogonal estimator
$\beta\(\mathbb{Q}_{n}, \hat{\theta}, \alpha\)$ 
is described in Appendix \ref{supplementary discussion on misspecified parametric models},
where the integrals with respect to Hausdorff measure in \eqref{main asymptotic formula} do not vanish in general.

In causal inference,
$Y_j, Z_j, j \in \{0,1\}$ are not fully observed.
Only $Y=D Y_1 + \(1-D\)Y_0$ and $Z=D Z_1 + \(1-D\)Z_0$ are observed,
where $D \in \{0,1\}$ and unconfoundedness holds:
\bsnumber\label{binary unconfoundedness assumption}
\(Y_1, Y_0, Z_{1}, Z_{0}\) \indep D \vert X.
\end{split}\end{align}
The unconfoundedness assumption \eqref{binary unconfoundedness assumption} allows the model based constrained value function
\eqref{model based constrained optimized value function}
to be identified using
$g_1\(x\) = \mathbb{E}\(Y \vert D=1, x\)$,
$g_0\(x\) = \mathbb{E}\(Y \vert D=0, x\)$,
$c_1\(x\) = \mathbb{E}\(Z \vert D=1, x\)$,
$c_0\(x\) = \mathbb{E}\(Z \vert D=0, x\)$.
When the conditional data generating processes are parametric models of $f\(y, z \vert x, d; \theta\)$, denoted as
\bsnumber\label{joint conditional outcome and cost distribution 1}
f_1\(y, z \vert x; \theta\) \coloneqq
f\(y, z \vert d=1, x; \theta\),\quad  f_0\(y, z \vert x; \theta\) \coloneqq f\(y, z \vert d=0, x; \theta\),
\end{split}\end{align}
a parametric model based estimator is given in 
\eqref{estimate of model based representation of constrained optimized value function under full observability}
such that $f\(y_1 \vert x; \theta\)$, $f\(z_1 \vert x; \theta\)$, $f\(y_0 \vert x; \theta\)$
and $f\(z_0 \vert x; \theta\)$ are the marginal densities implied by
\eqref{joint conditional outcome and cost distribution 1},
and such that $\hat\theta$ is the conditional MLE:
\bs
\hat\theta = \mathop{\arg\max}\limits_{\theta\in\Theta} \sum_{i=1}^n
\log f\(y_{i}, z_{i} \vert d_i, x_i; \theta\)
=\sum_{i=1}^n \[d_i \log f_1\(y_{i}, z_{i} \vert x_i; \theta\)
+ \(1-d_i\) \log f_0\(y_{i}, z_{i} \vert x_i; \theta\)\].
\end{split}\end{align}
Under standard regularities, the asymptotic linear representation as in \eqref{mle influence function} continues to hold with
\bs
s_\theta\(y_i, z_i \vert d_i, x_i; \theta\) & 
 = d_i \frac{\partial}{\partial\theta}	\log f_1\(y_{i}, z_{i} \vert x_i; \theta\) + \(1-d_i\) \frac{\partial}{\partial\theta}	\log f_0\(y_{i}, z_{i} \vert x_i; \theta\), \\
H\(\theta\) & 
= Q\(
d \frac{\partial^{2}}{\partial\theta^{2}}  \log f_1\(y, z \vert x; \theta\) + \(1-d\) \frac{\partial^{2}}{\partial\theta^{2}}  \log f_0\(y, z \vert x; \theta\)\).
\end{split}\end{align}

In addition to the model based estimator \eqref{estimate of model based representation of constrained optimized value function under full observability},
an orthogonal doubly robust value function estimator under
unconfoundedness can be formed by re-expressing 
\eqref{double robust representing of constrained optimized value function under full observability}
as
\bsnumber\label{double robust representing of constrained optimized value function under partial observability}
\beta\(Q, \Delta, k\) & \coloneqq   Q\biggl[
\Gamma_{1}\(x, y, d\)
1\(\Delta\(x; k\) > 0\) 
+ 
\Gamma_{0}\(x, y, 1 - d\)
1\(\Delta\(x; k\) \leq 0\)
\biggr], \\
\alpha\(Q, \Delta, k\) & \coloneqq 
Q\biggl[
\Xi_{1}\(x, z, d\)
1\(\Delta\(x; k\) > 0\) 
+ 
\Xi_{0}\(x, z, 1 - d\)
1\(\Delta\(x; k\) \leq 0\)
\biggr],\\
\beta\(Q, \Delta, \alpha\) & \coloneqq	\beta\(Q, \Delta, k\(Q, \Delta, \alpha\)\), \ k\(Q, \Delta, \alpha\)  = \inf\left\{k \in \mathbb{R}: \alpha\(Q, \Delta, k\) \leq \alpha \right\}. 
\end{split}\end{align}
In the above, 
for $j \in \{0, 1\}$,  $p_{1}\(\cdot\) \coloneqq p\(\cdot\)$ and $p_{0}\(\cdot\) \coloneqq 1 - p\(\cdot\)$, we denote 
$\Gamma_{j}\(x, y, d\) \coloneqq g_{j}\(x\) + \frac{d}{p_{j}\(x\)}\(y - g_{j}\(x\)\)$ and $\Xi_{j}\(x, z, d\) = c_{j}\(x\) + \frac{d}{p_{j}\(x\)}\(z - c_{j}\(x\)\)$. 
Under correct specification, and assume that $\epsilon < p\(\cdot\) < 1-\epsilon$, for some constant $\epsilon > 0$, 
an adaption to Theorem \ref{binary allocation asymptotic result} shows that
\bs
\sqrt{n}\(\beta\(\mathbb{Q}_{n}, \hat{\Delta}, \alpha\) - \beta\(Q, \Delta^{\star}, \alpha\)\) \rightsquigarrow
\mathbb{Q}\(\zeta\(y, z, d, x; \alpha\)\) 
\end{split}\end{align}
where for $\phi^{\star}\(x; \alpha\) \coloneqq 1\(\Delta^{\star}\(x; k\(Q, \Delta^{\star}, \alpha\)\) > 0\)$, we write 
\bs
\zeta\(y,z,d,x; \alpha\) & \coloneqq 
\Gamma_{1}\(x, y, d\)
\phi^{\star}\(x; \alpha\)
+
\Gamma_{0}\(x, y, 1 - d\)
\(1 - \phi^{\star}\(x; \alpha\)\) \\ 
& - k\(Q, \Delta^{\star}, \alpha\)
\biggl(
\Xi_{1}\(x, z, d\)
\phi^{\star}\(x; \alpha\)
+ 
\Xi_{0}\(x, z, 1 - d\)
\(1 - \phi^{\star}\(x; \alpha\)\)
\biggr). 
\end{split}\end{align}

\begin{proposition}\label{treatment parametric model based estimator more efficient than orthogonal estimator}
Assume that $\epsilon < p\(\cdot\) < 1-\epsilon$ for $\epsilon > 0$ and that
both \eqref{mle influence function} and \eqref{binary unconfoundedness assumption}
hold.
Under the conditions in Theorem \ref{binary allocation asymptotic result},
when the parametric models $f_{1}\(y, z\vert x; \theta\)$ and $f_{0}\(y, z\vert x; \theta\)$ are correctly specified,
the model based estimator 
\eqref{estimate of model based representation of constrained optimized value function under full observability}
is asymptotically more efficient than
the parametric variant of the orthogonal estimator \eqref{double robust representing of constrained optimized value function under partial observability}
where $\hat \Delta\(x\) = \Delta\(x; \hat\theta\)$.
\end{proposition}

\subsection{The ROC curve}\label{theory for plug-in ROC curve}
A special case of \eqref{constrained optimized value function} is
when $Y_{1}=Y \in \{0, 1\}$, $Y_{0} = Z_{0} = 0$, $Z_{1} = 1 - Y_{1}$,
$g_{0}\(x\) = c_{0}\(x\) \equiv 0$, $g_{1}\(x\) = p\(x\)$, $c_{1}\(x\) = 1 - p\(x\)$ where $p\(x\) = \mathbb{E}\(Y \vert\ X = x\)$.
This special case, i.e.
\bs
\max_{\phi \in \Phi}\left\{\mathbb{E} \[Y \phi\(X\)\]: \mathbb{E} \[\(1 - Y\)\phi\(X\)\] \leq \alpha\right\},
\end{split}\end{align}
is closely related to the population receiver operating characteristic (ROC) curve defined by
\bsnumber\label{roc curve raw definition}
\beta\(\alpha\) \coloneqq \max_{\phi \in \Phi} \frac{\mathbb{E} \[Y\phi\(X\)\]}{\mathbb{E}Y} \quad \text{such that} \quad
\frac{\mathbb{E} \[\(1 - Y\)\phi\(X\)\]}{\mathbb{E}\(1 - Y\)} \leq \alpha.
\end{split}\end{align}
The objective in \eqref{roc curve raw definition} is to maximize the power $\beta$ 
while controlling the size of the test $\alpha$. 
This optimization problem is exactly the same as that of the \textit{Uniformly Most Powerful Tests} under simple null and alternative hypotheses. 
See for example section 3 of \cite{lehmann2022testing} for rigorous definitions and implications of these statistical testing concepts. 
In this paper, we adopt the more commonly employed terminology in the computer science, biostatistics and economics literatures to refer to $\beta$ and $\alpha$ as \textit{true positive rate} (TPR) and \textit{false positive rate} (FPR). 
In \cite{berge2011evaluating}, $Y_1 = Y$ is interpreted as the state of the economy ($Y = 0$ for expansion and $Y = 1$ for recession).
The ROC curve is also used in the finance loan literature 
\citep{vallee2019marketplace,sadhwani2021deep} to evaluate the tradeoff between earned interest and lost principal, where 
$Y=0$ indicates loan fulfillment and $Y=1$ indicates loan default.
The abnormal birth detection application of \cite{feng2025statisticalMS} uses $Y$ as the indicator of abnormal birth ($Y = 0$ for normal birth 
and $Y = 1$ for abnormal birth). 

The analog of \eqref{functional form binary constrained allocation} for the
constraint and value functions of \eqref{roc curve raw definition} is
\bsnumber\label{functional form roc curve}
\alpha\(k\)  = Q\[\(1-y\)1\(
p\(x\) > k 
\) \] / Q\(1-y\), \ 
\beta\(k\)  = Q\[y1\(
p\(x\) > k 
\) \] / Q\(y\).
\end{split}\end{align}
In order to adapt Theorem \ref{binary allocation asymptotic result} to the ROC curve in \eqref{roc curve raw definition} and \eqref{functional form roc curve}, 
define	similarly
to \eqref{double robust representing of constrained optimized value function under full observability},
\bsnumber\label{double robust representing of roc constrained optimized value function under full observability}
&\beta\(Q, p, \alpha\) \coloneqq \beta\(Q, p, k\(Q, p, \alpha\)\),
\beta\(Q, p, k\) = Q\[y1\(p\(x\) > k\)\] / Q\(y\),\\
&\alpha\(Q, p, k\) \coloneqq 
Q\[\(1-y\)1\(p\(x\) > k\)\] / Q\(1-y\),
\; k\(Q, p, \alpha\) = \inf\left\{k \in \mathbb{R}:
\alpha\(Q, p, k\) \leq \alpha \right\}.
\end{split}\end{align}
Different from \eqref{double robust representing of constrained optimized value function under full observability},
\eqref{double robust representing of roc constrained optimized value function under full observability} has
denominators
and $\Delta\(x; k\)$ is replaced by $p\(x\) - k$.
We also define
\bsnumber\label{beta alpha derivative wrt k}
f_{\beta}\(Q, p^{\star}, k\) & \coloneqq
 \frac{d\beta\(Q, p^{\star}, k\)}{dk} = - \frac{1}{Q y}
\int_{p^{\star}\(x\) = k}
\frac{p\(x\) 
\mu'\(x\)}{\Vert\nabla p^{\star}\(x\)\Vert}d\mathcal{H}_{n - 1}x, \\
f_{\alpha}\(Q, p^{\star}, k\) & \coloneqq
\frac{d\alpha\(Q, p^{\star}, k\)}{dk} = - \frac{1}{1- Qy}\int_{p^{\star}\(x\) = k}
\frac{\(1 - p\(x\)\)\mu'\(x\)}{\Vert\nabla p^{\star}\(x\)\Vert}d\mathcal{H}_{n - 1}x.
\end{split}\end{align}
The following
corollary 
adapts
Theorem \ref{binary allocation asymptotic result} to the commonly used orthogonal estimator of \eqref{roc curve raw definition}. 

\begin{cor}\label{roc asymptotic result}
Let the conditions in Theorem \ref{binary allocation asymptotic result} hold with $\(g_{0}\(x\), g_{1}\(x\), c_{0}\(x\), c_{1}\(x\)\) = \(0, p\(x\), 0, 1 - p\(x\)\)$ and $\mathbb{H}_p\(\cdot\)=\mathbb{H}_{g1}\(\cdot\)$.
On compact $\Lambda' \subsetneq \overset{\circ}{\Lambda}$, $\Lambda \coloneqq \left\{\frac{Q\[\(1 - y\)1\(p^{\star}\(x\) > k\)\]}{Q\(1 - y\)}; k \in \mathcal D\right\}$,
\begin{align}
& r_{n}\(\beta\(\mathbb{Q}_{n}, \hat{p}, \alpha\) - \beta\(Q, p^{\star}, \alpha\)\) \rightsquigarrow \frac{1}{Q y}\Biggl\{
t_{Q}\mathbb{Q}\[y\(1\(p^{\star}\(x\) > k\(Q, p^{\star}, \alpha\)\) - \beta\(Q, p^{\star}, \alpha\)\)\] \notag \\
& + 
\int_{p^{\star}\(x\)= k\(Q, p^{\star}, \alpha\)} \frac{p\(x\)\mathbb{H}_p\(x\)\mu'\(x\)}{\Vert\nabla p^{\star}\(x\)\Vert}d\mathcal{H}_{n - 1}x
\Biggl\} - \frac{1}{1-Q y} \frac{f_{\beta}\(Q, p^{\star}, k\(Q, p^{\star}, \alpha\)\)}{f_{\alpha}\(Q, p^{\star}, k\(Q, p^{\star}, \alpha\)\)} \Biggl\{ \label{roc main asymptotic distribution} \\
& t_{Q}\mathbb{Q}\[\(1 - y\)\(1\(p^{\star}\(x\) > k\(Q, p^{\star}, \alpha\)\) - \alpha\)\] + 
\int_{p^{\star}\(x\)=k\(Q, p^{\star}, \alpha\)} \frac{\(1-p\(x\)\)\mathbb{H}_p\(x\)\mu'\(x\)}{\Vert\nabla p^{\star}\(x\)\Vert}d\mathcal{H}_{n - 1}x
\Biggl\}. \notag
\end{align}
\end{cor}

Under correct specification, $p^\star\(\cdot\) = p\(\cdot\)$ and 
$\frac{
k
}{
1- k
} = \frac{Q y}{1-  Qy}
\frac{f_{\beta}\(Q, p^{\star}, k\)}{f_{\alpha}\(Q, p^{\star}, k\)}$.
This is
because, as  $\(1-k\) p\(x\) - k \(1-p\(x\)\) = 0$
on $p^\star\(x\) = p\(x\)= k$, 
\bs
k \(1-Qy\) f_{\alpha}\(Q, p^{\star}, k\)-
\(1-k\) \(Q y\) f_{\beta}\(Q, p^{\star}, k\)
= \int_{p^{\star}\(x\)=k}
\frac{
0 \cdot \mu'\(x\)}{\Vert\nabla p^{\star}\(x\)\Vert}d\mathcal{H}_{n - 1}x=0.
\end{split}\end{align}
Furthermore, the two integrals on the level set in Corollary \ref{roc asymptotic result} cancel out since 
\bs
&
\frac{1}{Q y}
\int_{p^{\star}\(x\)= k} \[
\frac{p\(x\)\mathbb{H}_{p}\(x\)\mu'\(x\)}{\Vert\nabla p^{\star}\(x\)\Vert}
- \frac{Qy}{1-Q y} \frac{f_{\beta}\(Q, p^{\star}, k\)}{f_{\alpha}\(Q, p^{\star}, k\)}
\frac{\(1-p\(x\)\)\mathbb{H}_{p}\(x\)\mu'\(x\)}{\Vert\nabla p^{\star}\(x\)\Vert}
\]d\mathcal{H}_{n - 1}x \\
&=\frac{1}{Q y}
\int_{p^{\star}\(x\)= k}\underbrace{\[
p\(x\)
- \frac{k}{1-k}
\(1-p\(x\)\)
\]}_{= 0 \; \text{on} \; p^\star\(x\) = p\(x\) = k}
\frac{\mathbb{H}\(x\)\mu'\(x\)}
{\Vert\nabla p^{\star}\(x\)\Vert}
d\mathcal{H}_{n - 1}x = 0.
\end{split}\end{align}
Under correct specification \eqref{roc main asymptotic distribution}
simplifies to 
\begin{align}
&r_{n}\(\beta\(\mathbb{Q}_{n}, \hat{p}, \alpha\) - \beta\(Q, p, \alpha\)\)
\label{roc curve asymptotics under correct specification}
\\
& \rightsquigarrow  \frac{t_{Q}}{Qy}\Biggl\{\mathbb{Q}\biggl[
\frac{y - k\(Q, p, \alpha\)}{1 - k\(Q, p, \alpha\)}
1\(p\(x\) > k\(Q, p, \alpha\)\) - y\beta\(Q, p, \alpha\)
+ \frac{k\(Q, p, \alpha\)}{1 - k\(Q, p, \alpha\)} \(1 - y\)\alpha
\biggr]
\Biggl\}. 
\notag
\end{align}
In parametric models of $p\(x\)$, where $r_n = \sqrt{n}$ and $t_Q = 1$,
\eqref{roc curve asymptotics under correct specification}
has a Gaussian 
covariance kernel 
\bsnumber\label{theoretical distribution of correctly specified ROC model}
\(\alpha, \alpha'\) \mapsto
\frac{1}{\(Q y\)^2} & \biggl(Q y \(\beta\(Q, p, \alpha \wedge \alpha'\)
-
\beta\(Q, p, \alpha\)\beta\(Q, p, \alpha'\)\) \\
&
+ \frac{
k\(Q, p, \alpha\)k\(Q, p, \alpha'\)
}{
\(
1- k\(Q, p, \alpha\)
\)\(
1- k\(Q, p, \alpha'\)
\)
} \(1- Qy\) \(\alpha \wedge \alpha' - \alpha \alpha'\)\biggr).\\
\end{split}\end{align}
Similar to \eqref{model based constrained optimized value function}
and \eqref{estimate of model based representation of constrained optimized value function under full observability},
a model based representation of the ROC curve \eqref{roc curve raw definition} is
\bs
\beta\(\alpha\) \coloneqq \max_{\phi \in \Phi} \frac{\mathbb{E} \[p\(X\)\phi\(X\)\]}{\mathbb{E} p\(X\)}\quad \text{such that}\quad \frac{\mathbb{E} \[\(1 - p\(X\)\)\phi\(X\)\]}{\mathbb{E}\(1 - p\(X\)\)} \leq \alpha.
\end{split}\end{align}
The corresponding estimator is
\begin{align}
&\alpha\(\hat{\mu}, \hat{\theta}, k\) \coloneqq 
\hat{\mu}\[\(1 - p\(x; \hat{\theta}\)\)1\(p\(x; \hat{\theta}\) > k\)\] / \hat{\mu}\(1 - p\(x; \hat{\theta}\)\),\nonumber \\
&k\(\hat{\mu}, \hat{\theta}, \alpha\)  = \inf\left\{k \in \mathbb{R}: \alpha\(\hat{\mu}, \hat{\theta}, k\) \leq \alpha \right\},
  \label{roc model-based functional expression}
\\
&\beta\(\hat{\mu}, \hat{\theta}, \alpha\) \coloneqq 
\beta\(\hat{\mu}, \hat{\theta}, k\(\hat{\mu}, \hat{\theta}, \alpha\)\), 
\beta\(\hat{\mu}, \hat{\theta}, k\) 
= \hat{\mu}\[p\(x; \hat{\theta}\)1\(p\(x; \hat{\theta}\) > k\)\] / \hat{\mu}\(p\(x; \hat{\theta}\)\).\nonumber
\end{align}

Typically, 
$\hat\theta = \arg\max_{\theta\in\Theta} \sum_{i=1}^n \[y_i \log p\(x_i; \theta\) + \(1-y_i\) \log \(1-p\(x_i; \theta\)\)\]$ is the maximum likelihood estimator.
When the parametric model is correctly specified,
the model based estimator is
more efficient than
the orthogonal estimator, 
as formalized in the next proposition and illustrated in a
synthetic data example in section \ref{simulation data CI and testing}.

\begin{proposition}\label{ROC parametric model based estimator more efficient than orthogonal estimator}
Assume that \eqref{mle influence function} holds.
Under the conditions in Corollary \ref{roc asymptotic result},
when the parametric model is correctly specified,
the model based estimator $\beta\(\hat\mu, \hat{\theta}, \alpha\)$ in
\eqref{roc model-based functional expression}
is asymptotically more efficient than
the orthogonal estimator in
\eqref{roc curve asymptotics under correct specification}.
\end{proposition}

Misspecified parametric models estimated by the orthogonal estimator
are discussed in Appendix \ref{supplementary discussion on misspecified parametric models}.
The next Corollary \ref{roc bootstrap} 
validates bootstrap consistency under general specification.
For correctly specified parametric models,
the first step model only needs to be trained once and does not need to be re-estimated for every bootstrapped sample.

Let $M = \(M_{1}, \ldots, M_{n}\)$ be the vector of bootstrap weights.
For a random element $\tilde{\mathbb{Z}}_{n} \coloneqq \tilde{\mathbb{Z}}_{n}\(\{\(X_{i}, Y_{i}\)\}^{n}_{i = 1}, M\)$ on normed space $\mathcal{X}$,
we say its bootstrap law is consistent for a tight process $\mathbb{G}$
and denote it as $\tilde{\mathbb{Z}}_{n} \weakconv \mathbb{G}$
if
in outer probability\footnote{
Under measurability, the consistency definition can be simplified to
$\sup_{l \in BL_{1}\(\mathcal{X}\)} \left\vert \mathbb{E}_{M}l\(\tilde{\mathbb{Z}}_{n}\) - \mathbb{E}l\(\mathbb{G}\)\right\vert = o_{\mathbb{P}}\(1\).$
}
\bsnumber\label{conditional weak convergence in probability}
\sup_{l \in BL_{1}\(\mathcal{X}\)} \left\vert \mathbb{E}_{M}l\(\tilde{\mathbb{Z}}_{n}\) - \mathbb{E}l\(\mathbb{G}\)\right\vert \to 0, \; \mathbb{E}_{M}l\(\tilde{\mathbb{Z}}_{n}\)^{*} - \mathbb{E}_{M}l\(\tilde{\mathbb{Z}}_{n}\)_{*} \to 0.
\end{split}\end{align}
In the above $\mathbb{E}_{M}$ denotes the conditional expectation with respect to $M$ given the data $\{\(X_{i}, Y_{i}\)\}^{n}_{i = 1}$, and $BL_{1}\(\mathcal{X}\)$ denotes the set of 
real-valued functions on $\mathcal{X}$ with Lipschitz constant and supremum norm both bounded by $1$.

\begin{cor}\label{roc bootstrap}
Assume the function class $\mathcal{G}$ in Assumption \ref{joint convergence assumption} is Donsker
and 
for $\tilde{\mathbb{Q}}_{n} \coloneqq \tilde{\mathbb{Q}}_{n}\(\{\(X_{i}, Y_{i}\)\}^{n}_{i = 1}, M\)$ and $M \sim multinomial\(n, \(\frac{1}{n}, \ldots, \frac{1}{n}\)\)$, 
$\sqrt{n}\(\tilde{p} - \hat{p}, \tilde{\mathbb{Q}}_{n} -\mathbb{Q}_{n}\) \weakconv \(\mathbb{H}_{p}, \mathbb{Q}\)$.
Under the conditions in Corollary \ref{roc asymptotic result} and correct specification, the orthogonal estimator satisfies
\bs
\sup_{l \in BL_{1}\(\ell^\infty\(\mathcal{D}\)\)} & \left\vert
\mathbb{E}_{M}l\(\sqrt{n}\(\beta\(\tilde{\mathbb{Q}}_{n}, \hat{p}, \alpha\) - \beta\(\mathbb{Q}_{n}, \hat{p}, \alpha\)\)\)
- \mathbb{E}l\(\beta'_{Q, p, \alpha}\(\mathbb{Q}, 0, 0\)\) \right\vert\to 0,
 \\
& \mathbb{E}_{M}l\(\sqrt{n}\(\beta\(\tilde{\mathbb{Q}}_{n}, \hat{p}, \alpha\) - \beta\(\mathbb{Q}_{n}, \hat{p}, \alpha\)\)\)^{*}
- \mathbb{E}_{M}l\(\sqrt{n}\(\beta\(\tilde{\mathbb{Q}}_{n}, \hat{p}, \alpha\) - \beta\(\mathbb{Q}_{n}, \hat{p}, \alpha\)\)\)_{*} \to 0
\end{split}\end{align}
in outer probability. 
Under general specification,
\begin{align}\begin{split}\nonumber
\sqrt{n}\(\beta\(\tilde{\mathbb{Q}}_{n}, \tilde{p}, \alpha\) - \beta\(\mathbb{Q}_{n}, \hat{p}, \alpha\)\)
\weakconv \beta'_{Q, p^{\star}, \alpha}\(\mathbb{Q}, \mathbb{H}_{p}, 0\).
\end{split}\end{align}
\end{cor}

Consider i.i.d. data $\{\(Y_i, W_i, X_i\), i=1,\ldots,n\}$, where $Y_i$ is the true outcome, $W_i$ is the human decision, 
and $X_i$ are observable features.
An estimator of the human TPR/FPR pair is 
\bs
\hat{\alpha}_{H} = \sum_{i=1}^n \[\(1-Y_i\) W_i\] / \sum_{i=1}^n \(1-Y_i\), \quad
\hat{\beta}_{H} = \sum_{i=1}^n	Y_i W_i / \sum_{i=1}^n Y_i.
\end{split}\end{align}
Human decision making quality can be evaluated by testing 
whether 
the population TPR/FPR pair of a human decision maker, denoted $\(\alpha_{H}, \beta_{H}\)$,
lies on, above or below a population ROC curve $\beta\(\alpha\)$ of a machine learning model.
In the above, 
\bs
\alpha_{H} = \mathbb{E}\[\(1-Y\) W\] / \mathbb{E}\(1-Y\), \quad
\beta_{H} = \mathbb{E}\[Y W\] / \mathbb{E}Y.
\end{split}\end{align}
The null and alternative hypotheses can be formulated as
\bsnumber\label{testing a person against a roc curve}
H_0: \beta\(\alpha_{H}\) = \beta_{H} \quad\text{versus}\quad H_A: \beta\(\alpha_{H}\) \neq \beta_{H},
\end{split}\end{align}
where $H_A$ can be further categorized into two one-sided directional alternatives $H_{A,h}: \beta\(\alpha_{H}\) < \beta_{H}$ 
and $H_{A,m}: \beta\(\alpha_{H}\) > \beta_{H}$, as in \cite{vuong1989likelihood}.
It follows from the Lindeberg-L\'{e}vy central limit theorem that
for $\theta_{H} = \(\alpha_{H}, \beta_{H}\)'$,
$\hat{\theta}_{H} = \(\hat{\alpha}_{H}, \hat{\beta}_{H}\)'$ and $p = Q Y$,
\begin{align}\begin{split}\nonumber
\sqrt{n}\(\hat{\theta}_{H} - \theta_{H}\)
\rightsquigarrow  N\(0, \Sigma\) \quad\text{where}\quad
\begingroup
\renewcommand*{\arraystretch}{1}
\Sigma = \begin{varmatrix}[delim = p, size = \normalsize, sep = 0.5\arraycolsep]
	   \frac{\alpha_H\(1-\alpha_H\)}{1 - p}   & 0 \\
	    0 & \frac{\beta_H\(1-\beta_H\)}{p}
	    \end{varmatrix}
\endgroup.
\end{split}\end{align}
When the 
ROC $\beta\(\alpha\)$ is fully known,\footnote{
For example, \cite{feng2025statisticalMS} assume a known ROC because  
the sample size of the data set used to train the machine learning
model is by orders of magnitude larger than the sample size for each 
individual decision maker.
}
the hypothesis \eqref{testing a person against a roc curve} can be tested using the statistics	$\beta\(\hat{\alpha}_{H}\) - \hat{\beta}_{H}$.
By the delta method, 
under the 
null hypothesis
$\beta\(\alpha_{H}\) = \beta_{H}$ in \eqref{testing a person against a roc curve},
\bs
\sqrt{n} \(\beta\(\hat{\alpha}_{H}\) - \hat{\beta}_{H}\) \rightsquigarrow
N\(0, \sigma^2\) \quad\text{where}\quad
\sigma^2 = \beta'\(\alpha_H\)^2\frac{\alpha_H\(1 - \alpha_H\)}{1 - p} + \frac{\beta_H\(1 - \beta_H\)}{p}.
\end{split}\end{align}

A more complex situation is when the machine ROC curve $\hat\beta\(\alpha\)$ and the human TPR/FPR pair $\(\hat{\alpha}_{H}, \hat{\beta}_{H}\)$ are estimated
from the same dataset.
In such a case,
a revised test statistic for the hypothesis
in \eqref{testing a person against a roc curve} is then
$\hat\beta\(\hat{\alpha}_{H}\) - \hat{\beta}_{H} =
\hat{\beta}\(\hat{k}\(\hat{\alpha}_{H}\)\) - \hat{\beta}_{H}$,
where $\hat\beta\(\cdot\)$ can be any one of orthogonal ROC $\beta\(\mathbb Q_n, \hat p\(\cdot\), \cdot\)$,
parametric model based ROC $\beta\(\hat\mu, \hat\theta, \cdot\)$ and parametric orthogonal ROC $\beta\(\mathbb Q_n, \hat \theta, \cdot\)$. 
Similarly, $\hat{k}\(\cdot\)$ can be the corresponding orthogonal $k\(\mathbb Q_n, \hat p\(\cdot\), \cdot\)$,
parametric model based $k\(\hat\mu, \hat\theta, \cdot\)$ and parametric orthogonal $k\(\mathbb Q_n, \hat \theta, \cdot\)$ in 
\eqref{double robust representing of roc constrained optimized value function under full observability} and 
\eqref{roc model-based functional expression}.
The following Hadamard differentiation result applies to these estimators.

\begin{lemma}\label{test Hadamard derivative}
Let $\(Y, W, X\) \sim S$. 
Denote,
\begin{align}\begin{split}\nonumber
& \alpha\(S, p, k\) \coloneqq S\[\(1 - y\)1\(p\(x\) > k\)\] / S\(1 - y\),\  k\(S, p, \alpha\) \coloneqq \inf\left\{k \in \mathbb{R}: \alpha\(S, p, k\) \leq \alpha \right\}, \\
&k\(S, p\) \coloneqq \inf\left\{k \in \mathbb{R}: \alpha\(S, p, k\) \leq \alpha_H \right\},\  
\alpha_{H} \coloneqq S\[\(1 - y\)w\] / S\(1 - y\), 
\\
& \beta\(S, p, k\) \coloneqq S\[y1\(p\(x\) > k\)\] / S\(y\), \; \beta\(S, p\) \coloneqq S\[y1\(p\(x\) > k\(S, p\)\)\] / S\(y\).
\end{split}\end{align}
Under the conditions in Corollary \ref{roc asymptotic result},
$\beta\(S, p\)$ is Hadamard differentiable at $\(S, p^{\star}\)$ tangentially to $\mathbf{S} \times C\(E\)$,
where $\mathbf{S} \subset \ell^{\infty}\(\mathcal{F}\)$ consists of elements that are uniformly continuous on $\mathcal{F}$ 
with respect to the $L^{2}\(S\)$ norm.
The Hadamard derivative is
\begin{align}
& \beta'_{S, p^{\star}}\(\mathcal{S}, H\) =
\frac{1}{S\(y\)}\biggl\{\mathcal{S}\[y\(1\(p^{\star}\(x\) > k^{\star}\) - \beta\(S, p^{\star}, k^{\star}\)\)\] \notag \\
& + \int_{p^{\star}\(x\) = k^{\star}}\frac{p\(x\)H\(x\)\mu'\(x\)}{\Vert\nabla p^{\star}\(x\)\Vert}d\mathcal{H}_{n - 1}x\biggl\}
- \frac{1}{S\(1 - y\)}\frac{f_{\beta}\(S, p^{\star}, k^{\star}\)}{f_{\alpha}\(S, p^{\star}, k^{\star}\)}\biggl\{ \label{testing roc doctor Hadamard derivative} \\
& \mathcal{S}\[\(1 - y\)\(1\(p^{\star}\(x\) > k^{\star}\) - \(w - \alpha_{H}\) - \alpha\(S, p^{\star}, k^{\star}\)\)\] + \int_{p^{\star}\(x\) = k^{\star}}\frac{\(1 - p\(x\)\)H\(x\)\mu'\(x\)}{\Vert\nabla p^{\star}\(x\)\Vert}d\mathcal{H}_{n - 1}x\biggr\}, \notag
\end{align}
where $k^{\star} = k\(S, p^{\star}, \alpha_{H}\)$, $f_{\alpha}\(S, p^{\star}, k\)$ and $f_{\beta}\(S, p^{\star}, k\)$ are defined in \eqref{beta alpha derivative wrt k}.
\end{lemma}

When $p\(x\)$ is modeled parametrically as $p\(x; \theta\)$ 
and $\sqrt{n}\(\hat{\theta} - \theta^{\star}\)$ has an asymptotic linear representation by $\kappa\(x, y\)$,
by the delta method, $H\(x\)$ will be determined by $\frac{\partial}{\partial\theta}p\(x; \theta^{\star}\)$
and $\kappa\(x_i, y_i\)$ giving rise to an asymptotic linear representation.
In each case, the asymptotic variance can be consistently estimated by sample analogs. Bootstrap also provides valid inference.
The asymptotic distribution and verification of bootstrapped confidence interval of
$\sqrt{n}\[\(\hat{\beta}\(\hat{\alpha}_{H}\) - \hat{\beta}_{H}\) - \(\beta\(\alpha_{H}\) - \beta_{H}\)\]$ is obtained by
combining the standard term  $\hat{\beta}_{H} - \beta_{H}$  with $\hat\beta\(\hat{\alpha}_{H}\) - \beta\(\alpha_{H}\)$ handled by
\eqref{testing roc doctor Hadamard derivative}.
Section \ref{simulation data CI and testing} showcases a synthetic data example.

\section{Fast convergence rates}\label{fast convergence rates}

In this section we build on the previous analysis to generalize the fast learning rates for plug-in classifiers in
\cite{devroye1996probabilistic} and \cite{audibert2007fast} to multiclassification and multi-discrete allocation problems.
The fast learning rate phenomena can also be found in recent works by \cite{semenova2023debiased} and \cite{liu2024inference}.
Following \cite{chernozhukov2018double}, \cite{chernozhukov2022locally} and \cite{foster2023orthogonal},
Neyman-orthogonalization and sampling splitting can be combined with the plug-in scheme to facilitate
statistical inference.

\subsection{Second order Hadamard differentiability}\label{second order Hadamard differentiability}

Our generalizations are motivated by the partial convexity of 
the social welfare potential function $\gamma\(\cdot, \cdot\)$. 
Consider, for a given $g\(\cdot\)$,
\bs
e\(\lambda'; \phi^{*}\(\cdot; \lambda, g\)\) = \mathbb{E} \[\sum_j \lambda_j' \phi_j^{*} g_j\] - \gamma\(\lambda', g\).
\end{split}\end{align}
By construction, $e\(\lambda'; \phi^{*}\(\cdot; \lambda, g\)\)$ achieves its maximum of zero
at $\lambda'=\lambda$.
When there is no ambiguity, we also denote $\phi^{*}\(\cdot\) \coloneqq \phi^{*}\(\cdot; \lambda, g\)$. 
Continuous convex function is locally Lipschitz and (locally) Lipschitz function is differentiable almost 
everywhere by the Rademacher theorem \ref{Rademacher theorem}.
Due to the convexity of $\gamma\(\lambda', g\)$ in $\lambda'$ given $g\(\cdot\)$, $\gamma\(\lambda', g\)$  is differentiable in $\lambda'$ given $g\(\cdot\)$ in a set $\Lambda$ such that its complement $\Lambda^c$ is a	Lebesgue null set.
Then for each $\lambda \in \Lambda$ and for each selection $\phi^{*}$ given this choice of
$\lambda$, $e\(\lambda'; \phi^{*}\)$ is differentiable in $\lambda'$ at $\lambda'=\lambda$.
Since zero must be in the set of supgradients
at a point of maximum of any function, there is $0 \in \nabla^+ e\(\lambda'; \phi^{*}\)\big\vert_{\lambda'=\lambda}$ and thus
$\nabla e\(\lambda'; \phi^{*}\)\big\vert_{\lambda'=\lambda} = 0$,
implying that
\bs
\nabla \gamma\(\lambda', g\)\Bigg\vert_{\lambda'=\lambda}
= \nabla \mathbb{E} \[\sum_j \lambda_j' \phi_j^{*} g_j\]\Bigg\vert_{\lambda'=\lambda}
= \(\mathbb{E} \phi_j^{*} g_j, j=0,\ldots,J\)^{T}.
\end{split}\end{align}

We expect such envelope theorem like statement remains true with
functional derivatives of
\bsnumber\label{auxiliary function}
e\(\lambda', g'; \phi^{*}\) = \mathbb{E}\[\sum_j \lambda_j' \phi_j^{*} g_j'\]-\gamma\(\lambda', g'\)
\end{split}\end{align}
at $\(\lambda', g'\) = \(\lambda, g\)$ (here, $\phi^{*}\(\cdot\) \coloneqq \phi^{*}\(\cdot; \lambda, g\) $).
Theorem
\ref{theorem envelope Theorem for social welfare potential function} shows 
Fr\'{e}chet differentiability and verifies \eqref{auxiliary function}.
Generically, under weak assumptions,
\begin{align}\begin{split}\nonumber
\gamma\(\lambda, g + \theta\) = \gamma\(\lambda, g\) +
\sum_{j}\mathbb{E}\lambda_{j}\phi^{*}_{j}\theta_{j}
+ o\(\Vert \theta\Vert\).
\end{split}\end{align}
By construction, the remainder term of $o\(\Vert \theta\Vert\)$ in the above expression is
\begin{align}\begin{split}\label{reminder term Frechet}
\mathbb{E}\[\sum_{j}\lambda_{j}\(\phi^{*}_j\(X; \lambda, g + \theta\) - \phi^{*}_j\(X; \lambda, g\)\)\(g_{j}\(X\) + \theta_{j}\)\].
\end{split}\end{align}
The magnitude of this term can be investigated by relating 
the Hadamard differentiability framework in section \ref{Weighted sorted effect} to the margin assumption (MA) in the plug-in classifier literature.

Consider the estimation of social welfare potential  function under full observability. 
Let $\mu$ denote the distribution of $X$ and  $Q$ the joint distribution of $(y,x).$
Define the following notation:
\begin{align}\begin{split}\nonumber
\gamma\(\lambda, g(\cdot),\mu\) & = \mu\[\sum_{j}\lambda_{j}\(\prod_{l \neq j}1\(\lambda_{j}g_{j}\(x\) > \lambda_{l}g_{l}\(x\)\)\)g_{j}\(x\)\], \\
\gamma\(\lambda, g(\cdot),Q\) & = Q\[\sum_{j}\lambda_{j}\(\prod_{l \neq j}1\(\lambda_{j}g_{j}\(x\) > \lambda_{l}g_{l}\(x\)\)\)y_{j}\].
\end{split}\end{align}

To contrast the application of
measure $\mu$ with that of measure
$Q$, we first consider the perturbation of $\gamma\(\lambda, g(\cdot),\mu\)$
to
$\gamma\(\lambda, g(\cdot)+t_n\,H_n(\cdot),\mu+t_n{\mathcal P}_n\)$ where $t_n$
approaches zero and $H_n(\cdot)$
and ${\mathcal P}_n$ provide directions of the perturbation of the conditional expectation function $g(\cdot)$ 
and 
of the measure $\mu\(\cdot\)$ over ${\mathcal X},$ respectively.
We consider the following decomposition:
\begin{align}\begin{split}\nonumber
\gamma\(\lambda, g + t_{n}H_{n}, \mu + t_{n}\mathcal{P}_{n}\) - \gamma\(\lambda, g, \mu\)
& = \underbrace{t_{n}\(\gamma\(\lambda, g + t_{n}H_{n}, \mathcal{P}_{n}\) - \gamma\(\lambda, g, 
\mathcal{P}_{n}\)\) }_{\(1\)} \\
&\hspace{-.5in} + \underbrace{\gamma\(\lambda, g + t_{n}H_{n}, \mu\) - \gamma\(\lambda, g, \mu\)}_{\(2\)}
 + \underbrace{\gamma\(\lambda, g, \mu + t_{n}\mathcal{P}_{n}\) - \gamma\(\lambda, g, \mu\)}_{\(3\)}.
\end{split}\end{align}

The first term $\(1\)$ is the second order difference ``empirical process'' term that should vanish under suitable continuity conditions.
The third term $\(3\)$ is a standard ``summation of random variables'' term. 
The second term $\(2\)$ can be further decomposed as
\begin{align}\begin{split}\nonumber
\(2\) & = \underbrace{\mu\[\sum_{j}\lambda_{j}\(\prod_{l \neq j}1\(\lambda_{j}\(g_{j} + t_{n}H_{nj}\) > \lambda_{l}\(g_{l} + t_{n}H_{nl}\)\) - \prod_{l \neq j}1\(\lambda_{j}g_{j} > \lambda_{l}g_{l}\)\)g_{j}\]}_{\textcircled{\footnotesize{1}}} \\
& + \underbrace{t_{n}\mu\[\sum_{j}\lambda_{j}\(\prod_{l \neq j}1\(\lambda_{j}g_{j} > \lambda_{l}g_{l}\)\)H_{nj}\]}_{\textcircled{\footnotesize{2}}} \\
& + \underbrace{t_{n}\mu\[\sum_{j}\lambda_{j}\(\prod_{l \neq j}1\(\lambda_{j}\(g_{j} + t_{n}H_{nj}\) > \lambda_{l}\(g_{l} + t_{n}H_{nl}\)\) - \prod_{l \neq j}1\(\lambda_{j}g_{j} > \lambda_{l}g_{l}\)\)H_{nj}\]}_{\textcircled{\footnotesize{3}}},
\end{split}\end{align}
where for brevity we omit the argument of functions $g(\cdot)$ and $H(\cdot)$. 
If we use $\gamma\(\lambda, g, Q\)$ instead of $\gamma\(\lambda, g, \mu\)$, then we will not have \textcircled{\footnotesize{2}} and \textcircled{\footnotesize{3}} terms.
In this sense, using $Y$ instead of $g$ provides a natural Neyman-orthogonalization of the estimator.
We show 
that under Hadamard differentiability, the term \textcircled{\footnotesize{1}} is $o\(t_n\)$, 
leading to a simplified form of the Hadamard derivative of $\gamma\(\lambda, g, Q\)$. 

\begin{lemma}\label{social welfare potential first order derivative lemma}
Consider the following class of functions indexed by $g \in \mathcal{G} \subset C\(E, \mathbb{R}^{J + 1}\)$,
\begin{align}\begin{split}\nonumber
\mathcal{F} \coloneqq \left\{\(x^{T}, y^{T}\)^{T} \mapsto \(y_{0}\(\prod_{l \neq 0}1\(\lambda_{0}g_{0}\(x\) > \lambda_{l}g_{l}\(x\)\)\), \ldots, y_{J}\(\prod_{l \neq J}1\(\lambda_{J}g_{J}\(x\) > \lambda_{l}g_{l}\(x\)\)\)\)^{T} \right\}
\end{split}\end{align}
where $y$ is bounded. Let $\mathbf{Q} \subset \ell^{\infty}\(\mathcal{F}\)$ consists of elements that are linear and uniformly continuous on $\mathcal{F}$ with respect to the $L^{2}\(Q\)$ norm.
Assume also that the conditions of Theorem \ref{general k derivative} hold for 
each $\Delta_{j} = \lambda_{j}g_{j} - \(\lambda_{1}g_{1}, \ldots, \lambda_{j - 1}g_{j - 1}, \lambda_{j + 1}g_{j + 1}, \ldots, \lambda_{J}g_{J}\)^{T}$, $j = 0, \ldots, J$ and for $\mu$ where $X \sim \mu$.
Then the Hadamard derivative of $\gamma\(\lambda, \cdot, \cdot\)$ at $\(g, Q\)$ tangentially to $  C\(E, \mathbb{R}^{J + 1}\) \times \mathbf{Q}$ is
\begin{align}\begin{split}\label{social welfare potential derivative}
\gamma'_{\lambda, g, Q}\(0, H, \mathcal{Q}\) = \mathcal{Q}\[\sum_{j}\lambda_{j}\(\prod_{l \neq j}1\(\lambda_{j}g_{j}\(x\) > \lambda_{l}g_{l}\(x\)\)\)y_{j}\].
\end{split}\end{align}
\end{lemma}

Compare Lemma \ref{social welfare potential first order derivative lemma} with Theorem \ref{theorem envelope Theorem for social welfare potential function}.
By \eqref{Frechet derivative} in Theorem \ref{theorem envelope Theorem for social welfare potential function},	
the first order influence of pertubing
the second variable $g$
will not come from the indicator term.
Equation \eqref{social welfare potential derivative} in Lemma \ref{social welfare potential first order derivative lemma} derives an analogous Hadamard differentiability result. 
Neyman-orthogonalization changes the remainder term \eqref{reminder term Frechet} to 
$\mathbb{E}\[\sum_{j}\lambda_{j}\(\phi^{*}_{j}\(X; \lambda, g + \theta\) - \phi^{*}_{j}\(X; \lambda, g\)\)g_{j}\(X\)\].$
Next, we   show under a strict MA in \eqref{marginal assumption MA} with $\alpha > 1$, this term is second order degenerate.

\begin{definition}{(Second order Hadamard differentiability)}
Let $\mathcal{X}$ and $\mathcal{Y}$ be two normed spaces equipped with norms $\Vert\cdot\Vert_{\mathcal{X}}$ and $\Vert\cdot\Vert_{\mathcal{Y}}$.
A map $\phi: \mathcal{X}_{\phi} \subset \mathcal{X} \to \mathcal{Y}$ is called second order Hadamard differentiable at $\theta \in \mathcal{X}_{\phi}$ tangentially to $\mathcal{X}_{0} \subset \mathcal{X}$ if
\begin{enumerate}[(i)]
  \item $\phi$ is (first order) Hadamard differentiable at $\theta$ tangentially to $\mathcal{X}_{0}$ and the derivative $\phi'_{\theta}: \mathcal{X}_{0} \to \mathcal{Y}$ is
well defined on $\mathcal{X}$;
  \item there exists a continuous bilinear map $\Phi''_{\theta}: \mathcal{X}_{0} \times \mathcal{X}_{0} \to \mathcal{Y}$ such that for $\phi''_{\theta}\(x\) \coloneqq \Phi''_{\theta}\(x, x\)$,
      \begin{align}\begin{split}\nonumber
      \left\Vert
      \frac{\phi\(\theta + t_{n}x_{n}\) - \phi\(\theta\) - t_{n}\phi'_{\theta}\(x_{n}\) - t^{2}_{n}\phi''_{\theta}\(x\)}
      {t^{2}_{n}}\right\Vert_{\mathcal{Y}} \to 0
      \end{split}\end{align}
      for all $t_{n} \to 0$ and $x_{n} \to x \in \mathcal{X}_{0}$, as $n \to \infty$, where $\theta + t_{n}x_{n} \in \mathcal{X}_{\phi}$ for all $n$.
\end{enumerate}
\end{definition}

In the following we adopt a convention for the optimal policy in \eqref{reminder term Frechet}
 such that all the weight among the maximal indexes is allocated to the smallest index member:
\begin{align}\begin{split}\label{tie breaking definition to ensure at least at 1}
\phi^{*}_j\(x; \lambda, g\) \coloneqq \prod_{l=j+1}^J 1\(\lambda_j g_j\(x\) \geq \lambda_l g_l\(x\)\)
\prod_{l=0}^{j-1} 1\(\lambda_j g_j\(x\) > \lambda_l g_l\(x\)\).
\end{split}\end{align}
We introduce this notation in particular to handle the case $\lambda_{j}\hat{g}_{j} = \lambda_{l}\hat{g}_{l}$ when $\hat{g}$ is estimated.
For $k\neq l, k,l=0,\ldots,J$, we also write
\begin{align}\begin{split}\label{decision difference}
1_{kl}\(x; \lambda, g\) \coloneqq 1\(k < l\)1\(\lambda_k g_k\(x\) \geq \lambda_l g_l\(x\)\)
+
1\(k > l\)1\(\lambda_k g_k\(x\) > \lambda_l
g_l\(x\)\).
\end{split}\end{align}

\begin{proposition}\label{first MA second Hadamard result}
Assume that $\lambda_{j}g_{j} - \lambda_{l}g_{l}$ satisfies MA in \eqref{marginal assumption MA} for all $j \neq l, j, l = 0, \ldots, J$ at $0$ with an $\alpha > 1$.
Under the conditions in Lemma \ref{social welfare potential first order derivative lemma}, where $\mathcal F$ is defined
using \eqref{tie breaking definition to ensure at least at 1} in the statement of the lemma,
the second order Hadamard derivative of $\gamma\(\lambda, g, Q\)$ at $g$ tangentially to $C\(E, \mathbb{R}^{J + 1}\)$ is $0$.

Furthermore, if $\mathcal{F}$ is a Donsker class of functions, $g, \hat{g} \in \mathcal{G}$,
and $r_{n}\(\hat{g} - g\) \rightsquigarrow \mathbb{H},$ where $\mathbb{H}$ is a separable process supported on $C\(E, \mathbb{R}^{J + 1}\)$, then for
$\liminf_{n \to \infty}\frac{r_{n}}{\sqrt[4]{n}} > 0$, 
\begin{align}\begin{split}\nonumber
\gamma\(\lambda, \hat{g}, \mathbb{Q}_{n}\) = \gamma\(\lambda, g, \mathbb{Q}_{n}\) + o_{\mathbb{P}^{*}}\(\frac{1}{\sqrt{n}}\).
\end{split}\end{align}
\end{proposition}

The main drawback of Hadamard differentiability (functional delta method) is that it does not fully utilize the convergence rate of the first step estimator.
Since Proposition \ref{surface integral continuity} guarantees $\alpha = 1$, the requirement of
$\alpha > 1$ in Proposition \ref{first MA second Hadamard result} is less satisfactory. 
In section \ref{inference on social welfare potential function} we overcome this drawback using sample splitting and Neyman orthogonality.
In the following we provide two additional results to show more connection between the theory in section \ref{Weighted sorted effect} and the MA.

Recall that the push-forward measure $T_{\#}\mu$ generated by a Borel measurable map $T$ from the measure $\mu$ is defined by $T_{\#}\mu\[A\] = \mu\[T^{-1}\(A\)\]$.
The original margin assumption in \eqref{marginal assumption MA} can be extended to the following:

\begin{definition}(Margin controlled random variable / distribution function)
Let $X$ be a random variable supported on $\[a, b\]$, where $a < b$, $a, b \in \mathbb{R}$.
If for all $c \in \[a, b\]$,
\begin{align}\begin{split}\nonumber
\mathbb{P}\left\{\left\vert X - c\right\vert < t\right\} \leq C\(c\)t^{\alpha},
\end{split}\end{align}
where $-\infty < C\(c\) < +\infty$ may depend on c, we call $X$ a margin controlled random variable of order $\alpha$.
Similarly, if $F$ is the distribution function of such a random variable, we call it a margin controlled distribution function.
\end{definition}
A definition of a margin controlled random variable
is to impose local H\"{o}lder-continuity on its distribution
function with a universal exponent
$\alpha$. 

\begin{lemma}\label{marginally controlled lemma}
If $X$ is a margin controlled random variable on $\[a, b\]$ of order $\alpha=1$, where $a < b$, $a, b \in \mathbb{R}$, then $X$ is absolutely continuous with respect to the Lebesgue measure. Consequently, $X$ admits a density function.
\end{lemma}

\begin{lemma}\label{density iff}
Let $f \in L^{1}\(\mathbb{R}^{n}, \mathcal{L}_{n}\)$ be a nonnegative function and $h: \Omega \to \mathbb{R}$ be a function in $C^{m}$ (or Sobolev space $W^{m, p}$, $p > m$), where $\Omega \subset \mathbb{R}^{n}$ is an open subset.
Let $X$ be a random variable with density function $f$.
Then $h\(X\)$ also admit a density, i.e. the push-forward probability $h_{\#}\(f\mathcal{L}_{n}\) \ll \mathcal{L}_{1}$, if and only if $Jh > 0 \; a.e.$ on the support of $f$.	In this case the density of $h\(X\)$ is
\begin{align}\begin{split}\nonumber
f_{h}\(y\) = \int_{h^{-1}\(y\)}\frac{f\(x\)}{Jh\(x\)}d\mathcal{H}_{n - 1}x.
\end{split}\end{align}
\end{lemma}

\begin{remark}
If one only needs the if part of Lemma \ref{density iff}, where $Jh > 0 \; a.e.$ implies a density function in the form of integration with respect to the Hausdorff measure,
then one does not need the $C^{m}$ assumption. Lipschitzness is sufficient.
Proposition \ref{surface integral continuity} tells more by confirming that with a bit more regularity conditions, $f_{h}$ is actually continuous.
\end{remark}

\subsection{Multi discrete allocation problem}

In this section we apply the fast convergence rate to
social welfare functions with multiple treatments, generalizing the arguments
in 
\cite{devroye1996probabilistic} and \cite{audibert2007fast}.

Recall that in a general discrete allocation problem, we define welfare under policy $\phi\(\cdot\)$ as
\bsnumber\label{multi welfare under policy}
\gamma\(\lambda, g, \phi\)
= \mathbb{E} \[\sum_{j=0}^J  \lambda_j\phi_j\(X\)Y_j\]
= \mathbb{E} \[\sum_{j=0}^J \lambda_j\phi_j\(X\)g_j\(X\)\],
\end{split}\end{align}
where
$g_j\(x\) = \mathbb{E}\(Y_j \vert X = x\)$ and $\phi_j\(x\) \geq 0, \forall j$, $\sum_{j=0}^J \phi_j\(x\) = 1$.
A special case of \eqref{multi welfare under policy} is the weighted multiclassification accuracy when
$Y_j \in \{0,1\}$, $\sum_{j=0}^J Y_j =1$, and when $\phi_j\(x\), j=0,\ldots,J$ is a multiclassification
allocation.\footnote{The binary treatment case in \cite{audibert2007fast} is a special of \eqref{multi welfare under policy} where $g_1\(x\)=p(x)$, the propensity score,
$g_0\(x\)=1-p\(x\)$, $\phi_1\(x\) = 1\(p(x)>\tau\)$,  $\phi_0\(x\) = 1\(p(x) \leq \tau\)$,  $\lambda_1=\tau$ and $\lambda_0=1-\tau$.}

In classification problems, $Y_j, j=0,\ldots,J$ are directly observed. 
In contrast, in optimal treatment allocation
problem, $Y_j$ is typically observed only when $d_j=1$, where $d_j$ is the indicator  of whether the $j$th treatment is taken,
such that $d_j \in \{0,1\}$ and $\sum_{j=0}^J d_j = 1$.
We assume the standard unfoundedness condition, i.e.,
$Y_{j}\indep D_{j} \vert X, \forall j=0,\ldots,J$.
Then $g_j\(x\) = \mathbb{E}\(Y_j \vert D_j =1, x\)$ can be estimated 
based on the subsample where $d_j = 1$.

The envelope-like Theorem \ref{theorem envelope Theorem for social welfare potential function} essentially
states that generically there is
no first order impact of estimating $\phi^*\(\cdot; \lambda, g\)$
in $\gamma\(\lambda, g\) = \mathbb{E} \[\sum_j \phi_j^*\(X; \lambda, g\) g_j\(X\)\]$
by $\phi^{*}\(\cdot; \lambda, \hat{g}\)$.
The fast convergence rate of regret to zero is a consequence of the second order effect
of estimating $\phi^*\(\cdot; \lambda, g\)$ by $\phi^{*}\(\cdot; \lambda, \hat{g}\)$.
We use the same convention as
\eqref{tie breaking definition to ensure at least at 1} 
in 
the following expressions for the optimal and feasible policies:
\bs
\phi_j^{*}\(x\) &= \prod_{l=j+1}^J 1\(\lambda_j g_j\(x\) \geq \lambda_l g_l\(x\)\)
\prod_{l=0}^{j-1} 1\(\lambda_j g_j\(x\) > \lambda_l g_l\(x\)\),\\
\hat\phi_j\(x\) &= \prod_{l=j+1}^J 1\(\lambda_j \hat g_j\(x\) \geq \lambda_l \hat g_l\(x\)\)
\prod_{l=0}^{j-1} 1\(\lambda_j \hat g_j\(x\) > \lambda_l \hat g_l\(x\)\).
\end{split}\end{align}
The regret of using the feasible policy for welfare
is:
\bs
\gamma\(\lambda, g, \phi^{*}\)
-
\gamma\(\lambda, g, \hat\phi\)
= \sum_{j=0}^{J} \mathbb{E}\[\lambda_j g_j\(X\) \(
\phi_j^{*}\(X\) - \hat \phi_j\(X\)
\)\].
\end{split}\end{align}
By definition, $\phi_j^{*}\(x\)$ is either $0$ or $1$. 
Furthermore, for each $x$, one and only one of $\phi_j^{*}\(x\)$ out of $j=0,\ldots,J$ takes
the value $1$.
The same is true for $\hat\phi_j\(x\)$ for each $x$.
Therefore we can write
\bsnumber\label{tsybakov 2nd order bound}
\gamma\(\lambda, g, \phi^{*}\)
& -\gamma\(\lambda, g, \hat\phi\) \\ 
&\leq \mathbb{E}\[\sum_{i\neq j} \left\vert\lambda_i g_i\(X\) - \lambda_j g_j\(X\) -
\(\lambda_i \hat g_i\(X\) - \lambda_j \hat g_j\(X\)\)
\right\vert \phi_i^{*}\(X\) \hat\phi_j\(X\)\] \\
&\leq \mathbb{E}\[\sum_{i\neq j}
\left\vert\lambda_i g_i\(X\) - \lambda_j g_j\(X\) -
\(\lambda_i \hat g_i\(X\) - \lambda_j \hat g_j\(X\)\)\right\vert
\sum_{k \neq l} \left\vert
1_{kl}^{*} -\hat 1_{kl}
\right\vert\],
\end{split}\end{align}
where the second inequality uses the simplified notation \eqref{decision difference}.
The first inequality of \eqref{tsybakov 2nd order bound} is due to the fact that when
$\phi_i^{*}\(x\) = \hat\phi_j\(x\) =1$,
\bs
0 \leq \lambda_i g_i\(x\) - \lambda_j g_j\(x\)
\leq \lambda_i g_i\(x\) - \lambda_j g_j\(x\) - \(
\lambda_i \hat g_i\(x\) - \lambda_j \hat g_j\(x\)
\).
\end{split}\end{align}
Following Example 1 in \cite{chen2003estimation}, for
$\delta_n = \sup_{x\in\mathcal X} \max_{j=0,\ldots,J} \vert
\lambda_j g_j\(x\) - \lambda_j \hat{g}_j\(x\)
\vert$,
\bs
\vert
1_{kl}^{*} -\hat 1_{kl}
\vert \leq
1\(
-2 \delta_n \leq \lambda_k g_k\(x\) - \lambda_l g_l\(x\)\leq 2 \delta_n
\).
\end{split}\end{align}
Then we can further bound the regret as
\bs
&\gamma\(\lambda, g, \phi^{*}\)
-\gamma\(\lambda, g, \hat\phi\)
\leq 2 \(J+1\)^2 \delta_n
\sum_{k\neq l} \mathbb{P}\(-2 \delta_n \leq \lambda_k g_k\(X\) -
\lambda_l g_l\(X\)\leq 2 \delta_n\).
\end{split}\end{align}
The conditions in section \ref{Weighted sorted effect} imply that
for 
$k\neq l$,
$\lambda_k g_k\(x\) - \lambda_l g_l\(x\)$
has a bounded density at zero, leading to
$\mathbb{P}\(-2 \delta_n \leq \lambda_k g_k\(X\) -
\lambda_l g_l\(X\)\leq 2 \delta_n\) \leq C \delta_n$ and
$\gamma\(\lambda, g, \phi^{*}\)
-\gamma\(\lambda, g, \hat\phi\)
\leq C \delta_n^2.$

Nonparametric estimators $\hat{g}_j\(x\)$ typically achieve the convergence rate $\delta_n = O_{\mathbb{P}}\(n^{-\frac{\beta}{2\beta+d}} \log n\)$ where $d$ is the dimension of $x$, and $\beta$ a smoothness parameter, implying that
\bs
\left\vert
\gamma\(\lambda, g, \phi^{*}\)
-\gamma\(\lambda, g, \hat\phi\)\right\vert
= O_{\mathbb{P}}\(
n^{-\frac{2\beta}{2\beta+d}} \log^2 n
\).
\end{split}\end{align}
Whenever $\beta > d/2$, the convergence rate of 
$\gamma\(\lambda, g, \phi^{*}\) -\gamma\(\lambda, g, \hat\phi\)$
becomes $O_{\mathbb{P}}\(
n^{-\frac{1}{2}}\).$

\subsection{Inference on social welfare potential function}\label{inference on social welfare potential function}

The unknown social welfare potential function $\beta_0 = \gamma\(\lambda, g\) = \mathbb{E}\[\sum_{j}\lambda_j \phi^{*}_{j}\(X\)g_{j}\(X\)\]$
can be estimated by its sample analog
$\frac{1}{n'} \sum_{i=1}^{n'} \sum_{j} \lambda_j \hat{\phi}_{j}\(X_{i}\)\hat g_j\(X_{i}\)$ for $n' = O\(n\)$.
On the one hand, under full observability, the Neyman-orthogonalized estimator
$\frac{1}{n'} \sum_{i=1}^{n'} \sum_{j} \lambda_j \hat{\phi}_{j}\(X_{i}\)Y_{ij}$ can be combined with
a sample splitting scheme for estimating $\hat{\phi}_{j}\(X_{i}\)$ to achieve  desirable
statistical properties.
On the other hand, for a typical treatment allocation problem, $Y_{ij}$ are not fully observable.
We only observe $Y_{i} = \sum_{j}D_{ij}Y_{ij}$ and need to use
the treatment status variables $D_{ij}$ to construct a Neyman-orthogonalized estimator.

\begin{theorem}\label{treatment effect allocation simple case}
Let $p, \hat{p}, g, \hat{g}: E \subset \mathbb{R}^{\dim\(X_i\)} \to \mathbb{R}^{J + 1}$ be uniformly bounded functions, where $E$ is an open set.
$Var\(Y_{i} \vert X_i\)$ is uniformly bounded.
Assume that there exists a constant $\epsilon > 0$ such that $p_{j}\(x\), \hat{p}_{j}\(x\) > \epsilon$ for all $x \in E$ and for all $j \in \{0, \ldots, J\}$.
Assume also that for some sequence $a_{n}$, $\lim_{n \to \infty}\frac{a_{n}}{\sqrt[4]{n}} = + \infty$,
there is $\esssup a_{n}\left\vert \hat{p} - p\right\vert = o_{\mathbb{P}}\(1\)$ and $\esssup a_{n}\left\vert \hat{g} - g\right\vert = o_{\mathbb{P}}\(1\)$, where the essential supremum is taken with respect to the distribution of $X$.
Further assume that 
$\lambda_{j}g_{j} - \lambda_{l}g_{l}$ satisfies the MA for all $j \neq l$, $j, l = 0, \ldots, J$ at $0$ with $\alpha = 1$.
Under unconfoundedness and using sample splitting scheme for estimating $\hat p\(x\)$ and $\hat g\(x\)$ in a treatment allocation model,
\bsnumber\label{influence function multiclass policy learning}
\sqrt{n'}\(\hat\beta - \beta_0\)
= \frac{1}{\sqrt{n'}}
\sum_{i=1}^{n'}
\(\sum_{j=0}^J \lambda_j \phi_j^{*}\(X_i\)
\[
	g_j\(X_i\) + \frac{D_{ij}}{p_j\(X_i\)} \(Y_i - g_j\(X_i\)\)
\] - \beta_0\)
+ o_\mathbb{P}\(1\)
\end{split}\end{align}
where
\bsnumber\label{DML social welfare potential treatment}
\hat\beta=
\frac{1}{n'} \sum_{i=1}^{n'}
\sum_{j=0}^J \lambda_j \hat \phi_j\(X_i\)
\[
	\hat g_j\(X_i\) + \frac{D_{ij}}{\hat p_j\(X_i\)} \(Y_i - \hat g_j\(X_i\)\)
\].
\end{split}\end{align}
Additionally, suppose $g$ and $\mu$ satisfy the conditions of Theorem \ref{general k derivative}.
Let $\mathcal{D} \subset \mathbb{R}^{J + 1}_{++}$ be a compact set where $\mathbb{R}^{J + 1}_{++} \coloneqq \left\{\lambda: \lambda \in \mathbb{R}^{J + 1}, \lambda_{j} > 0 , \forall j = 0, \ldots, J\right\}$
and let $\lambda_{j}g_{j} - \lambda_{l}g_{l}$ satisfy the MA with a uniform constant.
Then \eqref{influence function multiclass policy learning} holds uniformly on $\mathcal{D}$.
\end{theorem}

\begin{theorem}\label{classification simple case}
Under the conditions in Theorem \ref{treatment effect allocation simple case} and using sample splitting scheme
to estimate $\hat \phi\(x\)$ with full observability,  
\begin{align}
&\sqrt{n'}\(\hat\beta - \beta_0\)
= \frac{1}{\sqrt{n'}} \sum_{i=1}^{n'}
\(
\sum_{j=0}^J \lambda_j \phi_j^*\(X_i\)	Y_{ij} - \beta_0
\)   + o_\mathbb{P}\(1\),\label{influence function multiclass policy learning for classification}\\
&\text{where}\quad \hat\beta=
\frac{1}{n'} \sum_{i=1}^{n'}
\sum_{j=0}^J \lambda_j \hat \phi_j\(X_i\)  Y_{ij}.\notag
\end{align}
Additionally, suppose $g$ and $\mu$ satisfy the conditions of Theorem \ref{general k derivative}.
Let $\mathcal{D} \subset \mathbb{R}^{J + 1}_{++}$ be a compact set where $\mathbb{R}^{J + 1}_{++} \coloneqq \left\{\lambda: \lambda \in \mathbb{R}^{J + 1}, \lambda_{j} > 0, \forall j = 0, \ldots, J\right\}$
and let $\lambda_{j}g_{j} - \lambda_{l}g_{l}$ satisfy the MA with a uniform constant.
Then \eqref{influence function multiclass policy learning for classification} holds uniformly on $\mathcal{D}$.
\end{theorem}

We note that \cite{luedtke2016statistical} and \cite{luedtke2020performance} utilize a formula similar to \eqref{DML social welfare potential treatment} for the binary case under empirical process settings.
To apply broadly to machine learning methods,
Theorem \ref{treatment effect allocation simple case} and Theorem \ref{classification simple case}
combine the fast convergence rate of a plug-in classifier with the insighful techniques
of \cite{chernozhukov2018double}, \cite{chernozhukov2022locally} and \cite{foster2023orthogonal} based on Neyman orthogonalization and sample splitting.
Inspired by \cite{semenova2023debiased}, the asymptotic distributions of \eqref{influence function multiclass policy learning} and \eqref{influence function multiclass policy learning for classification} can be estimated through bootstrap.

\begin{cor}\label{multiplier bootstrap}
Let the multiplier $W_{n'i}$ be i.i.d. $Exponential(1)$ random variables,
which are independent of $\(Y, X, D\)$ or $\(Y, X\)$,
or follow the $n'$ times $n'$-dimensional Dirichlet distribution with parameter vector $\(1, \ldots, 1\)$ independent of $\(Y, X, D\)$ or $\(Y, X\)$.
Under the conditions in Theorem \ref{treatment effect allocation simple case} and Theorem \ref{classification simple case},
$\sqrt{n'}\(\tilde{\beta} - \hat{\beta}\) \weakconv \mathbb{G}$,
where the $\weakconv$ symbol is defined in \eqref{conditional weak convergence in probability}. 
On the one hand, in treatment allocation problems with partial
observability,
\begin{align}\begin{split}\nonumber
\tilde{\beta} & =
\frac{1}{n'} \sum_{i=1}^{n'}
W_{n'i}\(\sum_{j=0}^J \lambda_j \hat \phi_j\(X_i\)
\[
	\hat g_j\(X_i\) + \frac{D_{ij}}{\hat p_j\(X_i\)} \(Y_i - \hat g_j\(X_i\)\)
\]\)\ \text{and}\\
\mathbb{G} & = N\(0, Var\(\sum_{j=0}^J \lambda_j \phi_j^{*}\(X_i\)
\[
	g_j\(X_i\) + \frac{D_{ij}}{p_j\(X_i\)} \(Y_i - g_j\(X_i\)\)
\]\)\). \;
\end{split}\end{align}
On the other hand, in classification problems with full observability,
\begin{align}\begin{split}\nonumber
\tilde{\beta}  =
\frac{1}{n'} \sum_{i=1}^{n'} W_{n'i}
\(\sum_{j=0}^J \lambda_j \hat \phi_j\(X_i\)  Y_{ij}\) \ \ \text{and}\ \
\mathbb{G}  = N\(0, Var\(\sum_{j=0}^J \lambda_j \phi_j^*\(X_i\)	 Y_{ij}\)\).
\end{split}\end{align}
\end{cor}

\section{Simulation and empirical application}

\subsection{Confidence intervals and hypothesis testing of ROC curves}\label{simulation data CI and testing}


To evaluate the finite sample performance of our asymptotic results on the ROC curves, 
 we first study the simplified asymptotic distribution of the ROC curves under correct specification.
 Next, we consider a synthetic data example and illustrate an important application of comparing a machine generated ROC curve with a human decision maker.

In our first simulation, we generate data according to a logit model: 
\begin{align}\begin{split}\nonumber
Y_{i} = 1\(\frac{e^{X^{T}_{i}\theta_{0}}}{1 + e^{X^{T}_{i}\theta_{0}}} > U_{i}\), \theta^{T}_{0} = \(1, 0.8, 0.5\), U_{i} \sim Uniform\(0, 1\), X_{i} \sim Epanechnikov\(1\)^{\bigotimes 3}. 
\end{split}\end{align}
The Epanechnikov distribution with the parameter $c = 1$ has the density function
\begin{align}\begin{split}\nonumber
f\(x; c = 1\) = \frac{3}{4c}\max\left\{0, 1 - \(\frac{x}{c}\)^{2}\right\}\bigg\vert_{c = 1}.
\end{split}\end{align}

We examine the empirical coverage frequencies and average 
lengths of three types of confidence intervals (hereafter abbreviated as CI) for the ROC curve of the
correctly specified logit model.
The first $CI_{bootstrap}$ is formed via the orthogonal estimator using the maximum likelihood estimator (MLE) of the logit model and the bootstrapping method in Corollary \ref{roc bootstrap}. 
The second $CI_{theoretical}$ is formed via the MLE orthogonal estimator and the analytic asymptotic distribution in \eqref{theoretical distribution of correctly specified ROC model}.
The third $CI_{model \; based}$ is formed by bootstrapping the model based ROC curve described in Proposition \ref{ROC parametric model based estimator more efficient than orthogonal estimator}.
To calculate the $CI_{bootstrap}$ under
correctly specification we do not need to reestimate the logit model for each bootstrap sample. 
The constructions of the ROC estimators and three types of CIs are summarized in Algorithm \ref{algorithm three CI}. 

\begin{algorithm}
\caption{ROC estimators and CIs}\label{algorithm three CI}
\KwData{$\left\{ \(X_{i}, Y_{i}\)\right\}_{i = 1}^{n}$, 
$type \in \left\{bootstrap, theoretical, model \; based\right\}$, FPR level grid $G$, number of bootstrap repetitions $R$.}
\KwResult{ROC curve $\[\(\alpha, \hat{\beta}\(\alpha\)\): \alpha \in G\]$, pointwise CI $\[\(\beta_{l}\(\alpha\), \beta_{u}\(\alpha\)\): \alpha \in G\]$.}
Estimate propensity score model parameter $\hat{\theta}$; \\
\eIf{$type \neq model \;based$}{
\lForEach{$\alpha \in G$}{Estimate $\hat{\beta}\(\alpha\) = \beta\(\mathbb{Q}_{n}, \hat{\theta}, \alpha\)$ via \eqref{double robust representing of roc constrained optimized value function under full observability} 
}
\eIf{$type = bootstrap$}{
\For{$r \in \{1, 2, \ldots, R\}$}{
\lForEach{$\alpha \in G$}{Estimate $\tilde{\beta}_{r}\(\alpha\) = \beta\(\tilde{\mathbb{Q}}_{n, r}, \hat{\theta}, \alpha\)$
}}
}
{
Calculate CI via \eqref{theoretical distribution of correctly specified ROC model} \;
}
}{
\lForEach{$\alpha \in G$}{
Estimate $\hat{\beta}\(\alpha\) = \beta\(\hat{\mu}, \hat{\theta}, \alpha\)$ via \eqref{roc model-based functional expression}
}
\For{$r \in \left\{1, 2, \ldots, R\right\}$}{
Re-estimate propensity score parameter $\tilde{\theta}_{r}$ using the bootstrapped sample \; 
\lForEach{$\alpha \in G$}{
Estimate $\tilde{\beta}_{r}\(\alpha\) = \beta\(\tilde{\mu}_{r}, \tilde{\theta}_{r}, \alpha\)$
}}
}
\If{$type \neq theoretical$}{
\lForEach{$\alpha \in G$}{
Calculate the reverse percentile CI via 
\begin{align}\begin{split}\nonumber
\(\beta_{l}\(\alpha\), \beta_{u}\(\alpha\)\) \coloneqq 
\(2\hat{\beta}\(\alpha\) - \tilde{\beta}\(\alpha\)_{\(1 - \tau / 2\)},
2\hat{\beta}\(\alpha\) - \tilde{\beta}\(\alpha\)_{\(\tau / 2\)}\), 
\end{split}\end{align}
where $\tilde{\beta}\(\alpha\)_{\(\tau\)}$ is the $\tau$-quantile of $\left\{\tilde{\beta}_{r}\(\alpha\)\right\}_{r = 1}^{R}$
}
}
\end{algorithm}

We use $R = 1000$ bootstrap repetitions each in 1000 Monte Carlo simulations for sample sizes $n \in \{1000, 2000, 5000, 10000\}$.
We evaluate the pointwise CI at FPR level between $0.3$ and $0.7$ with stepsize $0.05$.
Since the analytic expression for the population ROC curve is
difficult to derive, we
calculate an approximation by simulating a large sample size of $5 \cdot 10^{6}$ from the data generating model.
As shown in Table \ref{coverage and average length table}, the coverage is very close to the nominal level $95\%$.
The average lengths of $CI_{bootstrap}$ and $CI_{theoretical}$ are nearly the same, while $CI_{model \; based}$ 
is saliently the shortest. 

\begin{table}[htbp]
  \centering
  \tabcolsep = 3pt
  \renewcommand{\arraystretch}{1.2}
  \caption{Coverage and average lengths of the confidence intervals}\label{coverage and average length table}
  \begin{tabular}{lccccccccc}
    \toprule
    \multicolumn{10}{l}{\textbf{Panel A}: $n = 1000$} \\
    \midrule
    FPR level & $0.30$ & $0.35$ & $0.40$ & $0.45$ & $0.50$ & $0.55$ & $0.60$ & $0.65$ & $0.70$ \\
    \hline
    \multirow{2}{*}{$CI_{bootstrap}$} & 0.938 & 0.929 & 0.924 & 0.931 & 0.929 & 0.932 & 0.931 & 0.928 & 0.924 \\
    & (0.128) & (0.124) & (0.119) & (0.114) & (0.108) & (0.101) & (0.0932) & (0.0857) & (0.0772) \\
    \multirow{2}{*}{$CI_{theoretical}$} & 0.954 & 0.947 & 0.95 & 0.949 & 0.952 & 0.952 & 0.939 & 0.946 & 0.946 \\
    & (0.127) & (0.123) & (0.118) & (0.113 ) & (0.106) & (0.0997) & (0.0925) & (0.0847) & (0.0764) \\
    \multirow{2}{*}{$CI_{model \; based}$} & 0.931 & 0.933 & 0.934 & 0.936 & 0.939 & 0.937 & 0.936 & 0.935 & 0.936 \\
    & (0.0987) & (0.0965) & (0.0929) & (0.088) & (0.0821) & (0.0754) & (0.0681) & (0.0603) & (0.0522) \\
    \midrule
    \multicolumn{10}{l}{\textbf{Panel B}: $n = 2000$} \\
    \midrule
    FPR level & $0.30$ & $0.35$ & $0.40$ & $0.45$ & $0.50$ & $0.55$ & $0.60$ & $0.65$ & $0.70$ \\
    \hline
    \multirow{2}{*}{$CI_{bootstrap}$} & 0.954 & 0.952 & 0.938 & 0.955 & 0.931 & 0.956 & 0.939 & 0.95 & 0.929 \\
    & (0.0906) & (0.0872) & (0.0841) & (0.0801) & (0.0759) & (0.0714) & (0.0661) & (0.0601) & 0.0545 \\
    \multirow{2}{*}{$CI_{theoretical}$} & 0.965 & 0.967 & 0.956 & 0.961 & 0.954 & 0.959 & 0.951 & 0.959 & 0.956 \\
    & (0.0897) & (0.0871) & (0.0837) & (0.0798) & (0.0754) & (0.0706) & (0.0654) & (0.0599) & (0.054) \\
    \multirow{2}{*}{$CI_{model \; based}$} & 0.947 & 0.95 & 0.952 & 0.953 & 0.955 & 0.955 & 0.956 & 0.958 & 0.958 \\
    & (0.0708) & (0.0692) & (0.0666) & (0.0631) & (0.0589) & (0.0541) & (0.0489) & (0.0433) & (0.0374) \\
    \midrule
    \multicolumn{10}{l}{\textbf{Panel C}: $n = 5000$} \\
    \midrule
    FPR level & $0.30$ & $0.35$ & $0.40$ & $0.45$ & $0.50$ & $0.55$ & $0.60$ & $0.65$ & $0.70$ \\
    \hline
    \multirow{2}{*}{$CI_{bootstrap}$} & 0.936 & 0.931 & 0.936 & 0.942 & 0.942 & 0.953 & 0.952 & 0.946 & 0.933 \\
    & (0.0571) & (0.0555) & (0.0532) & (0.0506) & (0.0478) & (0.045) & (0.0416) & (0.0381) & (0.0345) \\
    \multirow{2}{*}{$CI_{theoretical}$} & 0.944 & 0.938 & 0.943 & 0.943 & 0.953 & 0.96 & 0.956 & 0.951 & 0.95 \\
    & (0.0568) & (0.0551) & (0.053) & (0.0505) & (0.0477) & (0.0447) & (0.0414) & (0.0379) & (0.0342) \\
    \multirow{2}{*}{$CI_{model \; based}$} & 0.944 & 0.938 & 0.943 & 0.943 & 0.953 & 0.96 & 0.956 & 0.951 & 0.95 \\
    & (0.0453) & (0.0443) & (0.0427) & (0.0404) & (0.0377) & (0.0347) & (0.0314) & (0.0277) & (0.024) \\
    \midrule
    \multicolumn{10}{l}{\textbf{Panel D}: $n = 10000$} \\
    \midrule
    FPR level & $0.30$ & $0.35$ & $0.40$ & $0.45$ & $0.50$ & $0.55$ & $0.60$ & $0.65$ & $0.70$ \\
    \hline
    \multirow{2}{*}{$CI_{bootstrap}$} & 0.945 & 0.947 & 0.943 & 0.96 & 0.937 & 0.943 & 0.943 & 0.961 & 0.949 \\
    & (0.0404) & (0.039) & (0.0376) & (0.0359) & (0.0337) & (0.0317) & (0.0293) & (0.0269) & (0.0243) \\
    \multirow{2}{*}{$CI_{theoretical}$} & 0.946 & 0.953 & 0.952 & 0.964 & 0.95 & 0.956 & 0.951 & 0.96 & 0.949 \\
    & (0.0402) & (0.039) & (0.0375) & (0.0357) & (0.0338) & (0.0316) & (0.0293) & (0.0268) & (0.0242) \\
    \multirow{2}{*}{$CI_{model \; based}$} & 0.946 & 0.951 & 0.948 & 0.951 & 0.947 & 0.946 & 0.945 & 0.944 & 0.943 \\
    & (0.0321) & (0.0314) & (0.0303) & (0.0287) & (0.0268) & (0.0246) & (0.0222) & (0.0197) & (0.017) \\
    \bottomrule
  \end{tabular}
\end{table}

Next, we consider a problem of comparing algorithms with decision making of ``doctors'' using synthetic data.  
Consider i.i.d. observations $\{\(Y_i, W_i, X_i\), i = 1, \ldots, n\}$, such that $Y_i$ is the true outcome, $W_i$ is the doctor's diagnosis, 
and $X_i$ are observable features. 
To compare an algorithm with a doctor, 
we test the null hypothesis that the human TPR/FPR pair lies on the machine ROC curve against the alternative hypotheses that the human pair lies above or below the ROC curve.
We generate synthetic data according to the following model
\begin{align}\begin{split}\nonumber
& Y_{i} = 1\(\frac{e^{X^{T}_{i}\theta_{0}}}{1 + e^{X^{T}_{i}\theta_{0}}} > U_{i}\), \; \theta^{T}_{0} = \(1, 0.8, 0.5, 1\), \\
& W^{E}_{i} = 1\(\frac{e^{X^{T}_{i}\theta_{E}}}{1 + e^{X^{T}_{i}\theta_{E}}} > 0.5\), \; \theta_{E} = \theta_{0}, \;
W^{I}_{i} = 1\(\frac{e^{X^{T}_{i}\theta_{I}}}{1 + e^{X^{T}_{i}\theta_{I}}} > 0.5\), \; \theta_{I} = \(1, 0, 0.5, 0\), \\
& X_{i} \sim N\(0, I_{4}\), U_{i} \sim Uniform\(0, 1\),
\end{split}\end{align}
where $I_{4}$ is the $4$-dimensional identity matrix.
The $W^{E}_{i}$ decisions are made by an ideal ``experienced doctor'' who knows the data generating process, while the $W^{I}_{i}$ decisions are made by an ``inexperienced doctor'' who only utilizes $X_{1}$ and $X_{3}$.
Theorem \ref{general k derivative} is applicable to this model.

A human decision maker (e.g. doctor, judge and admission officer) is unlikely to make a large number of decisions.
We set the simulation sample size $n$ to	 $\{100, 250, 500, 1000, 2000, 5000, 10000\}$.
We first use Algorithm \ref{algorithm comparison CI} with $R = 1000$ to form a CI for the difference $\beta\(\alpha_{doc}\) - \beta_{doc}$. 
We then determine whether $0$ is within, above or below the interval.
In the second or third cases,
we reject the hypothesis of equal performance between the doctor and the machine model in favor of either the doctor 
or the model. 

\begin{algorithm}
\caption{ROC and TPR/FPR comparison CI}\label{algorithm comparison CI}
\KwData{$\left\{ \(X_{i}, W_{i}, Y_{i}\)\right\}_{i = 1}^{n}$, the number of bootstrap repetitions $R$.}
\KwResult{CI $\(\beta_{l}, \beta_{u}\)$.}
Estimate propensity score model parameter $\hat{\theta}$ \; 
Estimate $\hat{\beta}\(\hat{\alpha}_{doc}\) = \beta\(\mathbb{Q}_{n}, \hat{\theta}, \hat{\alpha}_{doc}\)$ via \eqref{double robust representing of roc constrained optimized value function under full observability}; \\
\For{$r \in \left\{1, 2, \ldots, R\right\}$}{
Re-estimate propensity score parameter $\tilde{\theta}_{r}$ using the bootstrapped sample \; 
Estimate $\tilde{\beta}_{r}\(\tilde{\alpha}_{doc}\) = \beta\(\tilde{\mathbb{Q}}_{n}, \tilde{\theta}_{r}, \tilde{\alpha}_{doc}\)$ and calculate $\(\tilde{\alpha
}_{doc}, \tilde{\beta}_{doc}\)$ \;
}
Calculate the reverse percentile CI via
\begin{align}\begin{split}\nonumber
\(2\(\hat{\beta}\(\hat{\alpha}_{doc}\) - \hat{\beta}_{doc}\) - \(\tilde{\beta}\(\tilde{\alpha}_{doc}\) - \tilde{\beta}_{doc}\)_{\(1 - \tau / 2\)},
2\(\hat{\beta}\(\hat{\alpha}_{doc}\) - \hat{\beta}_{doc}\) - \(\tilde{\beta}\(\tilde{\alpha}_{doc}\) - \tilde{\beta}_{doc}\)_{\(\tau / 2\)}\).
\end{split}\end{align}
\end{algorithm}

\begin{table}[!htp]
  \centering
  \tabcolsep = 3pt
  \renewcommand{\arraystretch}{1.2}
  \caption{ROC curve and doctor test rejection frequencies}
  \label{roc doctor test table}
  \begin{tabular}{llccccccc}
    \toprule
    \multicolumn{9}{l}{\textbf{Panel A}: nominal level 5\%} \\
    \midrule
    sample size $n$ & & $100$ & $250$ & $500$ & $1000$ & $2000$ & $5000$ & $10000$ \\
    \hline
    \multirow{3}{*}{experienced doctor} & reject in favor of model & 0.006 & 0.001 & 0.0 & 0.0 & 0.0 & 0.0 & 0.0 \\
    & reject in favor of doctor & 0.166 & 0.398 & 0.712 & 0.936 & 1.0 & 1.0 & 1.0 \\
    & fail to reject & 0.827 & 0.601 & 0.288 & 0.064 & 0.0 & 0.0 & 0.0 \\
    \hline
    \multirow{3}{*}{inexperienced doctor} & reject in favor of model & 0.212 & 0.344 & 0.525 & 0.79 & 0.981 & 1.0 & 1.0 \\
    & reject in favor of doctor & 0.004 & 0.001 & 0.0 & 0.0 & 0.0 & 0.0 & 0.0 \\
    & fail to reject & 0.783 & 0.654 & 0.475 & 0.21 & 0.019 & 0.0 & 0.0 \\
    \hline
    \multirow{3}{*}{model-like case} & reject in favor of model & 0.027 & 0.034 & 0.026 & 0.027 & 0.02 & 0.014 & 0.017 \\
    & reject in favor of doctor & 0.026 & 0.019 & 0.023 & 0.016 & 0.02 & 0.015 & 0.021 \\
    & fail to reject & 0.946 & 0.947 & 0.951 & 0.957 & 0.96 & 0.971 & 0.962 \\
    \midrule
    \multicolumn{9}{l}{\textbf{Panel B}: nominal level 10\%} \\
    \midrule
    sample size $n$ & & $100$ & $250$ & $500$ & $1000$ & $2000$ & $5000$ & $10000$ \\
    \hline
    \multirow{3}{*}{experienced doctor} & reject in favor of model & 0.008 & 0.004 & 0.0 & 0.0 & 0.0 & 0.0 & 0.0 \\
    & reject in favor of doctor & 0.262 & 0.524 & 0.806 & 0.97 & 1.0 & 1.0 & 1.0 \\
    & fail to reject & 0.728 & 0.471 & 0.194 & 0.03 & 0.0 & 0.0 & 0.0 \\
    \hline
    \multirow{3}{*}{inexperienced doctor} & reject in favor of model & 0.295 & 0.454 & 0.627 & 0.873 & 0.995 & 1.0 & 1.0 \\
    & reject in favor of doctor & 0.008 & 0.002 & 0.0 & 0.0 & 0.0 & 0.0 & 0.0 \\
    & fail to reject & 0.696 & 0.544 & 0.373 & 0.127 & 0.005 & 0.0 & 0.0 \\
    \hline
    \multirow{3}{*}{model-like case} & reject in favor of model & 0.073 & 0.067 & 0.055 & 0.053 & 0.052 & 0.041 & 0.032 \\
    & reject in favor of doctor & 0.049 & 0.04 & 0.037 & 0.036 & 0.044 & 0.035 & 0.053 \\
    & fail to reject & 0.874 & 0.891 & 0.908 & 0.911 & 0.904 & 0.924 & 0.915 \\
    \bottomrule
  \end{tabular}
\end{table}

Table \ref{roc doctor test table} shows the empirical rejection frequencies for nominal 5\% and 10\% tests.
In the first two entries of both panels, 
the ``experienced doctor'' and the ``inexperienced doctor'' are compared with a misspecified logit model using only $\(X_{1}, X_{2}, X_{3}\)$.
Additionally, in the last entry of both panels labelled ``model-like case'',
we compare the same ``experienced doctor'' with a correctly specified and estimated logit model using all of $\(X_{1}, X_{2}, X_{3}, X_{4}\)$.
When the sample size is only $100$, the test has low power but rarely rejects in the wrong direction. 
A high power level is achieved when the sample size reaches $500$ or $1000$. 
The ``model-like case'' represents the null hypothesis of equal performance between
 the ``experienced doctor'' and the model.
As shown in Table \ref{roc doctor test table}, the empirical size of the test is close to the nominal level. 
The simulation results validate the finite sample quality of our asymptotic approximation theory. 

\subsection{Real data application and simulation}\label{empirical analysis}

In this section, we apply the plug-in multi-class classification method in section \ref{fast convergence rates} to a voting dataset of the August 2006 primary
election in Michigan.
This dataset originates from the field experiment reported in \cite{gerber2008social}.\footnote{We download the dataset from \url{https://github.com/gsbDBI/ExperimentData/tree/master/Social}.}
The optimal voting incentive policy has been studied in the preprint version of \cite{zhou2023offline} using their decision-tree search method.\footnote{See the
preprint version of \cite{zhou2023offline} at \url{https://arxiv.org/pdf/1810.04778}.}

The voting dataset contains 180002 electorates from households spanning Michigan.
Following \cite{zhou2023offline}, we use $10$ features in the dataset: year of birth, sex, household size, city, turnout for 2000, 2002 and 2004 primary and general elections.
We experiment with applying both the multilayer perceptron (MLP) algorithm and the causal forest (CF) algorithm (developed in
\cite{wager2018estimation} and \cite{athey2019generalized}) to the voting data set.

There are four treatment groups and a control group in the voting field experiment.
Hereafter, the control group will be called \textbf{Control}, and the four treatment groups will be called \textbf{Civic Duty}, \textbf{Hawthorne}, \textbf{Self} and \textbf{Neighbors}.
See \cite{gerber2008social} for the specific meaning of those treatments.
Electorates are assigned to the five groups at random with probabilities $\(\frac{5}{9}, \frac{1}{9}, \frac{1}{9}, \frac{1}{9}, \frac{1}{9}\)$.

First, we generate a synthetic dataset similar to the voting experiment 
in order to validate the multiplier bootstrapping method described in Corollary \ref{multiplier bootstrap}. 
The data generating process is a MLP model with $1$ hidden layer of size $8$, ReLU activation and a softmax function (i.e. standard logistic function). 
The outcome is then generated according to
\begin{align}\begin{split}\nonumber
& \mathbb{P}\(Y_{ij} = 1 \vert X_{i}\) = \text{softmax} \circ \(W_{2} \circ \text{ReLU} \circ \(W_{1}\(X^{T}_{i}, j\)^{T} + b_{1}\) + b_{2}\), \quad j \in \{0, 1, 2, 3, 4\}, \\
& Y_{i} = \sum^{4}_{0}Y_{ij}D_{ij}, \quad D_{i} \sim multinomial\(1, \(\frac{5}{9}, \frac{1}{9}, \frac{1}{9}, \frac{1}{9}, \frac{1}{9}\)\),
\end{split}\end{align}
where
the treatment assignment probabilities are independent of the characteristics $X_i$.  We let $X_{i} \sim N\(\mu, \Sigma\)$, where $\mu$ and $\Sigma$ are specified in Appendix \ref{parameters}. 
The weights $W_{1}$, $W_{2}$, $b_{1}$ and $b_{2}$ are randomly drawn. 
The values of the weights matrices are also reported in Appendix \ref{parameters}. 

Using this data generating process, we first draw a large sample of size $10^{7}$ to compute a close approximation of the social welfare potential $\beta_{0}$
around $0.311$ .
An overparameterized MLP with $2$ hidden layers of size (48, 24) is used in estimation.
We then use
the multiplier bootstrap method ($1000$ repetitions) in section \ref{inference on social welfare potential function} to construct CI
based on a sample splitting scheme of $20000$ training and $5000$ testing observations.
To the best of our knowledge, there is no current guarantee of the almost sure uniform convergence rate for neural networks.
However, inspired by the theory in \cite{bos2022convergence}, it is reasonable to expect a fast uniform convergence rate of the MLP classifier on
 a large probability set of $X_{i}$.
In $1000$ simulations, our multiplier bootstrap CI has an empirical coverage rate of $0.939$, close to the nominal $0.95$ level.
The average length of the CI is around $0.0423$. 

Next, two machine learning methods are applied to this data.
The first method is a MLP with three hidden layers of size $\(64, 64, 64\)$ and ReLU activation (i.e. $\max\{0, x\}$). The second method is a CF with $500$ trees where
the minimum number of samples for each group in each leaf is equal to $600$. 
We use the Python package scikit-learn \citep{scikit-learn} for the MLP and the Python implement from the EconML package \citep{econml} for the CF.
The treatment group of each electorate is concatenated to the MLP input features as a $1$-dimensional variable taking value $0$ to $4$ (in the order listed above).

We follow \cite{zhou2023offline} and use $5$-fold cross validation. 
We report the estimated difference and CI of the 
difference between the plug-in policy and the control group based on 
\eqref{DML social welfare potential treatment}. 
 Note that \eqref{influence function multiclass policy learning} is Neyman orthogonal and amenable to a sample splitting scheme 
described in Algorithm \ref{cross validation multi} with $R = 1000$ and $K = 5$.

\begin{algorithm}
\caption{Policy gain estimation based on cross validation}\label{cross validation multi}
\KwData{$\left\{\(X_{i}, D_{i}, Y_{i}\)\right\}_{i = 1}^{n}$, the number of bootstrap iterations $R$, cross validation folds $K$, policy for comparison $\hat{\phi}_{c}\(\cdot\)$.}
\KwResult{gain $\hat{\beta}$, CI $\(\beta_{l}, \beta_{u}\)$.}
Split the sample $\left\{\(X_{i}, Y_{i}\)\right\}_{i = 1}^{n}$ to $K$ folds \; 
\For{$k \in \{1, 2, \ldots, K\}$}{
Estimate propensity score model $\hat{p}_{k}$ and conditional welfare $\hat{g}_{k}$ using the data from folds $\{1, 2, \ldots, k - 1, k + 1, \ldots, K\}$ \; 
Estimate $\hat{\beta}_{k}$ via \eqref{DML social welfare potential treatment} on the $k$-th fold of the data \; 
}
Estimate $\hat{\beta} = \frac{1}{K}\sum_{k = 1}^{K}\hat{\beta}_{k}$ \; 
\For{$r \in \left\{1, 2, \ldots, R\right\}$}{
Estimate 
\begin{align}\begin{split}\nonumber
\tilde{\beta}_{r} & =
\frac{1}{n} \sum_{k = 1}^{K}\sum_{i=1}^{n_{k}}
W_{n_{k}i}^r\(\sum_{j=0}^J \lambda_j \(\hat \phi_{kj}\(X_{ki}\) - \hat{\phi}_{cj}\(X_{ki}\)\)
\[
	\hat g_{kj}\(X_{ki}\) + \frac{D_{kij}}{\hat p_{kj}\(X_{ki}\)} \(Y_{ki} - \hat g_{kj}\(X_{ki}\)\)
\]\), 
\end{split}\end{align}
where $\left\{\(X_{ki}, D_{ki}, Y_{ki}\)\right\}_{i = 1}^{n_{k}}$ is the $k$-th fold of the data \; 
}
Calculate $\(\beta_{l}, \beta_{u}\)$ via the reverse percentile method. 
\end{algorithm}

Both the MLP and the CF assign nearly all electorates to the \textbf{Neighbors} group.
Meanwhile, the two models give value differences of $0.0865$ and $0.0837$ compared to \textbf{control}. 
The CIs for the true $\beta_{0}$ are reported in Table \ref{voting table}. 
These findings are similar to the value difference around $0.0864$ using the ERM policy of \cite{zhou2023offline}.
Their preprint version 
reported a value difference of $0.082$ for the same ERM policy.
The difference is due to the randomness from sample splitting and model training. 
The policy gain results need to be interpreted with caution. 
As shown in Table \ref{voting table}, 
the CIs for the policy gain between the MLP / CF and the ERM policy include $0$. 
Given that the performance of our proposed method is statistically on par with ERM, 
the method that we propose provides a viable computationally practical alternative approach to ERM policy learning.

\begin{table}[!htp]
  \centering
  \tabcolsep = 3pt
  \renewcommand{\arraystretch}{1.2}
  \caption{Policy gain estimation comparison}
  \label{voting table}
  \begin{tabular}{lccc}
    \toprule
     & ERM & MLP & CF \\ 
    \midrule
    v.s. \textbf{control} & 0.0864 $\(0.0769, 0.0954\)$ & 0.0865 $\(0.0753, 0.0965\)$ & 0.0837 $\(0.0748, 0.0937\)$ \\
    \hline 
    v.s. ERM & - & 0.000271 $\(-0.00232, 0.00302\)$ & -0.00261 $\(-0.00669, 0.00173\)$ \\
    \bottomrule
  \end{tabular}
  \fnote{\textit{Notes}: In this table, we first compare the ERM policy from \cite{zhou2023offline} and our method based on the MLP and the CF with \textbf{control}. 
  When comparing the ERM policy and \textbf{control}, we apply \eqref{DML social welfare potential treatment} but let $\hat{\phi}$ be the fixed ERM policy. 
  We use the CF algorithm to construct $\hat{g}$ to evaluate the comparison between the ERM policy and \textbf{control}.
  CIs are then constructed for the policy gain difference between the MLP / CF and the ERM. 
  99\% confidence intervals are reported in the parentheses.
  }
\end{table}

\section{Conclusion}
In this paper, we explore a functional differentiability approach for a class of statistical optimal allocation
problems.
Inspired by \cite{chernozhukov2018sorted}, Hadamard differentiability is facilitated by a study of the general properties of the sorting operator.
We provide a derivation that makes use of the concept of Hausdorff measure and the area and coarea formulas from geometric measure theory.
Based on our general Hadamard differentiability results, in section \ref{constrained and roc} we derive
the asymptotic properties of the plug-in estimators for both the value function process of a binary constrained optimal allocation problem and the ROC curve from the functional delta method.
When the first step propensity score model is correctly specified, a computationally feasible bootstrap procedure is validated for the plug-in ROC estimator.

Importantly, we build on the partial convexity of the social welfare potential function, i.e. the value function of the optimal allocation problem, 
to demonstrate the degeneracy of the first order derivative
of the social welfare potential function with respect to the policy. 
These intriguing results provide us with insights
when we
combine techniques from the literature of nonsmooth method of moment, plug-in classification and the recent development of double / debiased machine learning
to develop a debiased estimator of the social welfare potential function
in section \ref{fast convergence rates}.
Here, the conditions required for Hadamard differentiability validate the margin assumption, which leads to a faster convergence rate.

\section{Acknowledgments}

We thank the editor and two anonymous referees for insightful comments.
We also acknowledge helpful and encouraging conversations with
collaborators and colleagues
including Chunrong Ai, Timothy Armstrong, Haoge Chang, Qihui Chen, Xiaohong Chen, 
Victor Chernozhukov, Michael Fan, Yanqin Fan, Yue Fang, Yingjie Feng, 
Robin Han, Toru Kitagawa, Michael P. Leung, Hongjun Li, Jessie Li, Ruixuan Liu, Ye Luo, Thomas MaCurdy, Chen Qiu, Shuyang Sheng, Zhentao Shi, Kurt Sweat, Liangjun Su, Ke Tang, Guanyi Wang, Haitian Xie and Ping Yu, as well as participants at various seminars and conferences.
We are most grateful to Yuhao Xue for enlightening us
with his knowledge of geometry measure theory and to Yichuan Zhang for research assistance.
We acknowledge funding support from the National Science Foundation (SES 1658950 to Han Hong).

\FloatBarrier
\phantomsection
\addcontentsline{toc}{section}{References}
{
\bibliography{forarxiv}

@article{elliott2013predicting,
  title={Predicting binary outcomes},
  author={Elliott, Graham and Lieli, Robert P},
  journal={Journal of Econometrics},
  volume={174},
  number={1},
  pages={15--26},
  year={2013},
  publisher={Elsevier}
}

@article{sherman1993limiting,
  title={The limiting distribution of the maximum rank correlation estimator},
  author={Sherman, Robert P},
  journal={Econometrica: Journal of the Econometric Society},
  volume={61},
  number={1},
  pages={123--137},
  year={1993},
  publisher={JSTOR}
}

@article{vuong1989likelihood,
  title={Likelihood ratio tests for model selection and non-nested hypotheses},
  author={Vuong, Quang H},
  journal={Econometrica: Journal of the Econometric Society},
  volume={57},
  number={2},
  pages={307--333},
  year={1989},
  publisher={JSTOR}
}

@article{QJEbail,
  title={Human Decisions and Machine Predictions},
  author={Kleinberg, Jon and Lakkaraju, Himabindu and Leskovec, Jure and Ludwig, Jens and Mullainathan, Sendhil},
  journal={Quarterly Journal of Economics},
  volume={133},
  number={1},
  pages={237--293},
  year={2018},
  publisher={Oxford University Press}
}

@article{newey_mcfadden,
  title={Large sample estimation and hypothesis testing},
  author={Newey, Whitney K and McFadden, Daniel},
  journal={Handbook of Econometrics},
  volume={4},
  pages={2111--2245},
  year={1994},
  publisher={Elsevier}
}

@article{chen2003estimation,
  title={Estimation of semiparametric models when the criterion function is not smooth},
  author={Chen, Xiaohong and Linton, Oliver and Van Keilegom, Ingrid},
  journal={Econometrica},
  volume={71},
  number={5},
  pages={1591--1608},
  year={2003},
  publisher={Wiley Online Library}
}

@article{chernozhukov2018sorted,
  title={The sorted effects method: discovering heterogeneous effects beyond their averages},
  author={Chernozhukov, Victor and Fern{\'a}ndez-Val, Iv{\'a}n and Luo, Ye},
  journal={Econometrica},
  volume={86},
  number={6},
  pages={1911--1938},
  year={2018},
  publisher={Wiley Online Library}
}

@article{fang2019inference,
  title={Inference on directionally differentiable functions},
  author={Fang, Zheng and Santos, Andres},
  journal={Review of Economic Studies},
  volume={86},
  number={1},
  pages={377--412},
  year={2019},
  publisher={Oxford University Press}
}

@article{chernozhukov2010quantile,
  title={Quantile and probability curves without crossing},
  author={Chernozhukov, Victor and Fern{\'a}ndez-Val, Iv{\'a}n and Galichon, Alfred},
  journal={Econometrica},
  volume={78},
  number={3},
  pages={1093--1125},
  year={2010},
  publisher={Wiley Online Library}
}

@article{chen2019inference,
  title={Inference on functionals under first order degeneracy},
  author={Chen, Qihui and Fang, Zheng},
  journal={Journal of Econometrics},
  volume={210},
  number={2},
  pages={459--481},
  year={2019},
  publisher={Elsevier}
}

@book{van2000asymptotic,
  title={Asymptotic statistics},
  author={van\;der\;Vaart, A.W.},
  year={2000},
  series={Cambridge Series in Statistical and Probabilistic Mathematics},
  volume={3},
  publisher={Cambridge university press}
}

@book{munkres2000topology,
  title={Topology},
  author={Munkres, James R},
  year={2000},
  publisher={Prentice Hall, Upper Saddle River, NJ}
}

@article{figalli2008simple,
  title={A simple proof of the Morse-Sard theorem in Sobolev spaces},
  author={Figalli, Alessio},
  journal={Proceedings of the American Mathematical Society},
  volume={136},
  number={10},
  pages={3675--3681},
  year={2008}
}

@book{evans2018measure,
  title={Measure theory and fine properties of functions, revised edition},
  author={Evans, Lawrence C and Garzepy, Ronald F},
  year={2015},
  series={Textbooks in Mathematics},
  publisher={Chapman and Hall/CRC}
}

@book{falconer1986geometry,
  title={The geometry of fractal sets},
  author={Falconer, Kenneth J},
  volume={85},
  series={Cambridge Tracts in Mathematics},
  year={1986},
  publisher={Cambridge University Press}
}

@book{resnick2008extreme,
  title={Extreme values, regular variation, and point processes},
  author={Resnick, Sidney I},
  volume={4},
  series={Springer Series in Operations Research and Financial Engineering},
  year={1987},
  publisher={Springer Science \& Business Media}
}

@article{sasaki2015quantile,
  title={What do quantile regressions identify for general structural functions?},
  author={Sasaki, Yuya},
  journal={Econometric Theory},
  volume={31},
  number={5},
  pages={1102--1116},
  year={2015},
  publisher={Cambridge University Press}
}

@article{kim1990cube,
  title={Cube root asymptotics},
  author={Kim, Jeankyung and Pollard, David},
  journal={Annals of Statistics},
  volume={18},
  number={1},
  pages={191--219},
  year={1990},
  publisher={JSTOR}
}

@article{vallee2019marketplace,
  title={Marketplace lending: A new banking paradigm?},
  author={Vallee, Boris and Zeng, Yao},
  journal={Review of Financial Studies},
  volume={32},
  number={5},
  pages={1939--1982},
  year={2019},
  publisher={Oxford University Press}
}

@article{hall2004nonparametric,
  title={Nonparametric confidence intervals for receiver operating characteristic curves},
  author={Hall, Peter and Hyndman, Rob J and Fan, Yanan},
  journal={Biometrika},
  volume={91},
  number={3},
  pages={743--750},
  year={2004},
  publisher={Oxford University Press}
}

@article{li1999semiparametric,
  title={Semiparametric inference for a quantile comparison function with applications to receiver operating characteristic curves},
  author={Li, Gang and Tiwari, Ram C and Wells, Martin T},
  journal={Biometrika},
  volume={86},
  number={3},
  pages={487--502},
  year={1999},
  publisher={Oxford University Press}
}

@article{lloyd1998using,
  title={Using smoothed receiver operating characteristic curves to summarize and compare diagnostic systems},
  author={Lloyd, Chris J},
  journal={Journal of the American Statistical Association},
  volume={93},
  number={444},
  pages={1356--1364},
  year={1998},
  publisher={Taylor \& Francis}
}

@article{hsieh1996nonparametric,
  title={Nonparametric and semiparametric estimation of the receiver operating characteristic curve},
  author={Hsieh, Fushing and Turnbull, Bruce W},
  journal={Annals of Statistics},
  volume={24},
  number={1},
  pages={25--40},
  year={1996},
  publisher={Institute of Mathematical Statistics}
}

@article{bertail2008bootstrapping,
  title={On bootstrapping the ROC curve},
  author={Bertail, Patrice and Cl{\'e}men{\c{c}}con, St{\'e}phan and Vayatis, Nicolas},
  journal={Advances in Neural Information Processing Systems},
  volume={21},
  pages={137--144},
  year={2008}
}

@unpublished{xuguanglulecturenote2018,
  title={Lecture Notes for Mathematical Analysis},
  author={Xuguang Lu},
  month={September},
  year={2019},
  note={Tsinghua University}, 
}

@unpublished{robertjerrard,
  title={Lecture note for Geometric Measure Theory},
  author={Robert L. Jerrard},
  year={2013},
  note={University of Toronto, http://www.math.toronto.edu/rjerrard/1501/gmt.html},
}

@book{van2023weak,
  title={Weak Convergence and Empirical Processes: With Applications to Statistics},
  author={van der Vaart, AW and Wellner, Jon A},
  edition={2},
  year={2023},
  series={Springer Series in Statistics},
  publisher={Springer Nature}
}

@article{athey2021policy,
  title={Policy learning with observational data},
  author={Athey, Susan and Wager, Stefan},
  journal={Econometrica},
  volume={89},
  number={1},
  pages={133--161},
  year={2021},
  publisher={Wiley Online Library}
}

@article{chernozhukov2018double,
  title={Double/debiased machine learning for treatment and structural parameters},
  author={Chernozhukov, Victor and Chetverikov, Denis and Demirer, Mert and Duflo, Esther and Hansen, Christian and Newey, Whitney and Robins, James},
  year={2018},
  volume={21},
  number={1},
  pages={C1-C68},
  journal={The Econometrics Journal},
  publisher={Oxford University Press Oxford, UK}
}

@article{hirano2009asymptotics,
  title={Asymptotics for statistical treatment rules},
  author={Hirano, Keisuke and Porter, Jack R},
  journal={Econometrica},
  volume={77},
  number={5},
  pages={1683--1701},
  year={2009},
  publisher={Wiley Online Library}
}

@book{lehmann2022testing,
  title={Testing statistical hypotheses},
  author={Lehmann, Erich Leo and Romano, Joseph P},
  series={Springer Texts in Statistics},
  edition={4},
  year={2022},
  publisher={Springer}
}

@book{villani2009optimal,
  title={Optimal transport: old and new},
  author={Villani, C{\'e}dric},
  volume={338},
  series={Grundlehren der mathematischen Wissenschaften},
  year={2009},
  publisher={Springer}
}

@article{bolte2021conservative,
  title={Conservative set valued fields, automatic differentiation, stochastic gradient methods and deep learning},
  author={Bolte, J{\'e}r{\^o}me and Pauwels, Edouard},
  journal={Mathematical Programming},
  volume={188},
  pages={19--51},
  year={2021},
  publisher={Springer}
}

@book{guide2006infinite,
  title={Infinite dimensional analysis, A Hitchhiker’s Guide},
  author={Aliprantis, Charalambos D and Border, Kim C},
  year={2006},
  publisher={Springer}
}

@book{rockafellar2009variational,
  title={Variational analysis},
  author={Rockafellar, R Tyrrell and Wets, Roger J-B},
  volume={317},
  series={Grundlehren der mathematischen Wissenschaften}, 
  year={2009},
  publisher={Springer Science \& Business Media}
}

@book{niculescu2018convex,
  title={Convex Functions and Their Applications: A Contemporary Approach},
  author={Niculescu, Constantin P and Persson, Lars-Erik},
  year={2018},
  edition={2},
  series={CMS/CAIMS Books in Mathematics}, 
  volume={14},
  publisher={Springer}
}

@article{lindenstrauss2003frechet,
  title={On Fr{\'e}chet differentiability of Lipschitz maps between Banach spaces},
  author={Lindenstrauss, Joram and Preiss, David},
  journal={Annals of Mathematics},
  volume={157},
  pages={257--288},
  year={2003},
  publisher={JSTOR}
}

@book{dellacherie1979probabilities,
  title={Probabilities and potential, A},
  author={Dellacherie, Claude and Meyer, Paul-Andr\'{e}},
  year={1978},
  publisher={North-Holland},
  volume={29},
  series={North-Holland Mathematics Studies}
}

@article{stegall1978duality,
  title={The duality between Asplund spaces and spaces with the Radon-Nikodym property},
  author={Stegall, Charles},
  journal={Israel Journal of Mathematics},
  volume={29},
  pages={408--412},
  year={1978},
  publisher={Springer}
}

@article{namioka1975banach,
  title={Banach spaces which are Asplund spaces},
  author={Namioka, Isaac and Phelps, Robert R},
  journal={Duke Mathematical Journal},
  volume={42},
  number={4},
  pages={735--750},
  year={1975}
}

@article{averbuh1968razlivcnye,
  title={The various definitions of the derivative in linear topological spaces},
  author={Averbuh, VI and Smoljanov, OG},
  journal={Russian Mathematical Surveys},
  volume={23},
  number={4},
  pages={67--113},
  year={1968}
}

@article{asplund1968frechet,
  title={Fr{\'e}chet differentiability of convex functions},
  author={Asplund, Edgar},
  journal={Acta Mathematica},
  volume={121},
  pages={31--47},
  year={1968},
  publisher={Springer}
}

@article{milgrom2002envelope,
  title={Envelope theorems for arbitrary choice sets},
  author={Milgrom, Paul and Segal, Ilya},
  journal={Econometrica},
  volume={70},
  number={2},
  pages={583--601},
  year={2002},
  publisher={Wiley Online Library}
}

@article{preiss1984frechet,
  title={Fr{\'e}chet differentiation of convex functions in a Banach space with a separable dual},
  author={Preiss, D and Zaj{\'\i}{\v{c}}ek, L},
  journal={Proceedings of the American Mathematical Society},
  volume={91},
  number={2},
  pages={202--204},
  year={1984}
}

@book{lindenstrauss2012frechet,
  title={Fr{\'e}chet Differentiability of Lipschitz Functions and Porous Sets in Banach Spaces},
  author={Lindenstrauss, Joram and Preiss, David and Ti\v{s}er, Jaroslav},
  volume={179},
  series={Annals of Mathematics Studies}, 
  year={2012},
  publisher={Princeton University Press}
}

@article{manski2004statistical,
  title={Statistical treatment rules for heterogeneous populations},
  author={Manski, Charles F},
  journal={Econometrica},
  volume={72},
  number={4},
  pages={1221--1246},
  year={2004},
  publisher={Wiley Online Library}
}

@article{audibert2007fast,
  title={Fast learning rates for plug-in classifiers},
  author={Audibert, Jean-Yves and Tsybakov, Alexandre B},
  journal={Annals of Statistics},
  volume={35},
  number={2},
  pages={608--633},
  year={2007}
}

@article{chernozhukov2022locally,
  title={Locally robust semiparametric estimation},
  author={Chernozhukov, Victor and Escanciano, Juan Carlos and Ichimura, Hidehiko and Newey, Whitney K and Robins, James M},
  journal={Econometrica},
  volume={90},
  number={4},
  pages={1501--1535},
  year={2022},
  publisher={Wiley Online Library}
}

@book{devroye1996probabilistic,
  title={A Probabilistic Theory of Pattern Recognition},
  author={Devroye, Luc and Gy{\"o}rfi, L{\'a}szl{\'o} and Lugosi, G{\'a}bor},
  series={Stochastic Modelling and Applied Probability},
  year={1996},
  publisher={Springer New York},
  volume={31}
}

@article{tsybakov2004optimal,
  title={Optimal aggregation of classifiers in statistical learning},
  author={Tsybakov, Alexander B},
  journal={Annals of Statistics},
  volume={32},
  number={1},
  pages={135--166},
  year={2004},
  publisher={Institute of Mathematical Statistics}
}

@article{mammen1999smooth,
  title={Smooth discrimination analysis},
  author={Mammen, Enno and Tsybakov, Alexandre B},
  journal={Annals of Statistics},
  volume={27},
  number={6},
  pages={1808--1829},
  year={1999},
  publisher={Institute of Mathematical Statistics}
}

@article{kitagawa2018should,
  title={Who should be treated? empirical welfare maximization methods for treatment choice},
  author={Kitagawa, Toru and Tetenov, Aleksey},
  journal={Econometrica},
  volume={86},
  number={2},
  pages={591--616},
  year={2018},
  publisher={Wiley Online Library}
}

@book{federer2014geometric,
  title={Geometric Measure Theory},
  author={Federer, Herbert},
  publisher={Springer},
  volume={153},
  series={Grundlehren der mathematischen Wissenschaften},
  year={1969}
}

@article{clarke1976inverse,
  title={On the inverse function theorem},
  author={Clarke, Francis},
  journal={Pacific Journal of Mathematics},
  volume={64},
  number={1},
  pages={97--102},
  year={1976},
  publisher={Mathematical Sciences Publishers}
}

@article{hiriart1979tangent,
  title={Tangent cones, generalized gradients and mathematical programming in Banach spaces},
  author={Hiriart-Urruty, Jean-Baptiste},
  journal={Mathematics of Operations Research},
  volume={4},
  number={1},
  pages={79--97},
  year={1979},
  publisher={INFORMS}
}

@article{federer1959curvature,
  title={Curvature measures},
  author={Federer, Herbert},
  journal={Transactions of the American Mathematical Society},
  volume={93},
  number={3},
  pages={418--491},
  year={1959}
}

@article{debreu1970economies,
  title={Economies with a finite set of equilibria},
  author={Debreu, Gerard},
  journal={Econometrica: Journal of the Econometric Society},
  volume={38},
  number={3},
  pages={387--392},
  year={1970},
  publisher={JSTOR}
}

@article{clarke1975generalized,
  title={Generalized gradients and applications},
  author={Clarke, Frank H},
  journal={Transactions of the American Mathematical Society},
  volume={205},
  pages={247--262},
  year={1975}
}

@article{preiss1990differentiability,
  title={Differentiability of Lipschitz functions on Banach spaces},
  author={Preiss, David},
  journal={Journal of Functional Analysis},
  volume={91},
  number={2},
  pages={312--345},
  year={1990},
  publisher={Elsevier}
}

@book{grigoryan2009heat,
  title={Heat kernel and analysis on manifolds},
  author={Grigoryan, Alexander},
  volume={47},
  series={AMS/IP Studies in Advanced Mathematics},
  year={2009},
  publisher={American Mathematical Society}
}

@article{federer1944surface,
  title={Surface area. II},
  author={Federer, Herbert},
  journal={Transactions of the American Mathematical Society},
  volume={55},
  number={3},
  pages={438--456},
  year={1944}
}

@article{chernozhukov2015valid,
  title={Valid post-selection and post-regularization inference: An elementary, general approach},
  author={Chernozhukov, Victor and Hansen, Christian and Spindler, Martin},
  journal={Annual Review of Economics},
  volume={7},
  number={1},
  pages={649--688},
  year={2015},
  publisher={Annual Reviews}
}

@article{smale1965infinite,
  title={An Infinite Dimensional Version of Sard's Theorem},
  author={Smale, S},
  journal={American Journal of Mathematics},
  volume={87},
  number={4},
  pages={861--866},
  year={1965}
}

@article{luedtke2016optimal,
  title={Optimal individualized treatments in resource-limited settings},
  author={Luedtke, Alexander R and van der Laan, Mark J},
  journal={International Journal of Biostatistics},
  volume={12},
  number={1},
  pages={283--303},
  year={2016},
  publisher={De Gruyter}
}

@article{bhattacharya2012inferring,
  title={Inferring welfare maximizing treatment assignment under budget constraints},
  author={Bhattacharya, Debopam and Dupas, Pascaline},
  journal={Journal of Econometrics},
  volume={167},
  number={1},
  pages={168--196},
  year={2012},
  publisher={Elsevier}
}

@article{armstrong2023inference,
  title={Inference on optimal treatment assignments},
  author={Armstrong, Timothy B and Shen, Shu},
  journal={The Japanese Economic Review},
  volume={74},
  number={4},
  pages={471--500},
  year={2023},
  publisher={Springer}
}

@article{mbakop2021model,
  title={Model selection for treatment choice: Penalized welfare maximization},
  author={Mbakop, Eric and Tabord-Meehan, Max},
  journal={Econometrica},
  volume={89},
  number={2},
  pages={825--848},
  year={2021},
  publisher={Wiley Online Library}
}

@article{zhou2023offline,
  title={Offline multi-action policy learning: Generalization and optimization},
  author={Zhou, Zhengyuan and Athey, Susan and Wager, Stefan},
  journal={Operations Research},
  volume={71},
  number={1},
  pages={148--183},
  year={2023},
  publisher={INFORMS}
}

@article{zhao2012estimating,
  title={Estimating individualized treatment rules using outcome weighted learning},
  author={Zhao, Yingqi and Zeng, Donglin and Rush, A John and Kosorok, Michael R},
  journal={Journal of the American Statistical Association},
  volume={107},
  number={499},
  pages={1106--1118},
  year={2012},
  publisher={Taylor \& Francis}
}

@article{zhou2017residual,
  title={Residual weighted learning for estimating individualized treatment rules},
  author={Zhou, Xin and Mayer-Hamblett, Nicole and Khan, Umer and Kosorok, Michael R},
  journal={Journal of the American Statistical Association},
  volume={112},
  number={517},
  pages={169--187},
  year={2017},
  publisher={Taylor \& Francis}
}

@article{qian2011performance,
  title={Performance guarantees for individualized treatment rules},
  author={Qian, Min and Murphy, Susan A},
  journal={Annals of Statistics},
  volume={39},
  number={2},
  pages={1180},
  year={2011},
  publisher={NIH Public Access}
}

@article{swaminathan2015batch,
  title={Batch learning from logged bandit feedback through counterfactual risk minimization},
  author={Swaminathan, Adith and Joachims, Thorsten},
  journal={Journal of Machine Learning Research},
  volume={16},
  number={1},
  pages={1731--1755},
  year={2015},
  publisher={JMLR. org}
}

@article{boucheron2005theory,
  title={Theory of classification: A survey of some recent advances},
  author={Boucheron, St{\'e}phane and Bousquet, Olivier and Lugosi, G{\'a}bor},
  journal={ESAIM: Probability and Statistics},
  volume={9},
  pages={323--375},
  year={2005},
  publisher={EDP Sciences}
}

@article{stoye2009minimax,
  title={Minimax regret treatment choice with finite samples},
  author={Stoye, J{\"o}rg},
  journal={Journal of Econometrics},
  volume={151},
  number={1},
  pages={70--81},
  year={2009},
  publisher={Elsevier}
}

@article{tetenov2012statistical,
  title={Statistical treatment choice based on asymmetric minimax regret criteria},
  author={Tetenov, Aleksey},
  journal={Journal of Econometrics},
  volume={166},
  number={1},
  pages={157--165},
  year={2012},
  publisher={Elsevier}
}

@article{ben2024policy,
  title={Policy Learning with Asymmetric Counterfactual Utilities},
  author={Ben-Michael, Eli and Imai, Kosuke and Jiang, Zhichao},
  journal={Journal of the American Statistical Association},
  volume = {119},
  number = {548},
  pages={3045--3058},
  year={2024},
  publisher={Taylor \& Francis}
}

@article{kitagawa2023individualized,
  title={Individualized Treatment Allocation in Sequential Network Games},
  author={Kitagawa, Toru and Wang, Guanyi},
  journal={arXiv preprint arXiv:2302.05747},
  year={2023}
}

@article{kitagawa2023should,
  title={Who should get vaccinated? Individualized allocation of vaccines over SIR network},
  author={Kitagawa, Toru and Wang, Guanyi},
  journal={Journal of Econometrics},
  volume={232},
  number={1},
  pages={109--131},
  year={2023},
  publisher={Elsevier}
}

@article{luckett2021receiver,
  title={Receiver operating characteristic curves and confidence bands for support vector machines},
  author={Luckett, Daniel J and Laber, Eric B and El-Kamary, Samer S and Fan, Cheng and Jhaveri, Ravi and Perou, Charles M and Shebl, Fatma M and Kosorok, Michael R},
  journal={Biometrics},
  volume={77},
  number={4},
  pages={1422--1430},
  year={2021},
  publisher={Oxford University Press}
}

@article{agrawal2019artificial,
  title={Artificial intelligence: the ambiguous labor market impact of automating prediction},
  author={Agrawal, Ajay and Gans, Joshua S and Goldfarb, Avi},
  journal={Journal of Economic Perspectives},
  volume={33},
  number={2}, 
  pages={31--50},
  year={2019},
  publisher={American Economic Association 2014 Broadway, Suite 305, Nashville, TN 37203-2418}
}

@article{babina2024artificial,
  title={Artificial intelligence, firm growth, and product innovation},
  author={Babina, Tania and Fedyk, Anastassia and He, Alex and Hodson, James},
  journal={Journal of Financial Economics},
  volume={151},
  pages={103745},
  year={2024},
  publisher={Elsevier}
}

@article{luedtke2020performance,
  title={Performance guarantees for policy learning},
  author={Luedtke, Alex and Chambaz, Antoine},
  journal={Annales de l'Institut Henri Poincar\`{e}, Probabilit\`{e}s et Statistiques},
  volume={56},
  number={3},
  pages={2162--2188},
  year={2020},
  organization={NIH Public Access}
}

@article{luedtke2016statistical,
  title={Statistical inference for the mean outcome under a possibly non-unique optimal treatment strategy},
  author={Luedtke, Alexander R and Van Der Laan, Mark J},
  journal={Annals of Statistics},
  volume={44},
  number={2},
  pages={713},
  year={2016},
  publisher={NIH Public Access}
}

@unpublished{rai2018statistical,
  title={Statistical inference for treatment assignment policies},
  author={Rai, Yoshiyasu},
  year={2018},
  note={Unpublished Results}
}

@article{semenova2023debiased,
  title={Debiased machine learning of set-identified linear models},
  author={Semenova, Vira},
  journal={Journal of Econometrics},
  volume={235},
  number={2},
  pages={1725--1746},
  year={2023},
  publisher={Elsevier}
}

@article{foster2023orthogonal,
  title={Orthogonal statistical learning},
  author={Foster, Dylan J and Syrgkanis, Vasilis},
  journal={Annals of Statistics},
  volume={51},
  number={3},
  pages={879--908},
  year={2023},
  publisher={Institute of Mathematical Statistics}
}

@article{liu2024inference,
  title={Inference for an Algorithmic Fairness-Accuracy Frontier},
  author={Liu, Yiqi and Molinari, Francesca},
  journal={arXiv preprint arXiv:2402.08879},
  year={2024}
}

@book{gantmakher1977theory,
  title={The theory of matrices},
  author={Gantmakher, Feliks Ruvimovich},
  volume={1},
  year={1977},
  publisher={Chelsea Publishing Company}
}

@book{saks1937theory,
  title={Theory of the Integral},
  author={Saks, Stanis{\l}aw},
  year={1937}, 
  publisher={Hafner Publishing Company}
}

@article{hagood2006recovering,
  title={Recovering a function from a Dini derivative},
  author={Hagood, John W and Thomson, Brian S},
  journal={American Mathematical Monthly},
  volume={113},
  number={1},
  pages={34--46},
  year={2006},
  publisher={Taylor \& Francis}
}

@article{de2001morse,
  title={The Morse--Sard theorem in Sobolev spaces},
  author={de Pascale, Luigi},
  journal={Indiana University mathematics journal},
  pages={1371--1386},
  volume={50},
  number={3},
  year={2001},
  publisher={JSTOR}
}

@article{gerber2008social,
  title={Social pressure and voter turnout: Evidence from a large-scale field experiment},
  author={Gerber, Alan S and Green, Donald P and Larimer, Christopher W},
  journal={American Political Science Review},
  volume={102},
  number={1},
  pages={33--48},
  year={2008},
  publisher={Cambridge University Press}
}

@article{athey2019generalized,
  title={Generalized random forests},
  author={Athey, Susan and Tibshirani, Julie and Wager, Stefan},
  journal={Annals of Statistics},
  volume={47},
  number={2},
  pages={1148--1178},
  year={2019}
}

@article{wager2018estimation,
  title={Estimation and inference of heterogeneous treatment effects using random forests},
  author={Wager, Stefan and Athey, Susan},
  journal={Journal of the American Statistical Association},
  volume={113},
  number={523},
  pages={1228--1242},
  year={2018},
  publisher={Taylor \& Francis}
}

@misc{econml,
  author={Battocchi, Keith and Dillon, Eleanor and Hei, Maggie and Lewis, Greg and Oka, Paul and Oprescu, Miruna and Syrgkanis, Vasilis},
  title={{EconML}: {A Python Package for ML-Based Heterogeneous Treatment Effects Estimation}},
  howpublished={https://github.com/py-why/EconML},
  date={November 2, 2024},
  note={Version 0.15.1, last accessed date November 2, 2024},
  year={2019}
}

@article{scikit-learn,
  title={Scikit-learn: Machine Learning in {P}ython},
  author={Pedregosa, Fabian and Varoquaux, Gaël and Gramfort, Alexandre and Michel, Vincent 
          and Thirion, Bertrand and Grisel, Olivier and Blondel, Mathieu and Prettenhofer, Peter 
          and Weiss, Ron and Dubourg, Vincent and Vanderplas, Jake and Passos, Alexandre and
          Cournapeau, David and Brucher, Matthieu and Perrot, Matthieu and Duchesnay, Édouard},
  journal={Journal of Machine Learning Research},
  volume={12},
  pages={2825--2830},
  year={2011}
}

@article{bos2022convergence,
  title={Convergence rates of deep ReLU networks for multiclass classification},
  author={Bos, Thijs and Schmidt-Hieber, Johannes},
  journal={Electronic Journal of Statistics},
  volume={16},
  number={1},
  pages={2724--2773},
  year={2022},
  publisher={The Institute of Mathematical Statistics and the Bernoulli Society}
}

@article{ai2024data,
  title={Data-driven Policy Learning for a Continuous Treatment},
  author={Ai, Chunrong and Fang, Yue and Xie, Haitian},
  journal={arXiv preprint arXiv:2402.02535},
  year={2024}
}

@article{chen2021shape,
  title={Shape-enforcing operators for generic point and interval estimators of functions},
  author={Chen, Xi and Chernozhukov, Victor and Fern{\'a}ndez-Val, Iv{\'a}n and Kostyshak, Scott and Luo, Ye},
  journal={Journal of Machine Learning Research},
  volume={22},
  number={1},
  pages={10034--10075},
  year={2021}
}

@article{sadhwani2021deep,
  title={Deep learning for mortgage risk},
  author={Sadhwani, Apaar and Giesecke, Kay and Sirignano, Justin},
  journal={Journal of Financial Econometrics},
  volume={19},
  number={2},
  pages={313--368},
  year={2021},
  publisher={Oxford University Press}
}

@article{berge2011evaluating,
  title={Evaluating the classification of economic activity into recessions and expansions},
  author={Berge, Travis J and Jord{\`a}, {\`O}scar},
  journal={American Economic Journal: Macroeconomics},
  volume={3},
  number={2},
  pages={246--277},
  year={2011},
  publisher={American Economic Association}
}

@article{feng2025statisticalMS,
  title={Statistical tests for replacing human decision makers with algorithms},
  author={Feng, Kai and Hong, Han and Tang, Ke and Wang, Jingyuan},
  journal={Management Science},
  volume={71},
  number={11},
  pages={9145--9170},
  year={2025},
  publisher={INFORMS}
}

@article{donald2014estimation,
  title={Estimation and inference for distribution functions and quantile functions in treatment effect models},
  author={Donald, Stephen G and Hsu, Yu-Chin},
  journal={Journal of Econometrics},
  volume={178},
  number={3},
  pages={383--397},
  year={2014},
  publisher={Elsevier}
}

@article{viviano2025policy,
  title={Policy targeting under network interference},
  author={Viviano, Davide},
  journal={Review of Economic Studies},
  volume={92},
  number={2},
  pages={1257--1292},
  year={2025},
  publisher={Oxford University Press UK}
}
\bibliographystyle{aer}
}

\newpage
\setstretch{1.25}
\renewcommand{\thesection}{A}
{\centering{\section{Appendix for manuscript}\label{appendixmanuscript}}}

\setcounter{equation}{0}

\counterwithin*{equation}{section}
\renewcommand\theequation{\thesection.\arabic{equation}}

\setcounter{figure}{0}
\counterwithin*{figure}{section}
\renewcommand\thefigure{\thesection.\arabic{figure}}

\subsection{Proofs of Theorems \ref{general k derivative}, Proposition \ref{surface integral continuity} and Theorem \ref{nonsmooth general k derivative}}
\begin{proof}
We will prove these three results in several parts. Before proceeding to the more general
case of $k > 1$ in Theorem \ref{general k derivative} (\textbf{part} 2), we first prove a pointwise version of the special case when $k = 1$ (\textbf{part} 1).
Proposition \ref{surface integral continuity} can be proved by a similar derivation in passing (\textbf{part} 1 \textbf{step} 5).
We then demonstrate the uniformity of convergence of Theorem \ref{general k derivative} (\textbf{part} 3).
We finalize by elaborating the proof techniques to handle the nonsmoothness in Theorem \ref{nonsmooth general k derivative} (\textbf{part} 4). \\

\noindent \textbf{part} 1 First consider the special case of $k=1$ in Theorem \ref{general k derivative}. \\
\textbf{step} 1
We claim that
there exists an $\eta >0$ small enough such that for all $c'$ with
$\vert c' - c\vert < \eta$, we can get change of variable
formulas simultaneously.  Consider
\bs
\Psi\(x_1', \ldots, x_n', c'\) \coloneqq h\(x_1',\ldots,x_n'\) - c'.
\end{split}\end{align}
Without loss of generality, we may assume that $\nabla_{x_1} \Psi\(x_1, \ldots, x_n, c\)$ is full rank. Then by the implict function theorem (see for example Theorem \ref{Nonsmooth implicit function theorem}, $C^{1}$ case), there exists
an open set $B_{x_1} \times B_{x_2,\ldots,x_n,c} \subset U \subset \mathbb{R}^{n+1}$, where $U$ is a neighborhood
of $\(x_1,\ldots,x_n,c\)$, such that for some positive vectors $\alpha$, $\beta$,
\bs
B_{x_1} &= \left\{x_1' \in \mathbb{R}: \vert x_1' - x_1\vert < \alpha\right\}\\
B_{x_2,\ldots,x_n,c} &= \left\{\(x_2',\ldots,x_n',c'\)	\in \mathbb{R}^n: \vert \(x_2',\ldots,x_n',c'\)- \(x_2,\ldots,x_n,c\)\vert < \beta\right\},
\end{split}\end{align}
and a $C^1$ implicit function $\xi\(\cdot\)$ defined on
$B_{x_2,\ldots,x_n,c}$ such that
\bs
\Psi\(x_1', \ldots, x_n', c'\) = 0 \Leftrightarrow  \(x_1'\)
= \xi\(x_2',\ldots,x_n', c'\)
\end{split}\end{align}
for all $\(x_1',\ldots,x_n', c'\) \in B_{x_1}\times B_{x_2,\ldots,x_n,c}$. For all
$x \in h^{-1}\(c\) \cap K_f$, we can find intervals like
$B_{x_1}\times B_{x_2,\ldots,x_n,c}$ in the above. The implicit function theorem may not be
always about the first dimension. However, an implicit function for some $x_m', m \in \{1,\ldots,n\}$ always exists by the regular point assumption \ref{regular value assumption '}. By the
compactness of $h^{-1}\(c\) \cap K_f$, there exists a finite open cover denoted as
$
\left\{B_j\right\} \coloneqq \left\{B_{x_{j, 1}} \times B_{x_{j,2},\ldots, x_{j, n}}\right\}$
of  $h^{-1}\(c\) \cap K_f$. Now, we claim that there exists $\eta > 0$, such that for all $c'$ satisfying $\vert c'-c\vert < \eta$, $h^{-1}\(c'\) \cap K_f \subset \bigcup_j B_j$.  This claim is proven by contradiction.
Suppose $\bigcup_j B_j$ will not cover $h^{-1}\(c'\) \cap K_f$ for some $c'\neq c$, such that $\vert c'-c\vert < \eta$	where $\eta$ is arbitrarily chosen. In
other words, $\forall \eta > 0$, $\exists c'$,	$\vert c'-c\vert < \eta$, such
that $h^{-1}\(c'\) \cap K_f \not\subset \bigcup_j B_j$. Then there exists
a sequence $\left\{\(x_i, c_{i}\)\right\}_{i=1}^\infty$ such that
\bs
x_i \in h^{-1}\(c_i\) \cap K_f, x_i \notin \bigcup_j B_j.
\end{split}\end{align}
By the compactness of $h^{-1}\([a,b]\) \cap K_f$ and the Bolzano-Weierstrass
theorem, there exists a convergent subsequence
$\{x_{i_l}\}_{l=1}^\infty$ such that $\lim_{l\rightarrow \infty} h\(x_{i_l}\)
= c$ and thus $\lim_{l\rightarrow\infty} x_{i_l} \in h^{-1}\(c\) \subset \bigcup_j
B_j$. Therefore, $x_{i_l} \in \bigcup_{j} B_j$ for all sufficiently large $l$,
which is a contradiction. \\
\textbf{step} 2 To calculate the change of variables
formula analytically, consider
\bs
\psi_{B_j}\(x_{j,2}',\ldots,x_{j,n}', c'\)
= \(\xi_{B_j}\(x_{j,2}',\ldots,x_{j,n}', c'\), x_{j,2}',\ldots,x_{j,n}'\)^T,
\end{split}\end{align}
where $\(x_{j,2}',\ldots,x_{j,n}'\)$ are used by the implicit function theorem to obtain the local implicit function
$\xi_{B_j}\(\cdot\)$ for $x_{j,1}'$. Then, by the generalized matrix determinant
lemma (for clarity the subscript $j$ is omitted), we have 
\begin{align}
J\psi_{B_j}\(x'_{2, \ldots,n}, c'\) & = \det\(
\(
\(
\[
\frac{\partial h }{\partial  x_1}
\]
^{-1}
\nabla_{x_{2, \ldots, n}} h
\)^{T},
I_{n - 1}\)
\(\begin{array}{c}
\[
\frac{\partial h }{\partial  x_1}
\]
^{-1}
\nabla_{x_{2,\ldots,n}}h\\
I_{n - 1}
\end{array}\)\)^{\frac{1}{2}} \notag \\
& = \det\(I_{n - 1} +
\(
\[
\frac{\partial h }{\partial  x_1}
\]
^{-1}
\nabla_{x_{2,\ldots,n}} h
\)^{T}
\(
\[
\frac{\partial h }{\partial  x_1}
\]
^{-1}
\nabla_{x_{2,\ldots,n}} h
\)
\)^{\frac{1}{2}} \notag \\
& = \(\det\(I_{n - 1}\)
\det\(I_{1} + \(
\[
\frac{\partial h }{\partial  x_1}
\]
^{-1}
\nabla_{x_{2,\ldots,n}} h\)I_{n - 1}
\(
\[
\frac{\partial h }{\partial  x_1}
\]
^{-1}
\nabla_{x_{2,\ldots,n}} h
\)^{T}\)\)
^{\frac{1}{2}} \notag \\
& = \(\det\(\[
\frac{\partial h }{\partial  x_1}
\]\[
\frac{\partial h }{\partial  x_1}
\]^{T} +
\(\nabla_{x_{2,\ldots,n}} h\)
\(\nabla_{x_{2,\ldots,n}} h\)^T
\)
\)
^{\frac{1}{2}}
\left\vert\det\(\[
\frac{\partial h }{\partial  x_1}
\]^{-1}\)\right\vert \notag
\\
& = \(\det\(
\(
\nabla h
\)
\(
\nabla h
\)^T
\)\)^{\frac{1}{2}}
\left\vert\det\(\[
\frac{\partial h }{\partial  x_1}
\]^{-1}\)\right\vert \notag \\
& = Jh \left\vert\det\(\[
\frac{\partial h }{\partial  x_1}
\]^{-1}\)\right\vert. \label{long jacobian calculus}
\end{align}
In the above, for simplicity, we omit the point at which the derivatives
are calculated, where
\bs
\frac{\partial h }{\partial  x_1}
&= \frac{\partial}{\partial x_1}  h\(
\xi_{B_j}\(x'_{2,\ldots,n}, c'\)
, x_2', \ldots, x_n'\)\ \text{and}\\
\nabla_{x_{2,\ldots,n}} h
&=
\nabla_{x_{2,\ldots,n}} h\(
\xi_{B_j}\(x'_{2,\ldots,n}, c'\), x_2', \ldots, x_n'\).
\end{split}\end{align}
\textbf{step} 3
Consider a sequence $t_n\downarrow 0$.
First let $H_n \equiv H$ for all $n$. By definition of Hadamard differentiability, we need to calculate
\bs
\lim_{n\to\infty} \frac{
1
}{
t_n
}\[
\int 1\(h\(x\)+t_n H\(x\) > c\) f\(x\) d \mathcal L_n x
- \int 1\(h\(x\) > c\) f\(x\) d \mathcal L_n x
\].
\end{split}\end{align}
By coarea formula Theorem \ref{coarea formula classic},
\bs
&\frac{
1
}{
t_n
}\[
\int 1\(h\(x\)+t_n H\(x\) > c\) f\(x\) d \mathcal L_n x
- \int 1\(h\(x\) > c\) f\(x\) d \mathcal L_n x
\]\\
&=
\frac{
1
}{
t_n
}\int \[
\int_{h^{-1}\(c'\) \cap K_f}
\frac{
\(
1\(c'+t_n H\(x\) > c\)	- 1\(c' > c\)
\)
f\(x\)
}{
J h\(x\)
}
d \mathcal H_{n-1} x
\] d \mathcal L_1 c'\\
&=
\int_{c-t_n M}^{c+t_n M} \[\frac{
1
}{
t_n
}
\int_{h^{-1}\(c'\) \cap K_f}
\frac{
\(
1\(c'+t_n H\(x\) > c\)	- 1\(c' > c\)
\)
f\(x\)
}{
J h\(x\)
}
d \mathcal H_{n-1} x
\] d \mathcal L_1 c',
\end{split}\end{align}
where $M = \max_{x\in K_f} \vert H\(x\) \vert < +\infty$.
Then, we can apply the area formula Theorem \ref{area formula classic} to calculate, for each $j$,
\bs
&\frac{1}{t_n} \int
\[\int_{h^{-1}\(c'\) \cap B_j \cap K_f}
\frac{
\[
1\(c' + t_n H\(x\) > c\) - 1\(c' > c\)
\] f\(x\)
}{
J h\(x\)
}
d \mathcal H_{n-1} x
\]
d \mathcal L_1 c' \\
&=\frac{1}{t_n} \int
\[
\int_{B_{x_{j, 2}, \ldots, x_{j, n}}}
\frac{
\[
1\(c' + t_n H\(
\psi_{B_j}\(x_{2, \ldots, n}, c'\)
\) > c\)
-1\(c' > c\)
\]
f\(\psi_{B_j}\(
x_{2,\ldots,n}, c'
\)
\)
}{
\left\vert
\text{det}\(
\frac{\partial}{\partial x_1} h\(
\psi_{B_j}\(x_{2, \ldots, n}, c'\)
\)
\)
\right\vert
}
d \mathcal L_{n-1} x
\]
d \mathcal L_1 c'. \\
\end{split}\end{align}
Next by the Fubini-Tonelli theorem, the above is equal to
\bs
\int_{B_{x_{j, 2}, \ldots, x_{j, n}}}
\[
\frac{1}{t_n} \int
\frac{
\[
1\(c' + t_n H\(
\psi_{B_j}\(x_{2, \ldots, n}, c'\)
\) > c\)
-1\(c' > c\)
\]
f\(\psi_{B_j}\(
x_{2,\ldots,n}, c'
\)
\)
}{
\left\vert
\text{det}\(
\frac{\partial}{\partial x_1} h\(
\psi_{B_j}\(x_{2, \ldots, n}, c'\)
\)
\)
\right\vert
}
d \mathcal L_1 c'
\]
d \mathcal L_{n-1} x.
\end{split}\end{align}
Since
\bs
&\frac{1}{t_n}
\int
\[
1\(c' + t_n H\(
\psi_{B_j}\(x_{2, \ldots, n}, c'\)
\) > c\)
-1\(c' > c\)
\]
d\mathcal L_1 c'\\
&=
\frac{1}{t_n}
\int_{c-t_n M}^{c+t_n M}
\[
1\(c' + t_n H\(
\psi_{B_j}\(x_{2, \ldots, n}, c'\)
\) > c\)
-1\(c' > c\)
\]
d\mathcal L_1 c'
\to H\( \psi_{B_j}\(x_{2, \ldots, n}, c\)\),
\end{split}\end{align}
by the dominated convergence theorem (DCT).
With the help of the partition of unity theorem
\ref{partition of unity},
we can extend the above local convergence to the entire level set $h^{-1}\(c\)$,
since for $n$ large enough, we have
\begin{align}\begin{split}\nonumber
& \frac{
1
}{
t_n
}\int \[
\int_{h^{-1}\(c'\) \cap K_f}
\frac{
\(
1\(c'+t_n H\(x\) > c\)	- 1\(c' > c\)
\)
f\(x\)
}{
J h\(x\)
}
d \mathcal H_{n-1} x
\] d \mathcal L_1 c'\\
& =
\frac{
1
}{
t_n
}\int \[\sum_{l}
\int_{h^{-1}\(c'\) \cap K_f}
\frac{
\(
1\(c'+t_n H\(x\) > c\)	- 1\(c' > c\)
\)
f\(x\)\rho_{l}\(x\)
}{
J h\(x\)
}
d \mathcal H_{n-1} x
\] d \mathcal L_1 c',
\end{split}\end{align}
where $\rho_{l}\(\cdot\)$ is the partition of unity. \\
\noindent Then by the area formula again,
\bs
\lim_{n\rightarrow\infty} \frac{1}{t_n} \int
\[
1\(h\(x\) + t_n H\(x\) > c\)
-1\(h\(x\) > c\)
\] f\(x\)
d \mathcal L_n x = \int_{h^{-1}\(c\)}
\frac{
H\(x\) f\(x\)
}{
J h\(x\)
}
d \mathcal H_{n-1} x.
\end{split}\end{align}
\textbf{step} 4 Now consider functions $H\(x\)+\zeta$, for arbitrary $\zeta > 0$. Since by assumption
$\sup_{x\in E} \vert H_n\(x\) - H\(x\) \vert \to 0$, for all $\zeta > 0$, for sufficiently large $n$,
$H\(x\)+\zeta > H_n\(x\)$ for all $x$.	Therefore,
\bs
\limsup_{n\rightarrow\infty}
\frac{
F\(h+t_n H_n,  c\)-
F\(h,  c\)
}{
t_n
}
\leq &
\lim_{n\rightarrow\infty}
\frac{
F\(h+t_n\(H + \zeta\),	c\)-
F\(h,  c\)
}{
t_n
}\\
= &\int_{h^{-1}\(c\)}
\frac{
\(H\(x\) + \zeta\) f\(x\)
}{
J h\(x\)
}
d \mathcal H_{n-1} x.
\end{split}\end{align}
Similarly,
\bs
\liminf_{n\rightarrow\infty}
\frac{
F\(h+t_n H_n,  c\)-
F\(h,  c\)
}{
t_n
}
\geq &
\lim_{n\rightarrow\infty}
\frac{
F\(h+t_n\(H - \zeta\),	c\)-
F\(h,  c\)
}{
t_n
}\\
= &\int_{h^{-1}\(c\)}
\frac{
\(H\(x\) - \zeta\) f\(x\)
}{
J h\(x\)
}
d \mathcal H_{n-1} x.
\end{split}\end{align}
By the H\"{o}lder inequality (integration by Hausdorff measure is also in the Lebesgue sense):
\bs
\int_{h^{-1}\(c\)}
\frac{
\left\vert
H\(x\) - H'\(x\)
\right\vert
f\(x\)
}{
J h\(x\)
}
d \mathcal H_{n-1} x
\leq \sup_{x\in E} \vert H\(x\) - H'\(x\) \vert
\int_{h^{-1}\(c\)}
\left\vert
\frac{
f\(x\)
}{
J h\(x\)
}
\right\vert
d \mathcal H_{n-1} x.
\end{split}\end{align}
Now by the arbitrariness of $\zeta$,
\bs
\lim_{n\rightarrow\infty}
\frac{
F\(h+t_n H_n,  c\)-
F\(h,  c\)
}{
t_n
}
= \int_{h^{-1}\(c\)}
\frac{
H\(x\)	f\(x\)
}{
J h\(x\)
}
d \mathcal H_{n-1} x.
\end{split}\end{align}
\textbf{step} 5
Now we can also prove Proposition \ref{surface integral continuity}.
Note that
the above proof techniques of \textbf{step 1} to \textbf{step} 3, including the area formula, partition of unity, and generalized
matrix determinant lemma, all work for $k > 1$.
As a result, the proof is essentially a reconstruction of the proof of the special case of $k=1$. 
Here, we only need to point out
that
\bs
\lim_{c' \to c}
\int_{B_{x_{j, 2}, \ldots, x_{j, n}}} &
\frac{
f\(\psi_{B_j}\(
x_{2,\ldots,n}, c'
\)
\)
}{
\left\vert
\text{det}\(
\frac{\partial}{\partial x_1} h\(
\psi_{B_j}\(x_{2, \ldots, n}, c'\)
\)
\)
\right\vert
}
d \mathcal L_{n-1} x \\
& =
\int_{B_{x_{j, 2}, \ldots, x_{j, n}}}
\frac{
f\(\psi_{B_j}\(
x_{2,\ldots,n}, c
\)
\)
}{
\left\vert
\text{det}\(
\frac{\partial}{\partial x_1} h\(
\psi_{B_j}\(x_{2, \ldots, n}, c\)
\)
\)
\right\vert
}
d \mathcal L_{n-1} x.
\end{split}\end{align}
\textbf{part} 2 For $k > 1$,  we still have, by the coarea formula
\bs
F\(h,c\) = \int_{h\(x\) > c} f\(x\) d \mathcal L_n x =
\int_{c' > c} \[
\int_{h^{-1}\(c'\)} \frac{f\(x\)}{J h\(x\)} d \mathcal H_{n-k}
\]
d \mathcal L_k c'.
\end{split}\end{align}
So we have that
\bs
&\frac{1}{t_n} \int
\[
1\(
h\(x\) + t_n H_n\(x\) > c
\) - 1\(h\(x\) > c\)
\]
f\(x\)
d \mathcal L_n x\\
&=\frac{1}{t_n} \int
\[
\int_{h^{-1}\(c'\) \cap K_f}
\frac{
\[
1\(c'+ t_n H_n\(x\) > c\) - 1\(c' > c\)
\] f\(x\)
}{
J h\(x\)
}
d \mathcal H_{n-k} x
\]
d \mathcal L_k c'.
\end{split}\end{align}
By the telescoping identity
 of higher order expansion,
\bs
\prod_{i=1}^k a_i' - \prod_{i=1}^k a_i
= \sum_{i=1}^k \(a_i' - a_i\) \prod_{l\neq i} a_l
+ \sum_{i\neq j} \(a_i'-a_i\)\(a_j'-a_j\) \prod_{l\neq i, l\neq j} a_l + \cdots
+ \prod_{i=1}^k \(a_i'-a_i\),
\end{split}\end{align}
using a proof procedure similar to that of the special case of $k=1$, 
the first
order term of the difference becomes
\bs
&\sum_i \frac{1}{t_n}
\int
\[
\int_{h^{-1}\(c'\) \cap K_f}
\frac{
\[
1\(c_i'+ t_n H_{n,i}\(x\) > c_i\) - 1\(c'_i > c_i\)
\]
\prod_{l\neq i} 1\(c_l' > c_l\)
f\(x\)
}{
J h\(x\)
}
d \mathcal H_{n-k} x
\]
d \mathcal L_k c' \\
&=\sum_i \sum_{l_i}
\int_{y > \tau_{\neg i}\(c\)}
\int_{B_{l_i}}
d \mathcal L_{n-k} x  d L_{k-1} y
\\
&\[
\int
\frac{
\[
1\(c_i'+ t_n H_{n,i}\(
\psi_{B_{l_i}}\(x, c'\)
\) > c_i\) - 1\(c'_i > c_i\)
\]
f\(\psi_{B_{l_i}}\(x, c'\)\)
\rho_{B_{l_{i}}}\(\psi_{B_{l_i}}\(x, c'\)\)
}{
\left\vert
\det\(
\frac{\partial}{\partial x_{l_i}}
h\(
\psi_{B_{l_i}}\(x, c'\)\)
\)
\right\vert
}
\frac{1}{t_n}
d \mathcal L_1 c_i'\].
\end{split}\end{align}
where $\rho_{B_{l_{i}}}\(\psi_{B_{l_i}}\(x, c'\)\)$ is the partition of unity, and $x_{l_i}$ denotes
that $x_{l_i}$	is picked for locally implicit function for change of variables by the
implicit function theorem.
We also note that in the above $y = \tau_{\neg i}\(c'\)$, and $c'_i=\tau_i\(c'\)$ and we use the fact that
\bs
1\(c'+ t_n H_n\(x\) > c\)  = \prod_{i=1}^k 1\(c_i'+ t_n H_{n,i}\(x\) > c_i\)\ \text{and}\
1\(c' > c\) = \prod_{i=1}^k 1\(c'_i > c_i\).
\end{split}\end{align}
Then by DCT and the area formula again, we obtain the limit of the first order
difference as
\bs
\sum_i \int \[
\int_{h^{-1}\(c'\(y, c_{i}, i\)\)}
\frac{
H_i\(x\) f\(x\)
}{
J h\(x\)
}
d \mathcal H_{n-k} x
\] d \mathcal L_{k-1} y.
\end{split}\end{align}
The convergence of the first order term also implies that the second and higher order terms of
the difference should vanish eventually. \\
\textbf{part} 3 By \textbf{part} 1 and \textbf{part} 2, we only need to consider the $k=1$ case since the
$k > 1$ case is similar. Let $\{c_n\}_{n=1}^\infty$ be a sequence such that  $\lim_{n\rightarrow\infty} c_n = c$. Consider
\bs
\lim_{n\rightarrow\infty} &
\frac{1}{t_n}
\[
F\(h+t_n H_n, c_n\) -F\(h, c_n\)
\] \\
& =
\lim_{n\rightarrow\infty}
\frac{1}{t_n}
\int
\[
1\(h\(x\) + t_n H_n\(x\) > c_n\) -
1\(h\(x\)  > c_n\)
\]
f\(x\)
d \mathcal L_n x.
\end{split}\end{align}
The proof in \textbf{part} 1 shows that we need to calculate, for
$M = \sup_{x\in K_f} \vert H\(x\) \vert$,
\bs
&\frac{1}{t_n}
\int_{c - t_n M - \vert c - c_n\vert}^{c + t_n M + \vert c - c_n\vert}
\[
1\(c'+ t_n H\(\psi_{B_j}\(\cdot, c'\)\) > c_n\)
- 1\(c' > c_n\)\]
d \mathcal L_1 c'. 
\end{split}\end{align}
Note that
\bs
\frac{1}{t_n} \int
\[
1\(c' + t_n A + c - c_n > c\) - 1\(c'+c - c_n > c\)
\]
d \mathcal L_1 c' = A,
\end{split}\end{align}
we can obtain
\bs
\lim_{n\rightarrow\infty}
\frac{1}{t_n}
\int_{c - t_n M - \vert c - c_n\vert}^{c + t_n M + \vert c - c_n\vert}
\[
1\(c'+ t_n H\(\psi_{B_j}\(\cdot, c'\)\) > c_n\)
- 1\(c' > c_n\)
\]
d \mathcal L_1 c' = H\(\psi_{B_j}\(\cdot, c\)\)
\end{split}\end{align}
by simple upper and lower bound of $H\(\psi_{B_j}\(\cdot, c'\)\)$ on
$\[
c - t_n M - \vert c - c_n\vert,
c + t_n M + \vert c - c_n\vert\]$. So we obtain, 
that the limit in the calculation of the Hadamard derivative actually converges continuously (see for example Definition \ref{continuous convergence}).
By Lemma \ref{UConvergence and CConvergence}, we get that the Hadamard derivative can be taken
in a uniform sense. In other words, let $\mathcal D \subset h\(E\)$  , then we have
\bs
\lim_{n\rightarrow\infty}
\sup_{c \in \mathcal D}
\left\vert
\frac{1}{t_n}
\int \[
1\(h\(x\)+ t_n H_{n}\(x\) > c\)
- 1\(h\(x\) > c\)
\]
f\(x\) d \mathcal L_n x
-
\int_{h^{-1}\(c\)} \frac{
H\(x\) f\(x\)
}{
J h\(x\)
}d \mathcal H_{n-1} x
\right\vert = 0.
\end{split}\end{align}
This finishes the proof of Theorem \ref{general k derivative}. \\
\textbf{part} 4 We only need to point out changes under lower regularity of Theorem \ref{nonsmooth general k derivative}. \\
\textbf{step} 1
Let $x \in h^{-1}\(\mathcal{D}\)$.
By Assumption \ref{Clarke regular point assumption}, $J_{c}h\(x\)$ is of full rank, so we can change the $C^{1}$ implicit function theorem to the Lipschitz version, see for example Theorem \ref{Nonsmooth implicit function theorem} (Lipchitz case) in the Technical addendum.
For simplicity of notations, consider the case when $k = 1$.
Without loss of generality, we may also assume the first coordinate is picked to be expressed by the implicit function, i.e., $x'_{1} = \xi\(x'_{2}, \ldots, x'_{n}, c'\)$.
Compared to \textbf{step} 2 of \textbf{part} 1, we omit the subscript like $j$ and $B_{j}$, since the reader should have been clear that the discussion here is purely local.

The implicit function theorem says that locally we have
$h\(\xi\(x'_{2}, \ldots, x'_{n}, c'\), x'_{2}, \ldots, x'_{n}\) = 0$.
Fix $c'$, $\xi\(\cdot, c\)$ locally characterized the level set $\{x: h\(x\) = c'\}$, i.e.,
$x' \in \{x': h\(x'\) = c'\} \Leftrightarrow x'_{1} = \xi\(x'_{2}, \ldots, x'_{n}, c'\)$.
By Assumption \ref{a.e. continuous at boundary assumption},
$\mathcal{H}_{n - 1}\left\{x': h\(x'\) = c', h\(\cdot\) \text{is not differentiable at } x'\right\} = 0$,
thus for $L_{n - 1}$ almost every $\(x'_{2}, \ldots, x'_{n}\)$ in a neighborhood,
we can perform differential calculus to $\xi\(\cdot, c'\)$ and $\psi\(\cdot, c'\)$ by the chain rule and Rademacher Theorem (see for example Theorem \ref{Rademacher theorem}) to see \eqref{long jacobian calculus} holds $a.e.$ \\
\textbf{step} 2 Now, following the main idea from \textbf{part} 1 \textbf{step} 3, we need to calculate
\begin{align}\begin{split}\nonumber
\lim_{t_{n} \downarrow 0, c_{n} \to c}
\frac{1}{t_n} \int
\frac{
\[
1\(c' + t_n H_{n}\(
\psi_{B}\(x_{2, \ldots, n}, c'\)
\) > c_{n}\)
-1\(c' > c_{n}\)
\]
f\(\psi_{B}\(
x_{2,\ldots,n}, c'
\)
\)
}{
\left\vert
\text{det}\(
\frac{\partial}{\partial x_1} h\(
\psi_{B}\(x_{2, \ldots, n}, c'\)
\)
\)
\right\vert
}
d \mathcal L_1 c',
\end{split}\end{align}
where $B$ is the neighborhood where we apply the Lipschitz implicit function theorem.
Consider $\(x_{2}, \ldots, x_{n}\)$ where $\frac{f\(\psi_{B}\(\cdot, c\)\)}
{\left\vert\det\(\frac{\partial}{\partial x_1} h\(\psi_{B}\(\cdot, c\)\)\)\right\vert}$ is continuous, then we have
\begin{align}\begin{split}\nonumber
\int
\frac{
\[
1\(c' + t_n A > c_{n}\)
-1\(c' > c_{n}\)
\]
f\(\psi_{B}\(
x_{2,\ldots,n}, c'
\)
\)
}{
\left\vert
\det\(
\frac{\partial}{\partial x_1} h\(
\psi_{B}\(x_{2, \ldots, n}, c'\)
\)
\)
\right\vert
}
d \mathcal L_1 c'
= t_{n} A\(\frac{f\(\psi_{B}\(
x_{2,\ldots,n}, c
\)
\)}{\left\vert
\det\(
\frac{\partial}{\partial x_1} h\(
\psi_{B}\(x_{2, \ldots, n}, c\)
\)
\)
\right\vert
} + o\(1\)\).
\end{split}\end{align}
For $m \in \{1, 2, \ldots\}$,
define $D\(m\)$ as the set of $\(x_{2}, \ldots, x_{n}\) \in B_{x_{2}, \ldots, x_{n}}$ such that
\begin{align}\begin{split}\nonumber
\lim_{t_{n} \downarrow 0, c_{n} \to c}
\frac{1}{t_n} & \int
\frac{
\[
1\(c' + t_n \(H\(
\psi_{B}\(x_{2, \ldots, n}, c\)\) + \frac{1}{m}
\) > c_{n}\)
-1\(c' > c_{n}\)
\]
f\(\psi_{B}\(
x_{2,\ldots,n}, c'
\)
\)
}{
\left\vert
\det\(
\frac{\partial}{\partial x_1} h\(
\psi_{B}\(x_{2, \ldots, n}, c'\)
\)
\)
\right\vert
}
d \mathcal L_1 c' \\
& = \(H\(
\psi_{B}\(x_{2, \ldots, n}, c\)\) + \frac{1}{m}
\) \frac{f\(\psi_{B}\(
x_{2,\ldots,n}, c
\)
\)}{\left\vert
\det\(
\frac{\partial}{\partial x_1} h\(
\psi_{B}\(x_{2, \ldots, n}, c\)
\)
\)
\right\vert
}.
\end{split}\end{align}
By Assumption \ref{a.e. continuous at boundary assumption},
$\mathcal{L}_{n - 1}\left\{B_{x_{2}, \ldots, x_{n}} \backslash D\(m\)\right\} = 0$ and further
$\mathcal{L}_{n - 1}\left\{\bigcup_{m}B_{x_{2}, \ldots, x_{n}} \backslash D\(m\)\right\} = 0$,
which imply
\begin{align}\begin{split}\nonumber
\lim_{t_{n} \downarrow 0, c_{n} \to c}
& \frac{1}{t_n} \int
\frac{
\[
1\(c' + t_n H_{n}\(
\psi_{B}\(x_{2, \ldots, n}, c'\)
\) > c_{n}\)
-1\(c' > c_{n}\)
\]
f\(\psi_{B}\(
x_{2,\ldots,n}, c'
\)
\)
}{
\left\vert
\text{det}\(
\frac{\partial}{\partial x_1} h\(
\psi_{B}\(x_{2, \ldots, n}, c'\)
\)
\)
\right\vert
}
d \mathcal L_1 c' \\
& = H\(
\psi_{B}\(x_{2, \ldots, n}, c\)\)
\frac{f\(\psi_{B}\(
x_{2,\ldots,n}, c
\)
\)}{\left\vert
\det\(
\frac{\partial}{\partial x_1} h\(
\psi_{B}\(x_{2, \ldots, n}, c\)
\)
\)
\right\vert
}
\end{split}\end{align}
for $\mathcal{L}_{n - 1}$ almost every $\(x_{2}, \ldots, x_{n}\) \in B_{x_{2}, \ldots, x_{n}}$.
By Binet-Cauchy formula (see for example chapter 1 of \cite{gantmakher1977theory}), Assumption \ref{Clarke regular point assumption} implies that locally $\det\(\frac{\partial}{\partial x_{1}}h\(\cdot\)\)$ is bounded away from $0$.
Since $f$ is bounded by Assumption \ref{borel compact assumption}, through DCT, we have
\begin{align}\begin{split}\nonumber
\lim_{t_{n} \downarrow 0, c_{n} \to c} \frac{1}{t_n} \int
\[
1\(h\(x\) + t_n H_{n}\(x\) > c\)
-1\(h\(x\) > c\)
\] f\(x\)
d \mathcal L_n x = \int_{h^{-1}\(c\)}
\frac{
H\(x\) f\(x\)
}{
J h\(x\)
}
d \mathcal H_{n-1} x.
\end{split}\end{align}
This finishes the proof of Theorem \ref{nonsmooth general k derivative}.
\end{proof}

\subsection{Proofs of section \ref{constrained and roc}}

\begin{proof}[Proof of Theorem \ref{binary allocation asymptotic result}]
We have the following decomposition
\begin{align}\begin{split}\nonumber
\alpha\(Q + t_{n}\mathcal{Q}_{n}, \Delta^{\star} + t_{n}H_{n}, k\) - \alpha\(Q, \Delta^{\star}, k\)
& = \underbrace{\alpha\(Q + t_{n}\mathcal{Q}_{n}, \Delta^{\star}, k\) - \alpha\(Q, \Delta^{\star}, k\)}_{\(1\)} \\
& + \underbrace{\alpha\(Q, \Delta^{\star} + t_{n}H_{n}, k\) - \alpha\(Q, \Delta^{\star}, k\)}_{\(2\)} \\
& + \underbrace{t_{n}\(\alpha\(\mathcal{Q}_{n}, \Delta^{\star} + t_{n}H_{n}, k\) - \alpha\(\mathcal{Q}_{n}, \Delta^{\star}, k\)\)}_{\(3\)}.
\end{split}\end{align}
The $\(3\)$ term can be further decomposed and bounded as
\begin{align}\begin{split}\label{alpha residual bound}
\vert \alpha\(\mathcal{Q}_{n}, \Delta^{\star} + t_{n}H_{n}, k\) - \alpha\(\mathcal{Q}_{n}, \Delta^{\star}, k\)\vert
& \leq \vert \alpha\(\mathcal{Q}_{n}, \Delta^{\star} + t_{n}H_{n}, k\) - \alpha\(\mathcal{Q}, \Delta^{\star} + t_{n}H_{n}, k\)\vert \\
& + \vert \alpha\(\mathcal{Q}_{n}, \Delta^{\star}, k\) - \alpha\(\mathcal{Q}, \Delta^{\star}, k\)\vert \\ 
& + \vert \alpha\(\mathcal{Q}, \Delta^{\star} + t_{n}H_{n}, k\) - \alpha\(\mathcal{Q}, \Delta^{\star}, k\)\vert. 
\end{split}\end{align}
Since we assume Donskerness of $\mathcal{F}$ in Assumption \ref{joint convergence assumption},
we can derive Hadamard differentiability tangential to $\mathbf{Q} \times C\(E\)$ of $\alpha\(\cdot, \cdot, \cdot\)$ and $\beta\(\cdot, \cdot, \cdot\)$ with respect to the first two variables,
such that $\mathbf{Q} \subset \ell^{\infty}\(\mathcal{F}\)$ consists of elements that are uniformly continuous on $\mathcal{F}$ with respect to the $L^{2}\(Q\)$ norm.
Under this setting, the first two terms in \eqref{alpha residual bound} converge to $0$ since $\Vert\mathcal{Q}_{n} - \mathcal{Q}\Vert_{\ell^{\infty}\(\mathcal{F}\)} \to 0$.
Note that under Assumption \ref{variable assumption} and Assumption \ref{regular Delta assumption},
Proposition \ref{surface integral continuity} shows that $\Delta^{\star}\(X\)$ has a continuous density.
By the $L^{2}\(Q\)$ uniform continuity of $\mathcal{Q}$, the third term also converges to $0$.

Under Assumption \ref{variable assumption} and Assumption \ref{regular Delta assumption},
we can also evoke Theorem \ref{general k derivative};
also by the chain rule of Hadamard derivative (Lemma 3.10.3 in \cite{van2023weak}),
\bs
& \alpha'_{Q, \Delta^{\star}, k}\(\mathcal{Q}, H, 0\) =
\mathcal{Q}\[\(z_{1} - z_{0}\)1\(\Delta^{\star}\(x; k\) > 0\) + z_{0}\] \\
& + \Biggl[\int_{\Delta^{\star}\(x; k\) = 0} \frac{\(c_{1}\(x\) - c_{0}\(x\)\)\(H_{1}\(x\) - H_{0}\(x\)\)\mu'\(x\)}{\Vert\nabla\Delta^{\star}\(x, k\)\Vert}d\mathcal{H}_{n - 1}x \\
& \quad\quad\quad\quad\quad\quad\quad\quad\quad\quad\quad\quad\quad\quad - \int_{\Delta^{\star}\(x; k\) = 0} \frac{k \(c_{1}\(x\) - c_{0}\(x\)\)\(H_{3}\(x\) - H_{2}\(x\)\)\mu'\(x\)}{\Vert\nabla\Delta^{\star}\(x, k\)\Vert}d\mathcal{H}_{n - 1}x\Biggl].
\end{split}\end{align}
By the condition in \eqref{nonsingular manifold jacobian denominator in binary classification roc curve}, 
the Hadamard differentiability of the inverse map in Lemma 3.10.24 of \cite{van2023weak} and the chain rule again imply that
\bs
k'_{Q, \Delta^{\star}, \alpha}\(\mathcal{Q}, H, 0\) = \frac{\alpha'_{Q, \Delta^{\star}, k}\(\mathcal{Q}, H, 0\)}{f_{\alpha}\(Q, \Delta^{\star}, k\(Q, \Delta^{\star}, \alpha\)\)},
\end{split}\end{align}

By exactly the same calculation, we can also get
\bs
& \beta'_{Q, \Delta^{\star}, k}\(\mathcal{Q}, H, 0\) =
\mathcal{Q}\[\(y_{1} - y_{0}\)1\(\Delta^{\star}\(x; k\) > 0\) + y_{0}\] \\
& + \Biggl[\int_{\Delta^{\star}\(x; k\) = 0} \frac{\(g_{1}\(x\) - g_{0}\(x\)\)\(H_{1}\(x\) - H_{0}\(x\)\)\mu'\(x\)}{\Vert\nabla\Delta^{\star}\(x, k\)\Vert}d\mathcal{H}_{n - 1}x \\
& \quad\quad\quad\quad\quad\quad\quad\quad\quad\quad\quad\quad\quad\quad - \int_{\Delta^{\star}\(x; k\) = 0} \frac{k\(g_{1}\(x\) - g_{0}\(x\)\)\(H_{3}\(x\) - H_{2}\(x\)\)\mu'\(x\)}{\Vert\nabla\Delta^{\star}\(x, k\)\Vert}d\mathcal{H}_{n - 1}x\Biggl],
\end{split}\end{align}
By the chain rule the third time, there is
\bs
\beta'_{Q, \Delta^{\star}, \alpha}\(\mathcal{Q}, H, 0\) = \beta'_{Q, \Delta^{\star}, k}\(\mathcal{Q}, H, 0\)\bigg\vert_{k = k\(Q, \Delta^{\star}, \alpha\)} -
\frac{f_{\beta}\(Q, \Delta^{\star}, k\(Q, \Delta^{\star}, \alpha\)\)}{f_{\alpha}\(Q, \Delta^{\star}, k\(Q, \Delta^{\star}, \alpha\)\)} \alpha'_{Q, \Delta^{\star}, k}\(\mathcal{Q}, H, 0\)\bigg\vert_{k = k\(Q, \Delta^{\star}, \alpha\)}.
\end{split}\end{align}
Now, \eqref{main asymptotic formula} follows from Theorem 3.10.4 in \cite{van2023weak}.
\end{proof}

\begin{proof}[Proof of Proposition \ref{parametric model based estimator more efficient than orthogonal estimator}]
Define
\begin{align*}
&\zeta\(\omega, x; \theta, \alpha\)
=  \(y_1 - y_0
-k\(Q, \theta, \alpha\)
\(z_{1} - z_{0}\) \)
1\(\Delta\(x; k\(\mu, \theta, \alpha\)\) > 0\) + y_0
-k\(Q, \theta, \alpha\) z_0. \notag
\end{align*}
We also define, using the score function notation
$s_{\theta_1}\(\omega \vert x; \theta_1\) =
\frac{\partial}{\partial \theta_1} \log f\(\omega \vert x, \theta_1\)$,
\bsnumber\label{jacobian function in the asymptotics of parametric model based estimator}
\mathcal J\(\theta^{\star},\alpha\) \equiv \frac{\partial}{\partial \theta_1} \mu\(\psi\(x; \theta_1, \theta_2, \alpha\)\)\bigg\vert_{\theta_1 = \theta_2 = \theta^{\star}}
= \underbrace{Q \[\zeta\(\omega, x; \theta^{\star}, \alpha\) s_{\theta_1}\(\omega \vert x; \theta_1\)\]\bigg\vert_{\theta_1= \theta^{\star}}}_{\text{under correct specification}}.
\end{split}\end{align}
In the above, we let
\begin{align}\begin{split}\label{psi function in the asymptotics of parametric model based estimator}
& \psi\(x; \theta_1, \theta_2, \alpha\) \coloneqq  \int \zeta\(\omega, x; \theta_2, \alpha\) f\(\omega \vert x; \theta_1\) d\omega \\
&= \(g_{1}\(x;\theta_1\) - g_{0}\(x; \theta_1\)
-k\(Q, \theta_2, \alpha\)
\(c_{1}\(x; \theta_1\) - c_{0}\(x; \theta_1\)\)
\)1\(\Delta\(x; k\(\mu, \theta_2, \alpha\)\) > 0\) \\
& + g_{0}\(x;\theta_1\)
-k\(Q, \theta_2, \alpha\) c_{0}\(x; \theta_1\).
\end{split}\end{align}
Under correct parametric specification, by Theorem \ref{binary allocation asymptotic result} and the above definitions the model based estimator satisfies
\bsnumber\label{main asymptotic formula under parametric specification of model based estimator}
& \sqrt{n} \(\beta\(\hat\mu, \hat{\theta}, \alpha\) - \beta\(\mu, \theta^{\star}, \alpha\)\)
\rightsquigarrow
\mathbb{Q}\biggl[
\psi\(x; \theta^{\star}, \theta^{\star}, \alpha\)  - \mathcal J\(\theta^{\star},\alpha\) H\(\theta^{\star}\)^{-1} s_{\theta_1}\(\omega \vert x;
\theta^{\star}\)
\biggr].
\end{split}\end{align}
We can also rewrite
\eqref{main asymptotic formula under correct specification} as
\bsnumber\label{rewrite orthogonal main asymptotic formula under correct parametric specification}
& \sqrt{n} \(\beta\(\mathbb{Q}_{n}, \hat{\theta}, \alpha\) - \beta\(Q, \theta^{\star}, \alpha\)\) \rightsquigarrow
\mathbb{Q}\biggl[ \zeta\(\omega, x; \theta^{\star}, \alpha\) \biggr].
\end{split}\end{align}
Under correct parametric specification
\bsnumber\label{information matrix inequality for MLE}
H\(\theta^{\star}\) = Q\[\frac{\partial^2}{\partial \theta_1^2} \log f\(\omega \vert x, \theta^{\star}\)\]
= -V\(\theta^{\star}\) = -Q\[
s_{\theta_1}\(\omega \vert x; \theta^{\star}\)
s^{T}_{\theta_1}\(\omega \vert x; \theta^{\star}\)\]
\end{split}\end{align}
Then
\eqref{main asymptotic formula under parametric specification of model based estimator}	 becomes
\bsnumber\label{main asymptotic formula under correct parametric specification of model based estimator}
& \sqrt{n} \(\beta\(\hat\mu, \hat{\theta}, \alpha\) - \beta\(\mu, \theta^{\star}, \alpha\)\) \rightsquigarrow
\mathbb{Q}\[
\psi\(x; \theta^{\star}, \theta^{\star}, \alpha\)  + \mathcal J\(\theta^{\star},\alpha\) V\(\theta^{\star}\)^{-1} s_{\theta_1}\(\omega \vert x; \theta^{\star}\)
\].
\end{split}\end{align}
The covariance kernel induced by \eqref{main asymptotic formula under correct parametric specification of model based estimator} is smaller
than the covariance kernel induced by
\eqref{rewrite orthogonal main asymptotic formula under correct parametric specification}, in the sense that their difference is seminegative
definitive. For this purpose it suffices to show that
\bs
Cov\biggl(
\zeta\(\omega, x; \theta^{\star}, \alpha\) -  \psi\(x; \theta^{\star}, \theta^{\star}, \alpha\) & - \mathcal J\(\theta^{\star},\alpha\) V\(\theta^{\star}\)^{-1} s_{\theta_1}\(\omega \vert x; \theta^{\star}\), \\
& \psi\(x; \theta^{\star}, \theta^{\star}, \alpha'\)  + \mathcal J\(\theta^{\star}, \alpha'\) V\(\theta^{\star}\)^{-1} s_{\theta_1}\(\omega \vert x; \theta^{\star}\)
\biggr) = 0.
\end{split}\end{align}
for all $\alpha$ and $\alpha'$.	 We decompose this into
\bs
&Cov\biggl(
\zeta\(\omega, x; \theta^{\star}, \alpha\) - \psi\(x; \theta^{\star}, \theta^{\star}, \alpha\) - \mathcal J\(\theta^{\star}, \alpha\) V\(\theta^{\star}\)^{-1} s_{\theta_1}\(\omega \vert x; \theta^{\star}\), \psi\(x; \theta^{\star}, \theta^{\star}, \alpha'\)\biggr) \\
& -Cov\biggl(
\psi\(x; \theta^{\star}, \theta^{\star}, \alpha\),
\mathcal J\(\theta^{\star}, \alpha'\) V\(\theta^{\star}\)^{-1} s_{\theta_1}\(\omega \vert x; \theta^{\star}\)
\biggr)\\
& +Cov\biggl(
\zeta\(\omega, x; \theta^{\star}, \alpha\)  - \mathcal J\(\theta^{\star},\alpha\) V\(\theta^{\star}\)^{-1} s_{\theta_1}\(\omega \vert x; \theta^{\star}\),
\mathcal J\(\theta^{\star}, \alpha'\) V\(\theta^{\star}\)^{-1} s_{\theta_1}\(\omega \vert x; \theta^{\star}\)\biggr).
\end{split}\end{align}
The first two terms are zero by the law of iterate expectation.
The last term is zero by direct calculation
using \eqref{information matrix inequality for MLE}, because
$\mathcal J\(\theta^{\star}, \alpha\) V\(\theta^{\star}\)^{-1} s_{\theta_1}\(\omega \vert x; \theta^{\star}\)$ is the least square projection of
$\zeta\(\omega, x; \theta^{\star}, \alpha\)$ to the linear space spanned by the conditional score functions
$s_{\theta_1}\(\omega \vert x; \theta^{\star}\)$.
The efficiency comparison continues to hold when $\hat\theta$ is estimated by
any moment condition $s_{\theta_1}\(\omega \vert x; \theta_1\)$ such that  $\mathbb{E}\[s_{\theta_1}\(\omega \vert x; \theta^{\star}\) \vert x\] = 0$
and information matrix equality holds, such as a method of moment estimator with optimally chosen instruments.
\end{proof}

\begin{proof}[Proof of Proposition \ref{treatment parametric model based estimator more efficient than orthogonal estimator}]
Under correct parametric specification, let
\bs
&\zeta\(y, z, d,x;\theta^{\star}, \alpha\) = \psi\(x; \theta^{\star}, \theta^{\star}, \alpha\) + \eta\(y, z, d,x;\theta^{\star}, \alpha\)  -
\beta\(Q, \theta^{\star}, \alpha\) - k\(Q, \theta^{\star}, \alpha\)  \alpha
\end{split}\end{align}
where
\bs
\eta\(y, z, d,x;\theta, \alpha\) & =
\biggl[ \frac{d}{p\(x, \theta\)} \(y - g_1\(x\)
- k\(Q, \theta, \alpha\)
\(
z - c_1\(x\)
\)
\)
1\(\Delta\(x; k\(Q, \theta, \alpha\)\) > 0\) \\
& + \frac{1-d}{1-p\(x, \theta\)} \(y - g_0\(x\)
-
k\(Q, \theta, \alpha\)
\(
z - c_0\(x\)
\)
\)  1\(\Delta\(x; k\(Q, \theta, \alpha\)\) \leq 0\)
\biggr].
\end{split}\end{align}
Recall \eqref{psi function in the asymptotics of parametric model based estimator} and (the first equality in)
\eqref{jacobian function in the asymptotics of parametric model based estimator}, \eqref{main asymptotic formula under parametric specification of model based estimator} continues to hold:
\bs
& \sqrt{n} \(\beta\(\hat\mu, \hat{\theta}, \alpha\) - \beta\(\mu, \theta^{\star}, \alpha\)\) \rightsquigarrow
\mathbb{Q}\biggl[
\psi\(x; \theta^{\star}, \theta^{\star}, \alpha\) - \mathcal J\(\theta^{\star}, \alpha\) H\(\theta^{\star}\)^{-1} s_{\theta}\(y, z \vert d, x; \theta^{\star}\)
\biggr].
\end{split}\end{align}
We still have
$H\(\theta^{\star}\) = -V\(\theta^{\star}\)$. 
Efficiency ranking follows from showing
\bs
Cov\biggl(
\eta\(y, z, d, x;\theta^{\star}, \alpha\) & - \mathcal J\(\theta^{\star},\alpha\) V\(\theta^{\star}\)^{-1} s_{\theta}\(y, z \vert d, x; \theta^{\star}\), \\
& \psi\(x; \theta^{\star}, \theta^{\star}, \alpha'\)  + \mathcal J\(\theta^{\star},\alpha'\) V\(\theta^\star\)^{-1} s_{\theta}\(y, z \vert d, x; \theta^{\star}\)
\biggr) = 0.
\end{split}\end{align}
for all $\alpha$ and $\alpha'$.
By the law of iterated expectation,
\bs
Cov\biggl(
\eta\(y, z, d, x;\theta^{\star}, \alpha\) - \mathcal J\(\theta^{\star}, \alpha\) V\(\theta^{\star}\)^{-1} s_{\theta}\(y, z \vert d, x; \theta^{\star}\),
\psi\(x; \theta^{\star}, \theta^{\star}, \alpha'\)
\biggr) = 0.
\end{split}\end{align}
Finally, direct verification of
\bs
&Cov\biggl(
\eta\(y, z, d, x;\theta^{\star}, \alpha\) - \mathcal J\(\theta^{\star}, \alpha\) V\(\theta^{\star}\)^{-1} s_{\theta}\(y, z \vert d, x; \theta^{\star}\),
\mathcal J\(\theta^{\star}, \alpha'\) V\(\theta^{\star}\)^{-1} s_{\theta}\(y, z \vert d, x; \theta^{\star}\)
\biggr) = 0,
\end{split}\end{align}
follows from checking that under \eqref{binary unconfoundedness assumption} and using the definition in
\eqref{jacobian function in the asymptotics of parametric model based estimator},
\bs
\mathcal J\(\theta^{\star}, \alpha\) =
\frac{\partial}{\partial \theta_1} \mu\(\psi\(x; \theta_1, \theta_2, \alpha\)\)\bigg\vert_{\theta_1 = \theta_2 = \theta^{\star}}
= Cov\(\eta\(y, z, d, x; \alpha\),	 s_{\theta}\(y, z \vert d, x; \theta^{\star}\)\).
\end{split}\end{align}

In the case when $p\(x;\theta\)$ is misspecified but
$g_1\(x;\theta\), g_0\(x;\theta\), c_1\(x;\theta\), c_0\(x;\theta\)$ are all
correctly specified, 
an adaption to Theorem \ref{binary allocation asymptotic result} show that
\bsnumber\label{main asymptotic formula with partial observability under partial correct specification and misspecified pscore}
& \sqrt{n}\(\beta\(\mathbb{Q}_{n}, \hat{\theta}, \alpha\) - \beta\(Q, \theta^{\star}, \alpha\)\) \rightsquigarrow
\mathbb{Q}\(\zeta_1\(y, z, d, x;\theta^{\star}, \alpha\)
+ \zeta_2\(y, d, x;\theta^{\star}, \alpha\)\).
\end{split}\end{align}
Denote the parametric function 
\bs
\Delta\(x; k\(Q, \theta^{\star}, \alpha\)\) = \Delta\(x; k\(Q, \theta^{\star}, \alpha\);\theta^{\star}\)
\coloneqq \(g_{1}\(x; \theta^{\star}\) - g_{0}\(x; \theta^{\star}\)\)
- k\(Q, \theta^{\star}, \alpha\)\(\(c_{1}\(x; \theta^{\star}\) - c_{0}\(x; \theta^{\star}\)\)\).
\end{split}\end{align}
In \eqref{main asymptotic formula with partial observability under partial correct specification and misspecified pscore}, we define
\bs
&\zeta_1\(y, z, d,x;\theta^{\star}, \alpha\) = \biggl[\(w_{1}\(x; Q, \theta^{\star}, \alpha\) + \frac{d}{p\(x;\theta^{\star}\)} \(\omega\(Q, \theta^{\star}, \alpha\) -
w_{1}\(x; Q, \theta^{\star}, \alpha\)
\) \)
1\(\Delta\(x; k\(Q, \theta^{\star}, \alpha\)\) > 0\) \\
& +
\(
w_{0}\(x; Q, \theta^{\star}, \alpha\)
+ \frac{1-d}{1-p\(x; \theta^{\star}\)} \(
\omega\(Q, \theta^{\star}, \alpha\) -
w_{0}\(x; Q, \theta^{\star}, \alpha\)
\) \)
1\(\Delta\(x; k\(Q, \theta^{\star}, \alpha\)\) \leq 0\)
\biggr],
\end{split}\end{align}
where $\omega\(Q, \theta^{\star}, \alpha\) = y - k\(Q, \theta^{\star}, \alpha\)	 z$,
\bs
&w_{1}\(x; Q, \theta^{\star}, \alpha\) = \mathbb{E}\[y -   k\(Q, \theta^{\star}, \alpha\)  z \vert x, d=1, \theta^{\star}\]
= g_1\(x; \theta^{\star}\) - k\(Q, \theta^{\star}, \alpha\) c_1\(x; \theta^{\star}\), \\
&w_{0}\(x; Q, \theta^{\star}, \alpha\) = \mathbb{E}\[y -   k\(Q, \theta^{\star}, \alpha\)  z \vert x, d=0, \theta^{\star}\]
=g_0\(x; \theta^{\star}\) - k\(Q, \theta^{\star}, \alpha\) c_0\(x; \theta^{\star}\).
\end{split}\end{align}
We also define
\bs
&\zeta_2\(y, z, d,x;\theta^{\star}, \alpha\)
= - \mu\biggl(\(1-\frac{p\(x\)}{p\(x;\theta^{\star}\)}\)
\(\frac{g_1\(x;\theta^{\star}\)}{\partial\theta} - k\(Q, \theta^{\star}, \alpha\)
\frac{c_1\(x;\theta^{\star}\)}{\partial\theta} \)
1\(\Delta\(x; k\(Q, \theta^{\star}, \alpha\); \theta^\star\) > 0\) \\
& + \(\frac{p\(x\)-p\(x;\theta^{\star}\)}{1-p\(x;\theta^{\star}\)}\)
\(\frac{g_0\(x;\theta^{\star}\)}{\partial\theta} - k\(Q, \theta^{\star}, \alpha\)
\frac{c_0\(x;\theta^{\star}\)}{\partial\theta} \)
1\(\Delta\(x; k\(Q, \theta^{\star}, \alpha\); \theta^\star\) \leq 0\)\biggr)\\
& \times
 H\(\theta^{\star}\)^{-1} s_\theta\(y, z \vert d, x; \theta^{\star}\).
\end{split}\end{align}
When $f_{1}\(y, z\vert x; \theta\)$ and $f_{0}\(y, z\vert x; \theta\)$ are correctly specified, $H\(\theta^{\star}\) = -V\(\theta^{\star}\)$,
\bs
&\zeta_1\(y,z, d,x;\theta^{\star}, \alpha\) =
\psi\(x; \theta^{\star}, \theta^{\star}, \alpha\) + \zeta_3\(y, z, d,x;\theta^{\star}, \alpha\),\ \ \text{where}\\
&\zeta_3\(y, z, d,x;\theta^{\star}, \alpha\) = \biggl[\frac{d}{p\(x;\theta^{\star}\)} \(\omega\(Q, \theta^{\star}, \alpha\) -
w_{1}\(x; Q, \theta^{\star}, \alpha\)
\)
1\(\Delta\(x; k\(Q, \theta^{\star}, \alpha\)\) > 0\) \\
& +
 \frac{1-d}{1-p\(x; \theta^{\star}\)} \(
\omega\(Q, \theta^{\star}, \alpha\) -
w_{0}\(x; Q, \theta^{\star}, \alpha\)
\)
1\(\Delta\(x; k\(Q, \theta^{\star}, \alpha\)\) \leq 0\)
\biggr].
\end{split}\end{align}
The efficiency claim can be verified by showing that
\bs
Cov\biggl(
\zeta_3\(y, z, d, x;\theta^{\star}, \alpha\) +
\zeta_2\(y, z, d, x;\theta^{\star}, \alpha\)
& - \mathcal J\(\theta^{\star},\alpha\) V\(\theta^{\star}\)^{-1} s_{\theta}\(y, z \vert d, x; \theta^{\star}\), \\
& \psi\(x; \theta^{\star}, \theta^{\star}, \alpha'\) + \mathcal J\(\theta^{\star},\alpha'\) V\(\theta^{\star}\)^{-1} s_{\theta}\(y, z \vert d, x; \theta^{\star}\)
\biggr) = 0.
\end{split}\end{align}
for all $\alpha$ and $\alpha'$.	 The law of iterated expectation implies that
\bs
&Cov\biggl(
\zeta_3\(y, z, d, x;\theta^{\star}, \alpha\) +
\zeta_2\(y, z, d, x;\theta^{\star}, \alpha\)
 - \mathcal J\(\theta^{\star},\alpha\) V\(\theta^{\star}\)^{-1} s_{\theta}\(y, z \vert d, x; \theta^{\star}\),
\psi\(x; \theta^{\star}, \theta^{\star}, \alpha'\)
\biggr) = 0.
\end{split}\end{align}
Furthermore, the equality
\bs
Cov\biggl(
\zeta_3\(y, z, d, x;\theta^{\star}, \alpha\) +
\zeta_2\(y, z, d, x;\theta^{\star}, \alpha\)
& - \mathcal J\(\theta^{\star}, \alpha\) V\(\theta^{\star}\)^{-1} s_{\theta}\(y, z \vert d, x; \theta^{\star}\), \\
& \mathcal J\(\theta^{\star},\alpha'\) V\(\theta^{\star}\)^{-1} s_{\theta}\(y, z \vert d, x; \theta^{\star}\)
\biggr) = 0,
\end{split}\end{align}
follows from
\bs
&Cov\biggl(
\zeta_3\(y, z, d, x;\theta^{\star}, \alpha\) +
\zeta_2\(y, z, d, x;\theta^{\star}, \alpha\),
s_{\theta}\(y, z \vert d, x; \theta^{\star}\)\biggr) = \mathcal J\(\theta^{\star}, \alpha\).
\end{split}\end{align}
This in turn follows from verifying that
\bs
& Cov\biggl(
\zeta_3\(y, z, d, x;\theta^{\star}, \alpha\),
s_{\theta}\(y, z \vert d, x; \theta^{\star}\)\biggr) \\
& =
\mu\biggl(\frac{p\(x\)}{p\(x;\theta^{\star}\)}
\(\frac{g_1\(x;\theta^{\star}\)}{\partial\theta} - k\(Q, \theta^{\star}, \alpha\)
\frac{c_1\(x;\theta^{\star}\)}{\partial\theta} \)
1\(\Delta\(x; k\(Q, \theta^{\star}, \alpha\)\) > 0\) \\
& + \(\frac{1-p\(x\)}{1-p\(x;\theta^{\star}\)}\)
\(\frac{g_0\(x;\theta^{\star}\)}{\partial\theta} - k\(Q, \theta^{\star}, \alpha\)
\frac{c_0\(x;\theta^{\star}\)}{\partial\theta} \)
1\(\Delta\(x; k\(Q, \theta^{\star}, \alpha\)\) \leq 0\)\biggr).
\end{split}\end{align}
\end{proof}

\begin{proof}[Proof of Corollary \ref{roc asymptotic result}]
The proof is similar to the proof of Theorem \ref{binary allocation asymptotic result}, we only need to point out that by chain rule
\bs
&\alpha'_{Q, p^{\star}, k}\(\mathcal{Q}, H, 0\) \\
& = \frac{1}{Q\(1-y\)}\[\mathcal{Q}\[\(1 - y\)\(1\(p^{\star}\(x\) > k\) - \alpha\(Q, p^{\star}, k\)\)\] + \int_{p^{\star}\(x\) = k}\frac{\(1 - p\(x\)\)H\(x\)\mu'\(x\)}{\Vert\nabla p^{\star}\(x\)\Vert}d\mathcal{H}_{n - 1}x\],
\end{split}\end{align}
so we get an additional $\alpha$ term.
\end{proof}

\begin{proof}[Proof of Proposition \ref{ROC parametric model based estimator more efficient than orthogonal estimator}]
The proof is similar to the proof of Proposition \ref{parametric model based estimator more efficient than orthogonal estimator}.
The score function and Hessian in \eqref{mle influence function} now take the form of
\bs
&s_{\theta_1}\(y \vert x; \theta_1\) = 
\frac{y - p\(x; \theta_1\)}{ p\(x; \theta_1\) \(1-p\(x; \theta_1\)\)} \frac{\partial}{\partial \theta_1} p\(x; \theta_1\) \quad \text{and} \quad H\(\theta_1\) = \mu\[\frac{\partial}{\partial\theta_1} s_{\theta_1}\(y \vert x; \theta_1\)
\].
\end{split}\end{align}
Similar to 
\eqref{psi function in the asymptotics of parametric model based estimator},
we define
\bs
&\zeta\(y, x; \theta, \alpha\) \coloneqq \frac{y -
k\(\mu, \theta, \alpha\)}{
1 - k\(\mu, \theta, \alpha\)
}
1\(p\(x; \theta\) >  k\(\mu, \theta, \alpha\)\)
- y\beta\(\mu, \theta, \alpha\)
+ \frac{k\(\mu, \theta, \alpha\)}{1 - k\(\mu, \theta, \alpha\)} \(1 - y\)\alpha, \\
&\psi\(x; \theta_1, \theta_2, \alpha\)
\coloneqq \int	\zeta\(y, x; \theta_{2}, \alpha\) f\(y \vert x; \theta_1\) dy \\
& = \frac{p\(x; \theta_1\)
- k\(\mu, \theta_2, \alpha\)
}{
1 - k\(\mu, \theta_2, \alpha\)
} 1\(p\(x; \theta_2\) >	 k\(\mu, \theta_2, \alpha\)\)
- p\(x; \theta_{1}\)\beta\(\mu, \theta_{2}, \alpha\) \\
& + \frac{k\(\mu, \theta_{2}, \alpha\)}{1 - k\(\mu, \theta_{2}, \alpha\)} \(1 - p\(x; \theta_{1}\)\)\alpha.
\end{split}\end{align}
Under general specification, the model-based ROC satisfies
\bsnumber\label{main asymptotic formula under general parametric specification of model based roc}
& \sqrt{n} \(\beta\(\hat\mu, \hat{\theta}, \alpha\) - \beta\(\mu, \theta^{\star}, \alpha\)\)\\
 &\rightsquigarrow
\frac{1}{\mu\(p\(x; \theta^{\star}\)\)}
\mathbb{Q}\biggl[
\psi\(x; \theta^{\star}, \theta^{\star}, \alpha\) - \mathcal J\(\theta^{\star},\alpha\) H\(\theta^{\star}\)^{-1} s_{\theta_1}\(y \vert x; \theta^{\star}\)
\biggr].
\end{split}\end{align}
First rewrite
\eqref{roc curve asymptotics under correct specification} as
\bs
\sqrt{n}\(
\beta\(\mathbb Q_n, \hat\theta, \alpha\) - \beta\(Q, \theta^{\star}, \alpha\)\)
\rightsquigarrow \frac{1}{Qy} Q\[\zeta\(y, x; \theta^{\star}, \alpha\]
\).
\end{split}\end{align}
Under correct specification, $\mu\(p\(x; \theta^\star\)\) = Q y$, and
\bs
-H\(\theta^{\star}\) = V\(\theta^{\star}\) =
\mathbb{E} \[s_{\theta_1}\(y \vert x; \theta^{\star}\) s^{T}_{\theta_1}\(y \vert x; \theta^{\star}\)\]
=\mu\[
\frac{1}{ p\(x; \theta^{\star}\) \(1-p\(x; \theta^{\star}\)\)} \frac{\partial}{\partial \theta_1} p\(x; \theta^{\star}\)
\frac{\partial}{\partial \theta^{T}_1} p\(x; \theta^{\star}\)
\].
\end{split}\end{align}
Then \eqref{main asymptotic formula under general parametric specification of model based roc} becomes
\bs
& \sqrt{n} \(\beta\(\hat\mu, \hat{\theta}, \alpha\) - \beta\(\mu, \theta^{\star}, \alpha\)\)
\rightsquigarrow
\frac{1}{Q y}
\mathbb{Q}\biggl[
\psi\(x; \theta^{\star}, \theta^{\star}, \alpha\) + \mathcal J\(\theta^{\star}, \alpha\) V\(\theta^{\star}\)^{-1} s_{\theta_1}\(y \vert x; \theta^{\star}\)
\biggr].
\end{split}\end{align}
The remaining arguments are identical to those following equation
\eqref{main asymptotic formula under correct parametric specification of model based estimator}.
\end{proof}

\begin{proof}[Proof of Corollary \ref{roc bootstrap}]
For the second part of this result, note that by assumption and Theorem 2.10.8 in \cite{van2023weak}, the functional class of pointwise products $\mathcal{G} \cdot \mathcal{F}$ is also Donsker.
Since we already derive the Hadamard derivative of $\beta\(\cdot, \cdot, \alpha\)$ in the proof of Corollary \ref{roc asymptotic result},
the bootstrap consistency is just a direct application of the conditional delta method Theorem 3.10.11 in \cite{van2023weak}.

For the first part, note that by Assumption \ref{joint convergence assumption}, we can evoke Theorem 3.7.1 in \cite{van2023weak} to get
\bs
\sup_{l \in BL_{1}\(\ell^{\infty}\(E\) \times \ell^{\infty}\(\mathcal{F}\)\)} & \left\vert \mathbb{E}_{M}l\(\sqrt{n}\(
\begingroup
\renewcommand*{\arraystretch}{1}
\begin{varmatrix}[delim = p, size = \normalsize, sep = 0.5\arraycolsep]
\hat{p} \\
\tilde{\mathbb{Q}}_{n}
\end{varmatrix}
\endgroup
-
\begingroup
\renewcommand*{\arraystretch}{1}
\begin{varmatrix}[delim = p, size = \normalsize, sep = 0.5\arraycolsep]
\hat{p} \\
\mathbb{Q}_{n}
\end{varmatrix}
\endgroup
\)\)
- \mathbb{E}l\(
\begingroup
\renewcommand*{\arraystretch}{1}
\begin{varmatrix}[delim = p, size = \normalsize, sep = 0.5\arraycolsep]
0 \\
\mathbb{Q}
\end{varmatrix}
\endgroup
\)
\right\vert \to 0, \\
& \mathbb{E}_{M}l\(\sqrt{n}\(
\begingroup
\renewcommand*{\arraystretch}{1}
\begin{varmatrix}[delim = p, size = \normalsize, sep = 0.5\arraycolsep]
\hat{p} \\
\tilde{\mathbb{Q}}_{n}
\end{varmatrix}
\endgroup
-
\begingroup
\renewcommand*{\arraystretch}{1}
\begin{varmatrix}[delim = p, size = \normalsize, sep = 0.5\arraycolsep]
\hat{p} \\
\mathbb{Q}_{n}
\end{varmatrix}
\endgroup
\)\)^{*} -
\mathbb{E}_{M}l\(\sqrt{n}\(
\begingroup
\renewcommand*{\arraystretch}{1}
\begin{varmatrix}[delim = p, size = \normalsize, sep = 0.5\arraycolsep]
\hat{p} \\
\tilde{\mathbb{Q}}_{n}
\end{varmatrix}
\endgroup
-
\begingroup
\renewcommand*{\arraystretch}{1}
\begin{varmatrix}[delim = p, size = \normalsize, sep = 0.5\arraycolsep]
\hat{p} \\
\mathbb{Q}_{n}
\end{varmatrix}
\endgroup
\)\)_{*} \to 0
\end{split}\end{align}
in outer probability. The results then follow from
Theorems 3.10.4 and 3.10.11 in \cite{van2023weak}. 
In fact, by definition of conditional weak convergence, we have 
\begin{align}\begin{split}\nonumber
\sup_{l \in BL_{1}\(\ell^{\infty}\(E\) \times \ell^{\infty}\(\mathcal{F}\)\)} \left\vert
\mathbb{E}_{M}\ell\(\sqrt{n}\beta'_{Q, p^{\star}, \alpha}\(\tilde{\mathbb{Q}}_{n} - \mathbb{Q}_{n}, \hat{p} - \hat{p}, 0\)\) 
- \mathbb{E}\ell\(\beta'_{Q, p^{\star}, \alpha}\(\mathbb{Q}, 0, 0\)\)
\right\vert  \to 0
\end{split}\end{align}
in outer probability. 
We also have that 
\begin{align}\begin{split}\nonumber
& \sup_{l \in BL_{1}\(\ell^{\infty}\(E\) \times \ell^{\infty}\(\mathcal{F}\)\)} \left\vert
\mathbb{E}_{M}\ell\(\sqrt{n}\(\beta\(\tilde{\mathbb{Q}}_{n}, \hat{p}, \alpha\) - \beta\(\mathbb{Q}, \hat{p}, \alpha\)\)\)
- \mathbb{E}_{M}\ell\(\sqrt{n}\beta'_{Q, p^{\star}, \alpha}\(\tilde{\mathbb{Q}}_{n} - \mathbb{Q}_{n}, 0, 0\)\)
\right\vert \\ 
& \leq \epsilon + 2\mathbb{P}_{M}\(\left\vert \sqrt{n}\(\beta\(\tilde{\mathbb{Q}}_{n}, \hat{p}, \alpha\) - \beta\(\mathbb{Q}_{n}, \hat{p}, \alpha\)\) - 
\sqrt{n}\beta'_{Q, p^{\star}, \alpha}\(\tilde{\mathbb{Q}}_{n} - \mathbb{Q}_{n}, 0, 0\)\right\vert^{*} > \epsilon\), 
\end{split}\end{align}
for all $\epsilon > 0$. 
By unconditional convergence and correct specification of the model, 
\begin{align}\begin{split}\nonumber
\sqrt{n}\(\beta\(\tilde{\mathbb{Q}}_{n}, \hat{p}, \alpha\) - \beta\(Q, p^{\star}, \alpha\)\) 
- \sqrt{n}\beta'_{Q, p^{\star}, \alpha}\(\tilde{\mathbb{Q}}_{n} - Q, 0, 0\) & = o_{\mathbb{P}^{*}}\(1\), \\
\sqrt{n}\(\beta\(\mathbb{Q}_{n}, \hat{p}, \alpha\) - \beta\(Q, p^{\star}, \alpha\)\) 
- \sqrt{n}\beta'_{Q, p^{\star}, \alpha}\(\mathbb{Q}_{n} - Q, 0, 0\) & = o_{\mathbb{P}^{*}}\(1\). \\
\end{split}\end{align}
Subtracting the second equation from the first equation gives 
\begin{align}\begin{split}\nonumber
\sqrt{n}\(\beta\(\tilde{\mathbb{Q}}_{n}, \hat{p}, \alpha\) - \beta\(\mathbb{Q}_{n}, \hat{p}, \alpha\)\) - 
\sqrt{n}\beta'_{Q, p^{\star}, \alpha}\(\tilde{\mathbb{Q}}_{n} - \mathbb{Q}_{n}, 0, 0\) = o_{\mathbb{P}^{*}}\(1\), 
\end{split}\end{align}
i.e., the left hand side converges unconditionally to $0$ in outer probability. 
Thus, the conditional probability $\mathbb{P}_{M}\(\left\vert \sqrt{n}\(\beta\(\tilde{\mathbb{Q}}_{n}, \hat{p}, \alpha\) - \beta\(\mathbb{Q}_{n}, \hat{p}, \alpha\)\) - 
\sqrt{n}\beta'_{Q, p^{\star}, \alpha}\(\tilde{\mathbb{Q}}_{n} - \mathbb{Q}_{n}, 0, 0\)\right\vert^{*} > \epsilon\)$ converges in outer mean. 
By Fubini inequality of outer expectation Theorem 1.2.6 in \cite{van2023weak} and the basic property of minimal measurable majorant Lemma 1.2.2 (\rmnum{6}) in \cite{van2023weak}, we get the conditional convergence in outer probability. 
\end{proof}

\begin{proof}[Proof of Lemma \ref{test Hadamard derivative}]
The only difference between Lemma \ref{test Hadamard derivative} and Corollary \ref{roc asymptotic result} is that here we have to take into account the randomness of $\hat{\alpha}_{H}$.
So we can define
\begin{align}\begin{split}\nonumber
\overline{\alpha}\(S, p, k\) & \coloneqq
\frac{S\[1\(p\(x\) > k\)\(1 - y\) - \(w\(1 - y\) - \alpha_{H}\(1 - y\)\)\]}{S\(1 - y\)} \\
& = \frac{S\[1\(p\(x\) > k\)\(1 - y\) - \(w - \alpha_{H}\)\(1 - y\)\]}{S\(1 - y\)}.
\end{split}\end{align}
then use delta method for inverse map to the $\overline{\alpha}\(\cdot, \cdot, \cdot\)$ at $S$, $p^{\star}$ and $k^{\star} = \inf\left\{k: \overline{\alpha}\(S, p^{\star}, k\) \leq \alpha_{H}\right\}$.
Note that for the underlying distribution $S$, $\alpha\(S, p, k\) = \overline{\alpha
}\(S, p, k\)$, since $\alpha_{H} = \frac{S\[w\(1 - y\)\]}{S\(1 - y\)}$.

We have
\begin{align}\begin{split}\nonumber
\overline{\alpha}'_{S, p^{\star}, k^{\star}}\(\mathcal{S}, H, 0\) & =
\frac{1}{S\(1 - y\)}\biggl[\mathcal{S}\[\(1 - y\)\(1\(p^{\star}\(x\) > k^{\star}\) - \(w - \alpha_{H}\) - \overline{\alpha}\(S, p^{\star}, k^{\star}\)\)\] \\
& + \int_{p^{\star}\(x\) = k^{\star}}\frac{\(1 - p\(x\)\)H\(x\)\mu'\(x\)}{\Vert\nabla p^{\star}\(x\)\Vert}d\mathcal{H}_{n - 1}x \biggr].
\end{split}\end{align}
So by the functional delta method for inverse map, there is
\begin{align}\begin{split}\nonumber
c'_{S, p^{\star}, \alpha_{H}}\(\mathcal{S}, H, 0\) = -\frac{\overline{\alpha}'_{S, p^{\star}, k^{\star}}}{f_{\alpha}\(S, p^{\star}, k^{\star}\)},
\end{split}\end{align}
where $f_{\alpha}\(S, p^{\star}, k^{\star}\)$ is defined the same as \eqref{beta alpha derivative wrt k}.

The Hadamard derivative of $\beta\(S, p, k\)$ is 
\begin{align}\begin{split}\nonumber
\beta'_{S, p^{\star}, k^{\star}}\(\mathcal{S}, H, K\) & =
\frac{1}{S\(y\)}\biggl[\mathcal{S}\[y\(1\(p^{\star}\(x\) > k^{\star}\) - \beta\(S, p^{\star}, k^{\star}\)\)\] \\
& + \int_{p^{\star}\(x\) = k^{\star}}\frac{p\(x\)H\(x\)\mu'\(x\)}{\Vert\nabla p^{\star}\(x\)\Vert}d\mathcal{H}_{n - 1}x
- \int_{p^{\star}\(x\) = k^{\star}}\frac{p\(x\)K\mu'\(x\)}{\Vert\nabla p^{\star}\(x\)\Vert}d\mathcal{H}_{n - 1}x
\biggr].
\end{split}\end{align}
By the chain rule of Hadamard derivative (see for example Lemma 3.10.3 in \cite{van2023weak}),
we get
\begin{align}\begin{split}\nonumber
\beta'_{S, p^{\star}}\(\mathcal{S}, H\) & =
\frac{1}{S\(y\)}\biggl\{\mathcal{S}\[y\(1\(p^{\star}\(x\) > k^{\star}\) - \beta\(S, p^{\star}, k^{\star}\)\)\] \\
& + \int_{p^{\star}\(x\) = k^{\star}}\frac{p\(x\)H\(x\)\mu'\(x\)}{\Vert\nabla p^{\star}\(x\)\Vert}d\mathcal{H}_{n - 1}x \\
& - \frac{f_{\beta}}{f_{\alpha}}\frac{S\(y\)}{S\(1 - y\)}
\biggl[\mathcal{S}\[\(1 - y\)\(1\(p^{\star}\(x\) > k^{\star}\) - \(w - \alpha_{H}\) - \overline{\alpha}\(S, p^{\star}, k^{\star}\)\)\] \\
& + \int_{p^{\star}\(x\) = k^{\star}}\frac{\(1 - p\(x\)\)H\(x\)\mu'\(x\)}{\Vert\nabla p^{\star}\(x\)\Vert}d\mathcal{H}_{n - 1}x
\biggr]\biggr\}.
\end{split}\end{align}
\end{proof}

\subsection{Proofs of section \ref{fast convergence rates}}

\begin{proof}[Proof of Lemma \ref{social welfare potential first order derivative lemma}]
Consider the decomposition
\begin{align}\begin{split}\nonumber
\gamma\(\lambda, g + t_{n}H_{n}, Q + t_{n}\mathcal{Q}_{n}\) - \gamma\(\lambda, Q, \mu\)
& = \underbrace{t_{n}\(\gamma\(\lambda, g + t_{n}H_{n}, \mathcal{Q}_{n}\) - \gamma\(\lambda, g, \mathcal{Q}_{n}\)\) }_{\(1\)} \\
& + \underbrace{\gamma\(\lambda, g + t_{n}H_{n}, Q\) - \gamma\(\lambda, g, Q\)}_{\(2\)} \\
& + \underbrace{\gamma\(\lambda, g, Q + t_{n}\mathcal{Q}_{n}\) - \gamma\(\lambda, g, Q\)}_{\(3\)}.
\end{split}\end{align}
Note that \eqref{social welfare potential derivative} is the limit of $\frac{\(3\)}{t_{n}}$ as $n \to \infty$, so we only need to show that $\lim_{n \to \infty}\frac{\(1\)}{t_{n}} = \lim_{n \to \infty}\frac{\(2\)}{t_{n}} = 0$.
For the $\(1\)$ term, we further decompose
\begin{align}\begin{split}\nonumber
\left\vert \gamma\(\lambda, g + t_{n}H_{n}, \mathcal{Q}_{n}\) - \gamma\(\lambda, g, \mathcal{Q}_{n}\)\right\vert
& \leq \left\vert \gamma\(\lambda, g + t_{n}H_{n}, \mathcal{Q}_{n}\) - \gamma\(\lambda, g + t_{n}H_{n}, \mathcal{Q}\)\right\vert \\
& + \left\vert \gamma\(\lambda, g, \mathcal{Q}_{n}\) - \gamma\(\lambda, g, \mathcal{Q}\)\right\vert
+ \left\vert \gamma\(\lambda, g + t_{n}H_{n}, \mathcal{Q}\) - \gamma\(\lambda, g, \mathcal{Q}\)\right\vert.
\end{split}\end{align}
The first two terms converge to $0$ since $\left\Vert \mathcal{Q}_{n} - \mathcal{Q}\right\Vert_{\ell^{\infty}\(\mathcal{F}\)} \to 0$.
The third term also converges to $0$ by the $L^{2}\(Q\)$ uniform continuity of $\mathcal{Q}$, since by Proposition \ref{surface integral continuity}, $g\(X\)$ has a continuous density.

By Theorem \ref{general k derivative} and the chain rule of Hadamard derivative, we have
\begin{align}\begin{split}\label{social welfare potential boundary derivative}
\lim_{n \to \infty}\frac{\(2\)}{t_{n}} = \sum_{j}\left\{\sum_{l \neq j}\int_{y > \tau_{\neg l}\(0\)}\[\int_{\Delta^{-1}_{j}\(c'\(y, 0, l\)\)}\frac{\(\lambda_{j}H_{j}\(x\) - \lambda_{l}H_{l}\(x\)\)f\(x\)\lambda_{j}g_{j}\(x\)}{J\Delta_{j}\(x\)}
d\mathcal{H}_{n - k}x\]d\mathcal{L}_{k - 1}y\right\},
\end{split}\end{align}
where $f$ is the density function of $X$ and $\Delta_{j} = \lambda_{j}g_{j} - \(\lambda_{0}g_{0}, \ldots, \lambda_{j - 1}g_{j - 1}, \lambda_{j + 1}g_{j + 1}, \ldots, \lambda_{J}g_{J}\)^{T}$.
Consider a Borel set $C \subset \mathbb{R}$, note that $J\Delta_{j}\(x\) > 0 \; a.e.$ on the support of $X$, there is
\begin{align}\begin{split}\label{another expression}
& \int_{c_{l} \in C}\[\int_{y > \tau_{\neg l}\(0\)}\[\int_{\Delta^{-1}_{j}\(c'\(y, c_{l}, l\)\)}\frac{\(\lambda_{j}H_{j}\(x\) - \lambda_{l}H_{l}\(x\)\)f\(x\)\lambda_{j}g_{j}\(x\)}{J\Delta_{j}\(x\)}
d\mathcal{H}_{n - k}x\]d\mathcal{L}_{k - 1}y\]dc_{l} \\
& = \int_{\tau_{\neg l}\(y\) > \tau_{\neg l}\(0\), \tau_{l}\(y\) \in C}\[\int_{\Delta^{-1}_{j}\(y\)}\frac{\(\lambda_{j}H_{j}\(x\) - \lambda_{l}H_{l}\(x\)\)f\(x\)\lambda_{j}g_{j}\(x\)}{J\Delta_{j}\(x\)}d\mathcal{H}_{n - k}x\]d\mathcal{L}_{k}y \\
& = \int\(\prod_{i \neq j, l}1\(\lambda_{j}g_{j}\(x\) > \lambda_{i}g_{i}\(x\)\)\)1\(\lambda_{j}g_{j}\(x\) - \lambda_{l}g_{l}\(x\) \in C\)\(\lambda_{j}H_{j}\(x\) - \lambda_{l}H_{l}\(x\)\)f\(x\)\lambda_{j}g_{j}\(x\)d\mathcal{L}_{n}x \\
& = \int_{y \in C}\[\int_{\Delta^{-1}_{j, l}\(y\)}\frac{\(\lambda_{j}H_{j}\(x\) - \lambda_{l}H_{l}\(x\)\)f\(x\)\lambda_{j}g_{j}\(x\)}{J\Delta_{j, l}\(x\)}\(\prod_{i \neq j, l}1\(\lambda_{j}g_{j}\(x\) > \lambda_{i}g_{i}\(x\)\)\)d\mathcal{H}_{n - 1}x\]d\mathcal{L}_{1}y,
\end{split}\end{align}
where $\Delta_{j, l} = \lambda_{j}g_{j} - \lambda_{l}g_{l}$.
Here we use the coarea formula two times.
By the arbitrariness of $C$, the right hand side of \eqref{another expression} and \eqref{derivative k>1} (without the summation) coincide.
So, switch the role of $j$ and $l$ in the summation of \eqref{social welfare potential boundary derivative}, we get $\lim_{n \to \infty}\frac{\(2\)}{t_{n}} = 0$ by the fact that $\lambda_{j}g_{j}\(x\) = \lambda_{l}g_{l}\(x\)$ on $\Delta^{-1}_{j, l}\(0\)$ (and $\Delta^{-1}_{l, j}\(0\)$).
\end{proof}

\begin{proof}[Proof of Proposition \ref{first MA second Hadamard result}]
Note that we have the following decomposition:
\begin{align}\begin{split}\nonumber
\gamma\(\lambda, \hat{g}, \mathbb{Q}_{n}\) - \gamma\(\lambda, g, \mathbb{Q}_{n}\)
& = \underbrace{\gamma\(\lambda, \hat{g}, \mathbb{Q}_{n} - Q\) - \gamma\(\lambda, g, \mathbb{Q}_{n} - Q\)}_{\(\rmnum{1}\)} \\
& + \underbrace{\gamma\(\lambda, \hat{g}, Q\) - \gamma\(\lambda, g, Q\)}_{\(\rmnum{2}\)}.
\end{split}\end{align}
Since Donskerness implies stochastic equicontinuity (see for example Theorem 1.5.7 and section 2.1.2 of \cite{van2023weak}), the $\(\rmnum{1}\)$ term is of $o_{\mathbb{P}^{*}}\(\frac{1}{\sqrt{n}}\)$.

For the $\(\rmnum{2}\)$ term, consider the second order Hadamard derivative of $\gamma\(\lambda, \cdot, Q\)$ at $g$, i.e.
\begin{align}\begin{split}\nonumber
\lim_{n \to \infty}\frac{\gamma\(\lambda, g + t_{n}H_{n}, Q\) - \gamma\(\lambda, g, Q\)}{t^{2}_{n}}
\end{split}\end{align}
since the first order Hadamard derivative is $0$.
Here we use the observation that Lemma \ref{social welfare potential first order derivative lemma} still holds when we adopt the allocation rule in \eqref{tie breaking definition to ensure at least at 1}.
Note that we have the following bound:
\begin{align}\begin{split}\nonumber
& \gamma\(\lambda, g, Q\) - \gamma\(\lambda, g', Q\) \\
& = \mathbb{E}\[\sum_{i\neq j} \(\lambda_i g_i\(X\) - \lambda_j g_j\(X\)\) \phi_i\(X; g\) \phi_j\(X; g'\)\] \\
&\leq \mathbb{E}\[\sum_{i\neq j} \left\vert\lambda_i g_i\(X\) - \lambda_j g_j\(X\) -
\(\lambda_i g'_i\(X\) - \lambda_j g'_j\(X\)\)
\right\vert \phi_i\(X; g\) \phi_j\(X; g'\)\] \\
& =  \mathbb{E}\[\sum_{i\neq j} \left\vert\lambda_i g_i\(X\) - \lambda_j g_j\(X\) -
\(\lambda_i g'_i\(X\) - \lambda_j g'_j\(X\)\)
\right\vert 1\(\phi_i\(X; g\) = \phi_j\(X; g'\) = 1\)\] \\
&\leq \mathbb{E}\[\sum_{i\neq j}
\left\vert\lambda_i g_i\(X\) - \lambda_j g_j\(X\) -
\(\lambda_i g'_i\(X\) - \lambda_j g'_j\(X\)\)
\right\vert
1\(\phi\(X; g\)\neq \phi\(X; g'\)\)\] \\
&\leq \mathbb{E}\[\sum_{i\neq j}
\left\vert\lambda_i g_i\(X\) - \lambda_j g_j\(X\) -
\(\lambda_i g'_i\(X\) - \lambda_j g'_j\(X\)\)\right\vert
\sum_{k \neq l} \left\vert
1_{kl}\(X; \lambda, g\) - 1_{kl}\(X; \lambda, g'\)
\right\vert\]
\end{split}\end{align}
For $g' = g + t_{n}H_{n}$, following \cite{chen2003estimation},
\begin{align}\begin{split}\nonumber
\left\vert 1_{kl}\(x; \lambda, g\) - 1_{kl}\(x; \lambda, g + t_{n}H_{n}\)\right\vert
\leq 1\(-2\delta_{n} \leq \lambda_{k}g_{k}\(x\) - \lambda_{l}g_{l}\(x\) \leq 2\delta_{n}\),
\end{split}\end{align}
where $\delta_{n} = \sup_{x \in K_{f}}\max _{0, \ldots, J}\vert t_{n}H_{nj}\(x\)\vert$, $K_{f}$ is the compact support of $X$.
Obviously, $\delta_{n} \leq M\vert t_{n}\vert$ for some constant $M > 0$.
So by the MA, we have
\begin{align}\begin{split}\nonumber
\gamma\(\lambda, g + t_{n}H_{n}, Q\) - \gamma\(\lambda, g, Q\) \leq C\vert t_{n}\vert^{1 + \alpha}
\end{split}\end{align}
for some constant $C > 0$.
Therefore, the second order Hadamard derivative we would like to calculate here is $0$.
Now, the delta method for second order Hadamard differentiable functional in Theorem 2.1 of \cite{chen2019inference} implies the $\(\rmnum{2}\)$ term is also of $o_{\mathbb{P}^{*}}\(\frac{1}{\sqrt{n}}\)$.
\end{proof}

\begin{proof}[Proof of Lemma \ref{marginally controlled lemma}]
Consider the distribution function $F$ of $X$, By MA at each point, $\mathbb{P}\left\{X = c\right\} = 0$, so $F$ is continuous.
We then have
\begin{align}\begin{split}\nonumber
F\(x + h\) - F\(x\) = \mathbb{P}\left\{x < X \leq x + h\right\} = \mathbb{P}\left\{x < X < x + h\right\} \leq C\(x\)h
\end{split}\end{align}
for all $x \in \[a, b\]$ and for all $h > 0$.
Therefore, for all $x \in \[a, b\]$, $D^{+}F\(x\)$ exists and is finite.
Now, we can invoke Theorem \ref{Dini derivative Lusin} to see that $F$ has Lusin property (N) on $\[a, b\]$.
Since $F$ is continuous, monotone (thus of bounded variation) and has Lusin property (N), by Theorem \ref{Lusin + BV = AC}, $F$ is absolutely continuous.
\end{proof}

\begin{proof}[Proof of Lemma \ref{density iff}]
Denote the support of $f$ as $K_{f}$.
If $Jh > 0 \; a.e.$ on $K_{f}$, by the coarea formula, we have with $B \in \mathcal{B}\(\mathbb{R}\)$,
\begin{align}\begin{split}\nonumber
\int_{h^{-1}\(B\)}f\(x\)d\mathcal{L}_{n}x
= \int_{B}\[\int_{h^{-1}\(y\)}\frac{f\(x\)}{Jh\(x\)}d\mathcal{H}_{n - 1}x\]dy,
\end{split}\end{align}
where $\mathcal{B}\(\mathbb{R}\)$ is the Borel $\sigma$-algebra on $\mathbb{R}$.
So
\begin{align}\begin{split}\nonumber
\int_{h^{-1}\(y\)}\frac{f\(x\)}{Jh\(x\)}d\mathcal{H}_{n - 1}x
\end{split}\end{align}
is the density of $h\(X\)$.
Conversely, assume there is a Borel set $B \subset K_{f}$ such that $\mathcal{L}_{n}\(B\) > 0$ and $Jh = 0$ on $B$.
On the one hand, by Morse-Sard theorem (see Theorem \ref{morse-sard theorem}), $\mathcal{L}_{1}\(h\(B\)\) \leq \mathcal{L}_{1}\(h\(Crit\(h\)\)\) = 0$ (see Corollary \ref{Sard type lemma} for $Crit\(\cdot\)$).
On the other hand, by definition
\begin{align}\begin{split}\nonumber
h_{\#}\(f\mathcal{L}_{n}\)\(h\(B\)\) = \int_{h^{-1}\(h\(B\)\)}f\(x\)d\mathcal{L}_{n}x > 0.
\end{split}\end{align}
Hence $h_{\#}\(f\mathcal{L}_{n}\)$ is not absolutely continuous with respect to $\mathcal{L}_{1}$.
\end{proof}

\begin{proof}[Proof of Theorem \ref{treatment effect allocation simple case}]
In the following, the notation $C$ should be understood as \textit{some constant}, and they are not necessarily equal to each other.
With notation $\mathbb{E}$ and $\mathbb{P}$, the expectation and probability are taken with respect to $\(X_{i}, Y_{i}, D_{i}\)$ ($o_{\mathbb{P}}\(1\)$ means convergence in probability with respect to the sample used to estimate $\hat{g}$ and $\hat{p}$).
The result will follow if we can show that $\Delta = o_\mathbb{P}\(1\)$, where
\bs
\Delta = \frac{1}{\sqrt{n'}}
\sum_{i=1}^{n'} \sum_{j=0}^J
\Biggl(\lambda_j
\hat \phi_j\(X_i\)
& \[
	\hat g_j\(X_i\) + \frac{D_{ij}}{\hat p_j\(X_i\)} \(Y_i - \hat g_j\(X_i\)\)
\] \\
& - \lambda_j
\phi_j^{*}\(X_i\)
\[
	g_j\(X_i\) + \frac{D_{ij}}{p_j\(X_i\)} \(Y_i - g_j\(X_i\)\)
\]\Biggl).
\end{split}\end{align}
For this purpose we decompose $\Delta = \Delta_1 + \Delta_2$, where
\bs
\Delta_1 =
\frac{1}{\sqrt{n'}}
\sum_{i=1}^{n'} \sum_{j=0}^J
\lambda_j
\hat \phi_j\(X_i\)
\[
	\hat g_j\(X_i\) + \frac{D_{ij}}{\hat p_j\(X_i\)} \(Y_i - \hat g_j\(X_i\)\)
	- g_j\(X_i\) - \frac{D_{ij}}{p_j\(X_i\)} \(Y_i - g_j\(X_i\)\)
\]
\end{split}\end{align}
and
\bs
\Delta_2 =
\frac{1}{\sqrt{n'}}
\sum_{i=1}^{n'} \sum_{j=0}^J
\lambda_j
\(
\hat \phi_j\(X_i\) - \phi_j^{*}\(X_i\)\)
\[
	g_j\(X_i\) + \frac{D_{ij}}{p_j\(X_i\)} \(Y_i - g_j\(X_i\)\)
\].
\end{split}\end{align}
Also define a linearized approximation of $\Delta_1$ as
\bs
\Delta_3 =
\frac{1}{\sqrt{n'}}
\sum_{i=1}^{n'}	 \sum_{j=0}^J
\lambda_j
\hat \phi_j\(X_i\)
\[
	\(1- \frac{D_{ij}}{p_j\(X_i\)}\)\(\hat g_j\(X_i\)-  g_j\(X_i\)\)
	- \frac{D_{ij}}{p_j^2\(X_i\)} \(Y_i - g_j\(X_i\)\)
	\(\hat p_j\(X_i\) - p_j\(X_i\)\) \].
\end{split}\end{align}
Then we can write, when $p\(x\)$ and $\hat{p}\(x\)$ is bounded away from zero,
\bs
\left\vert \Delta_1 - \Delta_3 \right\vert \leq C
\frac{1}{\sqrt{n'}}
\sum_{i=1}^{n'} \sum_{j=0}^J
\(\left\vert \hat g_j\(X_i\) - g_j\(X_i\)\right\vert^2
+ \left\vert \hat p_j\(X_i\) - p_j\(X
_i\)\right\vert^2\).
\end{split}\end{align}
By the essential supremum convergence rate assumption, $\vert \Delta_1 - \Delta_3 \vert = o_{\mathbb{P}}\(1\)$.
To see that $\Delta_3 = o_{\mathbb{P}}\(1\)$, note that by the split sample scheme, conditional on the
estimate $\hat p\(\cdot\)$ and $\hat g\(\cdot\)$,
$\mathbb{E} \Delta_3 = 0$,
\bs
Var\(\Delta_3\)
& \leq C\sum_{j=0}^J \lambda_j^2
\Biggl(\mathbb{E}\[
	\frac{1-p_j\(X_i\)}{p_j\(X_i\)} \(\hat g\(X_i\) - g\(X_i\)\)^2 \hat \phi_j\(X_i\)^2\] \\
& + \mathbb{E}\[\hat \phi_j\(X_i\)^2 Var\(Y_{ij} \vert X_i, D_{ij}=1\)
\frac{1}{p_j^3\(X_i\)} \(\hat p\(X_i\) - p\(X_i\)\)^2\]
\Biggr) \\
&\leq C\( \mathbb{E} \(\hat g\(X_i\) - g\(X_i\)\)^2
+ \mathbb{E} \(\hat p\(X_i\) - p\(X_i\)\)^2\).
\end{split}\end{align}
Next we decompose $\Delta_2 = \Delta_2^1 +  \Delta_2^2$, where
\bs
\Delta_2^1 & =
\frac{1}{\sqrt{n'}}
\sum_{i=1}^{n'} \sum_{j=0}^J\lambda_j
\(
\hat \phi_j\(X_i\) - \phi_j^{*}\(X_i\)\)
	\frac{D_{ij}}{p_j\(X_i\)} \(Y_i - g_j\(X_i\)\), \\
\Delta_2^2 & =
\frac{1}{\sqrt{n'}}
\sum_{i=1}^{n'} \sum_{j=0}^J\lambda_j
\(
\hat \phi_j\(X_i\) - \phi_j^{*}\(X_i\)\)  g_j\(X_i\).
\end{split}\end{align}
To see $\Delta_2^1 = o_\mathbb{P}\(1\)$, note that $\mathbb{E} \Delta_2^1 = 0$. It suffices to show
that $Var\(\Delta_2^1\)	 = o_\mathbb{P}\(1\)$ by bounding
\bs
Var\(\Delta_2^1\)
&\leq \sum_{j=0}^J  \lambda_j^2 \mathbb{E}
\[\(
\hat \phi_j\(X_i\) - \phi_j^{*}\(X_i\)\)^2 \frac{1}{p_j\(X_i\)}	 Var\(Y_j \vert X_i, D_{ij}=1\)\]\\
& \leq C
\sum_{j=0}^J \mathbb{E}\(
\hat \phi_j\(X_i\) - \phi_j^{*}\(X_i\)\)^2 \\
& \leq C \sum_{j=0}^J \sum_{p\neq q}  \mathbb{E} 1\(\phi^{*}\(X_i\)\neq \hat\phi\(X_i\)\)\\
&\leq C \sum_{j=0}^J \sum_{p\neq q} \sum_{k\neq l} \mathbb{E}\left\vert
1_{kl}^{*}\(X_i\) -\hat 1_{kl}\(X_i\)\right\vert\\
&\leq C
\sum_{k\neq l} \mathbb{P}\(-2 \delta_n \leq \lambda_k g_k\(X\)
- \lambda_l g_l\(X\)\leq 2 \delta_n\)
\end{split}\end{align}
where $\delta_{n} = C\max_{j = 0, \ldots, J}\esssup \left\vert \hat{g}_{j} - g_{j}\right\vert$.
By the MA, the right hand side of the last inequality is $O\(\delta_{n}\)$.

Since $\Delta^{2}_{2} < 0$, we have
\bs
-\Delta^{2}_{2} &=
\frac{1}{\sqrt{n'}}
\sum_{i=1}^{n'}	 \sum_{j=0}^J \lambda_j
\(
\phi_j^{*}\(X_i\) - \hat\phi_j\(X_i\)\)	 g_j\(X_i\)
\\
&
=
\frac{1}{\sqrt{n'}}
\sum_{i=1}^{n'}
\sum_{p\neq q} \(\lambda_p g_p\(X_i\) - \lambda_q g_q\(X_i\)\)
\phi_p^{*}\(X_i\) \hat\phi_q\(X_i\)\\
&\leq
\frac{1}{\sqrt{n'}}
\sum_{i=1}^{n'}
\sum_{p\neq q} \left\vert\lambda_p g_p\(X_i\) - \lambda_q g_q\(X_i\) -
\(\lambda_p \hat g_p\(X_i\) - \lambda_q \hat g_q\(X_i\)\)
\right\vert \phi_p^{*}\(X_i\) \hat\phi_q\(X_i\)\\
&\leq
\frac{1}{\sqrt{n'}}
\sum_{i=1}^{n'}
\sum_{p\neq q}\[
\left\vert\lambda_p g_p\(X_i\) - \lambda_q g_q\(X_i\) -
\(\lambda_p \hat g_p\(X_i\) - \lambda_q \hat g_q\(X_i\)\)
\right\vert
 1\(\phi^{*}\(X_i\)\neq \hat\phi\(X_i\)\)\] \\
&\leq
\frac{1}{\sqrt{n'}}
\sum_{i=1}^{n'}
\sum_{p\neq q}\[
\left\vert\lambda_p g_p\(X_i\) - \lambda_q g_q\(X_i\) -
\(\lambda_p \hat g_p\(X_i\) - \lambda_q \hat g_q\(X_i\)\)\right\vert
\sum_{k \neq l}\left\vert
1_{kl}^{*}\(X_i\) -\hat 1_{kl}\(X_i\)\right\vert\] \\
&\leq 2 \(J+1\)^2 \delta_n
\frac{1}{\sqrt{n'}}
\sum_{i=1}^{n'}
\sum_{k\neq l}
\vert
1_{kl}^{*}\(X_i\) -\hat 1_{kl}\(X_i\)
\vert.
\end{split}\end{align}
To show that $\Delta_2^2 =o_{\mathbb{P}}\(1\)$, it suffices to check that
\bs
- \mathbb{E} \Delta_2^2 &\leq \mathbb{E} C \delta_n \sqrt{n'} \sum_{k\neq l} \left\vert
1_{kl}^{*}\(X_i\) -\hat 1_{kl}\(X_i\)
\right\vert \\
&\leq C \sqrt{n'} \delta_n
\sum_{k\neq l} \mathbb{P}\(-2 \delta_n \leq \lambda_k g_k\(X\)
- \lambda_l g_l\(X\)\leq 2 \delta_n\)\\
&= C \sqrt{n'} \delta_n^2 = o_{\mathbb{P}}\(1\).
\end{split}\end{align}

If the conditions of Theorem \ref{general k derivative} hold, by Proposition \ref{surface integral continuity}, the vector random variable $g\(X_{i}\)$ will have a continuous and bounded density.
By coarea formula, for $\(\lambda_{k}, \lambda_{l}\) \neq \(0, 0\)$, we have
\begin{align}\begin{split}\label{density of summation}
f_{\lambda_{k}g_{k} - \lambda_{l}g_{l}}\(y\)
= \int_{\lambda_{k}g_{k} - \lambda_{l}g_{l} = y}\frac{f_{g}\(g\)}{\sqrt{\lambda^{2}_{k} + \lambda^{2}_{l}}}d\mathcal{H}_{J}g,
\end{split}\end{align}
where $f_{g}$ and $f_{\lambda_{k}g_{k} - \lambda_{l}g_{l}}$ are the density functions of $g\(X\)$ and $\lambda_{k}g_{k}\(X\) - \lambda_{l}g_{l}\(X\)$, respectively.
Since $\mathcal{D}$ is a compact subset of $\mathbb{R}^{J + 1}_{++}$ and $g$ is bounded, the right hand side of \eqref{density of summation} is uniformly bounded on $\mathcal{D}$.
It is direct to check that all probability bounds in the above are uniform on $\mathcal{D}$.
\end{proof}

\begin{proof}[Proof of Theorem \ref{classification simple case}]
Same as the proof of Theorem \ref{treatment effect allocation simple case}, the notation $C$
should be understood as \textit{some constant}, and they are not necessarily equal to each other.
The result will follow if we can show that $\Delta = o_{\mathbb{P}}\(1\)$, where
\bs
\Delta = \frac{1}{\sqrt{n'}} \sum_{i=1}^{n'}
\sum_{j=0}^J \lambda_j \(\hat \phi_j\(X_i\)  - \phi_j^*\(X_i\)\) Y_{ij} = \Delta_1 + \Delta_2
\end{split}\end{align}
with
\bs
\Delta_1 =
\frac{1}{\sqrt{n'}}
\sum_{i=1}^{n'} \sum_{j=0}^J\lambda_j
\(
\hat \phi_j\(X_i\) - \phi_j^{*}\(X_i\)\)  g_j\(X_i\),
\end{split}\end{align}
and
\bs
\Delta_2 = \frac{1}{\sqrt{n'}}
\sum_{i=1}^{n'} \sum_{j=0}^J\lambda_j
\(
\hat \phi_j\(X_i\) - \phi_j^{*}\(X_i\)\)  \(Y_{ij} - g_j\(X_i\)\).
\end{split}\end{align}
Since $\mathbb E \Delta_2 = 0$, it suffices to show $\Delta_2 = o_\mathbb{P}\(1\)$ by bounding,
\bs
Var\(\Delta_2\) & \leq
C\sum_{j=0}^J \mathbb E \(\hat \phi\(X_i\) - \phi_j^*\(X_i\)\)^2 Var\(Y_j \vert X_i\) \\
& \leq C \sum_{j=0}^J \mathbb E\(\hat \phi_j\(X_i\) - \phi_j^*\(X_i\)\)^2 \\
& \leq	C \sum_{j=0}^J \sum_{p\neq q}  \mathbb{E} 1\(\phi^{*}\(X_i\)\neq \hat\phi\(X_i\)\)\\
& \leq C \sum_{j=0}^J \sum_{p\neq q} \sum_{k\neq l} \mathbb{E}\left\vert
1_{kl}^{*}\(X_i\) -\hat 1_{kl}\(X_i\)\right\vert\\
& \leq C
\sum_{k\neq l} \mathbb{P}\(-2 \delta_n \leq \lambda_k g_k^{*}\(X\)
- \lambda_l g_l^{*}\(X\)\leq 2 \delta_n\)
\end{split}\end{align}
where $\delta_{n} = C\max_{j = 0, \ldots, J}\esssup \left\vert \hat{g}_{j} - g_{j}\right\vert$.

We can also bound $\vert \Delta_1\vert$ by 
\bs
-\Delta_1 
&\leq
\frac{1}{\sqrt{n'}}
\sum_{i=1}^{n'}
\sum_{p\neq q} \left\vert\lambda_p g_p\(X_i\) - \lambda_q g_q\(X_i\) -
\(\lambda_p \hat g_p\(X_i\) - \lambda_q \hat g_q\(X_i\)\)
\right\vert \phi_p^{*}\(X_i\) \hat\phi_q\(X_i\)\\
&\leq
\frac{1}{\sqrt{n'}}
\sum_{i=1}^{n'}
\sum_{p\neq q}\[
\left\vert\lambda_p g_p\(X_i\) - \lambda_q g_q\(X_i\) -
\(\lambda_p \hat g_p\(X_i\) - \lambda_q \hat g_q\(X_i\)\)
\right\vert
 1\(\phi^{*}\(X_i\)\neq \hat\phi\(X_i\)\)\] \\
&\leq
\frac{1}{\sqrt{n'}}
\sum_{i=1}^{n'}
\sum_{p\neq q}\[
\left\vert\lambda_p g_p\(X_i\) - \lambda_q g_q\(X_i\) -
\(\lambda_p \hat g_p\(X_i\) - \lambda_q \hat g_q\(X_i\)\)\right\vert
\sum_{k \neq l}\left\vert
1_{kl}^{*}\(X_i\) -\hat 1_{kl}\(X_i\)\right\vert\] \\
&\leq 2 \(J+1\)^2 \delta_n
\frac{1}{\sqrt{n'}}
\sum_{i=1}^{n'}
\sum_{k\neq l}
\vert
1_{kl}^{*}\(X_i\) -\hat 1_{kl}\(X_i\)
\vert.
\end{split}\end{align}
Given that $-\Delta_{1} \geq 0$,
to show that $\Delta_{1} = o_{\mathbb{P}}\(1\)$, it suffices to check that
\bs
- \mathbb{E} \Delta_{1} &\leq \mathbb{E} C \delta_n \sqrt{n'} \sum_{k \neq l} \left\vert
1_{kl}^{*}\(X_i\) -\hat 1_{kl}\(X_i\)
\right\vert \\
& \leq C \sqrt{n'} \delta_n
\sum_{k\neq l} \mathbb{P}\(-2 \delta_n \leq \lambda_k g_k\(X\)
- \lambda_l g_l\(X\)\leq 2 \delta_n\) \\
& = C \sqrt{n'} \delta_n^2 = o_{\mathbb{P}}\(1\).
\end{split}\end{align}
\end{proof}

\begin{proof}[Proof of Corollary \ref{multiplier bootstrap}]
Without loss of generality, we consider $W_{n'i} \sim Exponential(1)$.
First note that for
\begin{align}\begin{split}\nonumber
\Delta = \frac{1}{\sqrt{n'}}
\sum_{i=1}^{n'} \(W_{n'i} - 1\)\sum_{j=0}^J
\Biggl(\lambda_j
\hat \phi_j\(X_i\)
& \[
	\hat g_j\(X_i\) + \frac{D_{ij}}{\hat p_j\(X_i\)} \(Y_i - \hat g_j\(X_i\)\)
\] \\
& - \lambda_j
\phi_j^{*}\(X_i\)
\[
	g_j\(X_i\) + \frac{D_{ij}}{p_j\(X_i\)} \(Y_i - g_j\(X_i\)\)
\]\Biggl),
\end{split}\end{align}
and
\begin{align}\begin{split}\nonumber
\Delta = \frac{1}{\sqrt{n'}} \sum_{i=1}^{n'}
\(W_{n'i} - 1\)\sum_{j=0}^J \lambda_j \(\hat \phi_j\(X_i\)  - \phi_j^*\(X_i\)\) Y_{ij},
\end{split}\end{align}
we can use the same bounding strategy as in the proof of Theorem \ref{treatment effect allocation simple case} and Theorem \ref{classification simple case} to get $\Delta = o_{\mathbb{P}}\(1\)$,
since $W_{n'i}$ are independent of $\(Y, X, D\)$ or $\(Y, X\)$ and $\mathbb{E}W_{n'i} = Var\(W_{n'i}\) = 1$.
By Theorem 3.7.13 in \cite{van2023weak} conditional weak convergence in probability of exchangeable bootstrap, we have
\begin{align}\begin{split}\nonumber
\frac{1}{\sqrt{n'}}
\sum_{i=1}^{n'} \(\frac{W_{n'i}}{\sum^{n'}_{i = 1}W_{n'i}} - 1\)\sum_{j=0}^J
\lambda_j
\phi_j^{*}\(X_i\)
\[
	g_j\(X_i\) + \frac{D_{ij}}{p_j\(X_i\)} \(Y_i - g_j\(X_i\)\)
\] & \weakconv \mathbb{G}, \\
\frac{1}{\sqrt{n'}} \sum_{i=1}^{n'}
\(\frac{W_{n'i}}{\sum^{n'}_{i = 1}W_{n'i}} - 1\)\sum_{j=0}^J \lambda_j \phi_j^*\(X_i\) Y_{ij}  & \weakconv \mathbb{G}.
\end{split}\end{align}
See also Example 3.7.9 of \cite{van2023weak}.
The difference
\begin{align}\begin{split}\nonumber
& \frac{1}{\sqrt{n'}}
\sum_{i=1}^{n'} \(\frac{W_{n'i}}{\sum^{n'}_{i = 1}W_{n'i}} - W_{n'i}\)\sum_{j=0}^J
\lambda_j
\phi_j^{*}\(X_i\)
\[
	g_j\(X_i\) + \frac{D_{ij}}{p_j\(X_i\)} \(Y_i - g_j\(X_i\)\)
\] \\
& = \(\frac{1}{\sum^{n'}_{i = 1}W_{n'i}} - 1\)\frac{1}{\sqrt{n'}}
\sum_{i=1}^{n'} W_{n'i}\sum_{j=0}^J
\lambda_j
\phi_j^{*}\(X_i\)
\[
	g_j\(X_i\) + \frac{D_{ij}}{p_j\(X_i\)} \(Y_i - g_j\(X_i\)\)
\] \\
& = o_{\mathbb{P}}\(1\)O_{\mathbb{P}}\(1\) = o_{\mathbb{P}}\(1\).
\end{split}\end{align}
A similar argument is true for the difference
\begin{align}\begin{split}\nonumber
\frac{1}{\sqrt{n'}}
\sum_{i=1}^{n'} \(\frac{W_{n'i}}{\sum^{n'}_{i = 1}W_{n'i}} - W_{n'i}\)\sum_{j=0}^J
\lambda_j \phi_j^*\(X_i\) Y_{ij}.
\end{split}\end{align}
By Fubini Theorem unconditional convergence implies conditional convergence in the $L^{1}$ sense and thus in probability.
Therefore, we get that all residual terms conditional convergence to $0$ in probability, i.e. $\sqrt{n'}\(\tilde{\beta} - \hat{\beta}\) \weakconv \mathbb{G}$.
\end{proof}

\subsection{Supplementary results for section \ref{constrained and roc}}\label{supplementary discussion on misspecified parametric models}

To describe the asymptotic
distribution in parametric misspecification case, assume
\begin{align}\begin{split}\label{influence function for theta}
\sqrt{n}\(\hat\theta - \theta^{\star}\) =
\frac{1}{\sqrt{n}} \sum_{i = 1}^n m_i +
o_\mathbb{P}\(1\), \quad\text{where}\quad
m_i = m\(\omega_i, x_i\).
\end{split}\end{align}
The asymptotic distribution of a misspecified parametric model is then implied by
\eqref{main asymptotic formula} to be
\bsnumber\label{main asymptotic formula under mis parametric specification}
& \sqrt{n}\(\beta\(\mathbb{Q}_{n}, \hat{\theta}, \alpha\) - \beta\(Q, \theta^{\star}, \alpha\)\) \rightsquigarrow
\mathbb{Q}\biggl[\(y_{1} - y_{0}\)1\(\Delta\(x; k\(Q, \theta^{\star}, \alpha\); \theta^{\star}\) > 0\) + y_{0} \\
& - 
\frac{f_{\beta}\(Q, \theta^{\star}, k\(Q, \theta^{\star}, \alpha\)\)}{f_{\alpha}\(Q, \theta^{\star}, k\(Q, \theta^{\star}, \alpha\)\)}
\(
\(z_{1} - z_{0}\)1\(\Delta\(x; k\(Q, \theta^{\star}, \alpha\) > 0\); \theta^{\star}\) + z_{0}\)
+ \mathcal G\(Q, \theta^{\star}, \alpha\) m
\biggr],
\end{split}\end{align}
where
\bs
&\mathcal G\(Q, \theta^{\star}, \alpha\) =
\int_{\Delta\(x; k\(Q, \theta^{\star}, \alpha\);\theta^{\star}\) = 0}\frac{\(g_{1}\(x\) - g_{0}\(x\)\)\(
\(
\frac{\partial}{\partial\theta} \(g_1\(x; \theta^{\star}\) - g_0\(x; \theta^{\star}\)\)
\)
\)\mu'\(x\)}{\Vert\nabla\Delta\(x; k\(Q, \theta^{\star}, \alpha\);\theta^{\star}\)\Vert}d\mathcal{H}_{n - 1}x \\
& 
- \int_{\Delta\(x; k\(Q, \theta^{\star}, \alpha\); \theta^{\star}\) = 0}\frac{
k\(Q, \theta^{\star}, \alpha\)
\(g_{1}\(x\) - g_{0}\(x\)\)
\(
\frac{\partial}{\partial\theta} \(c_1\(x; \theta^{\star}\) - c_0\(x; \theta^{\star}\)\)
\)\mu'\(x\)}{\Vert\nabla\Delta\(x; k\(Q, \theta^{\star}, \alpha\);\theta^{\star}\)\Vert}d\mathcal{H}_{n - 1}x \\
& - \frac{f_{\beta}\(Q, \theta^{\star}, k\(Q, \theta^{\star}, \alpha\)\)}{f_{\alpha}\(Q, \theta^{\star}, k\(Q, \theta^{\star}, \alpha\)\)}
  \Biggl[
\int_{\Delta\(x; k\(Q, \theta^{\star}, \alpha\);\theta^{\star}\) = 0}\frac{\(c_{1}\(x\) - c_{0}\(x\)\)\(
\frac{\partial}{\partial\theta} \(g_1\(x; \theta^{\star}\) - g_0\(x; \theta^{\star}\)\)
\)\mu'\(x\)}{\Vert\nabla\Delta\(x; k\(Q, \theta^{\star}, \alpha\);\theta^{\star}\)\Vert}d\mathcal{H}_{n - 1}x \\
& 
- \int_{\Delta\(x; k\(Q, \theta^{\star}, \alpha\);\theta^{\star}\) = 0}\frac{
k\(Q, \theta^{\star}, \alpha\)
\(c_{1}\(x\) - c_{0}\(x\)\)\(
\frac{\partial}{\partial\theta} \(c_1\(x; \theta^{\star}\) - c_0\(x; \theta^{\star}\)\)
\)\mu'\(x\)}{\Vert\nabla\Delta\(x; k\(Q, \theta^{\star}, \alpha\);\theta^{\star}\)\Vert}d\mathcal{H}_{n - 1}x
\Biggl].
\end{split}\end{align}
In the above, as in Theorem \ref{binary allocation asymptotic result},
\bs
f_{\beta}\(Q, \theta^{\star}, k\) &
= -\int_{\Delta\(x; k; \theta^{\star}\) = 0}
\frac{\(g_{1}\(x\) - g_{0}\(x\)\)\(c_{1}\(x;\theta^{\star}\) - c_{0}\(x;\theta^{\star}\)\)\mu'\(x\)}{\Vert\nabla\Delta\(x; k;\theta^{\star}\)\Vert}d\mathcal{H}_{n - 1}x, \\
f_{\alpha}\(Q, \theta^{\star}, k\) &
= -\int_{\Delta\(x; k;\theta^{\star}\) = 0}
\frac{\(c_{1}\(x\) - c_{0}\(x\)\)\(c_{1}\(x;\theta^{\star}\) - c^{\star}_{0}\(x; \theta^{\star}\)\)\mu'\(x\)}{\Vert\nabla\Delta\(x; k;\theta^{\star}\)\Vert}d\mathcal{H}_{n - 1}x.
\end{split}\end{align}
On the one hand, Theorem \ref{binary allocation asymptotic result} and an adaption of Corollary \ref{roc bootstrap} validates
the consistency
of bootstrap inference for misspecified parametric models.
On the other hand, under mild conditions, the influence function in
\eqref{main asymptotic formula under mis parametric specification} can be estimated using a kernel function $\kappa\(\cdot\) \geq 0$, $\int \kappa\(u\) du = 1$, and a bandwidth parameter $h$ such that $h\rightarrow 0$ and $n h / \log n \rightarrow \infty$ as $n\rightarrow \infty$. Define
\bs
\hat f_{\beta}\(\hat\mu, \hat\theta, k\)
&=
\frac1n \sum_{i=1}^n \frac{1}{h}
\kappa\(
\frac{
\Delta\(x_i, k, \hat\theta\) 
}{h}
\)
\(y_{i1} - y_{i0}\)\(
c_1\(x_i, \hat\theta\) -
c_0\(x_i, \hat\theta\)
\),\\
\hat f_{\alpha}\(\hat\mu, \hat\theta, k\) & =
\frac1n \sum_{i=1}^n \frac{1}{h}
\kappa\(
\frac{
\Delta\(x_i, k, \hat\theta\)
}{h}
\)
\(z_{i1} - z_{i0}\)\(
c_1\(x_i, \hat\theta\) -
c_0\(x_i, \hat\theta\)
\),
\end{split}\end{align}
and
\bs
\hat{\mathcal G}_1\(\hat\mu, \hat\theta, \alpha\) & =
\frac1n \sum_{i=1}^n \frac{1}{h}
\kappa\(
\frac{
\Delta\(x_i, k\(\hat\mu, \hat\theta,\alpha\), \hat\theta\)
}{h}
\)
\(y_{i1} - y_{i0}\)\\
& 
\[
\frac{\partial}{\partial\theta} \(g_1\(x; \hat\theta\) - g_0\(x; \hat\theta\)\)
-k\(\hat\mu, \hat\theta, \alpha\)
\frac{\partial}{\partial\theta} \(c_1\(x; \hat\theta\) - c_0\(x; \hat\theta\)\) \],\\
\hat{\mathcal G}_2\(\hat\mu, \hat\theta, \alpha\) & =
\frac1n \sum_{i=1}^n \frac{1}{h}
\kappa\(
\frac{
\Delta\(x_i, k\(\hat\mu, \hat\theta,\alpha\), \hat\theta\)
}{h}
\)
\(z_{i1} - z_{i0}\)\\
& 
\[
\frac{\partial}{\partial\theta} \(g_1\(x; \hat\theta\) - g_0\(x; \hat\theta\)\)
-k\(\hat\mu, \hat\theta, \alpha\)
\frac{\partial}{\partial\theta} \(c_1\(x; \hat\theta\) - c_0\(x; \hat\theta\)\) \],\\
\end{split}\end{align}
where
\bs
\Delta\(x_i, k\(\hat\mu, \hat\theta,\alpha\), \hat\theta\)
= g_1\(x; \hat\theta\) - g_0\(x; \hat\theta\)
- k\(\hat\mu, \hat\theta, \alpha\)
\(
c_1\(x; \hat\theta\) - c_0\(x; \hat\theta\)
\).
\end{split}\end{align}
Then the influence function in \eqref{main asymptotic formula under mis parametric specification} can be consistently estimated by
\bs
\psi\(y_{i1}, y_{i0}, z_{i1}, z_{i0}, x_i, \hat \mu, \hat\theta, \alpha\)
& =
\(y_{i1} - y_{i0}\)1\(\Delta\(x_i; k\(\hat\mu, \hat\theta, \alpha\); \hat\theta\) > 0\) + y_{i0} \\
& -
\frac{\hat f_{\beta}\(\hat\mu, \hat\theta, k\(\hat\mu, \hat\theta, \alpha\)\)}{\hat f_{\alpha}\(\hat\mu, \hat\theta, k\(\hat\mu, \hat\theta, \alpha\)\)}
\(
\(z_{i1} - z_{i0}\)1\(\Delta\(x; k\(\hat\mu, \hat\theta, \alpha\) > 0\);\hat\theta\) + z_{i0}\)\\
& + \[
\hat{\mathcal G}_1\(\hat\mu, \hat\theta, \alpha\)  -
\frac{\hat f_{\beta}\(\hat\mu, \hat\theta, k\(\hat\mu, \hat\theta, \alpha\)\)}{\hat f_{\alpha}\(\hat\mu, \hat\theta, k\(\hat\mu, \hat\theta, \alpha\)\)}
\hat{\mathcal G}_2\(\hat\mu, \hat\theta, \alpha\)
\]\hat{m}_i,
\end{split}\end{align}
where $\hat{m}_i$ is a uniformly consistent estimate of the influence function $m_i$ in
\eqref{influence function for theta}.

Still assume \eqref{influence function for theta}, it then follows from
Corollary \ref{roc asymptotic result}  that
\bs
& \sqrt{n}\(\beta\(\mathbb{Q}_{n}, \hat{\theta}, \alpha\) - \beta\(Q, \theta^{\star}, \alpha\)\) \rightsquigarrow
\frac{1}{Q y} \mathbb{Q}\biggl[y 1\(\Delta\(x; k\(Q, \theta^{\star}, \alpha\); \theta^{\star}\) > 0\)  \\
&\hspace{.01in}-
\frac{\(Q y\) f_{\beta}\(Q, \theta^{\star}, k\(Q, \theta^{\star}, \alpha\)\)}{\(1-Q y\) f_{\alpha}\(Q, \theta^{\star}, k\(Q, \theta^{\star}, \alpha\)\)}
\(
\(1-y\)1\(\Delta\(x; k\(Q, \theta^{\star}, \alpha\) ;\theta^{\star}\) > 0\)\)
+ \mathcal G\(Q, \theta^{\star}, \alpha\) m
\biggr],
\end{split}\end{align}
where
\bs
&\mathcal G\(Q, \theta^{\star}, \alpha\) =
\int_{p\(x; \theta^{\star}\)=k\(Q, \theta^{\star}, \alpha\)}\frac{
p\(x\)
\(
\frac{\partial}{\partial\theta} p\(x; \theta^{\star}\)
\)\mu'\(x\)}{
\Vert\nabla p\(x; \theta^{\star}\)\Vert}d\mathcal{H}_{n - 1}x \\
& - \frac{\(Q y\) f_{\beta}\(Q, \theta^{\star}, k\(Q, \theta^{\star}, \alpha\)\)}{\(1-Qy\) f_{\alpha}\(Q, \theta^{\star}, k\(Q, \theta^{\star}, \alpha\)\)}
 \int_{p\(x; \theta^{\star}\) = k\(Q, \theta^{\star}, \alpha\)}\frac{
\(1 - p\(x\)\)\(
\frac{\partial}{\partial\theta} p\(x; \theta^{\star}\)
\)\mu'\(x\)}{
\Vert\nabla p\(x; \theta^{\star}\)\Vert}d\mathcal{H}_{n - 1}x
\end{split}\end{align}
In the above, as in Corollary \ref{roc asymptotic result},
\bs
f_{\beta}\(Q, \theta^{\star}, k\) &
= - \frac{1}{Q y} \int_{p\(x;\theta^{\star}\) = k}
\frac{p\(x\) 
\mu'\(x\)}{\Vert\nabla p\(x; \theta^{\star}\)\Vert}d\mathcal{H}_{n - 1}x, \\
f_{\alpha}\(Q, \theta^{\star}, k\) &
= -\frac{1}{1-Qy}\int_{p\(x; \theta^{\star}\) = k}
\frac{\(1-p\(x\)\)
\mu'\(x\)}{\Vert\nabla p\(x; \theta^{\star}\)\Vert}d\mathcal{H}_{n - 1}x.
\end{split}\end{align}

Then the influence function 
can be consistently estimated by
\bs
\psi\(y_{i}, x_i, \hat \mu, \hat\theta, \alpha\)
& =
\frac{1}{\hat Q y} \biggl[ y_{i} 1\(p\(x_i; \hat\theta\) > k\(\hat\mu, \hat\theta, \alpha\)\)  \\
& -
\frac{\hat Qy \hat f_{\beta}\(\hat\mu, \hat\theta, k\(\hat\mu, \hat\theta, \alpha\)\)}{\(1-\hat Qy\) \hat f_{\alpha}\(\hat\mu, \hat\theta, k\(\hat\mu, \hat\theta, \alpha\)\)}
\(
\(1-y_i\)1\(p\(x; \hat\theta\) > k\(\hat\mu, \hat\theta, \alpha\)\) \)\\
& + \[
\hat{\mathcal G}_1\(\hat\mu, \hat\theta, \alpha\)  -
\frac{\hat Qy \hat f_{\beta}\(\hat\mu, \hat\theta, k\(\hat\mu, \hat\theta, \alpha\)\)}{ \(1-\hat Qy\) \hat f_{\alpha}\(\hat\mu, \hat\theta, k\(\hat\mu, \hat\theta, \alpha\)\)}
\hat{\mathcal G}_2\(\hat\mu, \hat\theta, \alpha\)
\]\hat{m}_i\biggr],
\end{split}\end{align}
where $\hat{m}_i$ is still a uniformly consistent estimate of the influence function $m_i$ in
\eqref{influence function for theta}.
In the above $\hat Q y = \frac1n \sum_{i=1}^n y_i$,
\bs
\hat f_{\beta}\(\hat\mu, \hat\theta, k\)
&=
- \frac{1}{\hat Q y} \frac1n \sum_{i=1}^n \frac{1}{h}
\kappa\(
\frac{
p\(x_i, \hat\theta\) - k
}{h}
\)
y_i, \\ 
\hat f_{\alpha}\(\hat\mu, \hat\theta, k\) & =
- \frac{1}{1-\hat Q y}
\frac1n \sum_{i=1}^n \frac{1}{h}
\kappa\(
\frac{
p\(x_i, \hat\theta\) -k
}{h}
\)
\(1 - y_i\), 
\end{split}\end{align}
and
\bs
\hat{\mathcal G}_1\(\hat\mu, \hat\theta, \alpha\) =&
\frac1n \sum_{i=1}^n \frac{1}{h}
\kappa\(
\frac{
p\(x_i, \hat\theta\) - k\(\hat\mu, \hat\theta,\alpha\)
}{h}
\)
y_{i}
\frac{\partial}{\partial\theta} p\(x; \hat\theta\)\\ 
\hat{\mathcal G}_2\(\hat\mu, \hat\theta, \alpha\) =&
\frac1n \sum_{i=1}^n \frac{1}{h}
\kappa\(
\frac{
p\(x_i, \hat\theta\) -	k\(\hat\mu, \hat\theta,\alpha\)
}{h}
\)
\(1 - y_{i}\)
\frac{\partial}{\partial\theta} p\(x; \hat\theta\).
\end{split}\end{align}
See for example \cite{donald2014estimation} for uniformly consistent kernel estimation.

\subsection{Sub (super) and Fr\'{e}chet differentiability}\label{frechet differentiability section}

In this section, we provide a mathematical justification of defining
the value function of the optimal treatment choice (policy / individual treatment rule) as a \textit{social welfare potential function}.
Using the seminal convexity results of \cite{bolte2021conservative},
we show that
this value function satisfies the precise definition of a potential function.
We further demonstrate that the convexity of the social welfare potential function leads to a surprising generic Fr\'{e}chet differentiability property.

\paragraph{Social welfare potential function}
Given the convexity 
in $\lambda$ given $g\(\cdot\)$,
a subgradient (see definition \ref{sub sup differentiability}) of $\gamma\(\lambda, g\)$ in $\lambda$ is given by
\bs
p\(\lambda\)
= \(
\mathbb{E} \phi^*_0\(X; \lambda, g\) g_0\(X\), \ldots, \mathbb{E} \phi^*_J\(X; \lambda, g\) g_J\(X\)
\)^T.
\end{split}\end{align}
where $\phi^{*}\(\cdot; \lambda, g\)$ is one of the optimal allocation under $\(\lambda, g\)$.
To verify by direct calcuation,
\bs
\gamma\(\lambda, g\) & + \left\langle p\(\lambda\), \lambda' - \lambda \right\rangle \\
&= \sum_j \mathbb{E} \lambda_j \phi_j^*\(X; \lambda, g\) g_j\(X\)
+ \sum_j \mathbb{E} \(\lambda_j'-\lambda_j\) \phi_j^*\(X; \lambda, g\) g_j\(X\) \\
&= \sum_j \mathbb{E} \lambda_j' \phi_j^*\(X; \lambda, g\) g_j\(X\)
\leq
\sum_j \mathbb{E} \lambda_j' \phi_j^*\(X; \lambda', g\) g_j\(X\)
= \gamma\(\lambda', g\).
\end{split}\end{align}
The rest of this subsection rigorously justifies calling $\gamma\(\lambda, g\)$ a social welfare {\it potential} function, with $p\(\lambda\)$ belonging to the vector field generated by
$\gamma\(\lambda, g\)$.
Recall that in a set-value map $D: \mathbb{R}^{J+1}\rightrightarrows \mathbb{R}^{J+1}$, the image of a point under the map is a set.
A function is then understood as a singleton valued set-valued map. 
The following definition is needed:

\begin{definition}{(Conservative set-valued fields, \cite{bolte2021conservative})}\label{definition 1}
	Let $D: \mathbb{R}^p \rightrightarrows \mathbb{R}^p$ be a set-valued map. $D$ is a conservative (set-valued)
	field whenever it has closed graph, nonempty compact values, and for any
	absolutely continuous loop $\ell: \[0,1\] \to \mathbb{R}^p$, such that $\ell\(0\)=\ell\(1\)$, the Aumann integral of $t \rightrightarrows \left\langle\dot{\ell}\(t\), D\(\ell\(t\)\)\right\rangle$ is $\{0\}$, namely,
	\bsnumber\label{aumann integral}
	\int^{1}_{0} & \left\langle \dot{\ell}\(t\), D\(\ell\(t\)\)\right\rangle dt \\
    & \coloneqq
    \left\{\int_{0}^{1} \omega(t)dt: \omega\(t\): [0, 1] \to \mathbb{R}
	\text{ is a measurable selection of}
	\left\langle\dot{\ell}\(t\), D\(\ell\(t\)\)\right\rangle\right\} = \{0\}.
\end{split}\end{align}
\end{definition}
It is shown in \cite{bolte2021conservative} that \eqref{aumann integral} is equivalent to
requiring
\bs
\int_0^1 \max_{v \in D\(\ell\(t\)\)} \left\langle\dot{\ell}\(t\), v\right\rangle dt = 0,
\end{split}\end{align}
where the integral is understood in the Lebesgue sense. This is possible by Theorem 18.19 and Theorem 18.20 in \cite{guide2006infinite}. See also Lemma 1 in
\cite{bolte2021conservative}.
Based on Definition \ref{definition 1}, we introduce the following Definition
\ref{definition 2}:
\begin{definition}{(Potential functions of conservative fields, \cite{bolte2021conservative})}\label{definition 2}
	A function $f: \mathbb{R}^p \to \mathbb{R}$ is called a potential function of a
	conservative field whenever there exists a conservative field $D: \mathbb{R}^p \rightrightarrows  \mathbb{R}^p$
    such that
    \bs
    f\(x\) = f\(0\) + \int^{1}_{0} \left\langle \dot{\mathtt{p}}\(t\), D\(\mathtt{p}\(t\)\)\right\rangle dt
    \end{split}\end{align}
    for all absolutely continuous path $\mathtt{p}: \[0, 1\] \to \mathbb{R}^{p}$ with $\mathtt{p}\(0\) = 0$ and $\mathtt{p}\(1\) = x$.
	\end{definition}
\cite{bolte2021conservative} show that a potential function must be locally Lipschitz and their Theorem 1 states that if 
$f$ is a potential function of the conservative field $D$ then $D$ coincides
with the gradient of $f$ on Lebesgue almost every point where the gradient exists. Therefore,
we expect that certain 
nonsmooth functions will generate conservative fields
by some generalized approach of taking derivatives. In addition, a real-valued convex (or concave) function is locally Lipschitz and thus Lebesgue almost everywhere differentiable by the
Rademacher's theorem.
See for example 
Theorem \ref{Rademacher theorem} in the Technical addendum \ref{technical addendum}.

\begin{theorem}\label{theorem 1}
{\cite{bolte2021conservative} Corollary 2 and Proposition 2} \\
    Let $f:\mathbb{R}^p \to \mathbb{R}$ be a convex (or concave) function,
    then $\nabla^- f\(\nabla^+ f\) = \partial f$ is a conservative field that admits $f$ as its potential,
    where $\partial f$ is the Clarke subgradient defined in Definition \ref{definition 3}.
    Consequently,  $\gamma\(\lambda, g\)$ is a potential function in the sense of Definition \ref{definition 2}.
\end{theorem}

\begin{proof}[Proof of Theorem \ref{theorem 1}]
	By Proposition 1.2 in \cite{clarke1975generalized}, for convex (concave) functions $f$,
	$\nabla^- f \(\nabla^+ f\) = \partial f$.   Furthermore, $\nabla^- f\(x\) \(\nabla^+ f\(x\)\) = \partial_g f\(x\)$ by Proposition 8.12 in \cite{rockafellar2009variational},
where $\partial_g f\(x\)$ is the general subgradient as in definition 8.3 in
	\cite{rockafellar2009variational} (see also Theorem 9.61 therein).
	Theorem 10.49 in
	\cite{rockafellar2009variational} confirms that $f$ has chain rule for the Clark subgradient.
Finally, by Corollary 2 in \cite{bolte2021conservative}, $\partial f$ is a conservative field that admits $f$.
\end{proof}

\paragraph{Envelope like theorem for social welfare potential function}
\label{Envelope like theorem for social welfare potential function}

\begin{definition}{(Fr\'{e}chet differentiability)}
Let $\mathcal{X}$ and $\mathcal{Y}$ be normed spaces equipped with norm $\Vert\cdot\Vert_{\mathcal{X}}$ and $\Vert\cdot\Vert_{\mathcal{Y}}$, $E \subset \mathcal{X}$ be an open set. Consider the map $\xi: E \to Y$. Then $\xi$ is called Fr\'{e}chet differentiable at $\theta \in E$, if there is a continuous linear map $\xi'_{\theta}: \mathcal{X} \to \mathcal{Y}$ such that:
\bs
\left\Vert \xi\(x\) - \xi\(\theta\) - \xi'_{\theta}\(x - \theta\)
\right\Vert_{\mathcal{Y}} = o\(\Vert x - \theta\Vert_{\mathcal{X}}\).
\end{split}\end{align}
\end{definition}

Before we describe the next result we recall several topological and measure theoretic concepts. In a topological space $\mathcal X$, if a subset $S \in \mathcal X$ satisfies
$\overset{\circ}{\bar S} =\emptyset$, i.e. the closure of $S$ has empty interior, then $S$ is called
a nowhere dense subset of $\mathcal X$. If $\mathcal Z \subset \mathcal X$ is a countable union of
nowhere dense subsets of $\mathcal X$, then $\mathcal Z$ is called a meager subset of 
$\mathcal X$, or
of the first category in $\mathcal X$. A subset $A \subset \mathcal X$ is called residually
many (or just residual) in $\mathcal X$ if $\mathcal X \setminus A$ is meager in $\mathcal X$.
A set $B \subset \mathcal X$ is called a $G_\delta$ set of $\mathcal X$ if $B$ is a
countable intersection of open sets; it is call a $F_\sigma$ set if it is a countable union
of closed sets.

\cite{lindenstrauss2003frechet} originate the concept of $\Gamma$-null sets to
describe the {\it magnitude} of sets. 
Let $T = \[0,1\]^{\mathbb{N}}$, where
$\mathbb{N}$ is the set of natural numbers, be endowed with the product topology and product
Lebesgue measure $\mathcal{L}_{\infty}$.
Let $\mathcal X$ be a Banach space and
$\Gamma\(\mathcal X\)=\{\gamma: T \to \mathcal X\}$
be the space of continuous mappings having continuous partial derivatives $D_j\gamma$.	  Equip
$\Gamma\(\mathcal X\)$ with a topology generated by
$\vert\vert \gamma \vert\vert_0 = \sup_{t\in T} \vert\vert \gamma\(t\)\vert\vert$, and all
$\vert\vert \gamma \vert\vert_k = \sup_{t\in T} \vert\vert D_k \gamma\(t\)\vert\vert$,
$k \geq 1$. Equivalently, the same topology is generated by
all $\vert\vert\gamma\vert\vert_{\leq k} = \max_{0 \leq i \leq k} \vert\vert \gamma\vert\vert_i, k \in \mathbb{N}$.

\begin{definition}{($\Gamma$-null)}
	A Borel set $N \subset \mathcal X$ is called $\Gamma$-null if $\mathcal{L}_{\infty}\{t \in T: \gamma\(t\)
	\in N\} = 0$ for residually many $\gamma \in \Gamma\(\mathcal X\)$. If a set $A$ is
	contained in such a $N$, then $A$ is also called $\Gamma$-null.
\end{definition}

Now, we can state our result. In the following theorem statement, \ref{envelope theorem 1} and
\ref{envelope theorem 2} follow from the convexity of $\gamma\(\lambda, g\)$ in $\lambda$ given each $g$,
and the convexity of $\gamma\(\lambda, g\)$ in $g$ given each $\lambda$, while
\ref{envelope theorem 3}  accounts for the non joint-convexity
of $\gamma\(\lambda, g\)$ in $\(\lambda, g\)$.

\begin{theorem}
\label{theorem envelope Theorem for social welfare potential function}
	Let $\gamma\(\lambda, g\)$ be a real valued social welfare potential function, where $\lambda\in \mathbb{R}^{J+1}$, $g \in \mathcal X$ where $\mathcal X$ is a Banach space. Assume that $\gamma\(\lambda, g\)$ is continuous at a point $\(\bar{\lambda}, \bar{g}\) \in \mathbb{R}^{J+1} \otimes \mathcal X$,	then we have:
	\begin{enumerate}
		\item\label{envelope theorem 1} If every separable subspace $\mathcal Y$ of $\mathcal X$ has a separable 
			dual space $\mathcal Y^*$,
	then		for any given $\lambda \in
\mathbb{R}^{J+1}$, $\gamma\(\lambda, g\)$ is Fr\'{e}chet
			differentiable in $g$ on a dense
			$G_\delta$ subset of  $\mathcal X$.
		\item \label{envelope theorem 2} If the 
			dual space $\mathcal X^*$ of $\mathcal X$ is separable, then
for any given $\lambda \in   \mathbb{R}^{J+1}$,
			$\gamma\(\lambda, g\)$ is $\Gamma$-almost everywhere Fr\'{e}chet differentiable
			with respect to $g \in \mathcal X$. In other words,
for any given $\lambda \in   \mathbb{R}^{J+1}$,
$\gamma\(\lambda, g\)$ is not Fr\'{e}chet differentiable in $g$ on a $\Gamma$-null set of $\mathcal X$.
\item\label{envelope theorem 3} In addition, If every separable subspace $\mathcal Y$ of $\mathcal X$ has a separable 
			dual space $\mathcal Y^*$, and $\mathcal{X}$ can be continuously
	embedded into $L^1\(\Omega, \mu\)$ (in the sense that
for all $x \in \mathcal X$,
			$\Vert x\Vert_{L^1\(\Omega, \mu\)} \leq C \Vert x \Vert_{\mathcal X}$ for a constant $C$), 
   then $\gamma\(\lambda, g\)$ is locally jointly Lipschitz in $\(\lambda, g\)$, and is also jointly Fr\'{e}chet differentiable with respect to $\(\lambda, g\(\cdot\)\)$ on a dense subset of $\mathbb{R}^{J+1} \otimes \mathcal X$.
	\end{enumerate}
When $\gamma\(\lambda, g\)$ is Fr\'{e}chet differentiable (totally or partially as in the theorem),
the Fr\'{e}chet derivative with respect to $\(\lambda, g\)$ can be calculated to be
\bsnumber\label{Frechet derivative}
\(\(\mathbb{E}\phi^{*}_{0}g_{0}, \ldots, \mathbb{E}\phi^{*}_{J}g_{J}\),
\(\mathbb{E}\lambda_{0}\phi^{*}_{0}, \ldots, \mathbb{E}\lambda_{J}\phi^{*}_{J}\)\)^{T}
\end{split}\end{align}
where 
\bs
\phi^{*}\(\cdot\) \coloneqq \phi^{*}\(\cdot; \lambda, g\) 
= \mathop{\arg\max}\limits_{\phi\(\cdot\) \in \Phi} \mathbb{E} \[\sum_j \lambda_j \phi_j\(X\) g_j\(X\)\]
\end{split}\end{align}
in the $\mu$ almost surely sense, where $\mu$ is the distribution of $X$.
\end{theorem}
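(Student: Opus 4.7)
The statement bundles three differentiability claims with an explicit formula for the derivative at each point of Fr\'echet differentiability. My plan is to handle the three parts in sequence, each via a different result from geometric functional analysis, and then to derive \eqref{Frechet derivative} by extending the envelope-style calculation that precedes the theorem.

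For parts \ref{envelope theorem 1} and \ref{envelope theorem 2}, fix $\lambda$. The excerpt already observed that $g \mapsto \gamma(\lambda,g)$ is convex; combined with continuity at $(\lambda_0,g_0)$, convexity upgrades this to continuity on all of $\mathcal{X}$. For \ref{envelope theorem 1}, the hypothesis that every separable subspace of $\mathcal{X}$ has separable dual is precisely the defining property of an Asplund space, on which every continuous convex function is Fr\'echet differentiable on a dense $G_\delta$; this is the classical Asplund--Namioka--Phelps characterization, sharpened in \cite{preiss1990differentiability}. For \ref{envelope theorem 2}, separability of $\mathcal{X}^{*}$ places the setting inside the Lindenstrauss--Preiss $\Gamma$-null differentiability framework \cite{lindenstrauss2003frechet}, which yields Fr\'echet differentiability of a continuous convex (in fact any locally Lipschitz) function off a $\Gamma$-null Borel subset of $\mathcal{X}$.

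Part \ref{envelope theorem 3} is the main obstacle, since $\gamma$ is not jointly convex in $(\lambda,g)$. My plan is to establish joint local Lipschitzness of $\gamma$ on $\mathbb{R}^{J+1}\otimes\mathcal{X}$ and then invoke Preiss's dense Fr\'echet differentiability theorem for locally Lipschitz functions on Asplund spaces \cite{preiss1990differentiability}. Local Lipschitzness splits as
\begin{align*}
|\gamma(\lambda,g)-\gamma(\lambda',g')|\le|\gamma(\lambda,g)-\gamma(\lambda,g')|+|\gamma(\lambda,g')-\gamma(\lambda',g')|,
\end{align*}
with the first term bounded by $\max_j|\lambda_j|\cdot\sum_j\mathbb{E}|g_j(X)-g'_j(X)|$ using $0\le\phi_j\le 1$ and $\sum_j\phi_j\equiv 1$, and then converted to a norm bound on $\mathcal{X}$ via the continuous embedding $\mathcal{X}\hookrightarrow L^{1}(\Omega,\mu)$; the second term is bounded by $\Vert\lambda-\lambda'\Vert\cdot\max_j\mathbb{E}|g'_j(X)|$, which stays uniformly bounded in a neighborhood of $(\lambda_0,g_0)$. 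Since a product of an Asplund space with the finite dimensional factor $\mathbb{R}^{J+1}$ is again Asplund, Preiss's theorem delivers a dense set of points of joint Fr\'echet differentiability.

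To derive \eqref{Frechet derivative} at any point $(\lambda,g)$ of Fr\'echet differentiability of $\gamma$, I fix a measurable selection $\phi^{*}\in\mathop{\arg\max}\limits_{\phi\in\Phi}\mathbb{E}[\sum_j\lambda_j\phi_j(X)g_j(X)]$ and consider the auxiliary function $e(\lambda',g';\phi^{*})$ from \eqref{auxiliary function}, now viewed jointly in $(\lambda',g')$. By construction $e\le 0$ and $e(\lambda,g;\phi^{*})=0$, so $(\lambda,g)$ maximizes $e$. The first summand of $e$ is bilinear in $(\lambda',g')$, hence Fr\'echet differentiable with derivative exactly \eqref{Frechet derivative}, and $\gamma$ is Fr\'echet differentiable at $(\lambda,g)$ by hypothesis; therefore $e$ is Fr\'echet differentiable at its maximizer, which forces its derivative to vanish and hence forces the Fr\'echet derivative of $\gamma$ to equal \eqref{Frechet derivative}. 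Simultaneously this pins down the expectations $(\mathbb{E}\phi^{*}_j g_j)_j$ and $(\mathbb{E}\lambda_j\phi^{*}_j)_j$, so $\phi^{*}$ is $\mu$-almost surely unique at every point of differentiability. The delicate step throughout is the joint case \ref{envelope theorem 3}, for which the convex-function versions of Preiss and Lindenstrauss--Preiss do not apply directly, and the $L^{1}$-embedding hypothesis is exactly what is needed to lift the natural $L^{1}$-type bound into the $\mathcal{X}$-norm and so render $\gamma$ jointly locally Lipschitz.
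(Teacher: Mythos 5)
Your plan follows the same overall architecture as the paper's proof: convexity in each slot separately, Asplund-space Fr\'echet differentiability for parts \ref{envelope theorem 1} and \ref{envelope theorem 2}, joint local Lipschitzness plus Preiss (1990, Theorem~2.5) for part \ref{envelope theorem 3}, and the auxiliary function \eqref{auxiliary function} with the stationarity of a maximizer for the derivative formula. The one place you take a genuinely different route is the joint local Lipschitz estimate in part \ref{envelope theorem 3}. The paper proceeds abstractly in two stages: it first establishes a local upper bound on $\gamma$ via $\gamma(\lambda,g)\le(\sum_j|\lambda_j|)\mathbb{E}|g_j|$ together with the $L^1$-embedding, then a local lower bound via the reflection trick $\gamma(\lambda',g_0)\le\tfrac12\gamma(\lambda',v)+\tfrac12\gamma(\lambda',2g_1-v)$, and finally converts local boundedness of a partially convex function into local Lipschitzness by the usual chord-slope argument. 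You instead give a direct pointwise Lipschitz estimate using $|\max_j a_j-\max_j b_j|\le\max_j|a_j-b_j|$ together with $0\le\phi_j\le1$ and $\sum_j\phi_j\equiv1$. Your route is shorter, is globally (not just locally) Lipschitz in $g$ for fixed $\lambda$, and makes the role of the $L^1$-embedding transparent, at the cost of exploiting the specific linear-in-$g$ structure of $\gamma$ rather than bare convexity. Both are valid; the paper's version is the one that would survive a more general convex objective.

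Two small points are worth flagging. First, in your second-term bound you write $\Vert\lambda-\lambda'\Vert\cdot\max_j\mathbb{E}|g'_j(X)|$; the natural bound coming from $\sum_j\phi_j\le1$ is $\Vert\lambda-\lambda'\Vert_\infty\cdot\mathbb{E}\max_j|g'_j(X)|\le\Vert\lambda-\lambda'\Vert_\infty\sum_j\mathbb{E}|g'_j(X)|$, and $\max_j\mathbb{E}|g'_j|$ is in general strictly smaller than $\mathbb{E}\max_j|g'_j|$; harmless here since either is locally bounded via the embedding, but the inequality as written is not literally correct. Second, the parenthetical in part \ref{envelope theorem 2} claiming that the Lindenstrauss--Preiss $\Gamma$-null result applies to any locally Lipschitz function overreaches: Corollary 3.11 in \cite{lindenstrauss2003frechet}, which is what the paper cites and what your argument actually needs, is stated for continuous convex functions. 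The $\Gamma$-a.e.\ Fr\'echet differentiability of general real-valued Lipschitz functions on spaces with separable dual requires further structural hypotheses (see the later chapters of \cite{lindenstrauss2012frechet}) and should not be asserted in passing. Since you only apply the result to the convex map $g\mapsto\gamma(\lambda,g)$, this does not damage your proof, but the parenthetical should be removed.

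For part \ref{envelope theorem 1}, the paper cites Theorem~2 of \cite{stegall1978duality} rather than the Asplund--Namioka--Phelps line you invoke; either pedigree is standard for the dense-$G_\delta$ Fr\'echet differentiability of continuous convex functions on Asplund spaces. Your derivative calculation via the auxiliary function $e(\lambda',g';\phi^{*})$, its concavity, the bilinearity of its first summand, and the vanishing of the supergradient at an interior maximizer of a Fr\'echet differentiable function, is exactly the paper's Step~5, as is the observation that this pins down $\phi^{*}$ $\mu$-a.s.
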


\begin{proof}[Proof of Theorem \ref{theorem envelope Theorem for social welfare potential function}] \\
\textbf{step} 1 First, it is direct that $\gamma\(\lambda, g\)$ satisfies subadditivity and is also positive homogeneous of degree one with respect to both $\lambda$ and $g\(\cdot\)$, separately,
\bs
\gamma\(\lambda, g + g'\) \leq \gamma\(\lambda, g\) + \gamma\(\lambda, g'\), & \quad \gamma\(\lambda + \lambda', g\) \leq \gamma\(\lambda, g\) + \gamma\(\lambda', g\), \\
\gamma\(\alpha\lambda, g\) = \alpha\gamma\(\lambda, g\), & \quad
\gamma\(\lambda, \alpha g\) = \alpha\gamma\(\lambda, g\), \ \text{for all} \ \alpha \geq 0,
\end{split}\end{align}
implying that $\gamma\(\lambda, g\)$ is convex in $g\(\cdot\)$ given $\lambda$ and is convex in $\lambda$ given $g\(\cdot\)$.
Note that when $\mathcal{X}$ is infinite-dimensional, a real-valued convex function is not necessarily continuous (there is always a noncontinuous linear functional on $\mathcal{X}$).
Fortunately, a real-valued convex function on a Banach space is either continuous at evey point or discontinuous at every point of its domain (see for example Proposition 3.1.11 of \cite{niculescu2018convex}). 
So by assumption, $\gamma\(\lambda, g\)$ is a convex continuous function.
Now, for condition 1, we can directly evoke Theroem 2 in \cite{stegall1978duality}; for condition 2, we can directly evoke Corollary 3.11 in \cite{lindenstrauss2003frechet}. \\
    \textbf{step} 2 From now, without loss of generality, we may assume that $\mathbb{R}^{J + 1} \times \mathcal{X}$ is equipped with the norm $\(\left\Vert r\right\Vert^{2} + \left\Vert x\right\Vert^{2}_{\mathcal{X}}\)^{\frac{1}{2}}$. Let $\mathcal{X}$ be a Banach space such that every separable subspace $\mathcal{Y}$ has a separable dual space $\mathcal{Y}^{*}$, $\mathcal{Z} \subset \mathbb{R}^{J + 1} \times \mathcal{X}$ be a separable subspace of $\mathbb{R}^{J + 1} \times \mathcal{X}$.
    Then, $\mathcal{X}' = \left\{x: \exists\(r, x\) \in \mathcal{Z}\right\}$ is a subspace of $\mathcal{X}$, $R' = \left\{r: \exists\(r, x\) \in \mathcal{Z}\right\}$ is a subspace of $\mathbb{R}^{J + 1}$, and $\mathcal{Z} \subset R' \times \mathcal{X}'$. The separability of $\mathcal{Z}$ implies the separability of $\mathcal{X}'$.
    Note that the dual space of $R' \times \mathcal{X}'$ isometrically isomorphic to $\(R'\)^{*} \times \(\mathcal{X}'\)^{*}$. Obviously, $\mathbb{R}^{J + 1}$ is a separable Banach space with separable dual, so by the separability of $\(\mathcal{X}'\)^{*}$, $\(R' \times \mathcal{X}'\)^{*}$ is separable. Let $z^{*} \in \mathcal{Z}^{*}$, by Hahn-Banach theorem, there exists an element $\(r, x\)^{*} \in \(R', \mathcal{X}'\)^{*}$ such that $\(r, x\)^{*} \vert_{\mathcal{Z}} = z^{*}$ and $\left\Vert z^{*}\right\Vert = \left\Vert \(r, x\)^{*}\right\Vert$. Therefore, with some misuse of language, we can write that $\mathcal{Z}^{*} \subset \(R' \times \mathcal{X}'\)^{*}$. Therefore, the separability of $\(R' \times \mathcal{X}'\)^{*}$ implies the separability of $\mathcal{Z}^{*}$. In another word, if $\mathcal{Z}$ is a separable subspace of $\mathbb{R}^{J + 1} \times \mathcal{X}$, then $\mathcal{Z}^{*}$ is separable. \\
\textbf{step} 3
    We need to prove that $\gamma\(\lambda, g\)$ is locally Lipschitz continuous.
    Note that
    \bs
    \gamma\(\lambda, g\) = \sup_{\phi \in \Phi} \mathbb{E}\[\sum_{j}\lambda_{j}\phi_{j}g_{j}\] \leq
    \sup_{\phi \in \Phi} \mathbb{E}\[\sum_{j}\left\vert\lambda_{j}\right\vert \left\vert g_{j}\right\vert\phi_{j}\] \leq
    \(\sum_{j}\left\vert\lambda_{j}\right\vert\)\mathbb{E}\left\vert g_{i}\right\vert.
    \end{split}\end{align}
    By the continuous embedding assumption $\mathbb{E}\left\vert g_{j}\right\vert \leq C\left\Vert g_{j}\right\Vert_{\mathcal{X}}$ for some constant $C$.
    Consider an open neighborhood $B_{\lambda, g, r} \coloneqq B\(\lambda, r\) \times B\(g, r\)$ for arbitrarily chosen $\(\lambda, g\) \in \mathbb{R}^{J + 1} \times \mathcal{X}$,
    where open balls $B\(\lambda, r\)$ and $B\(g, r\)$ are taken in $\mathbb{R}^{J + 1}$ and $\mathcal{X}$ respectively.
    Obviously, there exists a constant $M$ such that $\forall\lambda' \in B\(\lambda, r\)$, $\Vert\lambda'\Vert \leq M$.
    Therefore, for all $\(\lambda', g'\) \in B_{\lambda, g, r}$, $v\(\lambda', g'\)$ is upper bounded.
    Now, we can show that $v\(\lambda', g'\)$ is also locally lower bounded near $\(\lambda, g\)$.
    Let $z = g + \rho\(g_{0} - g\)$ such that
    $g_{0} \in \mathcal{X}$, $z \in B\(g, r\)$ and $\rho > 1$.
    It is not hard to see such a point $z$ exists.
    For arbitrarily chosen $\lambda' \in B\(\lambda, r\)$, consider
    \bs
    \mathcal{V} = \left\{v: v = \(1 - \frac{1}{\rho}\)g + \frac{1}{\rho}z, g \in B\(g, r\)\right\}.
    \end{split}\end{align}
    $\mathcal{V}$ is equal to the ball in $\mathcal{X}$ with center $g_{1} = \(1 - \frac{1}{\rho}\)g + \frac{1}{\rho}z$ and radius $\(1 - \frac{1}{\rho}\)r$.
    For all $v \in \mathcal{V}$, $\(2g_{1} - v\) \in \mathcal{V}$,  so there is
    \bs
    \gamma\(\lambda', g_{0}\) \leq \frac{1}{2}\gamma\(\lambda', v\) + \frac{1}{2}\gamma\(\lambda', 2g_{1} - v\)
    \end{split}\end{align}
    which implies that
    \bs
    \gamma\(\lambda', v\) \geq 2\gamma\(\lambda', g_{1}\) - \gamma\(\lambda', 2g_{1} - v\) \geq 2\gamma\(\lambda', g_{1}\) - M.
    \end{split}\end{align}
    Since $\gamma\(\lambda', g_{0}\)$ is convex with respect to $\lambda'$ on $\mathbb{R}^{J + 1}$, it is locally Lipschitz with respect to $\lambda'$, 
    see for example Lemma 14.26 in \cite{villani2009optimal}.
    So we get that $\gamma\(\lambda', g\)$ is locally bounded on a
    neighborhood of $\(\lambda, g\)$. We may still denoted the bound as $M$. \\
    \textbf{step} 4 Now, we can prove that $\gamma\(\cdot, \cdot\)$ is jointly
    locally Lipschitz.
    Let $g_{1}, g_{2} \in B\(g, r'\)$, $g_{1} \neq g_{2}$ where $r'$ is taken small enough such that for all $\lambda' \in B\(\lambda, r\)$, $\left\vert\gamma\(\lambda', g'\)\right\vert \leq M$ for all $g \in B\(g, 2r'\)$.
    Next, let $w = g_{2} + \frac{r'}{d}\(g_{2} - g_{1}\)$ where $d = \left\Vert g_{1} - g_{2}\right\Vert_{\mathcal{X}}$.
    Obviously, $w \in B\(g, 2r'\)$ and $g_{2} = \frac{r'}{r' + d}g_{1} + \frac{d}{r' + d}w$.
    By the partial convexity of $\gamma\(\cdot, \cdot\)$ and let $\lambda' \in B\(\lambda, r\)$,
    \bs
    \gamma\(\lambda', g_{2}\) \leq \frac{r'}{r' + d}\gamma\(\lambda', g_{1}\) + \frac{d}{r' + d}\gamma\(\lambda', w\).
    \end{split}\end{align}
    Then, we have
    \bs
    \gamma\(\lambda', g_{2}\) - \gamma\(\lambda', g_{1}\) & \leq
    \frac{d}{r' + d}\(\gamma\(\lambda', w\) - \gamma\(\lambda', g_{1}\)\) \leq \frac{d}{r'}\(\gamma\(\lambda', w\) - \gamma\(\lambda', g_{1}\)\) \\
    & \leq \frac{2d}{r'}M = \frac{2M}{r'}\left\Vert g_{2} - g_{3}\right\Vert_{\mathcal{X}}.
    \end{split}\end{align}
    Therefore, there exists a constant $M_{1}$, a neighborhood $\mathcal{N}_{g}$ of $g$ and a neighborhood $\mathcal{N}_{\lambda}$ of $\lambda$ such that for all $\lambda' \in \mathcal{N}_{\lambda}$, $\gamma\(\lambda', \cdot\)$ is Lipschitz with the constant $M_{1}$ on $\mathcal{N}_{g}$.
    Similar, $\gamma\(\cdot, g\)$ is also locally Lipschitz in the above uniform sense with a constant $M_{2}$.
    Now, in a neighborhood of $\(\lambda, g\)$, we have
    \bs
    \left\vert \gamma\(\lambda', g'\) - \gamma\(\lambda'', g''\)\right\vert & \leq
    \left\vert \gamma\(\lambda', g'\) - \gamma\(\lambda', g''\)\right\vert +
    \left\vert \gamma\(\lambda', g''\) - \gamma\(\lambda'', g''\)\right\vert \\
    & \leq \(M_{1} + M_{2}\)\left\Vert \(\lambda', g'\) -
    \(\lambda'', g''\)\right\Vert_{\mathbb{R}^{J + 1} \times \mathcal{X}}.
    \end{split}\end{align}
    Now, we can directly evoke the Theorem 2.5 of \cite{preiss1990differentiability} to get that $\gamma\(\cdot, \cdot\)$ is Fr\'{e}chet differentiable at least on a dense subset of $\mathbb{R}^{J + 1} \times \mathcal{X}$. \\
\textbf{step} 5 In order to calculate the Fr\'{e}chet derivative of $\gamma\(\lambda, g\)$ at those Fr\'{e}chet differentiable points, we use the auxiliary function defined in \eqref{auxiliary function}.
    $\mathbb{E}\[\sum_{j}\lambda_{j}'\phi^{*}_{j}g_{j}'\]$ is continuous linear by Proposition 3.1.11 in \cite{niculescu2018convex} and thus Fr\'{e}chet differentiable with respect to $\(\lambda, g\)$ with the derivate in the right hand side of \eqref{Frechet derivative}.
    Obviously, $e\(\lambda', g'; \phi^{*}\)$ is partially concave and achieves its maximum at $\(\lambda, g\)$.
    Therefore, by Theorem 3.6.11 of \cite{niculescu2018convex}, we have $0 \in \nabla^{+}e\(\lambda', g'; \phi^{*}\)\vert_{\(\lambda', g'\) = \(\lambda, g\)}$.
    So, if $\gamma\(\lambda', g'\)$ is Fr\'{e}chet differentiable at $\(\lambda, g\)$, by Propsoition 3.6.9 in \cite{niculescu2018convex},
    \bs
    \nabla^{+}\(\lambda', g'; \phi^{*}\)\vert_{\(\lambda', g'\) = \(\lambda, g\)} = \left\{e'\(\lambda', g'; \phi^{*}\vert_{\(\lambda', g'\) = \(\lambda, g\)}\)\right\}
    = \{0\}.
    \end{split}\end{align}
Now, the proof ends with direct calculation.
\end{proof}

Theorem \ref{theorem envelope Theorem for social welfare potential function} essentially shows that,
under mild conditions, a generic point\footnote{ A generic property is one that holds
on a residual set.}
	can support
	taking the Fr\'{e}chet derivative of $\gamma\(\lambda, g\)$.
The conditions for
Fr\'{e}chet differentiability 
originate from the partial convexity of $\gamma\(\lambda, g\)$ in either $\lambda$ or $g$ and the joint local Lipschitz property
of $\gamma\(\lambda, g\)$ in $\(\lambda, g\)$.
In \cite{chernozhukov2018sorted},
Hadamard differentiability is verified by direct calculations with great details of the
partial effect quantile function (called sorted effect therein). We have demonstrated how to reformulate their approach
in section \ref{Weighted sorted effect}. Compared with \cite{chernozhukov2018sorted}, 
the assumptions in
Theorem \ref{theorem envelope Theorem for social welfare potential function}  
are not more restrictive in several senses.
First, Hadamard differentiability implies continuity (See for example \cite{averbuh1968razlivcnye}) which is assumed
in Theorem \ref{theorem envelope Theorem for social welfare potential function}. Second, it is
well known that for a measure space $\(\Omega, \mathcal F, \mu\)$ such that $\mathcal F$ is separable or countably generated,
and $\mu$ is $\sigma$-finite, $L^p\(\Omega\)$ is separable for $1 < p < \infty$, which implies that
$L^p\(\Omega\)$ is separable and reflexive when $\Omega$ is a Polish space, $\mathcal F$ is the Borel $\sigma$-algebra,
and $\mu$ is a probability measure.
Therefore $L^p\(\Omega\)$ for $1 < p < \infty$ satisfies the second result of Theorem \ref{theorem envelope Theorem for social welfare potential function}.
Furthermore, when $\Omega$ is an open subset of $\mathbb{R}^n$ and $\mathcal F$ is the
$\sigma$-algebra of Lebesgue measure, and $\mu = \mathcal{L}_{n}$ the $n$-dimensional Lebesgue measure,
both $L^p\(\Omega\)$ and the Sobolev space $W^{k,p}\(\Omega\)$ are also separable and reflexive. Hence
they also satisfy the second result of	Theorem \ref{theorem envelope Theorem for social welfare potential function}.
Third, we do not directly make assumptions about the underlying distribution $\mu$, such as requiring compact support or having a continuous density function.

\subsection{Parameters used in section \ref{empirical analysis}}\label{parameters}

The covariate vector $X$ in the synthetic data analysis in section \ref{empirical analysis} is drawn from a Gaussian distribution: 
\begin{align}\begin{split}\nonumber
& X_{i} \sim N\(\mu, \Sigma\), \mu = \(0, 1, 0, -1, 0\)^{T},
\begingroup
\renewcommand*{\arraystretch}{1}
\Sigma = \begin{varmatrix}[delim = p, size = \normalsize, sep = 0.5\arraycolsep]
	    1 & 0.3 & 0 & 0.1 & 0.5 \\
	    0.3 & 1 & 0.2 & 0 & -0.5 \\
	    0 & 0.2 & 1 & 0.4 & -0.1 \\
	    0.1 & 0 & 0.4 & 1 & 0.25 \\
	    0.5 & -0.5 & -0.1 & 0.25 & 1
	    \end{varmatrix}
\endgroup.
\end{split}\end{align}
The neural network weights $W_{1}$, $W_{2}$, $b_{1}$ and $b_{2}$ used to generate $Y$ are randomly set to be
\begin{align}\begin{split}\nonumber
& W_{1} =
\begingroup
\renewcommand*{\arraystretch}{1}
\begin{varmatrix}[delim = p, size = \normalsize, sep = 0.5\arraycolsep]
	    -0.8557 & -0.3984 & -0.2444 & 0.4402 & 0.5185 & -0.8261 \\
	    -0.4382 & 0.9245 & -0.1277 & -2.1882 & -1.7211 & -0.5367 \\
	    1.5279 & 1.0487 & 0.7238 & -1.8313 & 0.0273 & -0.6663 \\
	    1.7191 & 0.4149 & -1.7796 & -0.9590 & 1.1279 & -1.1755 \\
	    -2.2010 & -0.7008 & 1.0949 & -0.7687 & -1.3397 & -1.0222 \\
	    3.0063 & 1.0774 & -0.3373 & -1.0279 & 0.5106 & -0.8310 \\
	    -0.0059 & 1.1252 & 0.7112 & -1.3946 & -2.1588 & -0.3042 \\
	    -0.6861 & 0.4985 & -0.2876 & -0.9819 & -1.1552 & 0.1938
	\end{varmatrix}, \;
b_{1} = \begin{varmatrix}[delim = p, size = \normalsize, sep = 0.5\arraycolsep]
0.8990 \\
-0.1547 \\
-1.3834 \\
-0.8214 \\
1.4931 \\
-0.6971 \\
-0.4002 \\
-1.1119
\end{varmatrix}
\endgroup, \\
& W_{2} =
\begingroup
\renewcommand*{\arraystretch}{1}
\begin{varmatrix}[delim = p, size = \normalsize, sep = 0.5\arraycolsep]
	    -0.6777 & 1.6212 & 1.1582 & -1.5582 & 1.0625 & 0.0431 & -1.1843 & -0.0145 \\
	    2.5586 & -2.1053 & 0.6113 & 1.0777 & 1.1351 & -1.3304 & -0.0080 & 0.4248
	    \end{varmatrix}, \;
b_{2} = \begin{varmatrix}[delim = p, size = \normalsize, sep = 0.5\arraycolsep]
1.8659 \\
1.0406
\end{varmatrix}
\endgroup.
\end{split}\end{align}

\subsection{Supplementary discussion}

\paragraph{Origination of area and coarea formulas}
To the best of our knowledge, a result similar to Theorem
\ref{area formula classic} was first published in
\cite{federer1944surface}.
The coarea formula in the form of  \eqref{coarea formula integral} was first
published in \cite{federer1959curvature}, one of the most important papers by
Federer.
Results similar to Theorem
\ref{coarea formula classic} were presented earlier by
Aleksandr Semyonovich Kronrod, Herbert Federer, Laurence Chisholm Young (L.C.Young), Ennio De Giorgi and maybe even others.
The statements of the area and coarea formulas in Theorems
\ref{area formula classic} and \ref{coarea formula classic} are closer to
\cite{federer2014geometric} and the more readable \cite{evans2018measure}.
Roughly speaking, generalizations of both Theorem \ref{area formula classic} and Theorem \ref{coarea formula classic} can still work when the Euclidean spaces therein are replaced by certain kind of \textit{surface}, e.g. Riemannian manifolds and rectifiable sets of Euclidean spaces. 
See 3.2.20-3.2.22 and 3.2.46 in \cite{federer2014geometric} for classical developments.
In the Technical addendum \ref{technical addendum}, 
we provide readable proofs to both Theorem \ref{area formula classic} and Theorem \ref{coarea formula classic} together with a complete set of preliminary results needed in their derivation.

The area and coarea formulas in Theorems \ref{area formula classic}
and \ref{coarea formula classic} are closely related to another important
result called Morse-Sard Theorem, which essentially states that the set
of critical values is a null set. See for example Theorem
\ref{morse-sard theorem}. Corollary \ref{Sard type lemma}, which
follows directly from Theorem \ref{area formula classic},
is indeed Theorem \ref{morse-sard theorem} when $n=m$. In economics, one of the most well-known applications of the Sard type theorem is
\cite{debreu1970economies} which established that except for a null set of
\textit{economies}, every \textit{economy} has a finite set of equilibria.
For a relatively simple derivation of the Morse-Sard theorem,
see \cite{figalli2008simple}.
The basic ideas in
\cite{figalli2008simple} is to make use of the Morrey inequality, which is then combined with
Whitney extension theorem \ref{Whitney extension theorem} to prove the Morse-Sard theorem in a stronger
Sobolev sense, which implies the $C^r$ case.
\cite{evans2018measure}
shows in detail how to use the coarea formula
\eqref{coarea formula integral} to develop the Morrey inequality.

\paragraph{Concepts of smallness of sets} Under the conditions of Theorem \ref{theorem envelope Theorem for social welfare potential function}, generic Fr\'{e}chet  differentiability
follows from
convexity and local Lipschitzness. Consequently,
Fr\'{e}chet  differentiability
(which is stronger than Hadamard differentiability)
is a reasonable assumption.
The generic Fr\'{e}chet differentiability property here provides a conceptual justification of the functional differential methodology in \cite{chen2003estimation} and \cite{chernozhukov2015valid}.
Theorem \ref{theorem envelope Theorem for social welfare potential function} also differs from the envelope theorems in
\cite{milgrom2002envelope}. \cite{milgrom2002envelope} characterized possibly non-convex problems for finite dimensional
relevant parameters, while in this work,
Theorem \ref{theorem envelope Theorem for social welfare potential function}
focuses on a generic property for larger relevant parameter spaces.

By the Tikhonov theorem, $T = \[0,1\]^{\mathbb{N}}$ is a compact space. It can also be metrized by
\bs
\rho\(x,y\) = \sum_{n=1}^\infty \frac{
	\rho_n\(x_n, y_n\)
	}{
		2^n \(1+\rho_n\(x_n, y_n\)\)
	}
\end{split}\end{align}
where $\rho_n$ is some distance on $\[0,1\]$.

By the product $\sigma$-algebra on the space $\prod_i E_i$, we mean the $\sigma$-algebra generated by all the coordinate projections.  The product measure then should satisfy
\bs
\mu\[
	\prod_i B_i, B_ i\in \mathcal E_i, \ \text{and}\ B_i \neq E_i, \forall i \in \mathcal{I}, B_{i} = E_{i}, \forall i \in \mathbb{N} \backslash \mathcal{I}, \ \text{where} \ \mathcal{I} \subset \mathbb{N}, 0 < \vert I \vert < \infty
\] = \prod_{i \in \mathcal{I}} \mu_i\(B_i\),
\end{split}\end{align}
where $\left\{\(E_i, \mathcal E_i, \mu_i\)\right\}$ is a family of measure spaces.
The existence and uniqueness of the  countable product Lebesgue measure is guaranteed by the Kolmogorov extension theorem (see for example chapter 1 and 3 in \cite{dellacherie1979probabilities}).

\cite{namioka1975banach} attributed a space $\mathcal X$ such that
every separable subspace $\mathcal Y$ of it has	 a separable 
			dual space $\mathcal Y^*$ to
\cite{asplund1968frechet} as an Asplund space.
Condition 2 in Theorem \ref{theorem envelope Theorem for social welfare potential function}
requires that $\mathcal X^*$ is separable, which is more restrictive than the Asplund space in condition 1.
Theorem 2.4 in \cite{lindenstrauss2003frechet} shows that on $\mathbb{R}^n$, a $\Gamma$-null set is equivalent to a set
with Lebesgue zero. By the Rademacher Theorem again, given $g$, $\gamma\(\lambda,g\)$ is
$\Gamma$-almost everywhere differentiable with respect to $\lambda$.

For a Banach space with a separable dual space,
part 1 of Theorem
\ref{theorem envelope Theorem for social welfare potential function} shows that the set of
non Fr\'{e}chet-differentiable points is contained in a meager $F_\sigma$ set, while
part 2 of Theorem
\ref{theorem envelope Theorem for social welfare potential function}  shows that it is also
a $\Gamma$-null set. In other words,	the set of
non Fr\'{e}chet differentiable points is a $\Gamma$-null set, and is contained in a
meager $F_\sigma$ set. To the best of our knowledge, there is no general ordering
between $\Gamma$-null sets and
meager $F_\sigma$ sets.
There are interesting examples even in the simplest case $\mathbb{R}$. 
On the one hand, the Smith-Volterra-Cantor set (or called fat Cantor set) is closed and nowhere dense and thus
meager $F_{\sigma}$ but of positive Lebesgue measure. On the other hand, it is also possible to construct
a dense $G_\delta$ subset $B$ of $\mathbb{R}$
with zero Lebesgue measure that is not a meager $F_\sigma$ set (by the Baire theorem). To construct $B$, enumerate the rational numbers as
$\mathbb{Q}=\{q_1, q_2, \ldots\}$, and put
$P_n = \bigcup_{j=1}^\infty B\(q_j, 2^{-j} / n\)$ for positive integers $n$. Define
$B = \bigcap_{n=1}^\infty P_n$. Since $P_n$ is open for every positive integer $n$, we have $B$ is a $G_\delta$ set. The set $B$ has zero
Lebesgue measure because
\bs
\mu\(B\) \leq \mu\(P_n\) \leq \sum_{j=1}^\infty	 \mu\(B\(q_j, 2^{-j} / n\)\)
= \sum_{j=1}^\infty \frac{2^{-j+1}}{n} = \frac{2}{n} \to 0\ \text{as}\ n \to \infty.
\end{split}\end{align}

	It is worth noting that the concepts of meager $F_\sigma$ sets (the complement of dense $G_\delta$ sets) and $\Gamma$-null sets are among a collection of related
	definitions used to describe the smallness of sets.
Part 1 of Theorem \ref{theorem envelope Theorem for social welfare potential function} describes
a pure topological property of the set of $g\in \mathcal X$ at which $\gamma\(\lambda, g\)$ is
Fr\'{e}chet differentiable,   while part 2 of
Theorem \ref{theorem envelope Theorem for social welfare potential function} is both topological
and measure theoretic in nature.
Under the conditions of part 2 of Theorem \ref{theorem envelope Theorem for social welfare potential function},
the set of $g$ at which $\gamma\(\lambda, g\)$ is not
Fr\'{e}chet differentiable satisfies other smallness properties.
For example, on the one hand, by Theorem 1 in
	\cite{preiss1984frechet}, if $\mathcal X$ is a Banach space with
	separable dual space, then a continuous convex function $f$ on $\mathcal X$ is Fr\'{e}chet differentiable outside a $\sigma$-porous set.
 By Theorem 10.4.1 in \cite{lindenstrauss2012frechet},
	a $\sigma$-porous set is
	also a	$\Gamma_1$-null set.
On the other hand, by Proposition 5.4.3 in \cite{lindenstrauss2012frechet},
if $A\subset \mathcal X$ is a $\Gamma$-null and $F_\sigma$ set, then it is $\Gamma_n$ null for every $n$. To
recapitulate, $\sigma$-porous is a metric space concept, while $\Gamma_n$-null is a topological and
measure theoretic concept. These relations are illustrated in Figure
\ref{figure generic differentiability}.

\begin{definition}{(Porous and $\sigma$-porous)}
A set $A$ in a Banach space $\mathcal X$ is called $\sigma$-porous if it is
a countable union of porous sets. A porous set $E \subset \mathcal X$
is such that there exists $0 < c < 1$, and for all
$x\in \mathcal X$, and for all $\epsilon > 0$, there is a $y \in \mathcal X$  satisfying $0 < d\(x,y\) < \epsilon$ and $B\(y, c\cdot d\(x,y\)\) \cap E = \emptyset$.
\end{definition}

\begin{definition}{($\Gamma_{n}$-null)}
Consider $\Gamma_n\(\mathcal X\)
	= C^1\(\[0,1\]^n, \mathcal X\)$
 equipped with the $\Vert\cdot\Vert_{\leq n}$ norm. Then a Borel set $A \subset \mathcal X$ is called $\Gamma_n$-null if
\bs
	L^n\{ t\in \[0,1\]^n: \gamma\(t\) \in A\} = 0.
\end{split}\end{align}
for residually many $\gamma \in \Gamma_n\(\mathcal X\)$. If $A$ is not Borel, it is also called $\Gamma_n$-null if it is contained in a $\Gamma_n$-null Borel set.
\end{definition}

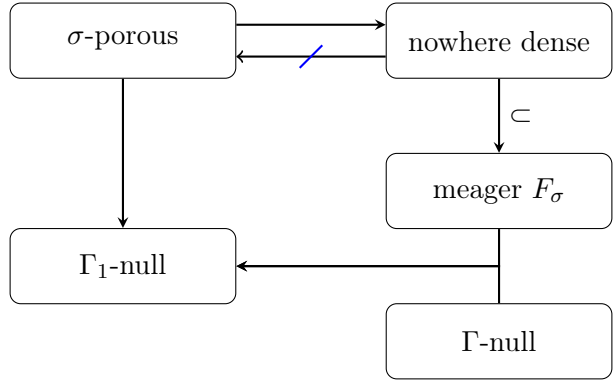
\begin{figure}[htbp]
\center
\begin{tikzpicture}[node distance=2cm]
\node (porous) [basicnode] {$\sigma$-porous};
\node (nowheredense) [basicnode, right of=porous,  xshift=3cm] {nowhere dense};
\node (gamma1null) [basicnode, below of=porous,	 yshift=-1cm] {$\Gamma_1$-null};
\draw[arrow] (porous.8) -- (nowheredense.172);
	\draw [->, thick, strike thru arrow] (nowheredense.188) -- (porous.-8);
\draw[arrow] (porous) -- (gamma1null);
\node (meagerfsigset) [basicnode, below of=nowheredense,  yshift=0cm] {meager $F_\sigma$};
\node (gammanull) [basicnode, below of=nowheredense,  yshift=-2cm] {$\Gamma$-null};
\draw[arrow] (nowheredense) -- node[anchor=west] {$\subset$}  (meagerfsigset);
	\draw[arrow] (meagerfsigset) |-	 (gamma1null);
	\draw[arrow] (gammanull) |-  (gamma1null);
\end{tikzpicture}
\caption{Relation between different concepts of small sets}
\label{figure generic differentiability}
\end{figure}

\newpage
\setstretch{1.25}
\renewcommand{\thesection}{T}
{\centering{\section{Technical addendum}\label{technical addendum}}}

\setcounter{equation}{0}
\counterwithin*{equation}{section}
\renewcommand\theequation{\thesection.\arabic{equation}}

\setcounter{figure}{0}
\counterwithin*{figure}{section}
\renewcommand\thefigure{\thesection.\arabic{figure}}

\subsection{Supplementary definitions and results}

\begin{definition}{(Outer measure)}\label{basic outer measure}
For a set $O$, an outer measure is a function
\begin{align}\begin{split}\nonumber
\mu^{*}: 2^{O} \to \[0, \infty\]\quad\text{such that}
\end{split}\end{align}
\begin{enumerate}[(a)]
  \item $\mu^{*}\(\emptyset\) = 0$.
  \item For arbitrary subsets $A, B$,
  $A \subset B \subset O$, $\mu^{*}\(A\) \leq \mu^{*}\(B\)$.
  \item For arbitrary subsets $B_{1}, B_{2}, \ldots$
  of $O$,
  \begin{align}\begin{split}\nonumber
  \mu^{*}\(\bigcup^{\infty}_{i = 1}B_{i}\) \leq
  \sum^{\infty}_{i = 1}\mu^{*}\(B_{i}\).
  \end{split}\end{align}
\end{enumerate}
If an outer measure $\mu^{*}$ is defined
on a metric space $\(O, d\)$ and satisfies
\begin{align}\begin{split}\nonumber
d\(A, B\) > 0 \Rightarrow
\mu^{*}\(A \cup B\) =
\mu^{*}\(A\) + \mu^{*}\(B\), \quad
\forall A, B \subset O.
\end{split}\end{align}
Then $\mu^{*}$ is called a
metric outer measure.
\end{definition}

\begin{theorem}
\label{Caratheodory's extension theorem}
{Carath\'{e}odory criterion} \\
Let $\mu^{*}$ be an outer measure on
a set $O$. A subset $E \subset O$ is said to be
$\mu^{*}$-measurable if
\begin{align}\begin{split}\nonumber
\mu^{*}\(T\) = \mu^{*}\(T \cap E\) +
\mu^{*}\(T \cap E^{c}\) \quad
\forall T \subset O.
\end{split}\end{align}
Let $\mathcal{M}$ be the collection of all
$\mu^{*}$-measurable sets.
Then $\mathcal{M}$ is a $\sigma$-algebra,
and the restriction of $\mu^{*}$ to $\mathcal{M}$:
$\mu = \mu^{*}\vert_{\mathcal{M}},
\mu\(E\) = \mu^{*}\(E\), E \in \mathcal{M}$
satisfies:
\begin{enumerate}[(a)]
  \item $\(O, \mathcal{M}, \mu\)$ is a
  complete measure space.
  \item If $E \subset O$ and $\mu^{*}\(E\) = 0$,
  then $E \in \mathcal{M}$ and thus $\mu\(E\) = 0$.
\end{enumerate}
From now on,
we may also call a $\mu^*$-measurable set a $\mu$-measurable set.
If $\mu^{*}$ is a metric outer measure,
then
\begin{enumerate}[(c)]
  \item $\mathcal{B}\(O\) \subset \mathcal{M}$, i.e.
  $\mathcal{M}$ contains all Borel sets of $O$ and
  thus $\(O, \mathcal{B}\(O\), \mu\)$ where $\mu$ is
implicitly further restricted to $\mathcal{B}\(O\)$ is a
  Borel measure space.
\end{enumerate}
\end{theorem}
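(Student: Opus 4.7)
The plan is to verify the three assertions in the order (a)/(b) first and then (c), since (c) exploits the metric structure while (a)/(b) are purely set-theoretic and follow from the three axioms of an outer measure.

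For (a) I would first show that $\mathcal{M}$ is an algebra: $\emptyset \in \mathcal{M}$ is immediate from $\mu^{*}(\emptyset) = 0$, and closure under complementation is built into the symmetric form of the defining identity. For closure under finite unions, given $E_1, E_2 \in \mathcal{M}$, I would apply the measurability condition for $E_1$ to the test set $T$, then apply the condition for $E_2$ to the piece $T \cap E_1^c$, and recombine using subadditivity to obtain $\mu^{*}(T) \ge \mu^{*}(T \cap (E_1 \cup E_2)) + \mu^{*}(T \cap (E_1 \cup E_2)^c)$. To upgrade to a $\sigma$-algebra and simultaneously obtain countable additivity on $\mathcal{M}$, I would take a disjoint sequence $\{E_i\} \subset \mathcal{M}$, put $F_n = \bigcup_{i \le n} E_i$, and apply the measurability of $E_n$ to the test set $T \cap F_n$ to get the telescoping identity $\mu^{*}(T \cap F_n) = \sum_{i \le n} \mu^{*}(T \cap E_i)$; monotonicity in $n$, subadditivity for $T \cap F_n^c \supset T \cap (\bigcup_i E_i)^c$, and the $n \to \infty$ limit then deliver both $\bigcup_i E_i \in \mathcal{M}$ and $\mu(\bigcup_i E_i) = \sum_i \mu(E_i)$. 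Specialising $T = \bigcup_i E_i$ gives countable additivity.

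Part (b) follows directly: if $\mu^{*}(E) = 0$ then monotonicity gives $\mu^{*}(T \cap E) = 0$ and $\mu^{*}(T \cap E^c) \le \mu^{*}(T)$, so subadditivity forces equality in the Carath\'{e}odory identity, and $E \in \mathcal{M}$. Completeness of $(O, \mathcal{M}, \mu)$ is an immediate consequence, since any subset of a $\mu$-null set is $\mu^{*}$-null.

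The substantive step is (c). Since $\mathcal{M}$ is a $\sigma$-algebra by (a), it suffices to prove that every closed set $F \subset O$ belongs to $\mathcal{M}$. Fix a test set $T$ with $\mu^{*}(T) < \infty$ (the infinite case is immediate by subadditivity) and, for $n \ge 1$, set
\begin{equation*}
B_n = \{x \in T \cap F^c : d(x, F) \ge 1/n\}.
\end{equation*}
Because $F$ is closed, $B_n \uparrow T \cap F^c$. Since $d(B_n, T \cap F) \ge 1/n > 0$, the metric outer measure property gives $\mu^{*}(T) \ge \mu^{*}((T \cap F) \cup B_n) = \mu^{*}(T \cap F) + \mu^{*}(B_n)$, so the task reduces to showing $\mu^{*}(B_n) \to \mu^{*}(T \cap F^c)$. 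This is the main obstacle, and I would handle it by the standard annular-estimate trick: set $R_k = \{x \in T \cap F^c : 1/(k+1) \le d(x,F) < 1/k\}$, so that $T \cap F^c = B_n \cup \bigcup_{k \ge n} R_k$; consecutive $R_k$'s are at positive distance, which together with the metric-outer-measure property and induction yields $\sum_{k \ge 1}\mu^{*}(R_{2k}) < \infty$ and $\sum_{k \ge 1}\mu^{*}(R_{2k+1}) < \infty$. Hence $\mu^{*}(T \cap F^c) \le \mu^{*}(B_n) + \sum_{k \ge n}\mu^{*}(R_k) \to \mu^{*}(B_n)$ as $n \to \infty$, giving the desired limit and closing the argument. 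Once closed sets are in $\mathcal{M}$, $\mathcal{B}(O) \subset \mathcal{M}$ follows because $\mathcal{M}$ is a $\sigma$-algebra.
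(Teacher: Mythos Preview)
Your proof is correct and follows the standard textbook route (as in, e.g., Evans--Gariepy or Federer). Note, however, that the paper does not actually prove this theorem: it is listed in the Technical Addendum under ``Supplementary definitions and results'' as a background fact, alongside Rademacher, Whitney, Lusin, Egoroff, etc., with no proof given and no specific reference cited for it (unlike the other items in that list). So there is nothing in the paper to compare your argument against; what you have written is precisely the classical argument one would supply if asked to fill this in.
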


\begin{theorem}
\label{Rademacher theorem}
    {Rademacher theorem} \\
    Let $m, n \in \{1, 2, \ldots\}$,
    $E \subset \mathbb{R}^{n}$ be an open set,
    $f: E \to \mathbb{R}^{m}$ be a
    Lipschitz function, then $f$ is
    differentiable $\mathcal{L}_{n} \; a.e.$
    and the gradient $\nabla f$
   is a measurable function.
\end{theorem}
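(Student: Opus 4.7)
The plan is to follow the classical proof of Rademacher's theorem, reducing progressively to simpler cases. First I would reduce to the scalar case $m=1$, since $f$ is differentiable at $x$ if and only if each component $f_i$ is, and the union of $m$ null sets is null. Within the scalar case, the strategy is to show: (i) for every direction $v$ with $|v|=1$, the directional derivative $D_v f(x)$ exists for $\mathcal{L}_n$-almost every $x\in E$; (ii) the map $v \mapsto D_v f(x)$ is linear for almost every $x$; and (iii) the pointwise directional derivative upgrades to a full Fr\'echet derivative.

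For step (i), the function $t \mapsto f(x + tv)$ is Lipschitz on the one-dimensional slice, hence absolutely continuous, hence differentiable for $\mathcal{L}_1$-a.e.\ $t$ by the classical one-variable result. Applying Fubini's theorem along the $v$-direction decomposition of $\mathbb{R}^n$ shows that the set $B_v := \{x \in E : D_v f(x) \text{ does not exist}\}$ is $\mathcal{L}_n$-null; moreover, because $D_v f$ is a pointwise limit of the measurable difference quotients, it is measurable on its domain. I would apply this to the $n$ coordinate directions to obtain the partial derivatives $\partial_i f$ $\mathcal{L}_n$-a.e., which handles the measurability claim in the theorem statement, and also to a countable dense set $\{v_k\}$ of unit vectors so that outside a single null set all of $D_{v_k} f$ exist simultaneously.

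The linearity in step (ii) is the key point where I need to pass from one-dimensional behavior to the multivariate gradient. The standard device is integration against test functions $\psi \in C_c^\infty(E)$: by dominated convergence (the difference quotients are uniformly bounded by the Lipschitz constant of $f$) and translation-invariance of Lebesgue measure,
\begin{align}\begin{split}\nonumber
\int_E D_v f(x)\,\psi(x)\,d\mathcal{L}_n x
= -\int_E f(x)\,D_v \psi(x)\,d\mathcal{L}_n x
= -\sum_{i=1}^n v_i \int_E f(x)\,\partial_i \psi(x)\,d\mathcal{L}_n x
= \int_E \Bigl(\sum_i v_i \partial_i f(x)\Bigr)\psi(x)\,d\mathcal{L}_n x.
\end{split}\end{align}
Since this holds for every test function $\psi$, the Lebesgue differentiation theorem gives $D_v f(x) = \nabla f(x)\cdot v$ for $\mathcal{L}_n$-a.e.\ $x$; taking a countable intersection over $v \in \{v_k\}$ yields a full-measure set $E_0 \subset E$ on which this identity holds for every $v_k$ simultaneously.

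The hard part is step (iii): upgrading the a.e.\ existence of directional derivatives with the correct linear form to true Fr\'echet differentiability, since directional differentiability in every direction is strictly weaker than Fr\'echet differentiability in general. Here the Lipschitz hypothesis is indispensable and does the decisive work. Fix $x \in E_0$, let $L$ denote the Lipschitz constant of $f$, and for any unit vector $u$ choose $v_k$ with $|u - v_k| < \varepsilon$; then
\begin{align}\begin{split}\nonumber
\Bigl|\frac{f(x+tu)-f(x)}{t} - \nabla f(x)\cdot u\Bigr|
\le \Bigl|\frac{f(x+tu)-f(x+tv_k)}{t}\Bigr| + \Bigl|\frac{f(x+tv_k)-f(x)}{t} - \nabla f(x)\cdot v_k\Bigr| + |\nabla f(x)|\,\varepsilon,
\end{split}\end{split}
which is bounded by $L\varepsilon + o_t(1) + |\nabla f(x)|\varepsilon$. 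Because $\{v_k\}$ is finite after truncation to a suitable net and the middle term tends to $0$ as $t \to 0$ uniformly over this finite set, letting first $t \to 0$ and then $\varepsilon \to 0$ gives the required uniform limit in $u$, which is exactly Fr\'echet differentiability at $x$. Finally, measurability of $\nabla f$ on its domain of existence follows since each $\partial_i f$ was obtained as an a.e.\ limit of continuous-in-$x$ difference quotients.
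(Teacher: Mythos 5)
Your proof is correct and follows the standard three-step argument for Rademacher's theorem. Step (i)'s reduction to the one-variable Lebesgue theorem along lines, followed by Fubini to get $\mathcal{L}_n$-nullity of $B_v$, is right (and $B_v$ is Borel since the difference quotients are continuous in $x$, which you need for Fubini to apply). Step (ii)'s integration against $C_c^\infty$ test functions, justified by translation invariance and dominated convergence with the Lipschitz constant as the dominating bound, together with the fundamental lemma of calculus of variations to identify $D_v f = \nabla f\cdot v$ a.e., is standard and correct. Step (iii) is the genuinely delicate point, and you handle it properly: fix the full-measure set $E_0$ on which the countably many directional derivatives all exist and equal $\nabla f(x)\cdot v_k$, extract a finite $\varepsilon$-net of the compact unit sphere from the dense family, use the uniform-over-a-finite-set convergence of the difference quotients in those directions, and absorb the remaining error via the Lipschitz constant and $|\nabla f(x)|$. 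Finally, measurability of $\nabla f$ is correctly obtained from $\partial_i f$ being an a.e.\ limit along a countable sequence of continuous-in-$x$ difference quotients.

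Be aware, though, that the paper itself does not prove Rademacher's theorem. It appears only as a background result in the technical addendum, and the text directs the reader to Theorem 2.10.43 of Federer's book for the proof; the result is then used inside the paper's proof of the area formula (Theorem \ref{area formula classic}, part 2) and coarea formula, as an ingredient in the Lusin--Whitney--Egoroff approximation of Lipschitz maps by $C^1$ ones. So your proposal is a self-contained argument standing in for a citation rather than a reworking of a proof given in the paper. The classical route you followed (essentially the Evans--Gariepy proof) is the most elementary available and is entirely appropriate for the role the result plays here.
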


\begin{theorem}
\label{Whitney extension theorem}
    Whitney extension theorem \\
    Let $m \in \{1, 2, \ldots\}$,
    $E \subset \mathbb{R}^{m}$ be a closed set,
    $f: E \to \mathbb{R}$,
    $g: E \to \mathbb{R}^{m}$
    be continuous functions. Denote
    \begin{align}\begin{split}\nonumber
    R\(x, a\) = \frac{f\(x\) - f\(a\) -
    g\(a\)\(x - a\)}
    {\vert x - a\vert}, \;
    x, a \in E, x \neq a.
    \end{split}\end{align}
    If for all compact set $C \subset E$,
    \begin{align}\begin{split}\nonumber
    \sup\left\{\left\vert R\(x, a\)\right\vert \vert
    0 < \left\vert x - a\right\vert \leq \delta,
    x, a \in C\right\} \to 0,
    \end{split}\end{align}
    as $\delta \downarrow 0$. Then there exists
    a $C^{1}$ function
    $\overline{f}: \mathbb{R}^{m} \to
    \mathbb{R}$
    such that
    \begin{align}\begin{split}\nonumber
    \overline{f}\vert_{E} = f, \quad
    \overline{f}'\vert_{E} = f'.
    \end{split}\end{align}
\end{theorem}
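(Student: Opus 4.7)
The plan is to build the extension $\bar f$ by hand via a Whitney-type decomposition of the open set $U := \mathbb{R}^m \setminus E$, glue together first-order Taylor polynomials determined by the prescribed derivative data $g$, and then use the uniform vanishing of the first-order remainder $R(x,a)$ on compacta to upgrade mere continuity at the boundary to genuine $C^1$ regularity. This is the same strategy Whitney used in his original 1934 paper; the modern streamlined form (Whitney cubes plus a subordinate partition of unity) can be found, for instance, in Stein, \emph{Singular Integrals}, Chapter VI.

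Concretely, I would first invoke the standard Whitney decomposition of $U$: a locally finite family of closed dyadic cubes $\{Q_k\}$ with pairwise disjoint interiors whose union is $U$ and which satisfies $\mathrm{diam}\,Q_k \leq d(Q_k, E) \leq 4\,\mathrm{diam}\,Q_k$. Dilating each $Q_k$ by a fixed factor (say $9/8$) to obtain $Q_k^*$ gives an open cover of $U$ with uniformly bounded overlap $N_m$, to which one attaches a smooth partition of unity $\{\phi_k\}$ obeying $|\partial^\alpha \phi_k| \leq C_\alpha (\mathrm{diam}\,Q_k)^{-|\alpha|}$. For each $k$ pick $p_k \in E$ realizing $d(Q_k, E)$ up to a factor of two and define
\[
\bar f(x) = \begin{cases} f(x), & x \in E, \\ \displaystyle\sum_k \phi_k(x)\bigl[f(p_k) + g(p_k)\cdot(x - p_k)\bigr], & x \in U. \end{cases}
\]
On $U$ the sum is locally finite, so $\bar f \in C^\infty(U)$ with a termwise-computed gradient; since $\sum_k \phi_k \equiv 1$, and hence $\sum_k \nabla\phi_k \equiv 0$, one may subtract any single reference Taylor polynomial inside the $\nabla\phi_k$ sum to expose cancellation.

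The decisive step will be verifying that both $\bar f$ and $\nabla \bar f$ extend continuously from $U$ to each boundary point $a \in E$ with the correct limiting values $f(a)$ and $g(a)$. Fix such an $a$ and take $x \in U$ with $|x-a|$ small. For every $k$ with $\phi_k(x) > 0$ the defining geometry of the Whitney cubes forces $|x - p_k|$ and $|a - p_k|$ to be at most a constant multiple of $|x - a|$, so expanding
\[
f(p_k) + g(p_k)\cdot(x - p_k) - f(a) - g(a)\cdot(x-a) = |p_k - a|\,R(p_k, a) + \bigl[g(p_k)-g(a)\bigr]\cdot(x-p_k)
\]
and invoking the uniform hypothesis on $R$ together with the continuity of $g$ on a compact neighbourhood of $a$ in $E$ yields an $o(|x-a|)$ bound, uniformly in $k$. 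An analogous expansion for the gradient writes $\nabla \bar f(x) - g(a) = \sum_k \nabla \phi_k(x)\bigl[\text{error terms of size }o(|x-a|)\bigr]$, and the derivative bound $|\nabla\phi_k(x)| \lesssim (\mathrm{diam}\,Q_k)^{-1} \lesssim |x-a|^{-1}$ is exactly compensated by the size of the bracket. The main obstacle is organizing these estimates so that the finite overlap $N_m$ keeps the number of nonzero terms bounded while the hypothesis on $R$ is applied with $x$ and $a$ ranging over a \emph{common} compact subset of $E$ containing the relevant $p_k$; this is precisely where uniform vanishing on every compact $C \subset E$, rather than mere pointwise decay, is indispensable.
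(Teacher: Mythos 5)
The paper does not supply a proof of this result; it is stated as background and deferred to Theorem 6.10 of \cite{evans2018measure} and 3.1.14 of \cite{federer2014geometric}. Your sketch is a faithful reconstruction of exactly the classical Whitney-cube argument that those references use: dyadic decomposition of $\mathbb{R}^m\setminus E$ with $\mathrm{diam}\,Q_k$ comparable to $d(Q_k,E)$, inflated cubes of bounded overlap with a subordinate partition of unity obeying $|\partial^\alpha\phi_k|\lesssim(\mathrm{diam}\,Q_k)^{-|\alpha|}$, gluing the first-order Taylor data anchored at nearby points $p_k\in E$, and the key identity $\sum_k\nabla\phi_k\equiv0$ to expose the $o(|x-a|)$ cancellation in the gradient estimate near $E$. (Two small remarks: the conclusion in the paper's statement should read $\overline{f}'\vert_E=g$ rather than $f'$ — a typo, since $f$ is only defined on the possibly nowhere-dense set $E$ — and your sketch quietly relies on locality, namely that for $x$ near a fixed $a\in E$ the relevant $p_k$ all lie in a common compact subset of $E$; this is true and is precisely why uniformity of the remainder on each compact $C\subset E$, rather than globally, suffices.)
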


\begin{theorem}
\label{Lusin theroem}
Lusin theorem \\
    Let $\mu$ be a Borel regular measure
    over a metric space $X$,
    $m \in \{1, 2, \ldots\}$,
    $f: X \to \mathbb{R}^{m}$ be a $\mu$
    measurable function,
    $E \subset X$ be a $\mu$ measurable set,
    $\mu\(E\) < \infty$.
    Then for arbitrary $\epsilon > 0$,
    there exists a compact set
    $C \subset E$ such that
    $\mu\(E \backslash C\) < \epsilon$ and
    $f\vert_{C}$ is continuous.
\end{theorem}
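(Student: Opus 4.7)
The plan is to reduce the problem to a uniform approximation by piecewise-constant functions on a compact set, and then use the Hausdorff separation of disjoint compacta to conclude continuity of the limit. First, I would reduce to $m = 1$ by applying the scalar version componentwise: if compact sets $C_{1}, \ldots, C_{m} \subset E$ work for each coordinate of $f$ with measure defect below $\epsilon/m$, then $C = C_{1} \cap \cdots \cap C_{m}$ is compact with $\mu(E \setminus C) < \epsilon$, and $f|_{C}$ is continuous because each coordinate map is.

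For the scalar case, for each positive integer $n$ I would partition $\mathbb{R}$ into countably many half-open intervals $I_{n, k}$ of length $1/n$, and set $A_{n, k} = f^{-1}(I_{n, k}) \cap E$. These form a disjoint $\mu$-measurable cover of $E$. Because $\mu(E) < \infty$, I can choose a finite index set so that the omitted $A_{n, k}$ have total measure at most $\epsilon/2^{n + 1}$. The central step is then inner approximation by compact sets: using Borel regularity together with $\mu(E) < \infty$ (and the tacit $\sigma$-compactness of the ambient space assumed in the applications of this theorem), I select compact $C_{n, k} \subset A_{n, k}$ for the finitely many retained $k$, with $\mu(A_{n, k} \setminus C_{n, k})$ small enough that $C_{n} = \bigcup_{k} C_{n, k}$ is a finite disjoint union of compacta satisfying $\mu(E \setminus C_{n}) < \epsilon / 2^{n}$. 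Taking $C = \bigcap_{n \geq 1} C_{n}$ yields a compact subset of $E$ with $\mu(E \setminus C) \leq \sum_{n} \epsilon/2^{n} < \epsilon$.

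To upgrade this to continuity of $f|_{C}$, define a piecewise-constant $f_{n}$ on $C_{n}$ by choosing any representative $y_{n, k} \in I_{n, k}$ and setting $f_{n} \equiv y_{n, k}$ on $C_{n, k}$. Since the $C_{n, k}$ are pairwise disjoint compact subsets of a metric space, each pair is at strictly positive distance, and so the locally constant map $f_{n}$ is continuous on $C_{n}$, and in particular on $C$. By construction $|f - f_{n}| \leq 1/n$ pointwise on $C$, hence $f_{n}|_{C} \to f|_{C}$ uniformly, and a uniform limit of continuous real-valued functions on a metric space is continuous.

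The main obstacle I expect is the compact inner-approximation step, since Borel regularity in the Federer sense only asserts outer approximation by Borel sets of equal measure; inner approximation by closed sets of finite-measure Borel sets follows by a standard finite-measure duality argument, but upgrading ``closed'' to ``compact'' requires extra structure on $X$ such as $\sigma$-compactness or tightness of $\mu$ (both of which hold in the Euclidean settings to which this lemma is later applied). A secondary but routine task is verifying the measure bookkeeping: that the summable errors $\epsilon/2^{n}$ telescope so that $C$ genuinely has defect less than $\epsilon$, and that truncating to finitely many $k$ at each scale does not interfere with the uniform approximation $|f - f_{n}| \leq 1/n$ on $C$.
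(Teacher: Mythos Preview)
The paper does not supply its own proof of this statement; in the paragraph following the block of auxiliary theorems it simply refers the reader to Theorem 2.3.5 of Federer's \emph{Geometric Measure Theory}. There is therefore nothing to compare against beyond that citation.

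Your argument is the standard one and is sound, and you have already put your finger on the only genuine issue: Borel regularity in Federer's sense yields inner approximation of finite-measure sets by \emph{closed} sets, not compact ones, so the conclusion ``compact $C$'' as stated in the paper requires an extra structural hypothesis on $X$ (e.g.\ $\sigma$-compactness, or that $\mu$ is Radon). Federer's 2.3.5 in fact delivers a closed set; the upgrade to compactness is immediate in $\mathbb{R}^n$, which is the only setting in which the paper invokes the result (in the Rademacher--Whitney--Lusin--Egoroff approximation scheme inside the proofs of the area and coarea formulas). Your caveat is therefore exactly right and nothing further is missing.
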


\begin{theorem}
\label{Egoroff theorem}
    Egoroff theorem \\
    Let $\mu$ be a Borel regular measure
    over a metric space $X$,
    $m \in \{1, 2, \ldots\}$,
    $\{f_{n}\}$ be a
    sequence of $\mu$ measurable functions
    $f_{n}: X \to \mathbb{R}^{m}$,
    $f: X \to \mathbb{R}^{m}$ be a
    $\mu$ measurable function. If
    \begin{align}\begin{split}\nonumber
    f_{n}\(x\) \to f\(x\), \;
    \mu \; a.e. \; x \in E,
    \end{split}\end{align}
    where $E \subset X$ is $\mu$ measurable,
    $\mu\(E\) < \infty$.
    Then for arbitrary $\epsilon > 0$,
    there exists a $\mu$ measurable set
    $S \subset E$ such that
    $\mu\(E \backslash S\) < \epsilon$ and
    \begin{align}\begin{split}\nonumber
    f_{n} \to f, \text{ uniformly on } S.
    \end{split}\end{align}
\end{theorem}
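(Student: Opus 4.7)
The plan is, for each resolution $1/k$, to construct a small-measure bad set outside of which the sequence $\{f_n\}$ is eventually within $1/k$ of $f$, and then to combine these bad sets across $k$ with geometrically decreasing measure. First I would discard from $E$ the $\mu$-null set on which $f_n$ fails to converge to $f$; since the measure of a null set is zero, this reduces the problem to the case of everywhere pointwise convergence on $E$ without loss of generality.

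Next, for each positive integer $k$, I would introduce the $\mu$-measurable sets
\bs
E_{n}^{k} = \bigcup_{j \geq n} \left\{x \in E : \left| f_{j}(x) - f(x) \right| \geq 1/k \right\},
\end{split}\end{align}
which are decreasing in $n$. Pointwise convergence on $E$ forces $\bigcap_{n} E_{n}^{k} = \emptyset$, and the hypothesis $\mu(E) < \infty$ then permits continuity from above to yield $\mu(E_{n}^{k}) \downarrow 0$ as $n \to \infty$. I would choose $n_{k}$ so that $\mu(E_{n_{k}}^{k}) < \epsilon/2^{k}$, set $F = \bigcup_{k} E_{n_{k}}^{k}$, and take $S = E \setminus F$. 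Measurability of $S$ is immediate since each $E_{n_{k}}^{k}$ is built from countably many measurable sets.

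By countable subadditivity, $\mu(E \setminus S) = \mu(F) \leq \sum_{k} \epsilon/2^{k} = \epsilon$. Given any $\eta > 0$, I would pick $k$ with $1/k < \eta$; every $x \in S$ lies outside $E_{n_{k}}^{k}$, so $\left| f_{j}(x) - f(x) \right| < 1/k < \eta$ for all $j \geq n_{k}$. Since $n_{k}$ depends only on $\eta$ and not on $x$, this is exactly uniform convergence on $S$.

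The main obstacle is reconciling the pointwise, hence a priori $x$-dependent, rate coming from the a.e.\ convergence hypothesis with the $x$-uniform rate required in the conclusion; this is resolved by the diagonal choice of a single $n_{k}$ per resolution $1/k$, at total cost $\sum_{k}\epsilon/2^{k}=\epsilon$ in measure. The finiteness hypothesis $\mu(E) < \infty$ is indispensable for passing from $\bigcap_{n} E_{n}^{k} = \emptyset$ to $\mu(E_{n}^{k}) \to 0$; without it the conclusion genuinely fails, as illustrated by the translates $f_{n} = \mathbf{1}_{[n,n+1]}$ on $\mathbb{R}$, which converge pointwise to $0$ but not uniformly off any set of finite Lebesgue measure.
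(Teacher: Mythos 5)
Your argument is correct and is the standard proof of Egoroff's theorem, essentially the one appearing in the source the paper cites (Federer, Theorem 2.3.7; see also Evans--Gariepy). The paper itself does not supply a proof, so there is nothing to contrast your route against: the decomposition into tail sets $E_n^k$, continuity from above using $\mu(E)<\infty$, the diagonal choice of $n_k$ with geometric measure budget, and the observation that Borel regularity plays no role in the argument are all exactly what the cited text does. One cosmetic point: as written you obtain $\mu(E\setminus S)\leq\epsilon$ rather than the strict inequality in the statement; replace $\epsilon/2^k$ by $\epsilon/2^{k+1}$ (or run the argument with $\epsilon/2$) to close this trivial gap.
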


\begin{definition}\label{Vitali cover}
{(Vitali cover)} Let $E \subset \mathbb{R}^{m}$,
  $m \in \{1, 2, \ldots\}$.
  If $\mathcal{V}$ is a collection of
  closed balls or closed cubes
  in $\mathbb{R}^{m}$ such that
  for all $x \in E$ and arbitrary
  $\epsilon > 0$,
  there exists $B \in \mathcal{V}$ such that
  $x \in B$ and $\mathrm{diam}B < \epsilon$,
  then $\mathcal{V}$ is called a Vitali cover of $E$.
\end{definition}

\begin{theorem}\label{Vitali covering theorem}
Vitali covering theorem\\
Let $E \subset \mathbb{R}^{m}$, $\mathcal{V}$ is a
Vitali cover of $E$. Then,
there exists an at most countable disjoint subset
$\{B_{j}\} \subset \mathcal{V}$, such that
\begin{align}\begin{split}\nonumber
\mathcal{L}^{*}_{m}\(E \backslash \bigcup_{j}B_{j}\) = 0.
\end{split}\end{align}
\end{theorem}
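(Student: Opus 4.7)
The plan is to execute the classical Vitali-style greedy selection: at each stage pick a ball whose diameter is at least half the supremum of the diameters still available, and then control the residual set by inflating each chosen ball by a fixed geometric factor. The whole difficulty concentrates in one inclusion comparing the uncovered part of $E$ with a tail of the inflated disjoint family.

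First I would make two harmless reductions. Since $\mathcal{V}$ is a Vitali cover, I can throw away all elements of $\mathcal{V}$ whose diameter exceeds $1$ and still have a Vitali cover of $E$, so I may assume $\sup\{\mathrm{diam}\, B : B \in \mathcal{V}\} \le 1$. Since $E$ is a countable union of bounded pieces $E_k = E \cap \{x : k \le |x| < k+1\}$, and since $\mathcal{V}$ restricts to a Vitali cover of each $E_k$, I may further assume $E$ is bounded and that every $B \in \mathcal{V}$ lies inside a fixed open set $U$ with $\mathcal{L}_m(U) < \infty$. Now set $d_1 = \sup\{\mathrm{diam}\, B : B \in \mathcal{V}\}$, choose $B_1 \in \mathcal{V}$ with $\mathrm{diam}\, B_1 > d_1/2$, and inductively let $\mathcal{V}_n = \{B \in \mathcal{V} : B \cap (B_1 \cup \cdots \cup B_n) = \emptyset\}$, $d_{n+1} = \sup\{\mathrm{diam}\, B : B \in \mathcal{V}_n\}$, and $B_{n+1} \in \mathcal{V}_n$ with $\mathrm{diam}\, B_{n+1} > d_{n+1}/2$. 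If the process terminates with $\mathcal{V}_n = \emptyset$, then $E \subset B_1 \cup \cdots \cup B_n$ by the Vitali property and there is nothing left to prove; otherwise it produces a countably infinite disjoint subcollection.

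The crux is the inclusion
\begin{align*}
E \setminus \bigcup_{j \ge 1} B_j \;\subset\; \bigcup_{j > N} 5 B_j \qquad \text{for every } N,
\end{align*}
where $5 B_j$ denotes the ball (or cube) with the same center as $B_j$ and five times the radius (respectively, an analogous fixed-factor expansion for cubes). To prove it, fix $x$ in the left-hand set; since $B_1 \cup \cdots \cup B_N$ is closed and avoids $x$, the Vitali property yields $B \in \mathcal{V}$ containing $x$ with $\mathrm{diam}\, B$ small enough to be disjoint from $B_1 \cup \cdots \cup B_N$. If $B$ were disjoint from \emph{every} $B_j$, then $B \in \mathcal{V}_n$ for every $n$, forcing $\mathrm{diam}\, B_{n+1} > \mathrm{diam}\, B / 2$ and hence $\sum_j \mathcal{L}_m(B_j) = \infty$, contradicting $\sum_j \mathcal{L}_m(B_j) \le \mathcal{L}_m(U) < \infty$. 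So $B$ first meets some $B_k$ with $k > N$; at that moment $B \in \mathcal{V}_{k-1}$, so $\mathrm{diam}\, B \le d_k < 2\, \mathrm{diam}\, B_k$, and a direct triangle-inequality check shows that $B \subset 5 B_k$ whenever $B$ meets $B_k$ and its diameter is at most twice that of $B_k$. The inclusion then gives
\begin{align*}
\mathcal{L}_m^*\!\left(E \setminus \bigcup_{j \ge 1} B_j\right) \le \sum_{j > N} \mathcal{L}_m(5 B_j) = 5^m \sum_{j > N} \mathcal{L}_m(B_j) \longrightarrow 0
\end{align*}
as $N \to \infty$, by summability of the disjoint family.

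The main obstacle is the $5$-times inflation step, which must simultaneously accomplish two things: show that no $B \in \mathcal{V}$ avoiding every $B_j$ can exist (a purely measure-theoretic argument using the finite total volume of the disjoint family), and show that when such a witness $B$ is first forced to meet some $B_k$, the inequality $\mathrm{diam}\, B < 2\, \mathrm{diam}\, B_k$ provided by the greedy choice is already strong enough to trap $B$ inside $5 B_k$. Once this is in hand, everything else reduces to summability bookkeeping. The only genuine subtlety for the cube case is that the expansion factor must be chosen slightly differently, but the argument is otherwise identical.
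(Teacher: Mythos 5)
The paper does not prove this theorem; it cites it to Falconer's Theorem 1.10 and leaves the argument to the reference. Your proof is the classical Banach--Vitali greedy-selection argument with the $5r$-inflation lemma, and the core steps all check out: the finite-termination case, the contradiction forcing a witness ball to eventually hit some $B_k$, the triangle-inequality verification that a ball meeting $B_k$ with $\mathrm{diam}\, B < 2\,\mathrm{diam}\, B_k$ lies inside $5B_k$, and the tail-summability bound from disjointness inside a finite-measure $U$. (Incidentally, your caution about cubes is unnecessary: if $B$ and $B_k$ are axis-parallel closed cubes that meet and $\mathrm{diam}\, B < 2\,\mathrm{diam}\, B_k$, the same coordinatewise estimate gives $B \subset 5B_k$ with $5B_k$ the concentric cube of five times the side length.)

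The one place to tighten is the reduction to bounded $E$. You split $E$ into half-open shells $E_k = E \cap \{k \le |x| < k+1\}$ and invoke the bounded case on each, but the families $\{B_j^{(k)}\}$ produced for different $k$ live in overlapping neighborhoods, so there is nothing preventing a ball selected for $E_k$ from intersecting a ball selected for $E_{k+1}$; the final collection you would hand back is then not disjoint, which is exactly what the theorem asserts. The standard repair is to use pairwise \emph{disjoint open} annuli $A_k$ whose union omits only the countably many spheres $\{|x| = a_k\}$, which form a Lebesgue-null set; restrict $\mathcal{V}$ to those balls contained entirely in a single $A_k$ (still a Vitali cover of $E \cap A_k$ because $A_k$ is open, so any sufficiently small ball of $\mathcal{V}$ around a point of $A_k$ stays inside $A_k$); and only then apply the bounded case. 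Disjointness across $k$ is now automatic from $A_k \cap A_{k'} = \emptyset$, and the set uncovered by the union over all $k$ is contained in a union of null sets. With that modification your proposal is a complete and correct proof along the classical route.
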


\begin{theorem}\label{Isodiametric inequality}
    Isodiametric inequality\\
 For all
    set $E \subset \mathbb{R}^{m}$,
    $m \in \{1, 2, \ldots\}$,
    \begin{align}\begin{split}\nonumber
    \mathcal{L}^{*}\(E\) \leq
    \alpha_{m}\(\frac{\mathrm{diam}\ E}{2}\)^{m}.
    \end{split}\end{align}
\end{theorem}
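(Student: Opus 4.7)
The plan is to prove the isodiametric inequality via Steiner symmetrization with respect to the coordinate hyperplanes. First I would perform a reduction to compact sets: since $\mathcal{L}^*(E)\le\mathcal{L}^*(\bar E)=\mathcal{L}_m(\bar E)$ (the closure is closed, hence Lebesgue measurable) and $\mathrm{diam}\,\bar E=\mathrm{diam}\,E$, and since the inequality is trivial when $\mathrm{diam}\,E=\infty$, it is enough to prove $\mathcal{L}_m(E)\le\alpha_m(\mathrm{diam}\,E/2)^m$ for every compact $E\subset\mathbb{R}^m$.

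Next, for each coordinate index $i\in\{1,\ldots,m\}$ and each $y\in\mathbb{R}^{m-1}$, let $E_{i,y}\subset\mathbb{R}$ be the fiber of $E$ over $y$ in the $i$-th direction, and define the Steiner symmetrization $S_i(E)$ fiber-by-fiber by replacing each non-empty $E_{i,y}$ with the closed interval of length $\mathcal{L}_1(E_{i,y})$ centered at $0$ (and the empty fiber with the empty set). I would then verify three properties of $S_i$:
\textbf{(a)} $S_i(E)$ is Borel/$\mathcal{L}_m$-measurable, using Lusin-type / Fubini-Tonelli arguments (Theorem \ref{Lusin theroem} applied to $y\mapsto\mathcal{L}_1(E_{i,y})$);
\textbf{(b)} $\mathcal{L}_m(S_i(E))=\mathcal{L}_m(E)$, by Fubini-Tonelli and translation invariance of $\mathcal{L}_1$;
\textbf{(c)} $\mathrm{diam}\,S_i(E)\le\mathrm{diam}\,E$, which is the geometric core of the argument.

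For (c), I would pick any two points $p=(y,s)$ and $p'=(y',s')$ in $S_i(E)$, let $a=\inf E_{i,y}$, $b=\sup E_{i,y}$, $a'=\inf E_{i,y'}$, $b'=\sup E_{i,y'}$, and use the inequality
\begin{equation*}
|s-s'|\;\le\;\tfrac{1}{2}\bigl((b-a)+(b'-a')\bigr)\;\le\;\max\{\,|b-a'|,\,|b'-a|\,\}
\end{equation*}
together with the fact that $(y,b),(y,a),(y',b'),(y',a')\in E$ (by compactness) to bound the Euclidean distance between $p$ and $p'$ by the distance between an appropriate pair of points of $E$ with the same transverse coordinates $y,y'$, hence by $\mathrm{diam}\,E$. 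Applying $S_1,\ldots,S_m$ successively produces a compact set $E^{*}$ with $\mathcal{L}_m(E^{*})=\mathcal{L}_m(E)$, $\mathrm{diam}\,E^{*}\le\mathrm{diam}\,E$, and symmetric under every coordinate reflection $x_i\mapsto -x_i$ (one must check that each $S_j$ preserves the symmetries produced by the earlier $S_i$, $i<j$, which is immediate because centering in the $j$-th coordinate commutes with reflection in any other coordinate).

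Finally, symmetry under all coordinate reflections implies that $x\in E^{*}$ forces $-x\in E^{*}$, so $2|x|=|x-(-x)|\le\mathrm{diam}\,E^{*}\le\mathrm{diam}\,E$, giving $E^{*}\subset\overline{B}(0,\mathrm{diam}\,E/2)$. Using $\mathcal{L}_m(\overline{B}(0,r))=\alpha_m r^m$, we conclude
\begin{equation*}
\mathcal{L}^*(E)\;\le\;\mathcal{L}_m(\bar E)\;=\;\mathcal{L}_m(E^{*})\;\le\;\alpha_m\bigl(\mathrm{diam}\,E/2\bigr)^m.
\end{equation*}
The main obstacle will be the rigorous verification of (c): the estimate above looks innocuous but relies on a careful case analysis comparing the four endpoints $a,b,a',b'$, and on invoking compactness of $E$ to guarantee that $\inf$ and $\sup$ are attained in each fiber so that the endpoints actually belong to $E$. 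A secondary technical point is ensuring measurability of $y\mapsto\mathcal{L}_1(E_{i,y})$ so that the symmetrized set is itself measurable; here the compactness of $E$ again helps, since one can approximate $E$ by finite unions of closed boxes and pass to the limit.
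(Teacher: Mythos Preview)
Your proposal is correct and follows the standard Steiner symmetrization argument. The paper does not supply its own proof of this theorem; it simply cites Theorem 2.4 in \cite{evans2018measure} and Corollary 2.10.33 in \cite{federer2014geometric}, and your outline is precisely the Evans--Gariepy proof, so there is nothing to compare.
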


\begin{theorem}\label{Spherical Hausdorff outer measure}
Coincidence between Spherical Hausdorff and Hausdorff outer measures\\
    For all set $E \subset \mathbb{R}^{m}$,
    $m \in \{1, 2, \ldots\}$,
    \begin{align}\begin{split}\nonumber
    \mathcal{H}^{S*}_{m} = \mathcal{H}^{*}_{m},
    \end{split}\end{align}
    where $\mathcal{H}^{S*}_{m}$, the
    spherical Hausdorff outer measure
    is defined as
    \begin{align}\begin{split}\nonumber
    \mathcal{H}^{S*}_{m} = \lim_{\delta \downarrow 0}
    \inf\left\{\sum_{j \geq 1}\alpha_{m}
    \(\frac{\mathrm{diam}\ B_{j}}{2}\)^{m} :
    E \subset \bigcup_{j \geq 1}B_{j},
    \mathrm{diam}\ B_{j} \leq \delta,
    B_{j} \text{ is a closed ball}\right\}.
    \end{split}\end{align}
\end{theorem}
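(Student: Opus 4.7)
The plan is to sandwich both $\mathcal{H}^{*}_{m}$ and $\mathcal{H}^{S*}_{m}$ between the Lebesgue outer measure on $\mathbb{R}^{m}$ (which, recall from the excerpt, already equals $\mathcal{H}^{*}_{m}$) by exhibiting two chains of inequalities that close up. The only delicate ingredient is the Vitali covering theorem \ref{Vitali covering theorem}; the rest is bookkeeping driven by the isodiametric inequality \ref{Isodiametric inequality}.

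First I would record the trivial direction $\mathcal{H}^{*}_{m}(E) \leq \mathcal{H}^{S*}_{m}(E)$: every admissible cover in the definition of $\mathcal{H}^{S*}_{m,\delta}$ is also an admissible cover in the definition of $\mathcal{H}^{*}_{m,\delta}$, so the infimum defining the former dominates the infimum defining the latter at each scale $\delta$, and passing to $\delta \downarrow 0$ preserves the inequality. The substantive task is therefore to show $\mathcal{H}^{S*}_{m}(E) \leq \mathcal{H}^{*}_{m}(E)$, which I would do by establishing the intermediate bounds $\mathcal{L}^{*}_{m}(E) \leq \mathcal{H}^{*}_{m}(E)$ and $\mathcal{H}^{S*}_{m}(E) \leq \mathcal{L}^{*}_{m}(E)$.

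The first intermediate bound is a one-line application of isodiametric: any admissible cover $\{B_{j}\}$ with $\mathrm{diam}\ B_{j} \leq \delta$ satisfies
\begin{align}\begin{split}\nonumber
\mathcal{L}^{*}_{m}(E) \leq \sum_{j}\mathcal{L}^{*}_{m}(B_{j}) \leq \sum_{j}\alpha_{m}\left(\frac{\mathrm{diam}\ B_{j}}{2}\right)^{m}
\end{split}\end{align}
by Theorem \ref{Isodiametric inequality}, and taking the infimum over such covers and then sending $\delta \downarrow 0$ yields $\mathcal{L}^{*}_{m}(E) \leq \mathcal{H}^{*}_{m}(E)$.

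The second intermediate bound is the real work. For each $\delta>0$ and each $\varepsilon>0$, by outer regularity of Lebesgue measure choose an open set $U \supset E$ with $\mathcal{L}_{m}(U) \leq \mathcal{L}^{*}_{m}(E) + \varepsilon$. The collection of closed balls contained in $U$ with diameter at most $\delta$ is a Vitali cover of $U$, so Theorem \ref{Vitali covering theorem} furnishes a countable disjoint subfamily $\{B_{j}\}$ with $\mathcal{L}^{*}_{m}\big(U \setminus \bigcup_{j} B_{j}\big) = 0$. For a closed ball $B_{j}$ in $\mathbb{R}^{m}$ the Lebesgue volume is exactly $\alpha_{m}(\mathrm{diam}\ B_{j}/2)^{m}$, hence by disjointness
\begin{align}\begin{split}\nonumber
\sum_{j}\alpha_{m}\left(\frac{\mathrm{diam}\ B_{j}}{2}\right)^{m} = \sum_{j}\mathcal{L}_{m}(B_{j}) \leq \mathcal{L}_{m}(U) \leq \mathcal{L}^{*}_{m}(E) + \varepsilon.
\end{split}\end{align}
The set $N \coloneqq E \setminus \bigcup_{j}B_{j}$ is $\mathcal{L}^{*}_{m}$-null, and I would handle it by reapplying Vitali to an open set of Lebesgue measure smaller than $\varepsilon$ containing $N$, producing a second disjoint ball family whose spherical Hausdorff sum is at most $\varepsilon$. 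Concatenating the two families gives a spherical cover of $E$ at scale $\delta$ whose $\mathcal{H}^{S*}_{m,\delta}$-cost is at most $\mathcal{L}^{*}_{m}(E) + 2\varepsilon$. Since $\varepsilon$ is arbitrary and the bound is uniform in $\delta$, letting $\delta\downarrow 0$ yields $\mathcal{H}^{S*}_{m}(E) \leq \mathcal{L}^{*}_{m}(E)$, and combining the three inequalities closes the loop.

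The main obstacle I anticipate is handling the residual null set $N$ cleanly so that the sum of $\alpha_{m}(\mathrm{diam}/2)^{m}$ over the supplementary balls is genuinely controlled by $\varepsilon$ rather than merely by the Lebesgue measure of a covering open set; this requires carefully choosing the enclosing open set before invoking Vitali so that disjointness propagates the bound through to the spherical Hausdorff sum. The rest is standard.
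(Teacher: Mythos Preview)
The paper does not supply its own proof of this theorem; it simply refers the reader to Federer, Section~2.10.6. Your overall strategy---sandwiching everything through $\mathcal{L}^{*}_{m}$ via the isodiametric inequality and the Vitali covering theorem---is the standard route, and your two easy inequalities $\mathcal{H}^{*}_{m}\leq \mathcal{H}^{S*}_{m}$ and $\mathcal{L}^{*}_{m}\leq \mathcal{H}^{*}_{m}$ are fine.

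The gap is in your handling of the Vitali residual. The disjoint family $\{B_j\}$ produced by Theorem~\ref{Vitali covering theorem} covers $U$ only up to a Lebesgue-null set; it is not a cover of $E$. Your proposed remedy, reapplying Vitali inside an open neighborhood of the residual $N$, inherits exactly the same defect: the second disjoint family again misses a null set, so the concatenation of the two families still fails to cover $E$, and you cannot legitimately bound $\mathcal{H}^{S*}_{m,\delta}(E)$ by its cost. Iterating does not help either, because nothing forces the nested residuals to intersect down to the empty set. The correct repair is to abandon Vitali for the residual altogether: since $\mathcal{L}^{*}_{m}(N)=0$, for any $\eta>0$ you can cover $N$ by cubes of diameter below $\delta$ with total Lebesgue volume less than $\eta$, and the circumscribed closed balls then give an honest spherical $\delta$-cover of $N$ with $\sum_i \alpha_{m}(\mathrm{diam}\,B_i/2)^{m}\leq \alpha_{m}(\sqrt{m}/2)^{m}\eta$. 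Adjoining these balls to the original Vitali family yields a genuine spherical $\delta$-cover of $E$ of cost at most $\mathcal{L}^{*}_{m}(E)+\varepsilon+\alpha_{m}(\sqrt{m}/2)^{m}\eta$; sending $\varepsilon,\eta\to 0$ and then $\delta\downarrow 0$ finishes the proof.
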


\begin{definition}{(Clarke subgradients)}\label{definition 3}
	Consider a locally Lipschitz function $f:\Omega \to \mathbb{R}$, where $\Omega \subset \mathbb{R}^n$ is an open subset. For each $x\in \Omega$, define
	\bs
	f^\circ\(x;v\) &\coloneqq \limsup_{\substack{y\to x\\ \lambda\downarrow 0}}
	\frac{
		f\(y + \lambda v\) -f\(y\)
	}{
\lambda
	}\\
	&=\lim_{\substack{\varepsilon\to 0\\ \epsilon\downarrow 0}}
	\sup\left\{
		\frac{
			f\(y+\lambda v\) - f\(y\)
		}{
\lambda
		}: y \in \Omega \cap B\(x, \varepsilon\), \lambda \in \(0, \epsilon\)
	\right\}, \forall v \in \mathbb{R}^n, y +\lambda v \in \Omega,
\end{split}\end{align}
and the Clarke subgradient\footnote{This is also called Clarke subdifferential in some literatures, or generalized gradient and
generalized directional derivative by Francis Clarke.
} of $f$ at $x$:
\bs
\partial f\(x\) \coloneqq \left\{
	\xi \in \mathbb{R}^n: f^\circ\(x; v\) \geq \left\langle v, \xi\right\rangle, \forall v \in \mathbb{R}^n
\right\}.
\end{split}\end{align}
\end{definition}

\begin{theorem}\label{clarke subgradients alternative}{Characterization of Clarke subgradients in $\mathbb{R}^n$} \\
Let $f: \Omega \to R$ be a locally Lipschitz function, where $\Omega \subset \mathbb{R}^{n}$ is an open set, then
\bs
\partial f\(x\) = \text{conv}\left\{
	\lim_{k \to \infty} \nabla f\(x_k\): x_k \to x, \nabla f\(x_k\)\ \text{exists}
\right\}.
\end{split}\end{align}
\end{theorem}

\begin{definition}{(Clarke Jacobian)}\label{Clarke Jacobian}
Let $F: \Omega \to \mathbb{R}^m$ be a locally Lipschitz function, where
$\Omega \subset \mathbb{R}^n$ is an open set and $m > 1$.
The Clarke Jacobian of
$F$ at $x\in \Omega$, denoted as $J_c F\(x\)$, is
\bs
J_c F\(x\) \coloneqq \text{conv}
\left\{
\lim_{k\to\infty} J F\(x_k\): x_k \to x, JF\(x_k\)\ \text{exists}
\right\}.
   \end{split}\end{align}
\end{definition}

\begin{theorem}\label{Nonsmooth implicit function theorem}
Nonsmooth implicit function theorem\\
Let $F: \Omega \to \mathbb{R}^m$ be a locally Lipschitz ($C^{1}$) function,
where $\Omega \subset \mathbb{R}^{n+m}$ is an open set. Assume that
$\(x_0, y_0\)$ is such that
\begin{enumerate}
\item $F\(x_0, y_0\) = 0$.
\item $J_{c,y} F\(x_0, y_0\)$ is full rank in the sense that all matrices in
$J_{c,y} F\(x_0, y_0\)$ is full rank,
where $J_{c,y} F\(x_0, y_0\)$
consists of all $m \times m$ component matrices in
$J_{c} F\(x_0, y_0\)$ written as $\[A_{m \times n}, B_{m \times m}\]$.
\end{enumerate}
Then there exists an $\(n+m\)$-dimensional interval $I = I_x^n \times I_y^m
\subset \Omega$, where for some positive vectors $\alpha$ and $\beta$,
\bs
I_x = \left\{
x \in \mathbb{R}^n: \vert x - x_0\vert < \alpha
\right\},\quad
I_y = \left\{
y \in \mathbb{R}^m: \vert y - y_0\vert < \beta
\right\},
   \end{split}\end{align}
where $\vert x - x_0\vert < \alpha$ means that
$\vert x_i - x_{0,i}\vert < \alpha_i$  for $\alpha = \(\alpha_1, \ldots, \alpha_n\)$, and a Lipschitz ($C^{1}$) function $\xi: I_x^n \to I_y^m$ such that
for all $(x, y) \in I_x^n \times I_y^m$,
\bs
F\(x,y\) = 0 \Leftrightarrow y = \xi\(x\).
   \end{split}\end{align}
\end{theorem}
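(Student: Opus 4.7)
The plan is to deduce this from a nonsmooth inverse function theorem by introducing the auxiliary map $G: \Omega \to \mathbb{R}^{n+m}$ defined by $G(x,y) = (x, F(x,y))$. In the classical $C^1$ case this is the standard route to the implicit function theorem via the Banach contraction mapping principle; I will exploit the same algebraic trick to reduce the Lipschitz case to Clarke's inverse function theorem for locally Lipschitz mappings.

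The first step is to compute $J_c G(x_0, y_0)$. Because the $x$-block of $G$ is smooth and independent of $y$, every element of the Clarke Jacobian of $G$ has the block form $\left(\begin{smallmatrix} I_n & 0 \\ A & B \end{smallmatrix}\right)$ where $[A,B] \in J_c F(x_0, y_0)$ and in particular $B \in J_{c,y} F(x_0, y_0)$. The determinant of such a block matrix equals $\det B$, so by hypothesis every element of $J_c G(x_0, y_0)$ is invertible. This is precisely the hypothesis of Clarke's inverse function theorem, which will supply open neighborhoods $U$ of $(x_0, y_0)$ and $V$ of $G(x_0, y_0) = (x_0, 0)$ on which $G|_U: U \to V$ is a Lipschitz homeomorphism with Lipschitz inverse $H: V \to U$.

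Writing $H(u,v) = (H_1(u,v), H_2(u,v))$, the identity $G \circ H = \mathrm{id}$ forces $H_1(u,v) = u$. I then define $\xi(x) = H_2(x, 0)$ on a rectangular neighborhood $I_x^n \times I_y^m$ of $(x_0, y_0)$ contained in $U$ and chosen small enough that $I_x^n \times \{0\} \subset V$; such a rectangle exists because $U$ and $V$ are open. By construction $F(x, \xi(x)) = 0$, and $\xi$ inherits the Lipschitz property from $H$. Uniqueness within the rectangle is immediate: if $F(x,y) = 0$ for some $(x,y) \in I_x^n \times I_y^m$, then $G(x,y) = (x,0) = G(x,\xi(x))$, and injectivity of $G|_U$ gives $y = \xi(x)$. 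The smooth case is handled identically, replacing Clarke's theorem with the classical inverse function theorem to produce a $C^1$ implicit function.

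The main obstacle is Clarke's inverse function theorem itself, which I would invoke as a black box. It asserts that a locally Lipschitz map whose Clarke Jacobian at a point consists entirely of invertible matrices is locally bi-Lipschitz, and its proof is genuinely nonsmooth. The key ingredients are upper semicontinuity of $w \mapsto J_c G(w)$, a compactness argument ensuring that invertibility together with a uniform singular-value lower bound persists on a small neighborhood of $(x_0, y_0)$, a mean-value type selection along line segments that yields $\|G(w_1) - G(w_2)\| \geq c \|w_1 - w_2\|$, and a degree-theoretic argument to upgrade continuous injectivity into a homeomorphism onto an open image. Once that inverse function theorem is in hand, all the remaining steps above reduce to algebraic bookkeeping.
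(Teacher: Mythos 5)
The paper does not supply its own proof of this theorem; it cites \cite{clarke1976inverse} and Theorem 11 of \cite{hiriart1979tangent} as the sources, and your argument—augmenting $F$ to $G(x,y)=(x,F(x,y))$, verifying that $J_c G$ consists of block lower-triangular matrices with determinant $\det B$ for $B\in J_{c,y}F(x_0,y_0)$, and then invoking Clarke's inverse function theorem—is precisely the standard derivation found in that literature. Your proof is correct; the only minor bookkeeping you gloss over is that, after constructing $\xi(x)=H_2(x,0)$, one should shrink $I_x^n$ once more (using continuity of $\xi$ and $\xi(x_0)=y_0$) to guarantee $\xi(I_x^n)\subset I_y^m$ as required by the stated conclusion.
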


\begin{definition}{(Sub (super) differentiability)}\label{sub sup differentiability}
Let $\Omega$ be an open set of $\mathbb{R}^n$, and $f: \Omega \to \mathbb{R}$ a function. Then
		$f$ is said to be subdifferentiable at $x$, with subgradient $p$, if
		\bs
		f\(x'\) \geq f\(x\) + \left\langle p, x' - x\right\rangle + o\(\Vert x' - x\Vert\).
\end{split}\end{align}
The convex set of all subgradients $p$ at $x$ will be denoted by $\nabla^- f\(x\)$.
If $\(-f\)$ is subdifferentiable, then $f$ is said to be superdifferentiable, and the convex set of the negated subgradients for $\(-f\)$ at $x$ is denoted as $\nabla^+ f$. 
\end{definition}

\begin{theorem}\label{morse-sard theorem}
Morse-Sard theorem\\
Let $f \in C^r\(\Omega, \mathbb{R}^m\)$, where $\Omega \in \mathbb{R}^n$ is an open set,
$r > \max\(n - m, 0\)$, then the set of critical values of $f$ is of zero Lebesgue measure and is meager.
\end{theorem}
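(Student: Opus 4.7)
The plan is to follow the classical Whitney--Sard stratification, with the streamlining provided in \cite{figalli2008simple}. Two reductions come first. Because $U$ is a countable union of compact cubes $K_i$ and $f(C) = \bigcup_i f(C \cap K_i)$ where $C$ denotes the critical set, it suffices to show that each $f(C \cap K_i)$ has $\mathcal{L}_m$-measure zero; meagerness then follows automatically, since $f(C \cap K_i)$ is the continuous image of a compact set, hence closed, and a closed $\mathcal{L}_m$-null subset of $\mathbb{R}^m$ has empty interior and is therefore nowhere dense. Two sub-cases are dispatched immediately. When $n = m$ and $r \geq 1$, the area formula \eqref{area formula area} gives $\mathcal{L}_m(f(C \cap K)) \leq \int_{C \cap K} Jf \, d\mathcal{L}_n = 0$, since $Jf = 0$ on critical points. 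When $n < m$, a direct covering argument using the Lipschitz constant of $f$ on compact sets shows that the entire image has $\mathcal{L}_m$-measure zero, regardless of criticality.

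The substantive case is $n > m$, for which I would argue by induction on $n$, with the base $n = m$ already in hand. I would stratify the critical set as
\begin{equation*}
C \supset C_1 \supset C_2 \supset \cdots, \qquad C_j := \{x \in U : D^i f(x) = 0 \text{ for all } 1 \leq i \leq j\},
\end{equation*}
select an integer $k$ with $(k+1)m > n$ and $k+1 \leq r$, and decompose
\begin{equation*}
f(C) = f(C \setminus C_1) \;\cup\; \bigcup_{j=1}^{k-1} f(C_j \setminus C_{j+1}) \;\cup\; f(C_k).
\end{equation*}
Each of the three kinds of pieces is treated by a separate device.

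For the top piece $f(C_k)$, Taylor's theorem at any $x \in C_k$ gives $|f(y) - f(x)| \leq C|y - x|^{k+1}$ on a compact neighborhood, since partials of orders $1, \ldots, k$ vanish on $C_k$ and $f \in C^{k+1}$; covering $C_k \cap K$ by $O(s^{-n})$ cubes of side $s$ bounds the total $m$-volume of the image by $O(s^{(k+1)m - n})$, which vanishes as $s \downarrow 0$. For an intermediate piece $f(C_j \setminus C_{j+1})$ with $1 \leq j \leq k-1$, pick a point where some $(j+1)$-th partial derivative $w := \partial^\alpha f_\ell$ is nonzero while all lower-order partials vanish; the implicit function theorem (Theorem \ref{Nonsmooth implicit function theorem}) straightens the $C^{r-j-1}$ hypersurface $\{w = 0\}$, which locally contains $C_j$, into a coordinate plane, reducing Morse--Sard on $C_j \setminus C_{j+1}$ to Morse--Sard for a $C^{r-j}$ map out of $\mathbb{R}^{n-1}$, which the inductive hypothesis in $n$ covers. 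For the bottom piece $f(C \setminus C_1)$, some entry of $Df$ is nonzero at each point, so another implicit-function change of coordinates puts $f$ locally in the form $(x_1, x') \mapsto (x_1, g(x_1, x'))$ with $g : \mathbb{R} \times \mathbb{R}^{n-1} \to \mathbb{R}^{m-1}$; Morse--Sard in dimension $n-1$ applied to each slice $g(x_1, \cdot)$, combined with Fubini in the target $\mathbb{R}^m$, yields an image of $\mathcal{L}_m$-measure zero.

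The main obstacle is reaching the sharp regularity threshold $r > n - m$: the naive Taylor bookkeeping sketched above requires only $r > n/m$, so a more delicate balancing between the Taylor estimate on $C_k$ and the implicit-function reduction on the intermediate strata is needed to push the threshold down. A cleaner alternative, mentioned in the main text and due to \cite{figalli2008simple}, is to work in a Sobolev framework: derive a Morrey-type inequality from the coarea formula \eqref{coarea formula integral}, use the Whitney extension theorem (Theorem \ref{Whitney extension theorem}) to produce a $C^r$ representative of $f$ with small derivatives in a neighborhood of the critical set, and conclude by a direct image-measure estimate through the Morrey bound; this route additionally establishes the result in a stronger Sobolev sense that subsumes the $C^r$ case by standard approximation. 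I would present the main argument along the classical path and record the Sobolev version as a remark.
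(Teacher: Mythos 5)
The paper never proves this theorem: it is listed in the technical addendum as a known background result, and the main text explicitly defers to \cite{figalli2008simple}, whose argument derives a Morrey-type inequality from the coarea formula \eqref{coarea formula integral} and combines it with the Whitney extension theorem (Theorem \ref{Whitney extension theorem}) to get the sharp result in a Sobolev setting that subsumes the $C^r$ case. Your primary line — the classical Whitney--Sard stratification by the vanishing order of derivatives — is therefore a genuinely different route, and you correctly flag the Sobolev route as the alternative the paper has in mind.

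The gap you acknowledge is real, and it sits in the intermediate strata rather than the top one. For the top stratum you can in fact reach the sharp threshold by taking $C_r$ (not some smaller $C_k$) and using the Peano form of the Taylor remainder: on $C_r$ all partials through order $r$ vanish, uniform continuity of $D^r f$ on a compact cube makes the remainder $o(|y-x|^r)$ uniformly, and the covering bound on the image is $o(s^{rm-n})$, which vanishes once $rm \ge n$ — a consequence of $r \ge n-m+1$. The intermediate strata $C_j \setminus C_{j+1}$ are where the bookkeeping breaks. The Milnor-style choice is $w = \partial^\alpha f_\ell$ with $|\alpha|=j$ (a $j$-th, not $(j+1)$-th, order partial, so $w$ is $C^{r-j}$), $w(x)=0$, $\partial_i w(x)\neq 0$; straightening $\{w=0\}$ is a $C^{r-j}$ change of variables, and the slice map from $\mathbb{R}^{n-1}$ into $\mathbb{R}^m$ is $C^{r-j}$. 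Applying the inductive hypothesis in $n$ now needs $r-j > (n-1)-m$, i.e.\ $r > n-m+j-1$, which for $j\ge 2$ strictly exceeds the stated threshold, and the strata with $2 \le j < n/m$ are neither absorbed by this reduction under $r > n-m$ nor by the top-stratum Taylor estimate. Closing that gap requires the considerably more delicate analysis of Sard's original paper (or Sternberg/Bates/Norton), whereas the Morrey--Whitney route hits the sharp exponent cleanly and uses exactly the two tools (Theorems \ref{coarea formula classic} and \ref{Whitney extension theorem}) this addendum already has on the table. Given the paper's setup, I would promote that route to the main argument rather than leave it as a closing remark, and treat the stratification proof as an aside for the special case $n \le m$ where it is both complete and elementary.
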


\begin{theorem}\label{morse-sard sobolev version}
Morse-Sard theorem in Sobolev spaces \\
Let $f \in W^{n - m + 1, p}_{loc}\(\Omega, \mathbb{R}^m\)$, where $\Omega \in \mathbb{R}^n$ is an open set, $p > n \geq m$.
Then the set of critical values of $f$ is of zero Lebesgue measure.
\end{theorem}

\begin{theorem}\label{partition of unity}
Partition of unity\\
Let $E_1, \ldots, E_k$ be open sets in $\mathbb{R}^n$ and $K$ a compact subset of $\bigcup_{j}	K_j$.
Then one can find $\phi_j \in C_0^\infty\(E_j\)$ so that $\phi_j \geq 0$ and
$\sum_1^k \phi_j \leq 1$ with equality in a neighborhood of $K$.
\end{theorem}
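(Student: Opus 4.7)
The plan is to build the $\phi_j$ from three ingredients: a finite cover of $K$ that remembers the assignment to the $E_j$, a smooth bump/cut-off construction based on the single one-dimensional prototype $t \mapsto e^{-1/t}\mathbf{1}_{t>0}$, and a final renormalization by a cut-off supported where the resulting sum is positive. Throughout, I interpret the compact $K$ as sitting inside the open union $\bigcup_j E_j$, so each point of $K$ belongs to some $E_{j(x)}$.

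First, for every $x \in K$ pick an index $j(x)$ with $x \in E_{j(x)}$ and a radius $r(x) > 0$ so small that $\overline{B(x, r(x))} \subset E_{j(x)}$. Compactness of $K$ yields a finite list $x_1, \ldots, x_N$ whose half-radius balls $B(x_i, r(x_i)/2)$ still cover $K$. For each $j \in \{1, \ldots, k\}$ define the compact sets $L_j = \bigcup_{j(x_i) = j} \overline{B(x_i, r(x_i)/2)}$ and $M_j = \bigcup_{j(x_i) = j} \overline{B(x_i, r(x_i))}$, both subsets of $E_j$ (if empty, take $\phi_j \equiv 0$ and drop that index from what follows). Every point of $K$ lies in the interior $\mathring{L}_{j(x_i)}$ of some $L_j$.

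Second, starting from the prototype $t \mapsto e^{-1/t}\mathbf{1}_{t>0}$ (smooth at $0$ because every derivative vanishes there), construct a radially symmetric, nonnegative $\rho \in C_c^\infty(\mathbb{R}^n)$ with $\supp \rho \subset \overline{B(0,1)}$ and $\int \rho\, dx = 1$, and set $\rho_\varepsilon(y) = \varepsilon^{-n}\rho(y/\varepsilon)$. Choose $\delta_j$ with $3\delta_j < \mathrm{dist}(L_j,\, \mathbb{R}^n \setminus M_j)$ and put $\psi_j = \mathbf{1}_{N_{\delta_j}(L_j)} * \rho_{\delta_j}$, where $N_{\delta_j}(L_j)$ is the $\delta_j$-neighborhood of $L_j$. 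Standard properties of mollifiers then give $\psi_j \in C_c^\infty(E_j)$, $0 \le \psi_j \le 1$, $\psi_j \equiv 1$ on $L_j$, and $\supp \psi_j \subset M_j$. Consequently $\Sigma := \sum_j \psi_j$ is strictly positive on the open neighborhood $U = \bigcup_j \mathring{L}_j$ of $K$.

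Third, sandwich $K \subset V_0 \subset \overline{V_0} \subset V \subset \overline{V} \subset U$ and, by the same mollification trick, produce $\eta \in C_c^\infty(U)$ with $0 \le \eta \le 1$, $\eta \equiv 1$ on $\overline{V_0}$, and $\supp \eta \subset V$. Define $\phi_j(x) = \eta(x)\psi_j(x)/\Sigma(x)$ for $x \in U$ and $\phi_j(x) = 0$ off $U$. The extension is smooth on $\mathbb{R}^n$ because $\eta$ vanishes on an open neighborhood of $\mathbb{R}^n \setminus U$. By construction $\phi_j \ge 0$, $\supp \phi_j \subset \supp \psi_j \subset E_j$ is compact, $\sum_j \phi_j = \eta \le 1$ on $\mathbb{R}^n$, and $\sum_j \phi_j \equiv 1$ on the neighborhood $V_0$ of $K$. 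The only genuinely delicate step is the construction of the smooth prototype: verifying that $t \mapsto e^{-1/t}\mathbf{1}_{t>0}$ has all derivatives vanishing at $0$ requires a short induction showing the derivatives on $(0,\infty)$ take the form $p(1/t)e^{-1/t}$ for polynomials $p$ and hence tend to $0$ as $t \downarrow 0$. Once this prototype is in hand, the mollifier and cut-offs arise from convolution and the remainder of the argument is elementary finite-cover bookkeeping.
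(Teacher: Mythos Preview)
Your proof is correct and follows the standard mollifier-based construction. However, there is nothing in the paper to compare it against: the paper does not prove this theorem at all. It is listed in the technical addendum among the ``Supplementary definitions and results'' and is simply attributed to the literature, specifically Theorem~2.2 of Grigoryan's \emph{Heat Kernel and Analysis on Manifolds} (with Theorem~3.5 there cited for the manifold version). So your write-up supplies a proof where the paper gives only a citation.

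A minor remark on the statement itself: the paper's hypothesis reads ``$K$ a compact subset of $\bigcup_j K_j$'', which is evidently a typo for $\bigcup_j E_j$; you interpreted it correctly.
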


\begin{definition}{(Continuous convergence)}\label{continuous convergence}
Let $\mathcal X, \mathcal Y$ be two metric spaces,
and $\{f_{n}\}$ be a sequence of mappings
$f_{n}: \mathcal X \to \mathcal Y$. Given $f: \mathcal X \to \mathcal Y$ then $f_{n}$
convergence to $f$ continuously if $f_{n}\(x_{n}\) \to f\(x\)$
whenever $\{x_{n}\} \subset \mathcal X, x \in \mathcal X, x_{n} \to x$.
\end{definition}

\begin{lemma}\label{UConvergence and CConvergence}
Equivalence between uniform convergence and continuous convergence \\
Let $\mathcal X$ and $\mathcal Y$ be two metric spaces, and
$\{f_{n}\}, f$ be mappings
from $\mathcal X$  to $\mathcal Y$.
\begin{enumerate}
  \item If $\mathcal X$ is compact and $f$ is continuous then
  $f_{n} \to f$ continuously if and only if
  $f_{n} \to f$ uniformly in $\mathcal X$.
  \item If $f_{n} \to f$ continuously then $f$ is continuous.
\end{enumerate}
As a consequence, if $\mathcal X$ is compact then $f_{n} \to f$ continuously
if and only if $f_{n} \to f$ uniformly in $\mathcal X$ and
$f$ is continuous.

\end{lemma}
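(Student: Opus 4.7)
The plan is to prove the two parts separately, handling Part 2 first since the forward direction of Part 1 will re-use the same pointwise-convergence observation.

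For Part 2, I would begin by applying the definition of continuous convergence to the constant sequence $x_{n} \equiv x$, which immediately yields pointwise convergence $f_{n}(x) \to f(x)$ for every $x \in \mathcal{X}$. Then I would prove continuity of $f$ by contradiction. Assume $f$ fails to be continuous at some $x_{0}$, so there exist $\epsilon > 0$ and a sequence $x_{n} \to x_{0}$ with $d_{\mathcal{Y}}(f(x_{n}), f(x_{0})) \geq \epsilon$. Using pointwise convergence at each $x_{n}$, extract a strictly increasing index sequence $m_{n}$ with $m_{n} \geq n$ such that $d_{\mathcal{Y}}(f_{m_{n}}(x_{n}), f(x_{n})) < \epsilon/2$, whence a triangle-inequality estimate gives $d_{\mathcal{Y}}(f_{m_{n}}(x_{n}), f(x_{0})) > \epsilon/2$. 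I would then splice an auxiliary sequence $(y_{k})$ by setting $y_{k} = x_{n}$ when $k = m_{n}$ and $y_{k} = x_{0}$ otherwise; since $m_{n} \to \infty$ and $x_{n} \to x_{0}$, one verifies $y_{k} \to x_{0}$. But $f_{k}(y_{k})$ stays at distance at least $\epsilon/2$ from $f(x_{0})$ along the subsequence $k = m_{n}$, contradicting continuous convergence.

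For Part 1, the implication uniform $\Rightarrow$ continuous (with $f$ continuous) is the easy direction: for any $x_{n} \to x$, the triangle inequality
\begin{equation*}
d_{\mathcal{Y}}(f_{n}(x_{n}), f(x)) \leq d_{\mathcal{Y}}(f_{n}(x_{n}), f(x_{n})) + d_{\mathcal{Y}}(f(x_{n}), f(x))
\end{equation*}
controls the first term by uniform convergence and the second by continuity of $f$.

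For the converse direction, assume continuous convergence on a compact $\mathcal{X}$ and suppose uniform convergence fails. Then there exist $\epsilon > 0$, a subsequence $n_{k}$, and points $x_{n_{k}} \in \mathcal{X}$ with $d_{\mathcal{Y}}(f_{n_{k}}(x_{n_{k}}), f(x_{n_{k}})) \geq \epsilon$. Compactness of $\mathcal{X}$ allows passage to a further subsequence (which I still denote $x_{n_{k}}$) converging to some $x^{\star} \in \mathcal{X}$. Continuous convergence applied to $x_{n_{k}} \to x^{\star}$ gives $f_{n_{k}}(x_{n_{k}}) \to f(x^{\star})$, while continuity of $f$ gives $f(x_{n_{k}}) \to f(x^{\star})$; together these force $d_{\mathcal{Y}}(f_{n_{k}}(x_{n_{k}}), f(x_{n_{k}})) \to 0$, contradicting the lower bound $\epsilon$. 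The "as a consequence" statement then follows by combining Part 2 (which delivers continuity of $f$ automatically) with Part 1.

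The only genuinely delicate step is the splicing construction in Part 2; it is the sole place where one has to be careful to ensure the auxiliary sequence really tends to $x_{0}$ while simultaneously producing the required violation along the chosen index subsequence $m_{n}$. Everything else reduces to triangle-inequality bookkeeping and a standard compactness-extraction argument.
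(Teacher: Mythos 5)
Your argument is correct, and Part 1 (both directions, including the compactness extraction) is essentially identical to the paper's. Part 2 differs in structure, though the core trick is the same: both you and the paper build a spliced auxiliary sequence so that the hypothesis of continuous convergence can be invoked along a subsequence of indices. The paper argues directly: it fixes an arbitrary $x_k \to x$, uses continuous convergence at each $x_k$ to pick $n_k$ with $d(f_n(x_{k,n}), f(x_k)) < 1/k$ for $n > n_k$, then splices blocks of length $n_k$ so that $f(x_k) \to f(x)$ follows from a triangle inequality and the convergence $f_n(y_n)\to f(x)$. You instead argue by contradiction: assume $f$ discontinuous at $x_0$, first observe (by applying continuous convergence to the constant sequence) that $f_n \to f$ pointwise, then choose a sparse diagonal $m_n$ so that $f_{m_n}(x_n)$ stays bounded away from $f(x_0)$, and splice $x_n$ at position $m_n$ and the constant $x_0$ everywhere else to produce $y_k \to x_0$ with $f_k(y_k) \not\to f(x_0)$. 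Your splicing is sparser (one non-constant term per block) and avoids the paper's slightly awkward auxiliary points $x_{k,m}$ near $x_k$; making the pointwise-convergence observation explicit up front also streamlines the bookkeeping. Both proofs are sound; yours is marginally more economical, and the contradiction framing makes the role of the splice more transparent.
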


\begin{proof}[Proof of Lemma \ref{UConvergence and CConvergence}]
\\
  \textit{1.} \textit{If part:} Let $d$ be the metric on $\mathcal Y$.
  For $f_{n} \to f$ uniformly and $x_{n} \to x$ we have
  \begin{align}\begin{split}\nonumber
  d\(f_{n}\(x_{n}\), f\(x\)\) & \leq
  d\(f_{n}\(x_{n}\), f\(x_{n}\)\) +
  d\(f\(x_{n}\), f\(x\)\) \\
  & \leq \sup_{x \in X} d\(f_{n}\(x_{n}\), f\(x_{n}\)\) +
  d\(f\(x_{n}\), f\(x\)\).
  \end{split}\end{align}
  As $n \to \infty$, the first term goes to $0$ by
  uniform convergence and the second term goes to $0$ by continuity. \\
  \textit{Only if part:} Prove by contradiction.
  If $f_{n} \to f$ continuously but not uniformly.
  Then there is a subsequence $\{n_{k}\}$ and $\epsilon > 0$
  such that for all $n_{k}$
  \begin{align}\begin{split}\nonumber
  \sup_{x \in X}d\(f_{n_{k}}\(x\), f\(x\)\) > 2\epsilon.
  \end{split}\end{align}
  By the definition of $\sup$ there is a sequence $\{x_{k}\}$
  such that
  \begin{align}\begin{split}\label{contradiction UC CC}
  d\(f_{n_{k}}\(x_{k}\), f\(x_{k}\)\) > \epsilon.
  \end{split}\end{align}
  Since $\mathcal X$ is a compact metric space there is a
  convergent subsequence $\{x_{k'}\}$ of $\{x_{k}\}$ with
  $x_{k'} \to x_{0}$. Continuous convergence and continuity of
  $f$ require
  \begin{align}\begin{split}\nonumber
  d\(f_{n_{k'}}\(x_{k'}\), f\(x_{k'}\)\) \leq
  d\(f_{n_{k'}}\(x_{k'}\), f\(x_{0}\)\) +
  d\(f\(x_{0}\), f\(x_{k'}\)\) \to 0,
  \end{split}\end{align}
  which violates \eqref{contradiction UC CC}.
\\
  \textit{2.} Let $d_{\mathcal X}$ be the metric on $\mathcal X$.
  Suppose $\{x_{n}\}$ is an
  arbitrary sequence that converges to $x$.
  For all $k$, consider sequence
  $\{f_{m}\(x_{k, m}\)\}$,
  where $d_{\mathcal X}\(x_{k, m}, x_{k}\) < \frac{1}{m}$.
  By continuous convergence,
  \begin{align}\begin{split}\nonumber
  \lim_{m \to \infty} f_{m}\(x_{k, m}\) = f\(x_{k}\).
  \end{split}\end{align}
  Therefore there exist $n_{k} \in \mathbb{N}$,
  such that for all $n > n_{k}$,
  \begin{align}\begin{split}\nonumber
  d\(f_{n}\(x_{k, n}\), f\(x_{k}\)\) < \frac{1}{k}.
  \end{split}\end{align}
  Next consider a sequence $\{y_{n}\}$,
  \begin{align}\begin{split}\nonumber
  \{y_{n}\} = \left\{\underbrace{x_{1, n_{1}}}_{n_{1}-terms},
  \underbrace{x_{2, n_{1} + n_{2}}}_{n_{2}-terms}, \ldots,
  \underbrace{x_{k, \sum^{k}_{j = 1}n_{j}}}_{n_{k}-terms},
  \ldots\right\}.
  \end{split}\end{align}
  Since $y_{n} \to x$ as $n \to \infty$ by
  continuous convergence $f_{n}\(y_{n}\) \to f\(x\)$.
  Write $\sum^{k}_{j = 1}n_{j}$ as $\sum^{k}n_{j}$.
  Note that $y_{\sum^{k}n_{j}} = x_{k, \sum^{k}n_{j}}$,
  $f_{\sum^{k}n_{j}}\(y_{\sum^{k}n_{j}}\) =
  f_{\sum^{k}n_{j}}\(x_{k, \sum^{k}n_{j}}\)$, we have
  \begin{align}\begin{split}\nonumber
  \lim_{k \to \infty} f_{\sum^{k}n_{j}}
  \(x_{k, \sum^{k}n_{j}}\) = f\(x\).
  \end{split}\end{align}
  Then by
  \begin{align}\begin{split}\nonumber
  d\(f\(x_{k}\), f\(x\)\) & \leq
  d\(f\(x_{k}\), f_{\sum^{k}n_{j}}
  \(x_{k, \sum^{k}n_{j}}\)\) +
  d\(f_{\sum^{k}n_{j}}
  \(x_{k, \sum^{k}n_{j}}\), f\(x\)\) \\
  & \leq \frac{1}{k} + o\(1\),
  \end{split}\end{align}
  $f$ is continuous.
\end{proof}

\begin{definition}{(Lusin property (N))}
Let $\(X, \mathcal{F}, \mu\)$ and $\(Y, \mathcal{E}, \nu\)$ be two measure spaces.
A function $f: X \to Y$ has the Lusin Property (N) if for all $N \subset X$ such that $\mu\(N\) = 0$, there holds $\nu\(f\(N\)\) = 0$.
\end{definition}

\begin{definition}\label{Dini derivative}
{(Dini derivative)}
Let $f: \mathbb{R} \to \mathbb{R}$, denote
\begin{align}\begin{split}\nonumber
D^{+}f\(x\) = \limsup_{h \downarrow 0}\frac{f\(x + h\) - f\(x\)}{h}, & \quad
D_{+}f\(x\) = \liminf_{h \downarrow 0}\frac{f\(x + h\) - f\(x\)}{h}, \\
D^{-}f\(x\) = \limsup_{h \uparrow 0}\frac{f\(x + h\) - f\(x\)}{h}, & \quad
D_{-}f\(x\) = \liminf_{h \uparrow 0}\frac{f\(x + h\) - f\(x\)}{h}.
\end{split}\end{align}
They are called upper right, lower right, upper left and lower left Dini derivatives.
\end{definition}

\begin{theorem}\label{Dini derivative Lusin}
Let $E \subset \mathbb{R}$ and $f: E \to \mathbb{R}$.
If $D^{+}f\(x\)$ is finite for all $x \in E$, then $f$ has Lusin property (N) on $E$.
\end{theorem}

\begin{theorem}\label{Lusin + BV = AC}
Let $f: \[a, b\] \to \mathbb{R}$ be a continuous function of bounded variation, where $a < b$, $a, b \in \mathbb{R}$. Then $f$ is absolutely continuous if and only if it fulfills Lusin property (N) on $\[a, b\]$.
\end{theorem}

For the Rademacher theorem
\ref{Rademacher theorem}, see
  Theorem 2.10.43 in \cite{federer2014geometric}.
  For the Whitney extension theorem \ref{Whitney extension theorem}, see
  Theorem 6.10 in
  \cite{evans2018measure} or
  Theorem 3.1.14 in
  \cite{federer2014geometric}. 
  For the form of
  the Lusin theorem
\ref{Lusin theroem}
and the Egoroff theorem \ref{Egoroff theorem}
  used here, see Theorem 2.3.5 and Theorem 2.3.7
  in \cite{federer2014geometric}, respectively.
  For Vitali covering theorem \ref{Vitali covering theorem}, see
  Theorem 1.10 in \cite{falconer1986geometry}.
  For the isodiametric inequality \ref{Isodiametric inequality},
  see Theorem 2.4 in \cite{evans2018measure} or Corollary 2.10.33 of
\cite{federer2014geometric}.
  For the relationship between the
  spherical Hausdorff outer measure and
  the Hausdorff outer measure, see
  2.10.6 in \cite{federer2014geometric}.
For the nonsmooth implicit function theorem \ref{Nonsmooth implicit function theorem}, see \cite{clarke1976inverse} and Theorem 11
of \cite{hiriart1979tangent}.
The Morse-Sard theorem in Sobolev spaces \ref{morse-sard sobolev version} originates from \cite{de2001morse}.
For more information about the Morse-Sard type theorem 
we refer
to \cite{figalli2008simple} and its renowned infinite dimensional version Sard-Smale theorem from \cite{smale1965infinite}.
The form of the partition of unity theorem \ref{partition of unity} is taken from
Theorem 2.2 in \cite{grigoryan2009heat}. See also Theorem 3.5 of
\cite{grigoryan2009heat} for a manifold version.
Most of these results have
been rewritten to comply with the convention and styles used
in \cite{evans2018measure} and \cite{xuguanglulecturenote2018}.
The proof of part 1 of Lemma \ref{UConvergence and CConvergence}
comes from \cite{resnick2008extreme},
  we remove the separable space
  requirement in the premise.
  Actually, if $\mathcal X$ is a compact metric space,
  then $\mathcal X$ is separable, a countable dense subset of
  $\mathcal X$ can be constructed by 
the total boundedness
  of $\mathcal X$.
Theorem \ref{Dini derivative Lusin} with respect to the upper right Dini derivative is a weaker form of Theorem \Rmnum{9}.4.6 of \cite{saks1937theory}.
For the characterization of absolutely continuous function theorem \ref{Lusin + BV = AC}, see Theorem \Rmnum{7}.6.7 of \cite{saks1937theory}.
Some elementary information about the Dini derivatives can be found in \cite{hagood2006recovering}.

\subsection{Geometric interpretation of Hausdorff measure}
In the following proposition, we 
explain intuitively how
Hausdorff measure
generalizes the usual conceptions of
area and volume.

\begin{proposition}\label{L and H measure}
  Let $k \in \{1, 2, \ldots, n\}$,
  matrix $A \in \mathbb{R}^{n \times k}$, then
  \begin{enumerate}[(i)]
  \item if $\mathrm{rank}\(A\) < k$,
  that is $\det\(A^{T}A\) = 0$,
  then $\mathcal{H}_{k}\(A\(E\)\) = 0$ for all
  $E \subset \mathbb{R}^{k}$.
  \item if $\mathrm{rank}\(A\) = k$,
  that is $\det\(A^{T}A\) > 0$,
  then for all $E \subset \mathbb{R}^{k}$,
  \begin{align}\begin{split}\nonumber
  E \text{ is Lebesgue measurable}
  \Leftrightarrow A\(E\)
  \text{ is } \mathcal{H}_{k} \text{ measurable}.
  \end{split}\end{align}
  When either of the sides holds,
  $\mathcal{H}_{k}\(A\(E\)\) =
  \sqrt{\det\(A^{T}A\)}\mathcal{L}_{k}\(E\)$.
  \end{enumerate}
\end{proposition}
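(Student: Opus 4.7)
The plan is to reduce both parts to facts about linear maps on $\mathbb{R}^{k}$ by factoring $A$ into an isometric embedding and a square linear map, after which the scaling behaviour of Lebesgue measure does all the real work.

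For part (i), I would argue by a simple covering estimate. If $\mathrm{rank}(A) = j < k$, the image $A(\mathbb{R}^{k})$ lies in a $j$-dimensional linear subspace $V \subset \mathbb{R}^{n}$. It suffices to show $\mathcal{H}_{k}(B) = 0$ for every bounded $B \subset V$, because then $A(E) = \bigcup_{m} A(E) \cap B(0,m)$ and the conclusion follows from countable subadditivity of $\mathcal{H}^{*}_{k}$. For a bounded $B \subset V$ contained in a ball $B(0,R)$, pick an isometry $V \cong \mathbb{R}^{j}$, partition $[-R,R]^{j}$ into $N^{j}$ cubes of diameter $2R\sqrt{j}/N$, and cover $B$ by the corresponding closed balls in $\mathbb{R}^{n}$. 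This yields
\begin{align}\begin{split}\nonumber
\mathcal{H}^{*}_{k,\delta}(B) \leq \alpha_{k} N^{j}\left(\frac{R\sqrt{j}}{N}\right)^{k} = \alpha_{k} R^{k} j^{k/2} N^{j-k},
\end{split}\end{align}
which tends to $0$ as $N \to \infty$ since $j < k$. Taking $\delta \downarrow 0$ afterwards gives $\mathcal{H}_{k}(B) = 0$.

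For part (ii), I would use a QR-style factorization. Since $\mathrm{rank}(A) = k$, Gram--Schmidt applied to the columns produces $A = QR$ with $Q \in \mathbb{R}^{n\times k}$ satisfying $Q^{T}Q = I_{k}$ and $R \in \mathbb{R}^{k\times k}$ invertible. Then $A^{T}A = R^{T}Q^{T}QR = R^{T}R$, so $|\det R| = \sqrt{\det(A^{T}A)}$. Because $Q^{T}Q = I_{k}$, the map $Q: \mathbb{R}^{k} \to \mathbb{R}^{n}$ is an isometric embedding: $|Qx - Qy|^{2} = (x-y)^{T}Q^{T}Q(x-y) = |x-y|^{2}$. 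Isometries preserve diameters of sets, so they preserve the defining infima of $\mathcal{H}^{*}_{k,\delta}$; thus $\mathcal{H}_{k}(Q(F)) = \mathcal{H}_{k}(F)$ for every $F \subset \mathbb{R}^{k}$, and $Q$ maps $\mathcal{H}_{k}$-measurable sets to $\mathcal{H}_{k}$-measurable sets (with the same measurability status under Carath\'eodory's criterion, since the test identity pulls back through the isometry).

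It remains to compare $\mathcal{H}_{k}$ on $\mathbb{R}^{k}$ with the scaling effect of $R$. Since $\mathcal{H}_{k} = \mathcal{L}_{k}$ on $\mathbb{R}^{k}$, and $R: \mathbb{R}^{k} \to \mathbb{R}^{k}$ is a bi-Lipschitz linear bijection, $R$ and $R^{-1}$ both preserve Lebesgue measurability, giving the equivalence $E$ measurable $\Leftrightarrow R(E)$ measurable $\Leftrightarrow A(E) = Q(R(E))$ is $\mathcal{H}_{k}$-measurable. For the scaling, the standard change-of-variables identity $\mathcal{L}_{k}(R(E)) = |\det R|\,\mathcal{L}_{k}(E)$ (proved in the usual way by reducing $R$ via elementary matrices, or by invoking Theorem \ref{area formula classic} with $\psi = R$, noting $J\psi \equiv |\det R|$) combined with the isometry-invariance of $\mathcal{H}_{k}$ under $Q$ yields
\begin{align}\begin{split}\nonumber
\mathcal{H}_{k}(A(E)) = \mathcal{H}_{k}(Q(R(E))) = \mathcal{H}_{k}(R(E)) = \mathcal{L}_{k}(R(E)) = |\det R|\,\mathcal{L}_{k}(E) = \sqrt{\det(A^{T}A)}\,\mathcal{L}_{k}(E).
\end{split}\end{align}

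The only delicate point I anticipate is the bookkeeping around measurability of $A(E)$ when $E$ is merely Lebesgue measurable rather than Borel. The clean way is to split $E = E_{0} \cup N$ with $E_{0}$ Borel and $N$ a Lebesgue null set, note that $Q \circ R$ is a bi-Lipschitz embedding so it sends Borel sets to Borel sets and $\mathcal{H}_{k}$-null sets to $\mathcal{H}_{k}$-null sets (by the diameter estimate on covers), and then conclude via completeness of the Hausdorff measure space from Theorem \ref{Caratheodory's extension theorem}. Everything else is mechanical.
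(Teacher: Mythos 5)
Your proposal is correct and takes essentially the same route as the paper's outline. The QR factorization $A = QR$ is the same device as the paper's $TA = (A_1^T, 0)^T$: writing $T^T = (Q, Q')$ for any orthonormal completion of $Q$ gives $A_1 = R$, so "$Q$ is an isometric embedding preserving $\mathcal{H}_k$" is exactly the paper's steps 1--2 (orthogonal invariance plus $\mathcal{H}^*_k(E\times\{0\}) = \mathcal{L}^*_k(E)$), and "$\mathcal{L}_k(R(E)) = |\det R|\,\mathcal{L}_k(E)$" is the paper's step 4 for $A_1$. Your covering argument for part (i) and your treatment of measurability of $A(E)$ via the Borel-plus-null decomposition and completeness of $\mathcal{H}_k$ supply details the paper's outline omits, and both are sound; for the reverse diameter bound underlying $\mathcal{H}_k(Q(F)) \geq \mathcal{H}_k(F)$ it is worth saying explicitly that one pulls a cover of $Q(F)$ in $\mathbb{R}^n$ back by first intersecting each covering set with the closed subspace $Q(\mathbb{R}^k)$, which only decreases diameters, and that the measurability test for $Q(E)$ in $\mathbb{R}^n$ reduces to test sets $T \subset Q(\mathbb{R}^k)$ because the subspace is Borel hence $\mathcal{H}_k$-measurable.
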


\begin{proposition}\label{coarea linear}
Let $k \in \{1, 2, \ldots, n\}$,
matrix $A \in \mathbb{R}^{k \times n}$, then
\begin{enumerate}[(i)]
  \item if $\mathrm{rank}\(A\) < k$,
  that is $JA = \sqrt{\det\(AA^{T}\)} = 0$,
  then $\mathcal{H}_{n - k}\(E \cap A^{-1}\(y\)\) = 0$
  for $\mathcal{L}_{k} \; a.e. \; y \in \mathbb{R}^{k}$
  for all
  $E \subset \mathbb{R}^{n}$.
  \item if $\mathrm{rank}\(A\) = k$,
  that is $\det\(AA^{T}\) > 0$,
  then for all
  measurable $E \subset \mathbb{R}^{n}$,
  $y \in \mathbb{R}^{k}
  \mapsto \mathcal{H}_{n - k}\(E \cap A^{-1}\(y\)\)$ is
  $\mathcal{L}_{k}$ measurable, and
  \begin{align}\begin{split}\nonumber
  JA \cdot \mathcal{L}_{n}\(E\) =
  \int_{E}JA\(x\)d\mathcal{L}_{n}x =
  \int_{\mathbb{R}^{k}}
  \mathcal{H}_{n - k}\(E \cap A^{-1}\(y\)\)d\mathcal{L}_{k}y.
  \end{split}\end{align}
\end{enumerate}
\end{proposition}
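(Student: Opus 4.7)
My plan is to reduce both statements to a canonical projection form via the singular value decomposition of $A$ and then close the argument by combining orthogonal invariance of Lebesgue and Hausdorff measures with Fubini--Tonelli. For part (i), the argument will be immediate: when $\mathrm{rank}(A) < k$, the image $A(\mathbb{R}^n)$ is contained in a linear subspace of $\mathbb{R}^k$ of dimension at most $k-1$ and hence has $\mathcal{L}_k$-measure zero; for every $y$ outside this image the preimage $A^{-1}(y)$ is empty, so $\mathcal{H}_{n-k}(E \cap A^{-1}(y)) = 0$ for $\mathcal{L}_k$-a.e.\ $y$ regardless of $E$, and no measurability hypothesis is needed.

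For part (ii), I would first write the singular value decomposition $A = U[\Sigma\;\; 0]V^T$ with $U \in O(k)$, $V \in O(n)$, and $\Sigma$ a $k\times k$ invertible diagonal matrix (since $\mathrm{rank}(A)=k$), so that $JA = |\det\Sigma|$. The fiber then becomes
\bs
A^{-1}(y) \;=\; V\bigl(\{\Sigma^{-1} U^T y\}\times \mathbb{R}^{n-k}\bigr).
\end{split}\end{align}
Because orthogonal maps are isometries and therefore preserve both $\mathcal{L}_n$ on $\mathbb{R}^n$ and $\mathcal{H}_{n-k}$ on subsets of $\mathbb{R}^n$, and because $\mathcal{H}_{n-k}$ restricted to an $(n-k)$-dimensional affine plane agrees with the Lebesgue measure on that plane (a direct consequence of the content of Proposition \ref{L and H measure}), the desired identity reduces, after the substitution $z = \Sigma^{-1}U^T y$ on the right-hand side (contributing a factor $|\det\Sigma|^{-1}$ that will cancel against $JA$), to the classical Fubini--Tonelli identity
\bs
\mathcal{L}_n(F) \;=\; \int_{\mathbb{R}^k}\mathcal{L}_{n-k}\bigl(F \cap (\{z\}\times\mathbb{R}^{n-k})\bigr)\,d\mathcal{L}_k z
\end{split}\end{align}
applied to $F = V^T(E)$.

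The main obstacle will be the measurability claim that $y \mapsto \mathcal{H}_{n-k}(E \cap A^{-1}(y))$ is $\mathcal{L}_k$-measurable for every $\mathcal{L}_n$-measurable $E$. My plan is to handle this in stages: first for open sets $E$, where the section $E\cap A^{-1}(y)$ varies in a semicontinuous fashion in $y$ and measurability follows by approximating with countable unions of cubes; then for Borel sets via a monotone class / Dynkin $\pi$-$\lambda$ argument built on the open case; and finally for arbitrary Lebesgue measurable $E$ by sandwiching $E$ between Borel sets $B_1\subset E\subset B_2$ with $\mathcal{L}_n(B_2\setminus B_1)=0$, so that Fubini in the Borel case forces $\mathcal{H}_{n-k}((B_2\setminus B_1)\cap A^{-1}(y))=0$ for $\mathcal{L}_k$-a.e.\ $y$, and completeness of $\mathcal{L}_k$ delivers measurability for $E$ itself. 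Once measurability is in hand, both sides of the target identity are $\sigma$-additive in $E$ and agree on rectangles by the SVD reduction sketched above, which will close the argument.
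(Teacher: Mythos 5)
Your proof is correct and follows essentially the same route as the paper: a singular-value / polar decomposition of $A$ reduces both parts to the canonical coordinate projection, after which orthogonal invariance of $\mathcal{L}_n$ and $\mathcal{H}_{n-k}$, the identification $\mathcal{H}_{n-k} = \mathcal{L}_{n-k}$ on $(n-k)$-dimensional affine planes (Proposition \ref{L and H measure}), Fubini--Tonelli, and a linear change of variables in $y$ yield the identity. The only real difference is that the paper obtains the measurability of $y \mapsto \mathcal{H}_{n-k}(E\cap A^{-1}(y))$ directly from the measurability clause of Fubini--Tonelli applied to the Lebesgue-measurable set $V^{T}(E)$, so your three-stage (open / Borel / Lebesgue) re-derivation and the closing $\sigma$-additivity argument, while sound, are more work than is needed.
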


Proposition \ref{L and H measure}
provides the geometric meaning of the
Hausdorff measure.
Consider the set
$A\(\[0, 1\]^{k}\), A \in \mathbb{R}^{n \times k},
k \in \{0, 1, \ldots, n\}$,
denote the column vectors of $A$ as
$a_{1}, a_{2}, \ldots, a_{k}$.
Let $e = \(e_{1}, e_{2}, \ldots, e_{k}\)^{T}
\in E \subset \mathbb{R}^{k}$, then
$Ae = \sum^{k}_{i = 1}e_{i}a_{i}$, therefore
\begin{align}\begin{split}\nonumber
A\(\[0, 1\]^{k}\) =
\left\{\sum^{k}_{i = 1}e_{i}a_{i} \vert
e = \(e_{1}, e_{2}, \ldots, e_{k}\)^{T}
\in \[0, 1\]^{k}\right\}
\end{split}\end{align}
is the parallelepiped spanned by vectors
$a_{1}, a_{2}, \ldots, a_{k}$.
By Proposition \ref{L and H measure},
\begin{align}\begin{split}\nonumber
\mathcal{H}_{k}\(A\(\[0, 1\]^{k}\)\) =
\sqrt{\det\(A^{T}A\)}.
\end{split}\end{align}

We point out that
$\sqrt{\det(A^{T}A)}$ is the $k$-dimensional
volume (in $n$-dimensional space) of
the parallelepiped spanned by
the column vectors of matrix $A$.
To see this, first consider the $k = n$ scenario.
In this case
$\sqrt{\det(A^{T}A)} = \vert\det\(A\)\vert$.
Denote the volume of the parallelepiped
spanned by $A$ as $\text{vol}\(A\)$,
or more transparently as
$\text{Vol}\(a_{1}, a_{2}, \ldots, a_{k}\)$ where
$\{a_{i}\}^{k}_{i = 1}$ are the columns of $A$.
The volume of a parallelepiped
is the ``area'' of its base, times its height.
A base is the parallelepiped determined
by arbitrarily chosen $k - 1$ vectors from
$\{a_{i}\}^{k}_{i = 1}$, and the height corresponding to
this base is the perpendicular distance of the
remaining vector from the base.
Denote the remaining vector as $a_{i}$.
If $a_{i}$ is scaled by
a factor of $c$,
then the perpendicular distance of $a_{i}$
from the base and thus the volume will be scaled by
a factor of $\vert c\vert$.
If $a_{i}$ is translated to
$a'_{i} = a_{i} + \omega a_{j}, i \neq j$,
since $a_{j}$ is parallel to the base,
the height and thus the volume
will not change\footnote{See
\url{https://textbooks.math.gatech.edu/ila/determinants-volumes.html}
for a visualization.}, i.e.
\begin{align}\begin{split}\label{volume homogeneity}
\V\(a_{1}, \ldots, ca_{i} + \omega a_{j},
\ldots, a_{k}\) =
\vert c\vert \V\(a_{1}, \ldots, a_{i}, \ldots,
a_{k}\).
\end{split}\end{align}
Swapping two columns of $A$ just reorders
the vectors $\{a_{i}\}^{k}_{i = 1}$
and will not change the volume, 
\begin{align}\begin{split}\label{volume swap}
\V\(a_{1}, \ldots, a_{i}, \ldots, a_{j}, \ldots, a_{k}\) =
\V\(a_{1}, \ldots, a_{j}, \ldots, a_{i}, \ldots, a_{k}\).
\end{split}\end{align}
Since 
$\vert\det\(A\)\vert$ can also be characterized
by the properties \eqref{volume homogeneity} and
\eqref{volume swap}
and
$\vert\det\(I_{k}\)\vert = \V\(I_{k}\) = 1$ where $I_{k}$
is the $k \times k$ identity matrix,
we have $\sqrt{\det\(A^{T}A\)} =
\vert\det\(A\)\vert = \V\(A\)$.

When $k < n$, using the
Singular Value Decomposition
$A = U\Sigma V$ where $U, V$ are orthogonal matrixes
and $\Sigma$ a rectangular diagonal matrix with
non-negative real diagonal:
\begin{align}\begin{split}\nonumber
\sqrt{\det\(A^{T}A\)} =
\sqrt{\det\(\(U\Sigma V\)^{T}U\Sigma V\)} =
\sqrt{\det\(\Sigma^{T}\Sigma\)}
\end{split}\end{align}
where $\Sigma^{T}\Sigma$ is a $k \times k$
diagonal matrix.
We claim that the $k$-dimensional volume
(in $n$-dimensional space) of $A$ is also
$\sqrt{\det\(\Sigma^{T}\Sigma\)}$.
In particular, since orthogonal transformations
preserves inner products and thus
lengths (norm) and angles,
\begin{align}\begin{split}\nonumber
\V\(A\) = \V\(U\Sigma V\) = \V\(\Sigma V\).
\end{split}\end{align}
Note that, if we choose a orientation for a
base and allow for signed height and volume, then we have
\begin{align}\begin{split}\nonumber
\V_{s}\(a_{1}, \ldots, a_{i} + \Delta a_{i}, \ldots, a_{k}\) =
\V_{s}\(a_{1}, \ldots, a_{i}, \ldots, a_{k}\) +
\V_{s}\(a_{1}, \ldots, \Delta a_{i}, \ldots, a_{k}\)
\end{split}\end{align}
and $\V\(A\) = \vert \V_{s}\(A\)\vert$\footnote{We do not explicitly distinguish
$k$-dimensional volume and $n$-dimensional volume
in $n$-dimensional space, they are both denoted as
$\V$. The specific meaning is clear based on the context.}.
Signed volume also satisfies homogeneity
similar to \eqref{volume homogeneity}, with
the scale factor changing from
$\vert c\vert$ to $c$. Then we have
\begin{align}\begin{split}\nonumber
\V_{s}\(a_{1}, a_{2}, \ldots, a_{k}\) & =
\V_{s}\(\sum^{n}_{i_{1} = 1}a_{i_{1}, 1}e_{i_{1}},
\sum^{n}_{i_{2} = 1}a_{i_{2}, 2}e_{i_{2}}, \ldots,
\sum^{n}_{i_{k} = 1}a_{i_{k}, k}e_{i_{k}}\) \\
& = \sum^{n}_{i_{1} = 1}\sum^{n}_{i_{2} = 1}\cdots
\sum^{n}_{i_{k} = 1}\prod^{k}_{j = 1}a_{i_{j}, j}
\V_{s}\(e_{i_{1}}, e_{i_{2}}, \ldots, e_{i_{k}}\),
\end{split}\end{align}
where each $e_{i_j}$ is a $n\times 1$ vector with $1$ in the
$i_j$th position and zeros otherwise.
When $\{i_{j}\}^{k}_{j = 1}$ contains repeated values,
$\textrm{rank}\(e_{i_{1}}, e_{i_{2}}, \ldots, e_{i_{k}}\) <
k$ and $\V_{s}\(e_{i_{1}}, e_{i_{2}}, \ldots, e_{i_{k}}\) = 0$.
Therefore, scaling a row of $n \times k$ matrix by
a factor of $d$ also scales the volume by
the factor of $\vert d\vert$, implying that
$\V\(\Sigma V\) = \prod^{k}_{i = 1}\sigma_{i}\V\(V\)$,
where $\sigma_{i}$ are the diagonal elements of $\Sigma$.
Intuitively, $\V\(\[0, 1\]^{k}\)=1$ since volume is base times height.
Thus, the volume of $\V\(V\)$ is equal to $1$.

For an matrix
$A \in \mathbb{R}^{k \times n}$,
consider the 
geometric meaning of $JA$.
We already know that the
$k$-dimensional volume of
$A^{T}\(\[0, 1\]^{k}\)$
is $\sqrt{\det\(AA^{T}\)}$.
For arbitrary $P \in \mathbb{R}^{n \times k}$,
the $k$-dimensional volume of
$AP\(\[0, 1\]^{k}\)$ is equal to the
$k$-dimensional volume of the
$AP^{\bot}\(\[0, 1\]^{k}\)$, where
$P^{\bot}$ is the orthogonal projection of
$P$ to the
orthogonal complement of
the null space of $A$ (which, by definition,
is the span of the columns of $A^{T}$).
Note that
$\det\(\(P^{\bot}\)^{T}P^{\bot}\)
\leq \det\(P^{T}P\)$.
Therefore,
\begin{align}\begin{split}\nonumber
JA = \sup_{P}\frac{\V\(AP\)}
{\V\(P\)} = \sup_{P}
\frac{\mathcal{H}_{k}\(AP
\(\[0, 1\]^{k}\)\)}{\mathcal{H}_{k}\(P
\(\[0, 1\]^{k}\)\)},
\end{split}\end{align}
where supremum is taken over all
$k$-dimensional nondegenerate
parallelepiped $P$.

\subsection{Proofs of area and coarea formulas}
\label{Proof sketches of area and coarea formulas}

Figure \ref{relation between area and coarea formula proofs}
illustrates the ordering of the proofs for area and coarea formulas.
Besides	 \cite{federer2014geometric} and \cite{evans2018measure},
we also 
borrow lots of material from
\cite{robertjerrard} and \cite{xuguanglulecturenote2018}, especially from the latter. Corollaries
\ref{Sard type lemma} and
\ref{another Sard lemma} are used to illustrate the applicability of
the area and coarea formulas Theorem \ref{area formula classic}
and Theorem \ref{coarea formula classic}.

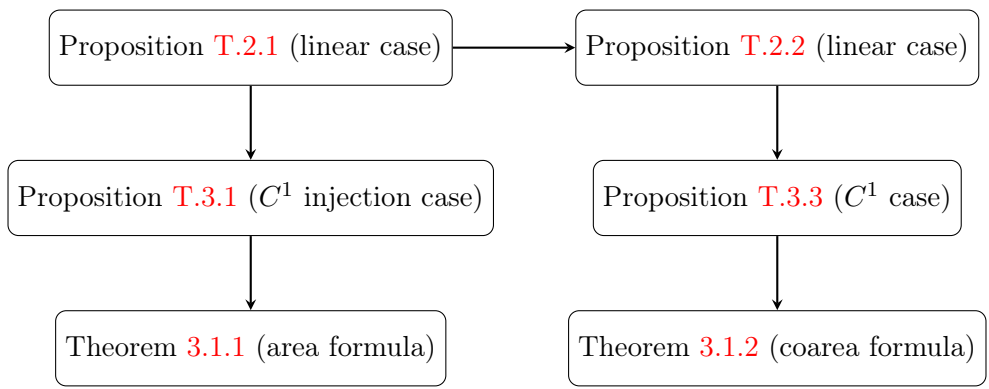
\begin{figure}[!h]
\center
	\caption{Ordering of proofs for area and coarea formulas}
\begin{tikzpicture}[node distance=2cm]
\label{relation between area and coarea formula proofs}
\node (lineararea) [basicnode] {Proposition \ref{L and H measure} (linear case)};
\node (linearcoarea) [basicnode, right of=lineararea,  xshift=5cm] {Proposition \ref{coarea linear} (linear case)};
\node (areasimple) [basicnode, below of=lineararea,  yshift=0cm] {Proposition \ref{area formula simple} ($C^1$ injection case)};
\draw[arrow] (lineararea) -- (areasimple);
\node (coareasimple) [basicnode, below of=linearcoarea,	 yshift=0cm] {Proposition \ref{coarea formula simple} ($C^1$ case)};
\draw[arrow] (linearcoarea) -- (coareasimple);
\node (areaformula) [basicnode, below of=areasimple,  yshift=0cm] {Theorem \ref{area formula classic} (area formula)};
\draw[arrow] (areasimple) -- (areaformula);
\draw[arrow] (lineararea) -- (linearcoarea);
\node (coareaformula) [basicnode, below of=coareasimple,  yshift=0cm] {Theorem \ref{coarea formula classic} (coarea formula)};
\draw[arrow] (coareasimple) -- (coareaformula);
\end{tikzpicture}
\end{figure}

\begin{proof}[Proof Outline of Proposition \ref{L and H measure}] \\
  \textbf{step} 1 The Hausdorff outer measure
  $\mathcal{H}^{*}_{k}$ is invariant under
  orthogonal transformations, i.e.
  if $A$ is a orthogonal transformations then
  \begin{align}\begin{split}\nonumber
  \mathcal{H}^{*}_{k}\(A\(E\)\) = \mathcal{H}^{*}_{k}\(E\),
  \forall E \subset \mathbb{R}^{k}
  \end{split}\end{align}
  \textbf{step} 2 Let
  $A = \(I_{k}, 0\)^{T}$,
  prove that
  \begin{align}\begin{split}\nonumber
  \mathcal{H}^{*}_{k}\(A\(E\)\) =
  \mathcal{H}^{*}_{k}\(E \times \{0\}\) =
  \mathcal{H}^{*}_{k}\(E\) =
  \mathcal{L}^{*}_{k}\(E\)
  \end{split}\end{align}
  where $\mathcal{L}^{*}_{k}$ is the $k$-dimensional
  Lebesgue outer measure. \\

  \noindent \textbf{step} 3
There exists
orthogonal $T \in \mathbb{R}^{n \times n}$,
  \begin{align}\begin{split}\nonumber
  \mathcal{H}^{*}_{k}\(A\(E\)\) =
\mathcal{H}^{*}_{k}\(T A\(E\)\) =
  \mathcal{H}^{*}_{k}\(\(A_{1}\(E\)\) \times \{0\}\) =
  \mathcal{L}^{*}_{k}\(A_{1}\(E\)\),
  \end{split}\end{align}
  where $A_{1} \in \mathbb{R}^{k \times k}$,
$A \in \mathbb{R}^{n \times k}$,
 and $T A = \(A_{1}^{T}, 0\)^{T}$. \\
  \textbf{step} 4 Proof of
  \begin{align}\begin{split}\nonumber
  \mathcal{L}^{*}_{k}\(A_{1}\(E\)\) =
  \vert\det\(A_{1}\)\vert \mathcal{L}^{*}_{k}\(E\) =
  \sqrt{\det\(A^{T}A\)}\mathcal{L}^{*}_{k}\(E\).
  \end{split}\end{align}
  \textbf{step} 5 For all $E \subset \mathbb{R}^{k}$,
  $E \text{ is Lebesgue measurable}
  \Leftrightarrow A\(E\)
  \text{ is } \mathcal{H}_{k} \text{ measurable}$.
\end{proof}

\begin{remark} We provide two supplementary details to assist with
understanding the proof outlines of Proposition \ref{L and H measure}
and the following results.
\begin{enumerate}
\item
Hausdorff outer measure $\mathcal{H}^{*}_{k}$ has the
invariance under the orthogonal transformations. This is a direct
result of the fact that
Hasudroff outer measure keeps distance inequality, i.e.
if $k, l, m, n \in \mathbb{Z}^{+}$,
$E \subset \mathbb{R}^{l}$,
map $f: E \to \mathbb{R}^{n}$ and map
$g: E \to \mathbb{R}^{m}$ satisfies
\begin{align}\begin{split}\nonumber
\vert f\(x\) - f\(y\)\vert \leq
C\vert g\(x\) - g\(y\)\vert \quad
\forall x, y \in E,
\end{split}\end{align}
then
$\mathcal{H}^{*}_{k}\(f\(E\)\) \leq C^{k}\mathcal{H}^{*}_{k}\(g\(E\)\)$.
This property will also be used in the following
discussion.

\item
A cube is a
subset of $\mathbb{R}^{k}$ in the form of
\begin{align}\begin{split}\nonumber
\prod^{k}_{i = 1}\(a_{i}, b_{i}\), \quad
\prod^{k}_{i = 1}\(a_{i}, b_{i}\], \quad
\prod^{k}_{i = 1}\[a_{i}, b_{i}\), \quad
\prod^{k}_{i = 1}\[a_{i}, b_{i}\] \quad
a_{i} < b_{i},
\forall i \in \{1, 2, \ldots, k\}.
\end{split}\end{align}
Let $E \in \mathbb{R}^n$ be a nonempty open set, then there exists a sequence of
disjoint left open and right closed cubes $\{Q_k\}_{k=1}^\infty$, such that
\bs
E = \bigcup_{k=1}^\infty Q_k = \bigcup_{k=1}^\infty \bar Q_k.
\end{split}\end{align}
\end{enumerate}
\end{remark}

\begin{proposition}\label{area formula simple}
  Let $E \subset \mathbb{R}^{k}$ be an open set,
  $k \in \{1, 2, \ldots, n\}$,
  $\psi: E \to \mathbb{R}^{n}$ be a $C^{1}$
  injection and
  $\det\(\psi'\(x\)^{T}\psi'\(x\)\) > 0$ for all
  $x \in E$,
  then for all $D \subset \psi\(E\)$,
  \begin{align}\begin{split}\nonumber
  D \text{ is } \mathcal{H}_{k} \text { measurable}
  \Leftrightarrow \psi^{-1}\(D\)
  \text{ is Lebesgue measurable}.
  \end{split}\end{align}
  When $D$ is $\mathcal{H}_{k}$ measurable,
  \begin{align}\begin{split}\nonumber
  \mathcal{H}_{k}\(D\) = \int_{\psi^{-1}\(D\)}
  \sqrt{\det\(\psi'\(x\)^{T}\psi'\(x\)\)}
  d\mathcal{L}_{k}x.
  \end{split}\end{align}
\end{proposition}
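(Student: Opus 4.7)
The plan is to pass from the linear case established in Proposition~\ref{L and H measure} to the nonlinear case via a local linearization argument based on the continuity of $\psi'$. The nondegeneracy hypothesis $\det(\psi'(x)^T \psi'(x)) > 0$ ensures that, locally, $\psi$ is a bi-Lipschitz embedding whose Lipschitz constants can be controlled by those of the linear map $\psi'(x_0)$ up to a multiplicative error as small as we like. Specifically, I would first show that for every $x_0 \in E$ and every $\epsilon > 0$ there is a convex open neighborhood $U \subset E$ of $x_0$ such that for all $x,y \in U$
\begin{align}\begin{split}\nonumber
(1-\epsilon)\,|\psi'(x_0)(x-y)| \;\leq\; |\psi(x)-\psi(y)| \;\leq\; (1+\epsilon)\,|\psi'(x_0)(x-y)|.
\end{split}\end{align}
This follows from the fundamental theorem of calculus applied to $\psi$ on the segment from $x$ to $y$ together with the continuity of $\psi'$ and the fact that $\psi'(x_0)$ is injective (so $|\psi'(x_0) v| \geq c|v|$ for some $c>0$).

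The second step is to handle measurability. Because $E$ is an open subset of $\mathbb{R}^k$, it is $\sigma$-compact, and using the local bi-Lipschitz control above I would cover $E$ by countably many such neighborhoods $\{U_i\}$. On each $U_i$ the map $\psi$ is a bi-Lipschitz homeomorphism onto its image, and in particular sends Lebesgue-null sets to $\mathcal{H}_k$-null sets (by the distance inequality for Hausdorff outer measure, used together with the linear case in Proposition~\ref{L and H measure} applied to $\psi'(x_0)$). Since $\psi$ is continuous and injective, it maps Borel sets to Borel sets on each $U_i$ (invoking, say, Lusin--Souslin or the fact that a continuous injection from a $\sigma$-compact space to a Hausdorff space has Borel image), and the same is true of $\psi^{-1}$. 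Combined with completeness of the Hausdorff and Lebesgue measures from Theorem~\ref{Caratheodory's extension theorem}, this gives the equivalence $D$ is $\mathcal{H}_k$-measurable $\Leftrightarrow$ $\psi^{-1}(D)$ is Lebesgue-measurable, locally on each $U_i$ and hence globally.

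For the integral identity, I would fix $\epsilon > 0$ and, using uniform continuity of $x \mapsto \sqrt{\det(\psi'(x)^T\psi'(x))}$ on compact subsets of $E$, refine the cover $\{U_i\}$ into a countable measurable partition $\{A_j\}$ of $E$ on which both the linearization estimate holds with constant $\epsilon$ (at some chosen basepoint $x_j \in A_j$) and on which the Jacobian $J\psi$ is constant up to a factor $(1\pm\epsilon)$. Applying the bi-Lipschitz bound together with the linear area identity $\mathcal{H}_k(\psi'(x_j)(A_j)) = \sqrt{\det(\psi'(x_j)^T\psi'(x_j))}\,\mathcal{L}_k(A_j)$ from Proposition~\ref{L and H measure} yields
\begin{align}\begin{split}\nonumber
(1-\epsilon)^{k+1} \int_{A_j} J\psi(x)\,d\mathcal{L}_k x \;\leq\; \mathcal{H}_k(\psi(A_j)) \;\leq\; (1+\epsilon)^{k+1} \int_{A_j} J\psi(x)\,d\mathcal{L}_k x.
\end{split}\end{align}
Summing over $j$, intersecting with $\psi^{-1}(D)$, using countable additivity of both measures (injectivity of $\psi$ being crucial here, so that the images $\psi(A_j \cap \psi^{-1}(D))$ remain disjoint), and finally letting $\epsilon \downarrow 0$ delivers the claimed identity.

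The main technical obstacle I anticipate is the bookkeeping needed to verify that the Hausdorff-measurability and Borel/image behavior transfer cleanly under $\psi$ on each piece; the linearization estimate and the linear case do the analytic work, but one has to be careful that the countable refinement is genuinely measurable and that the inner/outer regularity used to turn the bi-Lipschitz sandwich into an equality survives taking $\epsilon \to 0$. The injectivity of $\psi$ is indispensable throughout, since without it the images of the partition pieces could overlap and the additivity step would fail — this is exactly why Theorem~\ref{area formula classic} must carry the extra multiplicity factor $\mathcal{H}_0(S \cap \psi^{-1}(y))$ when the injection hypothesis is dropped.
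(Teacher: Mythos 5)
Your proposal is correct and follows essentially the same route as the paper's outline: establish the local bi-Lipschitz linearization estimate from the $C^1$ hypothesis, use it together with the linear case (Proposition~\ref{L and H measure}) and the distance inequality for Hausdorff outer measure to transfer null sets and measurability in both directions, and then derive the integral identity by a sandwich argument over a fine partition on which the Jacobian is nearly constant, letting $\epsilon \downarrow 0$. The only cosmetic differences are that the paper works concretely with dyadic closed cubes (steps 3--5) rather than an arbitrary measurable refinement, and its measurability step invokes the Borel-plus-$\mathcal{H}_k$-null decomposition of a finite-measure set directly rather than routing through Lusin--Souslin or a $\sigma$-compactness argument; both are standard and interchangeable here.
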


Proposition
\ref{area formula simple}
plays the most essential role in a proof
of more general area formula Theorem
\ref{area formula classic}.
Proposition \ref{area formula simple}
can also be powerful when used alone.\\

\begin{proof}[Proof Outline of Proposition \ref{area formula simple}] \\
  \textbf{step} 1 Estimates of
  $\vert\psi\(x\) - \psi\(y\)\vert$:
  Let $k \in \{1, 2, \ldots, n\}$,
  $E \subset \mathbb{R}^{k}$ be an open set,
  $\psi: E \to \mathbb{R}^{n}$ be a
  $C^{1}$ injection
  such that $\det\(\psi'\(x\)^{T}\psi'\(x\)\) > 0$
  for all $x \in E$,
  then:
  \begin{enumerate}
    \item if $K \subset E$ is convex and compact,
    then there exists $0 < c < C$ such that
    \begin{align}\begin{split}\label{estimate convex compact set}
    c\vert x - y\vert \leq
    \vert\psi\(x\) - \psi\(y\)\vert \leq
    C\vert x - y\vert
    \end{split}\end{align}
    for all $x, y \in K$.
    \item For arbitrary $x_{0} \in E$ and
    for all $0 < \epsilon < 1$,
    there exists $\delta > 0$, such that
    open ball
    $B\(x_{0}, \delta\) \subset E$ and
    \begin{align}\begin{split}\label{estimate local}
    \(1 - \epsilon\)\vert\psi'\(x_{0}\)
    \(x - y\)\vert \leq \vert\psi\(x\) -
    \psi\(y\)\vert \leq \(1 + \epsilon\)\vert
    \psi'\(x_{0}\)\(x - y\)\vert
    \end{split}\end{align}
    for all $x, y \in B\(x_{0}, \delta\)$.
  \end{enumerate}
  \textbf{step} 2 Prove that, for all $D \subset \psi\(E\)$,
  \begin{align}\begin{split}\nonumber
  D \text{ is } \mathcal{H}_{k} \text { measurable}
  \Leftrightarrow \psi^{-1}\(D\)
  \text{ is Lebesgue measurable},
  \end{split}\end{align}
  by estimates
  \eqref{estimate convex compact set},
  the fact that Hausdorff outer measure keeps
  distance inequality, and the fact that
  if $D$ is a $\mathcal{H}_{k}$ measurable set with
  $\mathcal{H}_{k}\(D\) < \infty$, then there exist
  a Borel set $P$ and a $\mathcal{H}_{k}$ zero measure set
  $Z$ such that $D = P \cup Z$\footnote{This Borel set, zero measure set construction of $\mathcal{H}_{k}$ measurable set is also a common useful result.
  Actually, the complete result states that there exist Borel sets $P_{1}, P_{2}$ and
  $\mathcal{H}_{k}$ zero measure sets $Z_{1}, Z_{2}$,
  such that
  $D = P_{1} \cup Z_{1} = P_{2} \backslash Z_{2}$.}. \\
  \textbf{step} 3 By \eqref{estimate local}
  and Proposition \ref{L and H measure},
  closed cube
  $Q \subset E$ satisfies:
  for all $x_{0} \in E$,
  for all $0 < \epsilon < 1$,
  there exists $\delta > 0$, such that,
  if $\mathrm{diam}\ \(Q\) < \delta$ and
  $x_{0} \in Q$ then
  \begin{align}\begin{split}\label{L and H local estimate}
  \(1 - \epsilon\)^{k}\sqrt{\det\(
  \psi'\(x_{0}\)^{T}\psi'\(x_{0}\)\)}
  \mathcal{L}_{k}\(Q\) & \leq
  \mathcal{H}_{k}\(\psi\(Q\)\) \\
  & \leq \(1 + \epsilon\)^{k}\sqrt{\det\(
  \psi'\(x_{0}\)^{T}\psi'\(x_{0}\)\)}
  \mathcal{L}_{k}\(Q\)
  \end{split}\end{align}
  \textbf{step} 4 Prove that, for all closed cube $Q \subset E$,
  \begin{align}\begin{split}\nonumber
  \mathcal{H}_{k}\(\psi\(Q\)\) = \int_{Q}\sqrt{\det\(
  \psi'\(x\)^{T}\psi'\(x\)\)}
  d\mathcal{L}_{k}x
  \end{split}\end{align}
  by \eqref{L and H local estimate}. \\
  \textbf{step} 5 Prove that for all bounded open set $E_{b}$
  such that $\overline{E_{b}} \subset E$,
  if $O \subset E_{b}$ is $\mathcal{L}_{k}$ measurable then
  \begin{align}\begin{split}\nonumber
  \mathcal{H}_{k}\(\psi\(O\)\) = \int_{O}\sqrt{\det\(
  \psi'\(x\)^{T}\psi'\(x\)\)}
  d\mathcal{L}_{k}x.
  \end{split}\end{align}
  Conclude using the fact that any open set
  $E \subset \mathbb{R}^{k}$ can be decomposed
  to a countable disjoint union of bounded cube.
\end{proof}

\begin{proof}[Proof of Theorem \ref{area formula
classic}] \\
A classical proof based on Proposition \ref{area formula simple} can be
separated into three fundamental parts. \\
  \textbf{part} 1 
  In case that $\psi$ is not
  necessarily bijective, while still
  requiring that $J\psi\(x\) > 0$
  for all $x \in E$,
  by the implicit function theorem,
  for all $x \in E$ there exist
  a neighborhood $U$ such that $\psi$
  is bijective in $U$.
  Take a Vitali cover $\mathcal{V}$
  of $E$ such that $\psi$ is bijective in
  every closed ball $B \in \mathcal{V}$.
  Then by the Vitali covering theorem,
  there exists an at most countable
  disjoint subset
  $\{B_{j}\} \subset \mathcal{V}$,
  such that
  $\mathcal{L}^{*}_{k}\(E \backslash \bigcup_{j}
  B_{j}\) = 0$.
  From the definition of $\mathcal{H}_{0}$,
  \begin{align}\begin{split}\nonumber
  \sum_{j}1\(y \in \psi\(S \cap B_{j}\)\)
  = \mathcal{H}_{0}\(\bigcup_{j}\(S \cap B_{j}\)
  \cap \psi^{-1}\(y\)\).
  \end{split}\end{align}
  By the property of Lebesgue integral and Proposition \ref{area formula simple},
  \begin{align}\begin{split}\nonumber
  \int_{S}J\psi\(x\)d\mathcal{L}_{k}x
   = \sum_{j}\int_{\psi\(S\)}
  1\(y \in \psi\(S \cap B_{j}\)\)
  d\mathcal{H}_{k}y
   = \int_{\psi\(S\)}
  \mathcal{H}_{0}\(\bigcup_{j}\(S \cap B_{j}\) \cap
  \psi^{-1}\(y\)\)d\mathcal{H}_{k}y.
  \end{split}\end{align}
  Then note that
  \begin{align}\begin{split}\nonumber
  \int_{\psi\(S\)} & \mathcal{H}_{0}\(S \cap
  \psi^{-1}\(y\)\)d\mathcal{H}_{k}y \\
  = \int_{\psi\(S\)} & \mathcal{H}_{0} \(
  \bigcup_{j}\(S \cap B_{j}\) \cap
  \psi^{-1}\(y\)\)d\mathcal{H}_{k}y +
  \int_{\psi\(S\) \backslash \psi\(
  S \backslash \bigcup_{j}B_{j}\)}
  \mathcal{H}_{0}\(\(
  S \backslash \bigcup_{j}B_{j}\) \cap
  \psi^{-1}\(y\)\)d\mathcal{H}_{k}y \\
  & +
  \int_{\psi\(S \backslash \bigcup_{j}
  B_{j}\)} \mathcal{H}_{0}\(\(
  S \backslash \bigcup_{j}B_{j}\) \cap
  \psi^{-1}\(y\)\)d\mathcal{H}_{k}y \\
  = \int_{\psi\(S\)} & \mathcal{H}_{0}
  \(\bigcup_{j}\(S \cap B_{j}\) \cap
  \psi^{-1}\(y\)\)d\mathcal{H}_{k}y + 0 + 0.
  \end{split}\end{align}
  When $J\psi\(x\) = 0$,
  let $\epsilon > 0$, define $\psi_{\epsilon}: E
  \to \mathbb{R}^{k + n}$ as
  \begin{align}\begin{split}\nonumber
  x \mapsto \(\epsilon x, \psi\(x\)\),
  \end{split}\end{align}
  $Crit\(\psi\) = \{x \in E \vert J\psi\(x\) = 0\}$.
  Note that $J\psi_{\epsilon}\(x\) > 0$ for all
  $x \in E$,
  \begin{align}\begin{split}\nonumber
  \int_{Crit\(\psi\)}J\psi_{\epsilon}\(x\)
  d\mathcal{L}_{k}x & =
  \int_{\psi_{\epsilon}\(Crit\(\psi\)\)}
  \mathcal{H}_{0}\(Crit(\psi) \cap
  \psi^{-1}_{\epsilon}\(y\)\)d\mathcal{H}_{k}y \\
  & \geq
  \int_{\psi_{\epsilon}\(Crit\(\psi\)\)}
  d\mathcal{H}_{k}y =
  \mathcal{H}_{k}\(\psi_{\epsilon}\(Crit\(\psi\)\)\).
  \end{split}\end{align}
  By the fact that Hausdorff outer measure keeps distance inequality,
  \begin{align}\begin{split}\nonumber
  \mathcal{H}_{k}\(\psi\(Crit\(\psi\)\)\) \leq
  \mathcal{H}_{k}\(\psi_{\epsilon}\(Crit\(\psi\)\)\),
  \end{split}\end{align}
  since the coordinate projection from
  $\psi_{\epsilon}\(Crit\(\psi\)\)$ to
  $\psi\(Crit\(\psi\)\)$ satisfies
  a distance inequality with $C = 1$.
  Therefore,
  \begin{align}\begin{split}\nonumber
  \mathcal{H}_{k}\(\psi\(Crit\(\psi\)\)\) \leq
  \int_{Crit\(\psi\)}J\psi_{\epsilon}\(x\)d\mathcal{L}_{k}x,
  \end{split}\end{align}
  the right hand side converges to $0$
  as $\epsilon \downarrow 0$ if
  $E$ is bounded. Then conclude that
  $\mathcal{H}_{k}\(\psi\(Crit\(\psi\)\)\) = 0$ for (not
  necessarily bounded) open set
  $E \subset \mathbb{R}^{k}$ by the
  cube decomposition in the \textbf{step} 5 of the sketch of
  the proof of
  Proposition \ref{area formula simple}.
  Now, one can conclude that
  \eqref{area formula area} is true for $C^{1}$
  function $\psi$. \\
  \textbf{part} 2 To verify \eqref{area formula area}
  when $\psi$ is a Lipschitz function but
  does not necessarily belong to $C^{1}$,
  a Lusin type approximation of $\psi$
  can be used. To continue,
  we use Rademacher theorem \ref{Rademacher theorem},
  Whitney extension theorem \ref{Whitney extension theorem},
  Lusin theroem \ref{Lusin theroem}  and Egoroff theorem \ref{Egoroff theorem}.

 Assume first $E$ is bounded, by
  Rademacher theorem \ref{Rademacher theorem}, $\psi$ is
  differentiable $\mathcal{L}_{k} \; a.e.$ and
  the gradient
  $\psi' \leq \text{Lip} \psi$ is measurable,
  where
  \begin{align}\begin{split}\nonumber
  \text{Lip} \psi = \sup\left\{\frac{\vert
  \psi\(x_{1}\) - \psi\(x_{2}\)\vert}
  {\vert x_{1} - x_{2}\vert} \vert
  x_{1}, x_{2} \in E,
  x_{1} \neq x_{2}\right\}.
  \end{split}\end{align}
  Apply Lusin theorem to $\psi'$,
  there exists a compact set
  $C \subset E$ such that
  $\mathcal{L}_{k}\(E \backslash C\) <
  \frac{1}{2}\epsilon$ and $\psi'\vert_{C}$
  is continuous. Let
  \begin{align}\begin{split}\nonumber
  R\(x, a\) = \frac{\psi\(x\) -
  \psi\(a\) - \psi'\(a\)\(x - a\)}
  {\vert x - a\vert}, \; x, a \in C,
  x \neq a,
  \end{split}\end{align}
  since $\psi$ is differentiable,
  for all $a \in C$,
  \begin{align}\begin{split}\nonumber
  R\(a\) = \sup\{\vert R\(x, a\)\vert \vert
  0 < \vert x - a\vert \leq \delta,
  x \in C\} \to 0,
  \end{split}\end{align}
  as $\delta \downarrow 0$. Then by
  Egoroff theorem and regularity
  of Lebesgue measure, there exists
  a compact set $C' \subset C$ such that
  $\mathcal{L}_{k}\(C \setminus C'\) <
  \frac{1}{2}\epsilon$ and
  $R\(a\)$ converge to $0$ uniformly
  on $C'$. Now, we can apply
  Whitney extension theorem to
  $\psi$ and $\psi'$ (actually,
  to each component function of $\psi$ and its
  gradient), i.e. there exists
  a $C^{1}$ function $\overline{\psi}$
  such that
  \begin{align}\begin{split}\label{C1 extension epsilon}
  \overline{\psi}\vert_{C'}\(x\) =
  \psi\vert_{C'}\(x\), & \quad
  \overline{\psi}'\vert_{C'}\(x\) =
  \psi'\vert_{C'}\(x\), \\
  \mathcal{L}_{k}\(\left\{x \vert
  \overline{\psi}\vert_{E}\(x\) \neq
  \psi\(x\)\right\}\) < \epsilon, & \quad
  \mathcal{L}_{k}\(\left\{x \vert
  \overline{\psi}'\vert_{E}\(x\) \neq
  \psi'\(x\)\right\}\) < \epsilon.
  \end{split}\end{align}
  Now, one can conclude that
  \eqref{area formula area} is true for
  Lipschitz function $\psi$ and set
  $S \cap C'$. \\
  \textbf{part} 3 The final step is to verify a
  Lusin property (N) for
  \begin{align}\begin{split}\nonumber
  \int_{\psi\(S\)}\mathcal{H}_{0}\(S \cap
  \psi^{-1}\(y\)\)d\mathcal{H}_{k}y.
  \end{split}\end{align}
  Specifically, for arbitrary measurable
  $S \subset E$,
  \begin{align}\begin{split}\nonumber
  \int_{\psi\(S\)}\mathcal{H}_{0}\(S \cap
  \psi^{-1}\(y\)\)d\mathcal{H}_{k}y \leq
  \(\text{Lip}\psi\)^{n}\mathcal{L}_{k}\(S\).
  \end{split}\end{align}
  To see this, let
  \begin{align}\begin{split}\nonumber
  \mathcal{Q}_{m} = \left\{Q \vert Q =
  \prod^{k}_{i = 1}\(a_{i}, b_{i}\],
  a_{i} = \frac{c_{i}}{m},
  b_{i} = \frac{c_{i} + 1}{m},
  c_{i} \in \mathbb{Z}\right\},
  \end{split}\end{align}
  and
  \begin{align}\begin{split}\nonumber
  g_{m}\(y\) = \sum_{Q \in \mathcal{Q}_{m}}
  1\(y \in \psi\(S \cap Q\)\).
  \end{split}\end{align}
  Since $\mathbb{R}^{k} = \bigcup_{Q \in
  \mathcal{Q}_{m}}$, and $g_{m}\(y\)$
  is the number of cubes $Q \in
  \mathcal{Q}_{m}$ such that
  \begin{align}\begin{split}\nonumber
  \mathcal{H}_{0}\(S \cap Q \cap \psi^{-1}\(y\)\) > 0.
  \end{split}\end{align}
  Therefore, for all $y \in \mathbb{R}^{n}$,
  \begin{align}\begin{split}\nonumber
  g_{m}\(y\) \uparrow \mathcal{H}_{0}\(S \cap
  \psi^{-1}\(y\)\),
  \end{split}\end{align}
  as $m \to \infty$.
  Then by the 
  monotone convergence theorem
  \begin{align}\begin{split}\nonumber
  \int_{\psi\(S\)}\mathcal{H}_{0}\(S \cap
  \psi^{-1}\(y\)\)d\mathcal{H}_{k}y & =
  \lim_{m \to \infty}\int_{\psi\(S\)}
  g_{m}\(y\)d\mathcal{H}_{k}y \\
  & = \lim_{m \to \infty}
  \sum_{Q \in \mathcal{Q}_{m}}
  \mathcal{H}_{k}\(\psi\(S \cap Q\)\) \\
  & \leq \lim_{m \to \infty}
  \sum_{Q \in \mathcal{Q}_{m}}
  \(\text{Lip}\psi\)^{k}\mathcal{L}_{k}
  \(S \cap Q\) \\
  & = \(\text{Lip}\psi\)^{k}\mathcal{L}_{k}\(S\).
  \end{split}\end{align}
  Now note that,
  \begin{align}\begin{split}\nonumber
  \int_{S \cap C'}J\psi\(x\)d\mathcal{L}_{k}x \leq
  \int_{\psi\(S\)}\mathcal{H}_{0}\(S \cap
  \psi^{-1}\(y\)\)d\mathcal{H}_{k}y
  \leq
  \int_{S \cap C'}J\psi\(x\)d\mathcal{L}_{k}x +
  \(\text{Lip}\psi\)^{k}\mathcal{L}_{k}
  \(S \backslash C'\),
  \end{split}\end{align}
  where $\mathcal{L}_{k}\(S \backslash C'\) <
  \epsilon$.
  Note that $J\psi$ is bounded on $E$ due to
  the Lipschitz continuity of $\psi$ (By Rademacher theorem, $J\psi$ exists
  for $\mathcal{L}_{k} \; a.e.\; x \in E$), i.e.
  there exists a constant $M$, such that
  \begin{align}\begin{split}\nonumber
  \int_{S \backslash C'}J\psi\(x\)d\mathcal{L}_{k}x \leq
  M\mathcal{L}_{k}\(S \backslash C'\).
  \end{split}\end{align}
  Therefore,
  \begin{align}\begin{split}\nonumber
  \int_{S \cap C'}J\psi\(x\)d\mathcal{L}_{k}x & \leq
  \int_{S}J\psi\(x\)d\mathcal{L}_{k}x \\
  & \leq
  \int_{S \cap C'}J\psi\(x\)d\mathcal{L}_{k}x +
  M\mathcal{L}_{k}\(S \backslash C'\),
  \end{split}\end{align}
  and \eqref{area formula area} follows
  from the arbitrariness of $\epsilon$.
  The case when open set
  $E \subset \mathbb{R}^{k}$ is unbounded
  follows from the cube decomposition.
\end{proof}

\begin{cor}\label{Sard type lemma}
{A Sard type lemma} \\
Let $k \in \{1, 2, \ldots, n\}$,
$E \subset \mathbb{R}^{k}$ be an open set,
$\psi: E \to \mathbb{R}^{n}$ be a $C^{1}$ function
and $\text{Crit}\(\psi\) = \{
x \in E \vert J\psi\(x\) = 0\}$,
then $\mathcal{H}_{k}\(\psi\(\text{Crit}\(\psi\)\)\) = 0$.
\end{cor}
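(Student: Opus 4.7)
The plan is to derive Corollary \ref{Sard type lemma} as a direct consequence of the area formula Theorem \ref{area formula classic}, which has already been proved (and in fact, its proof in \textbf{part} 1 already uses essentially the same trick via the auxiliary map $\psi_{\epsilon}(x) = (\epsilon x, \psi(x))$). The proposal is to give a clean standalone derivation now that the area formula is available in full generality.

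First, I would observe that $\mathrm{Crit}(\psi)$ is closed in $E$ and hence Lebesgue measurable: since $\psi \in C^{1}$, the function $x \mapsto J\psi(x) = \sqrt{\det(\psi'(x)^T \psi'(x))}$ is continuous on $E$, so its zero set is closed. Thus $S := \mathrm{Crit}(\psi)$ is an eligible set for \eqref{area formula area}. Applying Theorem \ref{area formula classic} with this choice of $S$, the left-hand side of \eqref{area formula area} is
\begin{align}\begin{split}\nonumber
\int_{\mathrm{Crit}(\psi)} J\psi(x)\, d\mathcal{L}_{k} x = 0,
\end{split}\end{align}
because the integrand vanishes identically on $\mathrm{Crit}(\psi)$ by definition. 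Consequently,
\begin{align}\begin{split}\nonumber
\int_{\psi(\mathrm{Crit}(\psi))} \mathcal{H}_{0}\bigl(\mathrm{Crit}(\psi) \cap \psi^{-1}(y)\bigr)\, d\mathcal{H}_{k} y = 0.
\end{split}\end{align}

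The final step is to translate this vanishing integral into $\mathcal{H}_{k}(\psi(\mathrm{Crit}(\psi))) = 0$. The key observation is that for every $y \in \psi(\mathrm{Crit}(\psi))$, by definition there exists at least one $x \in \mathrm{Crit}(\psi)$ with $\psi(x) = y$, so $\mathcal{H}_{0}(\mathrm{Crit}(\psi) \cap \psi^{-1}(y)) \geq 1$. Since the integrand in the displayed equation above is nonnegative and bounded below by $1$ on the domain of integration $\psi(\mathrm{Crit}(\psi))$, the only way the integral can vanish is if $\mathcal{H}_{k}(\psi(\mathrm{Crit}(\psi))) = 0$. No separate handling is needed for points where $\mathcal{H}_{0}(\mathrm{Crit}(\psi) \cap \psi^{-1}(y)) = +\infty$, since those only make the integrand larger.

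There is no real obstacle here: the heavy lifting was done inside the proof of Theorem \ref{area formula classic}. The only minor subtlety to verify, if one wants to be scrupulous, is the $\mathcal{H}_{k}$-measurability of the set $\psi(\mathrm{Crit}(\psi))$ and of the integrand $y \mapsto \mathcal{H}_{0}(\mathrm{Crit}(\psi) \cap \psi^{-1}(y))$; but this is already built into the statement of Theorem \ref{area formula classic} (cf.\ the analogous remark in \textbf{part} 1 of its proof, which establishes the measurability of $\psi(S)$ and of the counting integrand via the Vitali-cover / implicit-function reduction to the injective $C^{1}$ case of Proposition \ref{area formula simple}). Hence the corollary follows in a single application of the area formula.
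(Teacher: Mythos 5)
Your proposal is correct and takes essentially the same route as the paper's own proof: both apply the area formula \eqref{area formula area} with $S = \mathrm{Crit}(\psi)$, note that the left-hand side vanishes since $J\psi \equiv 0$ on $\mathrm{Crit}(\psi)$, and use the fact that the counting integrand $\mathcal{H}_0(\mathrm{Crit}(\psi)\cap\psi^{-1}(y)) \geq 1$ on $\psi(\mathrm{Crit}(\psi))$ to conclude that $\mathcal{H}_k(\psi(\mathrm{Crit}(\psi))) = 0$. The only cosmetic difference is that you justify the measurability of $\mathrm{Crit}(\psi)$ by observing it is relatively closed in $E$, while the paper writes it as a set difference of sublevel sets of the continuous function $J\psi$; both are fine.
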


\begin{proof}
Since $\psi$ is a $C^{1}$ function,
$J\psi: E \to \mathbb{R}$ is continuous.
Therefore,
$Crit\(\psi\) =
\{x \in E \vert J\psi\(x\) \geq 0\} \backslash
\{x \in E \vert J\psi\(x\) > 0\}$
is $\mathcal{L}_{k}$ measurable.
By \eqref{area formula area},
\begin{align}\begin{split}\nonumber
\int_{Crit(\psi)}J\psi\(x\)d\mathcal{L}_{k}x & =
\int_{\psi\(Crit\(\psi\)\)}
\mathcal{H}_{0}\(Crit(\psi) \cap \psi^{-1}\(y\)\)d\mathcal{H}_{k}y
\\ & \geq \int_{\psi\(Crit\(\psi\)\)}d\mathcal{H}_{k}y =
\mathcal{H}_{k}\(\psi\(Crit\(\psi\)\)\).
\end{split}\end{align}
The integral in the left hand side of
the first equality is $0$.
\end{proof}

\begin{proof}[Proof Outline of Proposition \ref{coarea linear}] \\
  \textbf{step} 1 To prove \textit{(\rmnum{1})},
  note that
  \begin{align}\begin{split}\nonumber
  \mathcal{L}^{*}_{k}\(A\(\mathbb{R}^{n}\)\) =
  \mathcal{L}^{*}_{k}\(U \circ \Sigma \circ V\(\mathbb{R}^{n}\)\) =
  \mathcal{L}^{*}_{k}\(U \circ \Sigma\(\mathbb{R}^{n}\)\) =
  \mathcal{L}^{*}_{k}\(\Sigma\(\mathbb{R}^{n}\)\),
  \end{split}\end{align}
  by SVD and the fact that Lebesgue outer measure is
  invariant under orthogonal transformation.
  Since $\mathrm{rank}\(A\) < k$,
  $\mathrm{rank}\(\Sigma\) < k$, therefore,
  \bs
\mathcal{L}^{*}_{k}\(\Sigma\(\mathbb{R}^{n}\)\) =
  \mathcal{L}^{*}_{k}\(\mathbb{R}^{\mathrm{rank}\(\Sigma\)}\) = 0.
\end{split}\end{align}
\textbf{step} 2 By SVD / PD, $A = WPV$,
  where $V \in \mathbb{R}^{n \times n}$ is
  orthogonal, $P: \mathbb{R}^{n} \to
  \mathbb{R}^{k}$
  is the coordinate projection of the
  first $k$ dimensions, and
  $W \in \mathbb{R}^{k \times k}$ is
  symmetric\footnote{We do not distinguish
  between linear transformation and its matrix.}. \\
\textbf{step} 3
  By the Fubini-Tonelli theorem,
  $y \in \mathbb{R}^{k} \mapsto
  \mathcal{L}_{n - k}\(\left\{\(x_{1}, \ldots,
  x_{n - k}\) : x \in
  V\(E\) \cap \(P\)^{-1}\(y\)\right\}\)$
  is $\mathcal{L}_{k}$ measurable and
  \begin{align}\begin{split}\nonumber
  \mathcal{L}_{n}\(E\) = \mathcal{L}_{n}\(V\(E\)\) =
  \int_{\mathbb{R}^{k}}
  \mathcal{L}_{n - k}\(\left\{\(x_{1}, \ldots,
  x_{n - k}\) : x \in
  V\(E\) \cap \(P\)^{-1}\(y\)\right\}\)
  d\mathcal{L}_{k}y.
  \end{split}\end{align}
  Then, note that
  \begin{align}\begin{split}\nonumber
  \mathcal{L}_{n - k}\(\left\{\(x_{1}, \ldots,
  x_{n - k}\) : x
  \in V\(E\) \cap \(P\)^{-1}\(y\)\right\}\)
  & = \mathcal{H}_{n - k}\(V\(E\) \cap
  \(P\)^{-1}\(y\)\) \\
  & = \mathcal{H}_{n - k}\(E \cap A^{-1} \circ W\(y\)\)
  \end{split}\end{align}
  by $A^{-1} = V^{-1} \circ
  \(P\)^{-1} \circ W^{-1}$. \\
\textbf{step} 4 By 
Proposition \ref{L and H measure} and a standard approximation procedure,
  \begin{align}\begin{split}\nonumber
  JM \cdot \int_{\mathbb{R}^{k}}f\(M\(x\)\)d\mathcal{L}_{k}x =
  \int_{\mathbb{R}^{k}}f\(y\)d\mathcal{L}_{k}y.
  \end{split}\end{align}
  for all $M \in \mathbb{R}^{k \times k}$
  invertible and $f$ nonnegative
  $\mathcal{L}_{k}$ measurable. Therefore,
  \begin{align}\begin{split}\nonumber
  JW \cdot \int_{\mathbb{R}^{k}}\mathcal{H}_{n - k}
  \(E \cap A^{-1} \circ W\(y\)\)d\mathcal{L}_{k}y =
  \int_{\mathbb{R}^{k}}\mathcal{H}_{n - k}
  \(E \cap A^{-1}\(y\)\)d\mathcal{L}_{k}y.
  \end{split}\end{align}
\end{proof}

\begin{proposition}\label{coarea formula simple}
Let $E \subset \mathbb{R}^{n}$ be an open set,
$k \in \{1, 2, \ldots, n\}$,
$\varphi: E \to \mathbb{R}^{k}$ be a $C^{1}$
function,
then for all measurable $S$,
$S \cap \varphi^{-1}\(y\)$ is $\mathcal{H}_{n - k}$
measurable for
$\mathcal{H}_{k} \; a.e. \; y$,
$y \in \mathbb{R}^{k} \mapsto
\mathcal{H}_{n - k}\(S \cap \varphi^{-1}\(y\)\)$
is $\mathcal{H}_{k}$ measurable, and
\begin{align}\begin{split}\nonumber
\int_{S}J\varphi\(x\)d\mathcal{L}_{n}x =
\int_{\mathbb{R}^{k}}\mathcal{H}_{n - k}\(S \cap
\varphi^{-1}\(y\)\)d\mathcal{H}_{k}y.
\end{split}\end{align}
\end{proposition}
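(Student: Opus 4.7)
The plan is to reduce the $C^1$ coarea formula to the linear coarea formula (Proposition \ref{coarea linear}) by local linearization on the regular set, and to handle the critical set separately by a Sard-type argument. The three assertions (measurability of the slice, measurability of the slice-measure as a function of $y$, and the integral identity) will be handled in tandem.

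First, decompose $E = \mathrm{Crit}(\varphi) \cup \mathrm{Reg}(\varphi)$, where $\mathrm{Crit}(\varphi) = \{x \in E: J\varphi(x)=0\}$ is (relatively) closed in $E$ by continuity of $J\varphi$. On $S \cap \mathrm{Crit}(\varphi)$ the left-hand integrand vanishes, so the left-hand side equals $\int_{S\cap\mathrm{Reg}(\varphi)} J\varphi\, d\mathcal{L}_n$. For the right-hand side, I must show that $\mathcal{H}_{n-k}(S \cap \mathrm{Crit}(\varphi) \cap \varphi^{-1}(y)) = 0$ for $\mathcal{H}_k$ a.e. $y$. This is the Sard-type content and will be obtained by the same $\epsilon$-perturbation trick as in \textbf{part} 1 of the proof of Theorem \ref{area formula classic}: define $\varphi_\epsilon(x) = (\varphi(x), \epsilon x) : E \to \mathbb{R}^{k+n}$, apply the already-proven (injective) area-type bound to the projection, and let $\epsilon \downarrow 0$. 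An equivalent route is to truncate to bounded subsets and apply Proposition \ref{coarea linear} fiberwise after Whitney/Lusin approximation, which I will use to also confirm the measurability of $y \mapsto \mathcal{H}_{n-k}(S \cap \mathrm{Crit}(\varphi) \cap \varphi^{-1}(y))$.

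Next, on the regular set, at any point $x_0$ with $J\varphi(x_0) > 0$, by the implicit function theorem there is a $C^1$ diffeomorphism $\Psi$ from a neighborhood $U \ni x_0$ onto an open set in $\mathbb{R}^n$ such that $\varphi \circ \Psi^{-1}$ acts as the coordinate projection $P$ onto the first $k$ coordinates. A short computation (essentially identical to \textbf{step} 2 of the proof of Theorem \ref{derivative k=1}) yields
\begin{align}\begin{split}\nonumber
J\varphi(x) = \frac{\vert \det (\text{suitable } k\times k \text{ minor})\vert}{J\Psi^{-1}(\Psi(x))}\cdot J\Psi^{-1}(\Psi(x))\cdot (\text{correction})
\end{split}\end{align}
which reduces integration on $U$ against $J\varphi$ to integration on $\Psi(U)$ against the Jacobian of a coordinate projection — exactly the setting of Proposition \ref{coarea linear}. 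Applying the linear coarea formula on $\Psi(U)$ and then transporting back via the area formula Proposition \ref{area formula simple} applied to the smooth diffeomorphism $\Psi^{-1}$ yields the coarea identity locally on $U$. The fiberwise measurability of $S \cap \varphi^{-1}(y)$ and the $\mathcal{H}_k$-measurability of $y \mapsto \mathcal{H}_{n-k}(S \cap U \cap \varphi^{-1}(y))$ follow from the Fubini-Tonelli theorem applied in $\Psi(U)$ together with the push-forward by the bi-Lipschitz map $\Psi^{-1}$ on compact subneighborhoods.

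To globalize, I will take a Vitali cover $\mathcal V$ of $\mathrm{Reg}(\varphi)$ by closed balls on each of which the implicit function construction above is available, and extract a countable disjoint subfamily $\{B_j\} \subset \mathcal V$ with $\mathcal L_n(\mathrm{Reg}(\varphi) \setminus \bigcup_j B_j) = 0$ by the Vitali covering theorem \ref{Vitali covering theorem}. Summing the local coarea identities over $j$ and exchanging sum and integral via monotone convergence gives the identity on $\bigcup_j B_j$; the exceptional $\mathcal L_n$-null set contributes $0$ on the left and — by the same Sard/perturbation step used for $\mathrm{Crit}(\varphi)$ — also contributes $\mathcal{H}_{n-k}$-null fibers for $\mathcal{H}_k$ a.e.\ $y$ on the right.

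The main obstacle will be the control of the critical set, i.e., verifying that $\mathcal{H}_{n-k}(S \cap \mathrm{Crit}(\varphi) \cap \varphi^{-1}(y)) = 0$ for $\mathcal{H}_k$ a.e.\ $y$ without invoking the stronger Morse-Sard smoothness $\varphi \in C^{n-k+1}$: the $C^1$ regularity alone is insufficient to make $\varphi(\mathrm{Crit}(\varphi))$ negligible, yet the coarea statement only needs the \emph{fiber} null property, which is precisely what the $\varphi_\epsilon$ graph trick delivers when combined with the already-proven area formula. A secondary technical point will be verifying the joint measurability claim, which I will handle via a Lusin-type reduction to compact pieces where $\varphi'$ is continuous, followed by Whitney extension and limit passage as in \textbf{part} 2 of the proof of Theorem \ref{area formula classic}.
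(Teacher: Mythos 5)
Your overall decomposition into $\mathrm{Crit}(\varphi)$ and $\mathrm{Reg}(\varphi)$, and the reduction on $\mathrm{Reg}(\varphi)$ via the implicit function theorem to the linear Proposition~\ref{coarea linear} (locally, then globalizing by a Vitali or cube cover), does match the structure of the paper's \textbf{part}~2, which writes $\varphi = A\circ g$ near a regular point with $g$ nearly an isometry and $A$ linear, and then compares $J\varphi$ with $JA$. However, there are two substantive problems in your treatment of the critical set and of the measurability claims.

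First, the perturbation you propose for $\mathrm{Crit}(\varphi)$ is the one from the \emph{area}-formula proof, $\varphi_\epsilon(x)=(\varphi(x),\epsilon x):E\to\mathbb{R}^{k+n}$, i.e.\ you perturb the \emph{range} to force injectivity. For the coarea formula the map goes from higher to lower dimension, and what you need is a \emph{submersion}, not an injection; the paper therefore perturbs the \emph{domain}, setting $\varphi_\epsilon:E\times\mathbb{R}^k\to\mathbb{R}^k$, $(x,z)\mapsto\varphi(x)+\epsilon z$, which has $J\varphi_\epsilon>0$ everywhere and directly recovers the fibers $S\cap\varphi^{-1}(y-\epsilon w)$ as slices of $(S\times B(0,1))\cap\varphi_\epsilon^{-1}(y)\cap(P')^{-1}(w)$. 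Your perturbation can be salvaged, but only with extra work you do not supply: one must push forward to $\mathbb{R}^{k+n}$, apply the Eilenberg-type slice bound there, use that $\mathcal H_{n-k}$ of a $\epsilon$-scaled set scales by $\epsilon^{n-k}$, and control $J\varphi_\epsilon$ on $\mathrm{Crit}(\varphi)$ through an eigenvalue estimate showing $J\varphi_\epsilon=O(\epsilon^{n-k+1})$. None of this appears in your sketch, and your phrase ``apply the already-proven (injective) area-type bound to the projection'' does not fill the hole.

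Second, and more fundamentally, the Eilenberg inequality
\begin{align}\begin{split}\nonumber
\int^{*}_{\mathbb{R}^{k}}\mathcal{H}^{*}_{n - k}\(S \cap \varphi^{-1}\(y\)\)d\mathcal{H}_{k}y \leq \frac{\alpha_{n - k}\alpha_{k}}{\alpha_{n}}\(\text{Lip}\varphi\)^{k}\mathcal{H}_{n}\(S\)
\end{split}\end{align}
is the workhorse of the paper's \textbf{part}~1 and is needed twice: (i) to establish that $y\mapsto\mathcal H_{n-k}(S\cap\varphi^{-1}(y))$ is measurable and that $S\cap\varphi^{-1}(y)$ is $\mathcal H_{n-k}$-measurable for a.e.\ $y$ (for compact $S$ via upper semicontinuity of $\mathcal H^*_{n-k,\delta}(S\cap\varphi^{-1}(\cdot))$, then for general measurable $S$ by inner regularity and an exceptional null set), and (ii) to make the critical-set perturbation produce an $O(\epsilon)$ bound. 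Your proposal never mentions it; the ``Fubini-Tonelli in $\Psi(U)$ plus pushforward'' argument handles measurability only locally on regular neighborhoods and does not cover general measurable $S$ on all of $E$. Finally, the Whitney/Lusin/Egoroff machinery is superfluous here: $\varphi$ is already $C^1$ in Proposition~\ref{coarea formula simple}; that machinery belongs to the upgrade to Lipschitz $\varphi$ in Theorem~\ref{coarea formula classic}, and invoking it in the $C^1$ case adds nothing and obscures what actually has to be proved.
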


\begin{remark}
\indent
\begin{enumerate}
  \item We should note that for open set
  $E \subset \mathbb{R}^{n}$,
  $k \in \{1, 2, \ldots, n\}$,
  $\varphi: E \to \mathbb{R}^{k}$ be at least
  continuous, $S \subset E$ measurable, then,
  $\varphi\(S\)$ \textbf{is not necessarily
  $\mathcal{H}_{k}$ measurable}.
  Actually, if $\varphi: E \subset \mathbb{R}^{n}
  \to \mathbb{R}^{m}, m, n \in \{1, 2, \ldots\}$,
  $\varphi$ is continuous, then
  \begin{align}\begin{split}\nonumber
  \varphi \text{ map } \mathcal{L}_{n}
  \text{ measurable subset of } E
  \text{ to } & \mathcal{L}_{m} \text{ measurable set in } \mathbb{R}^{m}
  \Leftrightarrow \\
  \varphi \text{ map } \mathcal{L}_{n}
  \text{ zero measure subset of } E
  \text{ to } & \mathcal{L}_{m} \text{ zero measure set in } \mathbb{R}^{m}.
  \end{split}\end{align}
  Even if $\varphi$ is more smooth than
  continuous, the right hand side of above
  relationship will not be
  automatically satisfied.
  \item Although $S \cap \varphi^{-1}\(y\)$ may not
  be $\mathcal{H}_{n - k}$ measurable for all
  $y \in \mathbb{R}^{k}$,
  $E \cap \varphi^{-1}\(y\)$ is $\mathcal{H}_{n - k}$
  measurable for all $y \in \mathbb{R}^{k}$,
  since $E$ is open and $\varphi^{-1}\(y\)$
  is a Borel set.
\end{enumerate}
\end{remark}

Proposition \ref{coarea formula simple} states
one of the most essential idea of more general
coarea formula Theorem \ref{coarea formula classic}.
Besides, we should first verify that the
integrand on the right hand side of
\eqref{coarea formula area} is well defined.
A classical proof of
Proposition \ref{coarea formula simple}
can be separated
into two fundamental parts.\\

\begin{proof}[Proof of Proposition \ref{coarea formula simple}] \\
\textbf{part} 1 
  We start from verifying
  a Lusin property (N) for
  \begin{align}\begin{split}\nonumber
  \int_{\mathbb{R}^{k}}\mathcal{H}^{*}_{n - k}\(
  S \cap \varphi^{-1}\(y\)\)d\mathcal{H}_{k}y.
  \end{split}\end{align}
First of all, the outer integral of $f$ is defined as
\begin{align}\begin{split}\nonumber
  \int^{*}_{\mathbb{R}^{k}}f\(x\)d\mathcal{H}_{k}x =
  \inf\left\{\int_{\mathbb{R}^{k}}g\(x\)d\mathcal{H}_{k}x
  \vert g \text{ is } \mathcal{H}_{k} \text{ measurable },
  f \leq g \; a.e.\right\}.
  \end{split}\end{align}
  For arbitrary measurable
  $S \subset E$, the required Lusin property (N) is provided by Eilenberg inequality which states that
  \begin{align}\begin{split}\label{Eilenberg inequality}
  \int^{*}_{\mathbb{R}^{k}}\mathcal{H}^{*}_{n - k}\(
  S \cap \varphi^{-1}\(y\)\)d\mathcal{H}_{k}y \leq
  \frac{\alpha_{n - k}\alpha_{k}}{\alpha_{n}}
  \(\text{Lip}\varphi\)^{k}
  \mathcal{H}_{n}\(S\)
  \end{split}\end{align}
  holds. Besides, we can also show that
  $\mathcal{H}_{n - k}\(S \cap \varphi^{-1}\(y\)\)$
  is well defined and $\mathcal{H}_{k}$ measurable, and therefore the outer integral in \eqref{Eilenberg inequality} is actually redundant.

  To verify \eqref{Eilenberg inequality},
  we use the isodiametric inequality \ref{Isodiametric inequality}
and
  the coincidence between spherical Hausdorff outer measure
  and Hausdorff outer measure \ref{Spherical Hausdorff outer measure}.
  By the equivalence in Theorem \ref{Spherical Hausdorff outer measure},
   for all
  $l > 0$, there exists an at most countable
  collection of closed balls $\{B^{l}_{j}\}$
  such that $S \subset \bigcup_{j}B^{l}_{j}$,
  $\mathrm{diam}\ B^{l}_{j} < \frac{1}{l}$
  for all $j$, and
  \begin{align}\begin{split}\nonumber
  \sum_{i}\alpha_{n}\(\frac{\mathrm{diam}\
  B^{l}_{j}}{2}\)^{n} \leq
  \mathcal{H}_{n}\(S\) + \frac{1}{j}.
  \end{split}\end{align}
  Define
  \begin{align}\begin{split}\nonumber
  g^{l}_{j}\(y\) =
  \alpha_{n - k}\(\frac{\mathrm{diam}\
  B^{l}_{j}}{2}\)^{n - k}
  1\(y \in \varphi\(B^{l}_{j}\)\),
  \end{split}\end{align}
  since $B^{l}_{j}$ is a closed ball,
  $g^{l}_{j}$ is $\mathcal{H}^{k}$ measurable.
  Note that $\{B^{l}_{j}\}$
  covers $A \cap \varphi^{-1}\(y\)$
  for all $y$,
  \begin{align}\begin{split}\nonumber
  \mathcal{H}^{*}_{n - k, \frac{1}{l}}\(S
  \cap \varphi^{-1}\(y\)\) \leq
  \sum_{j}g^{l}_{j}\(y\).
  \end{split}\end{align}
  Then 
by the Fatou lemma,
the isodiametric inequality,
  and the fact that
  Hausdorff outer measure keeps
  distance inequality,
  \begin{align}\begin{split}\nonumber
  \int^{*}_{\mathbb{R}^{k}}
  \mathcal{H}^{*}_{n - k}\(S \cap \varphi^{-1}
  \(y\)\)d\mathcal{H}_{k}y & =
  \int^{*}_{\mathbb{R}^{k}}
  \lim_{l \to \infty}\mathcal{H}^{*}_{n - k,
  \frac{1}{l}}\(S \cap \varphi^{-1}
  \(y\)\)d\mathcal{H}_{k}y \\
  & \leq \int_{\mathbb{R}^{k}}
  \liminf_{l \to \infty}
  \sum_{j}g^{l}_{j}\(y\)d\mathcal{H}_{k}y \\
  & \leq \liminf_{l \to \infty}
  \int_{\mathbb{R}^{k}}\sum_{j}
  g^{l}_{j}\(y\)d\mathcal{H}_{k}y \\
  & \leq \liminf_{l \to \infty}
  \sum_{j}\int_{\mathbb{R}^{k}}
  g^{l}_{j}\(y\)d\mathcal{H}_{k}y \\
  & \leq \liminf_{l \to \infty}
  \sum_{j}a_{n - k}\(\frac{
  \mathrm{diam}\ B^{l}_{j}}{2}\)^{n - k}
  \mathcal{H}_{k}\(\varphi\(B^{l}_{j}\)\) \\
  & \leq \liminf_{l \to \infty}
  \sum_{j}a_{n - k}\(\frac{
  \mathrm{diam}\ B^{l}_{j}}{2}\)^{n - k}
  \alpha_{k}\(\frac{\mathrm{diam}\
  \varphi\(B^{l}_{j}\)}{2}\)^{k} \\
  & \leq \frac{\alpha_{n - k}\alpha_{k}}
  {\alpha_{n}}\(\text{Lip}\varphi\)^{k}
  \liminf_{l \to \infty}\sum_{j}
  \(\frac{\mathrm{diam}\ B^{l}_{j}}{2}\)^{n} \\
  & \leq \frac{\alpha_{n - k}\alpha_{k}}
  {\alpha_{n}}\(\text{Lip}\varphi\)^{k}
  \mathcal{H}_{n}\(S\).
  \end{split}\end{align}

  Next, we should verify that
  $S \cap \varphi^{-1}\(y\)$ is
  $\mathcal{H}_{n - k}$ measurable for
  $\mathcal{H}_{k} \; a.e. \; y$, and
  $y \in \mathbb{R}^{k} \mapsto
  \mathcal{H}_{n - k}\(S \cap \varphi^{-1}\(y\)\)$
  is $\mathcal{H}_{k}$ measurable.
  First, assuming that $S$ is compact,
 we can write
  \begin{align}\begin{split}\nonumber
  \mathcal{H}^{*}_{n - k}\(S \cap \varphi^{-1}\(y\)\) =
  \lim_{\delta \downarrow 0}
  \mathcal{H}^{*}_{n - k, \delta}\(
  S \cap \varphi^{-1}\(y\)\) =
  \sup_{\delta > 0}\mathcal{H}^{*}_{n - k, \delta}\(
  S \cap \varphi^{-1}\(y\)\).
  \end{split}\end{align}
Note that in this case,
  $S \cap \varphi^{-1}\(y\)$ is a Borel set
  and thus $\mathcal{H}_{n - k}$ measurable, therefore
  it suffices to verify that
  $y \mapsto \mathcal{H}^{*}_{n - k , \delta}
  \(S \cap \varphi^{-1}\(y\)\)$ is
  $\mathcal{H}_{k}$ measurable
  for all $\delta > 0$.
  Actually, $\mathcal{H}^{*}_{n - k , \delta}
  \(S \cap \varphi^{-1}\(y\)\)$ is
  upper semicontinuous. To see this,
  note that the spherical Hausdorff
  outer measure can also be defined
  by open balls, and thus
  for arbitrary $\epsilon > 0$,
  there exits an at most countable
  collection of open balls
  $\{B_{j}\}$ such that
  $S \cap \varphi^{-1} \subset
  \bigcup_{j}B_{j}$, for all $j$,
  $\mathrm{diam}\ B_{j} \leq \delta$, and
  \begin{align}\begin{split}\nonumber
  \sum_{j}\alpha_{n - k}
  \(\frac{\mathrm{diam}\ B_{j}}{2}\)^{n - k}
  \leq \mathcal{H}^{*}_{n - k, \delta}\(
  S \cap \varphi^{-1}\(y\)\) + \epsilon.
  \end{split}\end{align}
  The compactness of $S$ implies that
  if $\vert y - y'\vert$ small enough,
  then $S \cap \varphi^{-1}\(y'\)
  \subset \bigcup_{j}B_{j}$.
  Therefore,
  \begin{align}\begin{split}\nonumber
  \limsup_{y' \to y}\mathcal{H}^{*}_{n - k, \delta}
  \(S \cap \varphi^{-1}\(y'\)\) \leq
  \mathcal{H}^{*}_{n - k, \delta}\(
  S \cap \varphi^{-1}\(y'\)\) + \epsilon,
  \end{split}\end{align}
  then the upper semicontinuity follows
  from the arbitrariness of $\epsilon$.
  Second, let $S$ be just measurable,
  then by the regularity of Lebesgue
  measure, there exists a sequence of
  compact sets
  $C_{1} \subset C_{2} \subset \ldots$
  such that $S \backslash
  \bigcup^{\infty}_{i = 1}C_{i}$ is of
  zero Lebesgue measure.
  By the cube decomposition of open set
  and the Eilenberg inequality,
  \begin{align}\begin{split}\nonumber
  \int^{*}_{\mathbb{R}^{k}}\mathcal{H}^{*}_{n - k, \delta}
  \(\(S \backslash \bigcup^{\infty}_{i = 1}
  C_{i}\) \cap \varphi^{-1}\(y\)\)d\mathcal{H}_{k}y = 0,
  \end{split}\end{align}
  i.e. $\mathcal{H}^{*}_{n - k, \delta}
  \(\(S \backslash \bigcup^{\infty}_{i = 1}
  C_{i} \cap \varphi^{-1}\(y\)\)\) = 0$,
  $\mathcal{H}_{k} \; a.e. \; y$. The $a.e.$ measurability
  of $S \cap \varphi^{-1}\(y\)$ and measurability
  of $\mathcal{H}^{*}_{n - k}\(S \cap \varphi^{-1}\(y\)\)$
  follow from
  \begin{align}\begin{split}\nonumber
  \mathcal{H}^{*}_{n - k}\(S \cap \varphi^{-1}\(y\)\) =
  \mathcal{H}^{*}_{n- k}\(\bigcup^{\infty}_{i = 1}C_{i}
  \cap \varphi^{-1}\(y\)\) + \mathcal{H}^{*}_{n - k}
  \(\(S \backslash \bigcup^{\infty}_{i = 1}
  C_{i} \cap \varphi^{-1}\(y\)\)\),
  \end{split}\end{align}
  and now we can use $\mathcal{H}_{n - k}\(S
  \cap \varphi^{-1}\(y\)\)$ instead of
  $\mathcal{H}^{*}_{n - k}\(S
  \cap \varphi^{-1}\(y\)\)$. \\
\textbf{part} 2 Assuming that $J\varphi\(x\) > 0$
  for all $x \in E$. We use an
  estimates of
  $\vert\varphi\(x\) - \varphi\(y\)\vert$
  similar to \eqref{estimate local}:
  Let $k \in \{1, 2, \ldots, n\}$,
  $E \subset \mathbb{R}^{n}$ be an open set,
  $\varphi: E \to \mathbb{R}^{k}$ be a
  $C^{1}$ function
  such that $J\varphi\(x\) > 0$
  for all $x \in E$,
  then for arbitrary $x_{0} \in E$ and
  for all $\epsilon > 0$,
  there exists $\delta > 0$, such that open ball
  $B\(x_{0}, \delta\) \subset E$ and
  \begin{align}\begin{split}\label{estimate differentiable}
  \vert\varphi\(x\) - \varphi\(y\) -
  \varphi'\(x_{0}\)\(x - y\)\vert \leq
  \epsilon\vert x - y\vert
  \end{split}\end{align}
  for all $x, y \in B\(x_{0}, \delta\)$.

  Let $x_{0} \in E$,
  since $J\varphi\(x_{0}\) > 0$, without loss
  of generality, assuming that the
  first $k$ columns
  $\left\{\frac{\partial}{\partial x_{1}}\varphi\(x_{0}\),
  \frac{\partial}{\partial x_{2}}\varphi\(x_{0}\), \ldots,
  \frac{\partial}{\partial x_{k}}\varphi\(x_{0}\)\right\}$ are
  linear independent.
  Define $\Phi\(x\) = \(\varphi\(x\), x_{k + 1}, \ldots, x_{n}\)$,
  by the implicit function theorem,
  $\Phi$ is a bijection on a neighborhood
  of $x_{0}$. By definition,
  \begin{align}\begin{split}\nonumber
  \Phi\(x\) & = \(\Phi\(x\) -
  \Phi'\(x_{0}\)\(x - x_{0}\)\) +
  \Phi'\(x_{0}\)\(x - x_{0}\) \\
  & = \Phi'\(x_{0}\)
  \[\(\Phi'\(x_{0}\)\)^{-1}
  \(\Phi\(x\) - \Phi'\(x_{0}\)\(x - x_{0}\)\)
  + \(x - x_{0}\)\],
  \end{split}\end{align}
  denote the term in square brackets
  on the right hand side of the second equality
  as $g\(x\)$, then by estimate
  \eqref{estimate differentiable},
  for all $\epsilon > 0$ there exists
  $\delta > 0$ such that
  \begin{align}\begin{split}\label{Lip approximation}
  \(1 - \epsilon\)\vert x - y\vert \leq
  \vert g\(x\) - g\(y\)\vert \leq
  \(1 + \epsilon\)\vert x - y\vert,
  \end{split}\end{align}
  for all $x, y \in B\(x_{0}, \delta\)$.
  Therefore, $\varphi = A \circ g$ on
  $B\(x_{0}, \delta\)$,
  where $A = P\Phi'\(x_{0}\)$,
  $P: \mathbb{R}^{n} \to
  \mathbb{R}^{k}$
  is the coordinate projection of the
  first $k$ dimensions.
  By \eqref{Lip approximation},
  $1 - \epsilon \leq
  \vert g'\vert \leq 1 + \epsilon$,
  by definition,
  \begin{align}\begin{split}\nonumber
  \(J\varphi\(x\)\)^{2} =
  \det\(\varphi'\(x\)\varphi'\(x\)^{T}\) =
  \det\(A \circ g'\(x\) \circ g'\(x\)^{T}
  \circ A^{T}\).
  \end{split}\end{align}
  By SVD / PD,
  $g'\(x\) \circ g'\(x\)^{T} =
  Q^{T}CQ$, where $C$ is diagonal
  with $\(1 - \epsilon\)^{2} \leq
  c_{ii} \leq \(1 + \epsilon\)^{2}$,
  $i \in \{1, 2, \ldots, n\}$,
  $Q^{T}Q = I_{n}$;
  $A = PU^{T}$, where
  $P \in \mathbb{R}^{k \times k}$
  is symmetric and
  $U \in \mathbb{R}^{n \times k}$
  is orthogonal. Therefore,
  \begin{align}\begin{split}\nonumber
  \det\(A \circ g'\(x\) \circ g'\(x\)^{T}
  \circ A^{T}\) =
  \det\(PU^{T}Q^{T}CQUP^{T}\) =
  \(\det\(A\)\)^{2}\det\(U^{T}Q^{T}CQU\).
  \end{split}\end{align}
  Note that $QU$ is also
  orthogonal, then
  \begin{align}\begin{split}\nonumber
  \(1 - \epsilon\)^{2k} \leq
  \det\(U^{T}Q^{T}CQU\) \leq
  \(1 + \epsilon\)^{2k}.
  \end{split}\end{align}
  As a result,
  \begin{align}\begin{split}\label{coarea local estimate}
  \(1 - \epsilon\)^{k}JA \leq
  J\varphi\(x\) \leq
  \(1 + \epsilon\)^{k}JA
  \end{split}\end{align}
  on $B\(x_{0}, \delta\)$.
  Compare \eqref{coarea local estimate} and
  \eqref{L and H local estimate},
  then see that the ideas of
  the \textbf{step} 4
  and \textbf{step} 5 of the
  proof sketch of Proposition
  \ref{area formula simple}
  can be used here.

  When $J\varphi\(x\) = 0$,
  let $\epsilon > 0$, define
  $\varphi_{\epsilon}: E \times
  \mathbb{R}^{k} \to \mathbb{R}^{k}$ as
  \begin{align}\begin{split}\nonumber
  \(x, z\) \mapsto \varphi\(x\) +
  \epsilon z,
  \end{split}\end{align}
  $Crit\(\varphi\) = \{x \in E
  \vert J\varphi\(x\) = 0\}$.
  Note that $J\varphi_{\epsilon}\(x, z\) > 0$
  for all $\(x, z\) \in E \times \mathbb{R}^{k}$,
  for arbitrary $w \in \mathbb{R}^{k}$,
  \begin{align}\begin{split}\nonumber
  \int_{\mathbb{R}^{k}}\mathcal{H}_{n - k}
  \(S \cap \varphi^{-1}\(y\)\)d\mathcal{H}_{k}y & =
  \int_{\mathbb{R}^{k}}\mathcal{H}_{n - k}
  \(S \cap \varphi^{-1}\(y - \epsilon w\)\)
  d\mathcal{H}_{k}y \\
  & = \frac{1}{\alpha_{k}}\int_{B\(0, 1\)}
  \int_{\mathbb{R}^{n - k}}
  \mathcal{H}_{n - k}\(S \cap \varphi^{-1}
  \(y - \epsilon w\)\)d\mathcal{H}_{k}yd\mathcal{H}_{k}w.
  \end{split}\end{align}
  Let $P': \mathbb{R}^{n}
  \times \mathbb{R}^{k} \to \mathbb{R}^{k}$
  be the coordinate projection of the
  last $k$ dimensions,
  $S' = S \times B\(0, 1\) \subset
  \mathbb{R}^{n + k}$.
  Then note that for all
  $y \in \mathbb{R}^{k}$,
  $w \in B\(0, 1\)$,
  \begin{align}\begin{split}\nonumber
  S' \cap \varphi^{-1}_{\epsilon}\(y\)
  \cap \(P'\)^{-1}\(w\) =
  \(S \cap \varphi^{-1}
  \(y - \epsilon w\)\) \times \{w\}.
  \end{split}\end{align}
  Therefore, by the Eilenberg inequality
  and the Fubini-Tonelli theorem,
  \begin{align}\begin{split}\nonumber
  \int_{\mathbb{R}^{k}}\mathcal{H}_{n - k} &
  \(S \cap \varphi^{-1}\(y\)\)d\mathcal{H}_{k}y \\
  & = \frac{1}{\alpha_{k}}\int_{B\(0, 1\)}
  \int_{\mathbb{R}^{k}}\mathcal{H}_{n - k}
  \(S' \cap \varphi^{-1}_{\epsilon}\(y\)
  \cap \(P'\)^{-1}\(w\)\)
  d\mathcal{H}_{k}yd\mathcal{H}_{k}w \\
  & \leq \frac{\alpha_{n - k}}{\alpha_{n}}
  \int_{\mathbb{R}^{k}}
  \(\text{Lip}P'\)^{k}\mathcal{H}_{n}
  \(S' \cap \varphi^{-1}_{\epsilon}
  \(y\)\)d\mathcal{H}_{k}y \\
  & \leq \frac{\alpha_{n - k}}{\alpha_{n}}
  \int_{S'}J\varphi_{\epsilon}\(x, z\)
  d\mathcal{H}_{n + k}\(x, z\)
  \leq \frac{\alpha_{n - k}\alpha_{k}}
  {\alpha_{n}}\mathcal{L}_{n}\(S\)
  \sup_{\(x, z\) \in S'}
  J\varphi_{\epsilon}\(x, z\).
  \end{split}\end{align}
The
  last inequality above uses the fact
  that $\mathcal{H}_{n + k} = \mathcal{L}_{n + k}$ is the completion of
$\mathcal{L}_{n} \times \mathcal{L}_{k}$.
  If $E$ is bounded, the right hand side
  of the last inequality converges to $0$
  as $\epsilon \downarrow 0$.
  Then conclude for (not necessarily bounded)
  open set $E \subset \mathbb{R}^{n}$ by the
  cube decomposition.
\end{proof}

\begin{proof}[Proof of Theorem \ref{coarea formula classic}] \\
We show that
Theorem \ref{coarea formula classic}
follows from Proposition \ref{coarea formula simple},
using the Rademacher-Whitney-Lusin-Egoroff
framework as in \textbf{part} 2
of the proof sketch of the
Theorem \ref{area formula classic}.
Without loss of generality,
let $E$ be bounded.
Suppose we already find a $C^{1}$ function
$\overline{\varphi}$
and compact set $C'$, such that
\eqref{C1 extension epsilon} holds
for an $\epsilon > 0$.
Now, by the Eilenberg inequality,
\begin{align}\begin{split}\nonumber
\int_{\mathbb{R}^{k}}\mathcal{H}_{n - k}
\(S \cap \varphi^{-1}\(y\)\)d\mathcal{H}_{k}y & =
\int_{S \cap C'}J\varphi\(x\)d\mathcal{L}_{n}x +
\int_{\mathbb{R}^{k}}\mathcal{H}_{n - k}
\(\(S \backslash C'\) \cap
\varphi^{-1}\(y\)\)d\mathcal{H}_{k}y\\
& \leq \int_{S \cap C'}J\varphi\(x\)
d\mathcal{L}_{n}x + \frac{\alpha_{n - k}\alpha_{k}}
{\alpha_{n}}\(\text{Lip}\varphi\)^{k}
\mathcal{L}_{n}\(S \backslash C'\) \\
& \leq \int_{S \cap C'}J\varphi\(x\)
d\mathcal{L}_{n}x + M_{1}\epsilon,
\end{split}\end{align}
where $M_{1}$ is a constant that does
not depend on $\epsilon$.
Since $\varphi$ is Lipschitz,
there exists another constant
$M_{2}$, such that $J\varphi < M_{2}$,
and thus,
\begin{align}\begin{split}\nonumber
\int_{S \setminus C'}J\varphi\(x\)d\mathcal{L}_{n}x
\leq M_{2}\epsilon.
\end{split}\end{align}
Then conclude by the same discussion
as \textbf{part} 3
of the proof sketch Theorem
\ref{area formula classic}.
\end{proof}

\begin{cor}\label{another Sard lemma}
{Another Sard type lemma} \\
Let $k \in \{1, 2, \ldots, n\}$,
$E \subset \mathbb{R}^{n}$ be an open set,
$\varphi: E \to \mathbb{R}^{k}$ be a $C^{1}$
function. Then for
$\mathcal{L}_{k} \; a.e. \; y \in \mathbb{R}^{k}$,
\begin{align}\begin{split}\label{local flat measure zero}
\mathcal{H}_{n - k}\(Crit\(\varphi\) \cap
\varphi^{-1}\(y\)\) = 0
\end{split}\end{align}
and $\varphi^{-1}\(y\) \backslash Crit(\varphi)$
can be locally
parameterized 
by implicit functions.
\end{cor}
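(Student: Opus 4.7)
The plan is to derive both claims essentially for free from tools already assembled in the excerpt: the coarea formula (Theorem \ref{coarea formula classic}) handles the measure-zero claim, and the implicit function theorem handles the local parameterization. Neither step requires introducing new machinery.

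For the first claim, I would apply the coarea formula \eqref{coarea formula area} to $S = \mathrm{Crit}(\varphi) \subset E$. Observe first that $\mathrm{Crit}(\varphi)$ is $\mathcal{L}_n$-measurable: since $\varphi$ is $C^1$, the map $x \mapsto J\varphi(x)$ is continuous, so $\mathrm{Crit}(\varphi) = \{x \in E : J\varphi(x) = 0\}$ is closed in $E$ and hence Borel. By definition of $\mathrm{Crit}(\varphi)$, we have $J\varphi(x) = 0$ for every $x \in \mathrm{Crit}(\varphi)$, so the left hand side of \eqref{coarea formula area} vanishes:
\begin{align}\begin{split}\nonumber
0 = \int_{\mathrm{Crit}(\varphi)} J\varphi(x) d\mathcal{L}_n x = \int_{\mathbb{R}^k} \mathcal{H}_{n-k}\(\mathrm{Crit}(\varphi) \cap \varphi^{-1}(y)\) d\mathcal{H}_k y.
\end{split}\end{align}
Since the integrand on the right is nonnegative and $\mathcal{H}_k = \mathcal{L}_k$ on $\mathbb{R}^k$ by Proposition \ref{L and H measure}, we obtain \eqref{local flat measure zero} for $\mathcal{L}_k$-a.e. $y \in \mathbb{R}^k$. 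A minor technical worry here is that Theorem \ref{coarea formula classic} guarantees the measurability of $y \mapsto \mathcal{H}_{n-k}(S \cap \varphi^{-1}(y))$ for $S$ measurable, so the integral on the right is well defined.

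For the second claim, fix any $y \in \mathbb{R}^k$ and any point $x \in \varphi^{-1}(y) \setminus \mathrm{Crit}(\varphi)$. By definition $J\varphi(x) > 0$, so the $k \times n$ Jacobian matrix $\varphi'(x)$ has full row rank $k$. Choose $k$ columns of $\varphi'(x)$ that form a nonsingular $k \times k$ block; after relabeling coordinates, say these are the last $k$. Writing $x = (u, v)$ with $u \in \mathbb{R}^{n-k}$ and $v \in \mathbb{R}^k$, the implicit function theorem (Theorem \ref{Nonsmooth implicit function theorem} in the smooth case) yields neighborhoods $U$ of the $u$-coordinates of $x$ and $V$ of the $v$-coordinates, together with a $C^1$ map $\xi: U \to V$, such that for all $(u', v') \in U \times V$,
\begin{align}\begin{split}\nonumber
\varphi(u', v') = y \Longleftrightarrow v' = \xi(u').
\end{split}\end{align}
Thus $\varphi^{-1}(y) \cap (U \times V)$ is the graph of $\xi$, giving a local $C^1$ parameterization of $\varphi^{-1}(y) \setminus \mathrm{Crit}(\varphi)$ near $x$. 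Since $x$ was arbitrary, this parameterization exists in a neighborhood of every regular point, which is exactly the claim.

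I do not foresee a serious obstacle: the only subtlety is checking measurability of $\mathrm{Crit}(\varphi)$ to legitimately invoke the coarea formula, and that is immediate from continuity of $J\varphi$. Everything else is a direct citation.
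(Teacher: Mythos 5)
Your proposal is correct and follows essentially the same route as the paper's own proof: the paper also observes that $\mathrm{Crit}(\varphi)$ is $\mathcal{L}_n$-measurable (it writes it as $\{J\varphi \geq 0\} \setminus \{J\varphi > 0\}$, whereas you more directly note it is closed in $E$, but these amount to the same thing), applies the coarea formula \eqref{coarea formula area} with $S = \mathrm{Crit}(\varphi)$ to conclude the measure-zero statement, and then invokes the implicit function theorem at any regular point $x \in \varphi^{-1}(y) \setminus \mathrm{Crit}(\varphi)$ to obtain a local $C^1$ graph parameterization. No substantive differences.
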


\begin{proof}
Since $\varphi$ is a $C^{1}$ function,
$J\varphi: E \to \mathbb{R}$ is continuous.
Therefore,
$Crit\(\varphi\) =
\{x \in E : J\varphi\(x\) \geq 0\} \backslash
\{x \in E : J\varphi\(x\) > 0\}$
is $\mathcal{L}_{n}$ measurable.
Let $y \in \mathbb{R}^{k}$,
by \eqref{coarea formula area},
\begin{align}\begin{split}\nonumber
\int_{Crit\(\varphi\)}
J\varphi\(x\)d\mathcal{L}_{n}x =
\int_{\mathbb{R}^{k}}
\mathcal{H}_{n - k}\(Crit\(\varphi\) \cap
\varphi^{-1}\(y\)\)d\mathcal{H}_{k}y.
\end{split}\end{align}
The integral in the left hand side
of the above equality is $0$,
thus by the property of Lebesgue integral
and the fact that $\mathcal{L}_{k} = \mathcal{H}_{k}$ in $\mathbb{R}^{k}$,
\eqref{local flat measure zero} holds for
$\mathcal{L}_{k} \; a.e. \; y \in \mathbb{R}^{k}$.

For arbitrary
$x \in \varphi^{-1}\(y\) \backslash
Crit(\varphi)$ where $y$ satisfies
\eqref{local flat measure zero}, $J\varphi\(x\) > 0$.
Note that 
$\varphi'\(x\)$ is row full rank.
Therefore, the implicit function theorem
can be applied in a neighborhood of $x$.
Specifically, pick $k$ linear independent
columns of $\varphi'\(x\)$,
without loss of generality, assuming that
the first $k$ columns
$\left\{\frac{\partial}{\partial x_{1}}\varphi\(x\),
\frac{\partial}{\partial x_{2}}\varphi\(x\), \ldots,
\frac{\partial}{\partial x_{k}}\varphi\(x\)\right\}$ are
linear independent.
Let $U$ be a neighborhood of $x$,
define 
\begin{align}\begin{split}\nonumber
\Psi\(x_{1}', \ldots, x_{k}',
x_{k + 1}', \ldots, x_{n}'\)
= 
\varphi\(x_{1}', \ldots, x_{k}',
x_{k + 1}', \ldots, x_{n}'\) - y,
\end{split}\end{align}
where  $y = \varphi\(x\)$,
then $\nabla_{x_{1}, x_{2}, \ldots, x_{k}}
\Psi\(x_{1}, \ldots, x_{k},
x_{k + 1}, \ldots, x_{n}\)$ is full rank,
and thus there exists a $C^{1}$ implicit function $g$
in a open subset
$B_{x_{1}, \ldots, x_{k}} \times
B_{x_{k + 1}, \ldots, x_{n}} \subset U$,
where
\begin{align}\begin{split}\nonumber
B_{x_{1}, \ldots, x_{k}} & =
\left\{\(x_{1}', \ldots, x_{k}'\)
\in \mathbb{R}^{k}: \left\vert
\(x_{1}', \ldots, x_{k}'\)
- \(x_{1}, \ldots, x_{k}\)
\right\vert < \alpha\right\}, \\
B_{x_{k + 1}, \ldots, x_{n}} & =
\left\{\(x_{k + 1}', \ldots, x_{n}'\)
\in \mathbb{R}^{n - k}: \left\vert
\(x_{k + 1}', \ldots, x_{n}'\) -
\(x_{k + 1}, \ldots, x_{n}\)
\right\vert < \beta\right\},
\end{split}\end{align}
such that $\Psi\(x_{1}', \ldots, x_{k}',
x_{k + 1}', \ldots, x_{n}'\) = 0 \Leftrightarrow
\(x_{1}', \ldots, x_{k}'\) =
g\(x_{k + 1}', \ldots, x_{n}'\)$ for all
$x' \in B_{x_{1}, \ldots, x_{k}} \times
B_{x_{k + 1}, \ldots, x_{n}}$.
Now, define
\begin{align}\begin{split}\nonumber
\psi\(x_{k + 1}', \ldots, x_{n}'\) = \(
g\(x_{k + 1}', \ldots, x_{n}'\),
x_{k + 1}', \ldots, x_{n}'\)^{T},
\end{split}\end{align}
then $\psi$ is a $C^{1}$ diffeomorphism
and thus a homeomorphism.
\end{proof}

\end{document}